\def\eqref#1{equation~\ref{#1}}
\def\1{\bm{1}}
\DeclareMathAlphabet{\mathsfit}{\encodingdefault}{\sfdefault}{m}{sl}
\SetMathAlphabet{\mathsfit}{bold}{\encodingdefault}{\sfdefault}{bx}{n}
\newtheorem{theorem}{Theorem}
\newtheorem{definition}{Definition}
\newtheorem{lemma}{Lemma}
\newtheorem{corollary}{Corollary}
\newtheorem{example}{Example}
\newcommand{\diag}{\mathrm{diag}}
\newcommand{\beq}{\begin{equation}}
	\newcommand{\eeq}{\end{equation}}
\renewcommand{\eqref}[1]{(\ref{#1})}
\newenvironment{manualtheorem}[1]{%
	\manualtheoreminner
}{\endmanualtheoreminner}
\title{Semi-supervised Community Detection via Structural Similarity Metrics}
\author{%
	Yicong Jiang, Tracy Ke\\
		Department of Statistics\\
			Harvard University\\
		Cambridge, MA 02138, USA \\
}
\begin{document}

	\maketitle

	\begin{abstract}
		Motivated by social network analysis and network-based recommendation systems, we study a semi-supervised community detection problem in which the objective is to estimate the community label of a new node using the network topology and partially observed community labels of existing nodes. The network is modeled using a degree-corrected stochastic block model, which allows for severe degree heterogeneity and potentially non-assortative communities. We propose an algorithm that computes a `structural similarity metric' between the new node and each of the $K$ communities by aggregating labeled and unlabeled data. 
		The estimated label of the new node corresponds to the value of $k$ that maximizes this similarity metric. Our method is fast and numerically outperforms existing semi-supervised algorithms. Theoretically, we derive explicit bounds for the misclassification error and show the efficiency of our method by comparing it with an ideal classifier. 
Our findings highlight, to the best of our knowledge, the first semi-supervised community detection algorithm that offers theoretical guarantees.
	\end{abstract}

	\section{Introduction} \label{sec:intro}
	Nowadays, large network data are frequently observed on social media (such as Facebook, Twitter, and LinkedIn), science, and social science. Learning the latent community structure in a network is of particular interest.  
	For example, community analysis is useful in designing recommendation systems \citep{debnath2008feature}, measuring scholarly impacts \citep{ji2022co}, and re-constructing pseudo-dynamics in single-cell data \citep{liu2018global}. 
	In this paper, we consider a semi-supervised community detection setting: we are given a symmetric network with $n$ nodes, and denote by $A\in\mathbb{R}^{n\times n}$ the adjacency matrix, where $A_{ij}\in\{0,1\}$ indicates whether there is an edge between nodes $i$ and $j$. Suppose the nodes partition into $K$ non-overlapping communities ${\cal C}_1, {\cal C}_2,\ldots, {\cal C}_K$. 
	For a subset ${\cal L}\subset\{1,2,\ldots,n\}$, we observe the true community label $y_i\in \{1,2,\ldots,K\}$ for each $i\in {\cal L}$. Write $m=|{\cal L}|$ and $Y_{\cal L}=(y_i)_{i\in {\cal L}}$. In this context, there are two related semi-supervised community detection problems: (i) {\it in-sample classification}, where the goal is to classify all the existing unlabeled nodes; (ii) {\it prediction}, where the goal is to classify a new node joining the network. Notably, the in-sample classification problem can be easily reduced to prediction problem: we can successively single out each existing unlabeled node, regard it as the ``new node'', and then predict its label by applying an algorithms for the prediction problem. Hence, for most of the paper, we focus on the prediction problem 
	and defer the study of in-sample classification to Section~\ref{sec:in-sample}.  
	In the {\it prediction} problem, 
	let $X\in\{0,1\}^n$ denote the vector consisting of edges between the new node  and each of the existing nodes.  
	Given $(A, Y_{\cal L}, X)$, our goal is to estimate the community label of the new node. 
	
	This problem has multiple applications. Consider the  news suggestion or online advertising push for a new Facebook user \citep{shapira2013facebook}. 
	Given a big Facebook network of existing users, for a small fraction of nodes (e.g., active users), we may have good information about the communities to which they belong, whereas for the majority of users, we just observe who they link to. 
	We are interested in estimating the community label of the new user in order to personalize news or ad recommendations. For another example, in a co-citation network of researchers \citep{ji2022co}, each community might be interpreted as a group of researchers working on the same research area. We frequently have a clear understanding of the research areas of some authors (e.g., senior authors), and we intend to use this knowledge to determine the community to which a new node (e.g., a junior author) belongs.



	The statistical literature on community detection has mainly focused on the {\it unsupervised} setting \citep{bickel2009nonparametric,rohe2011spectral,jin2015fast,gao2018community, li2021convex}. 
	The {\it semi-supervised} setting is less studied. 
	\cite{leng2019semi} offers a comprehensive literature review of semi-supervised community detection algorithms. 
	\cite{liu2014semi} and \cite{ji2016semisupervised} derive systems of linear equations for the community labels through physics theory, and predict the labels by solving those equations. 
	\cite{zhou2018selp} leverages on the belief function to propagate labels across the network, so that one can estimate the label of a node through its belief. 
	\cite{betzel2018non} extracts several patterns in size
	and structural composition across the known communities and search for similar patterns in the graph.
	\cite{6985550} unifies a number of different community detection algorithms based on non-negative matrix factorization or spectral clustering under the unsupervised setting, and fits them into the  semi-supervised scenario by adding various regularization terms to encourage the estimated labels for nodes in ${\cal L}$ to match with the clustering behavior of their observed labels. 
	However, the existing methods still face challenges. First, many of them employ the heuristic that a node tends to have more edges with nodes in the same community than those in other communities. 
	This is true only when communities are {\it assortative}. But non-assortative communities are also seen in real networks  \citep{goldenberg2010survey,betzel2018non}; for instance, Facebook users sharing similar restaurant preferences  are not necessarily friends of each other. Second, real networks often have severe degree heterogeneity (i.e., the degrees of some nodes can be many times larger than the degrees of other nodes), but most semi-supervised community detection algorithms 
	do not handle degree heterogeneity. 
	Third, the optimization-based algorithms \citep{6985550} solve non-convex problems and face the issue of local minima. 
	Last, to our best knowledge, none of the existing methods have theoretical guarantees.

	Attributed network clustering is a problem related to community detection, for which many algorithms have been developed (please see \cite{chunaev2019community} for a nice survey). The graph neural networks (GNN) reported great successes in attributed network clustering.
	 \cite{kipf2016semi} proposes a graph convolutional network (GCN) approach to semi-supervised community detection, and \cite{jin2019graph} combines GNN with the Markov random field  to predict node labels. However, GNN is designed for the setting where each node has a large number of attributes and these attributes contain rich information of community labels. The key question in the GNN research is how to utilize the graph to better propagate messages. In contrast, we are interested in the scenario where it is infeasible or costly to collect node attributes. For instance, it is easy to construct a co-authorship network from bibtex files, but collecting features of authors is much harder. Additionally, a number of benchmark network datasets do not have attributes (e.g. Caltech \citep{red2011comparing, traud2012social}, Simmons \citep{red2011comparing, traud2012social} , and Polblogs \citep{adamic2005political}). It is unclear how to implement GNN on these data sets. In Section~\ref{sec:emp}, we briefly study the performance of GNN with self-created nodal features from 1-hop representation, graph topology and node embedding. Our experiments indicate that GNN is often not suitable for the case of no node attributes.  
	 

	We propose a new algorithm for semi-supervised community detection  to address the limitations of existing methods. We adopt the DCBM model \citep{karrer2011stochastic} for networks, which models degree heterogeneity and allows for both assortative and non-assortative communities. Inspired by the viewpoint of \cite{goldenberg2010survey} that a `community' is a group of `structurally equivalent' nodes, we design a {\it structural similar metric} between the new node and each of the $K$ communities. This metric aggregates information in both labeled and unlabeled nodes. We then estimate the community label of the new node by the $k$ that maximizes this similarity metric. 
	Our method is easy to implement, computationally fast, and compares favorably with other methods in numerical experiments. In theory, we derive explicit bounds for the misclassification probability of our method under the DCBM model.  
	We also study the efficiency of our method by comparing its misclassification probability with that of an ideal classifier having access to the community labels of all nodes.

	\vspace{-.3cm}
	
	\section{Semi-supervised community detection} \label{sec:method}
	Recall that $A$ is the $n\times n$ adjacency matrix on the existing nodes and $Y_{\cal L}$ contains the community labels of nodes in ${\cal L}$. Write $[n]=\{1,2,\ldots,n\}$ and let  ${\cal U}=[n]\setminus {\cal L}$ denote the set of unlabeled nodes. We index the new node by $n+1$ and let $X\in\mathbb{R}^n$ be the binary vector consisting of the edges between the new node and existing nodes. Denote by $\bar{A}$ the adjacency matrix for the network of $(n+1)$ nodes. 

	\paragraph{2.1 The DCBM model and structural equivalence of communities}
	We model $\bar{A}$ with the degree-corrected block model (DCBM) \citep{karrer2011stochastic}. Define a $K$-dimensional membership matrix $\pi_i\in \{e_1,e_2,\ldots,e_K\}$, where $e_k$'s are the standard basis vectors of $\mathbb{R}^K$. We encode the community labels by $\pi_i$, where $\pi_i=e_k$ if and only if $y_i=k$. For a symmetric nonnegative matrix $P\in\mathbb{R}^{K\times K}$ and a degree parameter  $\theta_i \in (0, 1]$ for each node $i$, we assume that the upper triangle of $\bar{A}$ contains independent Bernoulli variables, where
	\beq \label{DCBM}
	\mathbb{P}(\bar{A}_{ij}=1) = \theta_i\theta_j\cdot \pi_i'P\pi_j, \qquad \mbox{for all}\;\; 1\leq i\neq j\leq n+1. 
	\eeq
	When $\theta_i$ are equal, the DCBM model reduces to the stochastic block model (SBM). Compared with SBM, DCBM is more flexible as it accommodates degree heterogeneity. 
	For a matrix $M$ or a vector $v$, let $\diag(M)$ and $\diag(v)$ denote the diagonal matrices whose diagonals are from the diagonal of $M$ or the vector $v$, respectively.  
	Write $\theta=(\theta_1,\theta_2,\ldots,\theta_{n+1})'$, $\Theta=\diag(\theta)$, and $\Pi=[\pi_1,\pi_2,\ldots,\pi_{n+1}]'\in\mathbb{R}^{n\times K}$. Model \eqref{DCBM} yields that
	\beq \label{DCBM2}
	\bar{A} = \Omega-\diag(\Omega)+W, \qquad \mbox{where}\;\; \Omega=\Theta\Pi P\Pi'\Theta\;\; \mbox{and}\;\; \bar{W}= \bar{A}-\mathbb{E}\bar{A}. 
	\eeq
	Here, $\Omega$ is a low-rank matrix that captures the `signal', $W$ is a generalized Wigner matrix that captures `noise', and $\diag(\Omega)$ yields a bias to the `signal' but its effect is usually negligible.
	

	The DCBM belongs to the family of block models for networks. In block models, it is not necessarily true that the edge densities within a community are higher than those between different communities. Such communities are called assortative communities. However, non-assortative communities also appear in many real networks \citep{goldenberg2010survey,betzel2018non}. For instance, in news and ad recommendation, we are interested in identifying a group of users who have similar behaviors, but they may not be densely connected to each other. 
	\cite{goldenberg2010survey} introduced an intuitive notion of {\it structural equivalence} - two nodes are structurally equivalent if their connectivity with similar nodes is similar. They argued that a `community' in block models 
	is a group of structurally equivalent nodes. 
	This way of defining communities is more general than assortative communities.

	
	We introduce a rigorous description of structural equivalence in the DCBM model. For two vectors $u$ and $v$, define $\psi(u,v)=\arccos\langle\frac{u}{\|u\|}, \frac{v}{\|v\|}\rangle$, which is the angle between these two vectors. 
	Let $\bar{A}_{i}$ be the $i$th column of $\bar{A}$. This vector describes the `behavior' of node $i$ in the network. 
	Recall that $\Omega$ is as in \eqref{DCBM2}.
	When the signal-to-noise ratio is sufficiently large, $\bar{A}_i\approx \Omega_{i}$, where $\Omega_i$ is the $i$th column of $\Omega$. 
	We approximate the angle between $\bar{A}_{i}$ and $\bar{A}_{ j}$ by the angle between $\Omega_{i}$ and $\Omega_{j}$. By DCBM model, for a node $i$ in community $k$, $\Omega_i=\theta_i \Theta\Pi Pe_k$, where $e_k$ is the $k$th standard basis of $\mathbb{R}^K$. It follows that for $i\in {\cal C}_k$ and $j\in {\cal C}_\ell$, the degree parameters $\theta_i$ and $\theta_j$ cancel out in our structural similarity: 
	\beq \label{StrucEquiv}
	\cos \psi(\Omega_{i}, \Omega_{j}) = \frac{\big\langle \cancel{\theta_i}\Theta\Pi Pe_k, \; \cancel{\theta_j}\Theta\Pi Pe_\ell\big\rangle }{\|\cancel{\theta_i}\Theta\Pi Pe_k\|\cdot \|\cancel{\theta_j}\Theta\Pi Pe_\ell\|}
	= \frac{M_{k\ell} }{\sqrt{M_{kk}M_{\ell\ell}}}, \quad\mbox{with } M :=P \Pi'\Theta^2\Pi P. 
	\eeq
	It is seen that $\cos\psi(\Omega_i,\Omega_j)$ does not depend on the degree parameters of nodes and is solely determined by community membership. When $k=\ell$ (i.e., $i$ and $j$ are in the same community), $\cos\psi(\Omega_{i}, \Omega_j)=1$, which means the angle between these two vectors is zero. When $k\neq \ell$, as long as $P$ is non-singular and $\Pi$ has a full column rank, $M$ is a positive-definite matrix. It follows that $\cos\psi(\Omega_i, \Omega_j)<1$ and that the angle between $\Omega_i$ and $\Omega_j$ is nonzero. 

	\begin{example}
		Suppose $K=2$, $P\in\mathbb{R}^{2\times 2}$ is such that the diagonal entries are 1 and off-diagonal entries are $b$, for some $b>0$ and $b\neq 1$, and $\max_{i}\{\theta_i\}<\min\{1/b,1\}$ (to guarantee that all entries of $\Omega$ are smaller than 1). 
		For simplicity, we assume $\sum_{i\in {\cal C}_1}\theta_i^2=\sum_{i\in {\cal C}_2}\theta_i^2$. It can be shown that $M$ is proportional to the matrix whose diagonal entries are $(1+b^2)$ and off-diagonal entries are $2b$. 
		When $b<1$, the communities are assortative, and when $b>1$, the communities are non-assortative. However, regardless of the value of $b$, the off-diagonal entries of $M$ are always strictly smaller than the diagonal entries, so that  
		$\cos\psi(\Omega_i,\Omega_j)<1$, for nodes in distinct communities.  
	\end{example}
	
	\paragraph{2.2 Semi-supervised community detection} \label{subsec:AngleMin}
	Inspired by \eqref{StrucEquiv}, we propose assigning a community label to the new node based on its `similarity' to those labeled nodes. 
	For each $1\leq k\leq K$, assume that ${\cal L}\cap {\cal C}_k \ne \emptyset$ and define a vector $A^{(k)}\in\mathbb{R}^n$ by $A^{(k)}_j=\sum_{i\in {\cal L}\cap {\cal C}_k}A_{ij}$, for $1\leq j\leq n$. The vector $A^{(k)}$ describes the `aggregated behavior' of all labeled nodes in community $k$. 
	Recall that $X\in\mathbb{R}^n$ contains the edges between the new node and all the existing nodes. We can estimate the community label of the new node by
	\beq \label{AngleMin}
	\hat{y} = \arg\min_{1\leq k\leq K}\psi(A^{(k)},\; X). 
	\eeq
	We call \eqref{AngleMin} the {\it AngleMin} estimate. 
	Note that each $A^{(k)}$ is an $n$-dimensional vector, the construction of which uses both $A_{\cal LL}$ and $A_{\cal LU}$. Therefore, $A^{(k)}$ aggregates information from both labeled and unlabeled nodes, and so AngleMin is indeed a semi-supervised approach.  
	
	The estimate in \eqref{AngleMin} still has space to improve. First, $A^{(k)}$ and $X$ are high-dimensional random vectors, each entry of which is a sum of independent Bernoulli variables. 
	When the network is very sparse or communities are heavily imbalanced in size or degree, the large-deviation bound for $\psi(A^{(k)}, X)$ can be unsatisfactory. 
	Second, 
	recall that our observed data include $A$ and $X$. Denote by $A_{\cal LL}$ the submatrix of $A$ restricted on ${\cal L}\times {\cal L}$ and $X_{\cal L}$ the subvector of $X$ restricted on ${\cal L}$; other notations are similar. 
	In \eqref{AngleMin}, only $(A_{\cal LL}, A_{\cal LU}, X)$ are used, but the information in $A_{\cal UU}$ is wasted. 
	We now propose a variant of \eqref{AngleMin}.   
	For any vector $x\in\mathbb{R}^n$, let $x_{\cal L}$ and $x_{\cal U}$ be the sub-vectors restricted to indices in ${\cal L}$ and ${\cal U}$, respectively. 
	Let ${\bf 1}_{(k)}$ denote the $|{\cal L}|$-dimensional vector indicating whether each labeled node is in community $k$. 
	Given any $|{\cal U}|\times K$ matrix $H=[h_1,h_2,\ldots,h_K]$, define
	\beq \label{Projection}
	f(x; H) = [x_{\cal L}'{\bf 1}_{(1)},\;  \ldots,\; x_{\cal L}'{\bf 1}_{(k)}, \; x_{\cal U}'h_1, \ldots, x_{\cal U}'h_K ]'\;\; \in \;\;\mathbb{R}^{2K}. 
	\eeq
	The mapping $f(\cdot; H)$ creates a low-dimensional projection of $x$. Suppose we now apply this mapping to $A^{(k)}$. In the projected vector, each entry is a weighted sum of a large number of entries of $A^{(k)}$. Since $A^{(k)}$ contains independent entries, it follows from large-deviation inequalities that each entry of $f(A^{(k)}, H)$ has a nice asymptotic tail behavior. This resolves the first issue above. We then modify the AngleMin estimate in \eqref{AngleMin} to the following estimate, which we call \eqref{AngleMin+}:\footnote{
		In AngleMin+,  $H$ serves to reduce noise.  
		%
		For example, let $X, Y\in\mathbb{R}^{2m}$ be two random Bernoulli vectors, where $\mathbb{E}X =\mathbb{ E}Y = (.1, \ldots, .1, .4, \ldots, .4)'$. As $m\to\infty$, it can be shown that  $\psi (X, Y)\to 0.34\neq 1$ almost surely. If we project $X$ and $Y$ into $\mathbb{R}^2$ by summing the first $m$ coordinates and last $m$ coordinates separately, then as $m\to\infty$, $\psi (X, Y)\to 1$ almost surely.
	}
	\beq \label{AngleMin+}
	\hat{y}(H) = \arg\min_{1\leq k\leq K}\psi\bigl(f(A^{(k)}; H),\; f(X; H)\bigr). 
	\eeq
	%
	AngleMin+ requires an input of $H$. Our theory suggests that $H$ has to satisfy two conditions: (a) The spectral norm of $H'H$ is $O(|{\cal U}|)$. 
	In fact, given any $H$, we can always multiply it by a scalar so that $\|H'H\|$ is at the order of $|{\cal U}|$. Hence, this condition says that the scaling of $H$ should be properly set to balance the contributions from labeled and unlabeled nodes. (b) The minimum singular value of $H'\Theta_{\cal UU}\Pi_{\cal U}$ has to be at least a constant times $\|H\|\|\Theta_{\cal UU}\Pi_{\cal U}\|$, where $\Theta_{\cal UU}$ is the submatrix of $\Theta$ restricted to the $({\cal U}, {\cal U})$ block and $\Pi_{\cal U}$ is the sub-matrix of $\Pi$ restricted to the rows in ${\cal U}$. This condition prevents the columns of $H$ from being orthogonal to the columns of $\Theta_{\cal UU}\Pi_{\cal U}$, and it guarantees that the last $K$ entries of $f(x; H)$ retain enough information of the unlabeled nodes. 
	
	We construct a data-driven $H$ from ${\cal A}_{\cal UU}$, by taking advantage of the existing unsupervised community detection algorithms such as \cite{gao2018community, Jin_2021}.   Let $\hat{\Pi}_{\cal U}=[\hat{\pi}_i]_{i\in{\cal U}}$ be the community labels obtained by applying a community detection algorithm on the sub-network restricted to unlabeled nodes, where $\hat{\pi}_i=e_k$ if and only if node $k$ is clustered to community $k$. 
	We propose using 
	\beq \label{H-choice}
	H=\hat{\Pi}_{\cal U}. 
	\eeq
	This choice of $H$ always satisfies the aforementioned condition (a). Furthermore, under mild regularity conditions, as long as the clustering error fraction is bounded by a constant, this $H$ also satisfies the aforementioned condition (b). We note that the information in ${\cal A}_{\cal UU}$ has been absorbed into $H$, so it resolves the second issue above. Combining \eqref{H-choice} with \eqref{AngleMin+} gives a two-stage algorithm for estimating $y$.
	

	
	{\bf Remark 1}: A nice property of AngleMin+ is that it tolerates an arbitrary permutation of communities in $\hat{\Pi}_{\cal U}$. In other words, the communities output by the unsupervised community detection algorithm do not need to have a one-to-one correspondence with the communities on the labeled nodes. To see the reason, we consider an arbitrary permutation of columns of $\hat{\Pi}_{\cal U}$. By \eqref{Projection}, this yields a permutation of the last $K$ entries of $f(x; H)$, simultaneously for all $x$. However, the angle between $f(A^{(k)};H)$ and $f(X;H))$ is still the same, and so  
	$\hat{y}(H)$ is unchanged.  This property brings a lot of practical conveniences. When $K$ is large or the signals are weak, it is challenging (both computationally and statistically) to match the communities in $\hat{\Pi}_{\cal U}$ with those in $\Pi_{\cal L}$. Our method avoids this issue.


	{\bf Remark 2}: AngleMin+ is flexible to accommodate other choices of $H$. Some unsupervised community detection algorithms provide both $\hat{\Pi}_{\cal U}$ and $\hat{\Theta}_{\cal UU}$ \citep{jin2022optimal}. We may use 
	$H\propto \hat{\Theta}_{\cal UU}\hat{\Pi}_{\cal U}$, 
	(subject to a re-scaling to satisfy the aforementioned condition (a)). 
	This $H$ down-weights the contribution of low-degree unlabeled nodes in the last $K$ entries of \eqref{Projection}. This is beneficial if the signals are  weak and the degree heterogeneity is severe. Another choice is 
	$H \propto \hat{\Xi}_{({\cal U})}\hat{\Lambda}_{({\cal U})}^{-1}$, 
	where $\hat{\Lambda}_{{\cal U}}$ is a diagonal matrix containing the $K$ largest eigenvalues (in magnitude) of $A_{\cal UU}$ and $ \hat{\Xi}_{({\cal U})}$ is the associated matrix of eigenvectors.  
	For this $H$, we do not even need to perform any community detection algorithm on ${\cal A}_{\cal UU}$. 
	We may also use spectral embedding 
	\citep{rubin2017statistical}.
		
		{\bf Remark 3}:
	The local refinement algorithm \citep{gao2018community} may be adapted to the semi-supervised setting, but it requires prior knowledge on assortativity or dis-assortativity and a strong balance condition on the average degrees of communities. When these conditions are not satisfied, we can construct examples where the error rate of AngleMin+ is $o(1)$ but the error rate  of  local refinement is 0.5. See Section \ref{sec:cwlr}.

	\paragraph{2.3 The choice of the unsupervised community detection algorithm} \label{subsec:SCORE+}
	We discuss how to obtain $\hat{\Pi}_{\cal U}$. In the statistical literature, there are several approaches to unsupervised community detection. The first is modularity maximization \citep{girvan2002community}. It exhaustively searches for all cluster assignments and selects the one that maximizes an empirical modularity function. 
	The second is spectral clustering \citep{jin2015fast}.  
	It applies k-means clustering to rows of the matrix consisting of empirical eigenvectors. Other methods include post-processing the output of spectral clustering by majority vote \citep{gao2018community}. 
	Not every method deals with degree heterogeneity and non-assortative communities as in the DCBM model. We use a recent spectral algorithm SCORE+ \citep{Jin_2021}, which allows for both severe degree heterogeneity and non-assortative communities. 
	
	{\bf SCORE+}: We tentatively write $A_{\cal UU}$=$A$ and $|{\cal U}|$=$n$ and assume the network (on unlabeled nodes) is connected (otherwise consider its giant component). 
	SCORE+ first computes $L$=$D_{\tau}^{-1/2}AD^{-1/2}_{\tau}$, where $D_{\tau}$=$\diag(d_1,\ldots,d_n)$+$0.1 d_{\max}I_n$, and $d_i$ is degree of node $i$.   
	Let $\hat{\lambda}_k$ be the $k$th eigenvalue (in magnitude) of $L$ and let $\hat{\xi}_k$ be the associated eigenvector. Let $r$=$K$ or $r$=$K$+$1$ (see \cite{Jin_2021} for details). Let $\hat{R}\in\mathbb{R}^{n\times (r-1)}$ by
	$\hat{R}_{ik}=(\hat{\lambda}_{k+1}/\hat{\lambda}_1)\cdot [\hat{\xi}_{k+1}(i)/\hat{\xi}_1(i)]$.
	Run k-means on rows of $\hat{R}$. 
	

	\section{Theoretical properties} \label{sec:theory}
We assume that the observed adjacency matrix ${\bar A}$ follows the DCBM model in \eqref{DCBM}-\eqref{DCBM2}. From now on, let $\theta_*$ denote the degree parameter of the new node $n+1$. Suppose $k^*\in\{1,2,\ldots, K\}$ is its true community label, and the corresponding $K$-dimensional membership vector is $\pi^*=e_{k^*}$. In \eqref{DCBM2}, $\theta$ and $P$ are not identifiable. To have identifiability, we assume that all diagonal entries of $P$ are equal to $1$ (if this is not true, we replace $P$ by $[\diag(P)]^{-\frac12}P[\diag(P)]^{-\frac12}$ and each $\theta_i$ in community $k$ by $\theta_i\sqrt{P_{kk}}$, while keeping $\Omega=\Theta\Pi P\Pi'\Theta$ unchanged). In the asymptotic framework, we fix $K$ and assume $n\to\infty$.  
	We need some regularity conditions. For any symmetric matrix $B$, let $\|B\|_{\max}$ denote its entry-wise maximum norm and  $\lambda_{\min}(B)$ denote its minimum eigenvalue (in magnitude). We assume for a constant $C_1>0$ and a positive sequence $\beta_n$ (which may tend to $0$), 
	\beq \label{cond-P}
	\|P\|_{\max}\leq C_1, \qquad  |\lambda_{\min}(P)|\geq \beta_n. 
	\eeq
	For $1\leq k\leq K$, let $\theta^{(k)}\in\mathbb{R}^n$ be the vector with $\theta^{(k)}_i=\theta_i\cdot 1\{i\in {\cal C}_k\}$, and let $\theta_{\cal L}^{(k)}$ and $\theta_{\cal U}^{(k)}$ be the sub-vectors restricted to indices in ${\cal L}$ and ${\cal U}$, respectively. 
	We assume for a constant $C_2>0$ and a properly small constant $c_3 > 0$, 
	\beq \label{cond-theta}
	\max_{k} \|\theta^{(k)}\|_1\leq C_2 \min_{k} \|\theta^{(k)}\|_1, \qquad  \|\theta_{\cal L}^{(k)}\|^2\leq c_3\beta_n \|\theta_{\cal L}^{(k)}\|_1\|\theta\|_1,\;\; \mbox{for all }1\leq k\leq K.    
	\eeq
	These conditions are mild. Consider \eqref{cond-P}. For identifiability, $P$ is already scaled to make $P_{kk}=1$ for all $k$. It is thus a mild condition to assume $\|P\|_{\max}\leq C_1$. The condition of $|\lambda_{\min}(P)|\geq \beta_n$ is also mild, because we allow $\beta_n\to 0$. Here, $\beta_n$ captures the `dissimilarity' of communities. To see this, consider a special $P$ where the diagonals are $1$ and the off-diagonals are all equal to $b$; in this example, $|1-b|$ captures the difference of within-community connectivity and between-community connectivity, and it can be shown that $|\lambda_{\min}(P)|=|1-b|$. 
	Consider \eqref{cond-theta}. The first condition requires that the total degree in different communities are balanced, which is mild. 
	The second condition is about degree heterogeneity. Let $\theta_{\max}$ and $\bar{\theta}$ be the maximum and average of $\theta_i$, respectively. In the second inequality of \eqref{cond-theta}, the left hand side is $O(n^{-1}\theta_{\max}/\bar{\theta})$, so this condition is satisfied as long as $\theta_{\max}/\bar{\theta}=O(n\beta_n)$. This is a very mild requirement.

	\paragraph{3.1 The misclassification error of AngleMin+}
	For any $|{\cal U}|\times K$ matrix $H$, let $\hat{\psi}_k(H)=\psi(f(A^{(k)}; H), f(X; H))$ be as in \eqref{AngleMin+}. AngleMin+ estimates the community label to the new node by finding the minimum of $\hat{\psi}_1(H),\ldots,\hat{\psi}_K(H)$, with $H=\hat{\Pi}_{\cal U}$.
	We first introduce a counterpart of $\hat{\psi}_k(H)$. 
	Recall that $\Omega$ is as in \eqref{DCBM2}, which is the `signal' matrix. Let $\Omega^{(k)}\in\mathbb{R}^n$ by $\Omega^{(k)}_j=\sum_{i\in {\cal L}\cap {\cal C}_k}\Omega_{ij}$, for $1\leq j\leq n$, and define 
	\beq \label{psi-population}
	\psi_k(H) = \psi\bigl( f(\Omega^{(k)}; H),\; f(\mathbb{E}X; H)\bigr), \qquad \mbox{for }1\leq k\leq K.  
	\eeq
	The next lemma gives the explicit expression of $\psi_k(H)$ for an arbitrary $H$. 
	\begin{lemma} \label{lemma:1}
		Consider the DCBM model where \eqref{cond-P}-\eqref{cond-theta} are satisfied. We define three $K\times K$ matrices: $G_{\cal LL}=\Pi_{\cal L}'\Theta_{\cal LL}\Pi_{\cal L}$, $G_{\cal UU}=\Pi_{\cal U}'\Theta_{\cal UU}\Pi_{\cal U}$, and $Q=G^{-1}_{\cal UU}\Pi_{\cal U}'\Theta_{\cal UU}H$. 
		For $1\leq k\leq K$, $\psi_k(H) = \arccos\Bigl(\frac{M_{kk^*}}{\sqrt{M_{kk}}\sqrt{M_{k^*k^*}}}\Bigr)$, where $M = P(G_{\cal LL}^2 + G_{\cal UU}QQ'G_{\cal UU})P$.  	\end{lemma}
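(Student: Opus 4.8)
The plan is to reduce $\psi_k(H)$ to a closed form by plugging the DCBM parametrization of $\Omega$ and of $\mathbb{E}X$ into the definition of $f(\cdot;H)$ in \eqref{Projection}, observing that both arguments of the angle collapse to a fixed $2K\times K$ matrix applied to $Pe_k$ and to $Pe_{k^*}$ (up to positive scalars that play no role in an angle), and then reading off $M$.

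First I would write the two arguments in closed form. Since $\Omega=\Theta\Pi P\Pi'\Theta$ on the $n$ existing nodes and $\pi_i=e_k$ for $i\in{\cal L}\cap{\cal C}_k$, one has $\Pi'\Theta\,\mathbf{1}\{\,\cdot\in{\cal L}\cap{\cal C}_k\,\}=\|\theta^{(k)}_{\cal L}\|_1\,e_k$, hence $\Omega^{(k)}=\|\theta^{(k)}_{\cal L}\|_1\,\Theta\Pi Pe_k$; likewise, by \eqref{DCBM} with $\pi^*=e_{k^*}$, $\mathbb{E}X=\theta_*\,\Theta\Pi Pe_{k^*}$. Next, note that $f(x;H)$ is exactly the concatenation of $\Pi_{\cal L}'x_{\cal L}$ and $H'x_{\cal U}$ in $\mathbb{R}^{2K}$: for $l=1,\dots,K$ the $l$-th of the first $K$ coordinates equals $x_{\cal L}'\mathbf{1}_{(l)}=\sum_{i\in{\cal L}\cap{\cal C}_l}x_i=(\Pi_{\cal L}'x_{\cal L})_l$ by disjointness of the communities, and the last $K$ coordinates are $H'x_{\cal U}$ by definition. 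Substituting $x=\Omega^{(k)}$ and restricting to ${\cal L}$ and ${\cal U}$ gives top block $\|\theta^{(k)}_{\cal L}\|_1\Pi_{\cal L}'\Theta_{\cal LL}\Pi_{\cal L}Pe_k=\|\theta^{(k)}_{\cal L}\|_1 G_{\cal LL}Pe_k$ and bottom block $\|\theta^{(k)}_{\cal L}\|_1 H'\Theta_{\cal UU}\Pi_{\cal U}Pe_k$; substituting $x=\mathbb{E}X$ yields the same with $\|\theta^{(k)}_{\cal L}\|_1$ replaced by $\theta_*$ and $e_k$ by $e_{k^*}$.

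The one algebraic identification that matters is $H'\Theta_{\cal UU}\Pi_{\cal U}=Q'G_{\cal UU}$, which follows at once from $Q=G_{\cal UU}^{-1}\Pi_{\cal U}'\Theta_{\cal UU}H$ and the symmetry of $G_{\cal UU}$. Therefore, with $B:=\begin{pmatrix}G_{\cal LL}\\ Q'G_{\cal UU}\end{pmatrix}\in\mathbb{R}^{2K\times K}$, we get $f(\Omega^{(k)};H)=\|\theta^{(k)}_{\cal L}\|_1\,BPe_k$ and $f(\mathbb{E}X;H)=\theta_*\,BPe_{k^*}$. The scalars $\|\theta^{(k)}_{\cal L}\|_1>0$ (since ${\cal L}\cap{\cal C}_k\neq\emptyset$) and $\theta_*>0$ do not affect the angle, so $\cos\psi_k(H)=\langle BPe_k,BPe_{k^*}\rangle/(\|BPe_k\|\,\|BPe_{k^*}\|)$. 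Using symmetry of $G_{\cal LL}$ and $G_{\cal UU}$ we have $B'B=G_{\cal LL}^2+G_{\cal UU}QQ'G_{\cal UU}$, so $\langle BPe_k,BPe_{k^*}\rangle=e_k'\,M\,e_{k^*}=M_{kk^*}$, $\|BPe_k\|^2=M_{kk}$, and $\|BPe_{k^*}\|^2=M_{k^*k^*}$, with $M=P(G_{\cal LL}^2+G_{\cal UU}QQ'G_{\cal UU})P$; this is the claimed identity. To see the $\arccos$ is well defined I would check $M_{kk}=\|BPe_k\|^2>0$: $|\lambda_{\min}(P)|\geq\beta_n>0$ makes $P$ invertible so $Pe_k\neq 0$, and $G_{\cal LL}=\diag(\|\theta^{(1)}_{\cal L}\|_1,\dots,\|\theta^{(K)}_{\cal L}\|_1)$ is invertible, whence $\|BPe_k\|\geq\|G_{\cal LL}Pe_k\|>0$.

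There is no genuine obstacle here: the proof is essentially bookkeeping. The only things requiring care are keeping the degree and membership matrices restricted to the $n$ existing nodes separate from those that include the new node $n+1$, spotting the identity $H'\Theta_{\cal UU}\Pi_{\cal U}=Q'G_{\cal UU}$ that aligns the unlabeled block of $f$ with $Q$, and confirming positivity of the scalar prefactors and of $M_{kk}$ so that the angle is determined by $M$ alone (this uses ${\cal L}\cap{\cal C}_k\neq\emptyset$ for every $k$ and the invertibility of $G_{\cal UU}$ implicit in the definition of $Q$).
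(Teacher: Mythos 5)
Your proposal is correct and follows essentially the same route as the paper's proof: both express $f(\Omega^{(k)};H)$ and $f(\mathbb{E}X;H)$ as positive scalars ($\|\theta^{(k)}_{\cal L}\|_1$ and $\theta_*$) times the fixed matrix $[G_{\cal LL},\,Q'G_{\cal UU}]'$ applied to $Pe_k$ and $Pe_{k^*}$ respectively, using the identity $H'\Theta_{\cal UU}\Pi_{\cal U}=Q'G_{\cal UU}$, and then read off $M$ from the Gram matrix. Your additional check that $M_{kk}>0$ (so the $\arccos$ is well defined) is a small point the paper leaves implicit, but otherwise the arguments coincide.
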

	The choice of $H$ is flexible. For convenience, we focus on the class of $H$ that is an eligible community membership matrix, i.e., $H=\hat{\Pi}_{\cal U}$. Our theory can be easily extended to more general forms of $H$. 
	
	\begin{definition}
		For any $b_0\in (0,1)$, we say that $\hat{\Pi}_{\cal U}$ is $b_0$-correct if $
		\min_{T}\bigl(\sum_{i\in{\cal U}}\theta_i\cdot 1\{T\hat{\pi}_i\neq\pi_i \}\bigr)\leq b_0\|\theta\|_1$, where the minimum is taken over all permutations of $K$ columns of $\hat{\Pi}_{\cal U}$. 
	\end{definition}
	
	The next two theorems study $\psi_k(H)$ and $\hat{\psi}_k(H)$, respectively, for $H=\hat{\Pi}_{\cal U}$.  

	\begin{theorem} \label{thm:oracle}
		Consider the DCBM model where \eqref{cond-P}-\eqref{cond-theta} hold.  Let $k^*$ denote the true community label of the new node. Suppose $\hat{\Pi}_{\cal U}$ is $b_0$-correct, for a constant $b_0\in (0,1)$.   When $b_0$ is properly small, there exists a constant $c_0>0$, which does not depend on $b_0$, such that $\psi_{k^*}(\hat{\Pi}_{\cal U})=0$ and $\min_{k\neq k^*}\{\psi_k(\hat{\Pi}_{\cal U})\}
		\geq c_0\beta_n$. 
	\end{theorem}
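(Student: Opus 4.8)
The plan is to trade the angle bound for a condition-number bound on a single $K\times K$ matrix. The claim $\psi_{k^*}(\hat\Pi_{\cal U})=0$ is immediate from Lemma~\ref{lemma:1}: setting $k=k^*$ gives $\psi_{k^*}=\arccos(M_{k^*k^*}/M_{k^*k^*})=\arccos 1=0$ (here $M_{k^*k^*}>0$ since ${\cal L}\cap{\cal C}_{k^*}\ne\emptyset$). For the separation, fix $k\ne k^*$, set $\rho_k:=\cos\psi_k(\hat\Pi_{\cal U})=M_{kk^*}/\sqrt{M_{kk}M_{k^*k^*}}$, and use the elementary bound $\arccos\rho=\int_\rho^1(1-t^2)^{-1/2}\,dt\ge\sqrt{2(1-\rho)}\ge\sqrt{1-\rho^2}$. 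Hence it suffices to exhibit a constant $c>0$ (depending only on $C_1,C_2,K$) with $1-\rho_k^2\ge c\,\beta_n^2$ for every $k\ne k^*$; then $\min_{k\ne k^*}\psi_k(\hat\Pi_{\cal U})\ge\sqrt{c}\,\beta_n$.

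First I would exploit $M=PRP$ from Lemma~\ref{lemma:1}, where $R:=G_{\cal LL}^2+G_{\cal UU}QQ'G_{\cal UU}\succeq 0$, so that $\rho_k$ is the cosine of the angle between $Pe_k$ and $Pe_{k^*}$ in the $R$-weighted inner product $\langle a,b\rangle_R=a'Rb$. The Pythagorean identity then gives $1-\rho_k^2=(v'Rv)/M_{kk}$, with $v=P(e_k-c_ke_{k^*})$ the $R$-orthogonal residual of $Pe_k$ onto $Pe_{k^*}$. Since $v'Rv=(e_k-c_ke_{k^*})'M(e_k-c_ke_{k^*})\ge\lambda_{\min}(M)\,\|e_k-c_ke_{k^*}\|^2\ge\lambda_{\min}(M)$ while $M_{kk}\le\lambda_{\max}(M)$, and since \eqref{cond-P} yields $\lambda_{\min}(M)\ge\beta_n^2\lambda_{\min}(R)$ and $\lambda_{\max}(M)\le(C_1K)^2\lambda_{\max}(R)$ (via $\|P\|\le K\|P\|_{\max}$), the whole problem collapses to showing that $R$ is well conditioned: $\lambda_{\min}(R)\ge c'\lambda_{\max}(R)$ for a constant $c'>0$.

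The upper bound is routine: $G_{\cal LL}=\diag(\|\theta_{\cal L}^{(1)}\|_1,\dots,\|\theta_{\cal L}^{(K)}\|_1)$ has norm $\le\|\theta\|_1$, and $G_{\cal UU}Q=\Pi_{\cal U}'\Theta_{\cal UU}\hat\Pi_{\cal U}=:T$ is the entrywise-nonnegative $\theta$-weighted confusion matrix, whose entries sum to $\|\theta_{\cal U}\|_1$, so $\|T\|\le\|T\|_F\le\|\theta\|_1$ and thus $\lambda_{\max}(R)\le 2\|\theta\|_1^2$. The real work is the matching lower bound. Here I would write $R=BB'$ with $B=[\,G_{\cal LL}\ \ T\,]\in\mathbb{R}^{K\times 2K}$, so $\lambda_{\min}(R)=\min_{\|x\|=1}(\|G_{\cal LL}x\|^2+\|T'x\|^2)$. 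By Remark 1, $\psi_k(\hat\Pi_{\cal U})$ is invariant to permuting the columns of $\hat\Pi_{\cal U}$, so I may assume the permutation attaining the minimum in the definition of $b_0$-correctness is the identity; then $\sum_{i\in{\cal U}}\theta_i\,1\{\hat\pi_i\ne\pi_i\}\le b_0\|\theta\|_1$, which forces $T=\diag(h_1,\dots,h_K)+\widetilde E$ with $h_k=\|\theta_{\cal U}^{(k)}\|_1$ and $\|\widetilde E\|\le 2b_0\|\theta\|_1$. Combining $\|\diag(h)x\|^2\le 2\|T'x\|^2+2\|\widetilde E\|^2$ with $g_k^2+\tfrac12 h_k^2\ge\tfrac14(g_k+h_k)^2=\tfrac14\|\theta^{(k)}\|_1^2$ (writing $g_k=\|\theta_{\cal L}^{(k)}\|_1$) and the balance part of \eqref{cond-theta}, which gives $\min_k\|\theta^{(k)}\|_1\ge\|\theta\|_1/(2KC_2)$, yields $\lambda_{\min}(R)\ge\tfrac14\min_k\|\theta^{(k)}\|_1^2-4b_0^2\|\theta\|_1^2\gtrsim\|\theta\|_1^2/(K^2C_2^2)$ as soon as $b_0$ is below a threshold depending only on $K$ and $C_2$. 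Feeding this back, $1-\rho_k^2\gtrsim\beta_n^2/(C_1^2C_2^2K^4)$, so the theorem holds with some $c_0\asymp 1/(C_1C_2K^2)$, which is manifestly independent of $b_0$.

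The main obstacle is the last step: converting $b_0$-correctness — a bound on misclustered $\theta$-mass — into a spectral lower bound on the aggregation matrix $R$, with only the community-balance half of \eqref{cond-theta} available to keep the correctly aggregated signal away from zero. Two observations make this go through: (a) $R=BB'$ with $B$ a block concatenation of two \emph{nearly diagonal} matrices whose diagonals add up \emph{exactly} to the per-community total degrees $\|\theta^{(k)}\|_1$, so imperfect clustering only perturbs, never destroys, the signal; and (b) by Remark 1 the permutation ambiguity between $\hat\Pi_{\cal U}$ and $\Pi$ can be removed for free, so $b_0$-correctness controls $\|\widetilde E\|$ directly. Everything else — the Pythagorean identity, the inequality $\arccos\rho\ge\sqrt{1-\rho^2}$, and the operator-norm bookkeeping — is mechanical.
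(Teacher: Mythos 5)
Your proposal is correct, and it reaches the same conclusion by a genuinely different route. Both arguments start identically: $\psi_{k^*}=0$ from Lemma~\ref{lemma:1}, and the separation is reduced to a lower bound on a trigonometric function of $\rho_k=M_{kk^*}/\sqrt{M_{kk}M_{k^*k^*}}$ via $\arccos\rho\ge\sqrt{2(1-\rho)}$. From there the paths diverge. The paper keeps the quantity $2(1-\rho_k)=(e_k-e_{k^*})'\tilde M(e_k-e_{k^*})$ with the diagonally normalized matrix $\tilde M=D_M^{-1/2}MD_M^{-1/2}$, proves the oracle bound for $\tilde M^{(0)}=D_{M^{(0)}}^{-1/2}P(G_{\cal LL}^2+G_{\cal UU}^2)PD_{M^{(0)}}^{-1/2}$, and then spends a long perturbation lemma (Lemma~\ref{lemma:sig:noisy:oracle}) transferring the bound to $\tilde M$, which requires simultaneously controlling $M-M^{(0)}$ and the change in the normalization $D_M$ versus $D_{M^{(0)}}$. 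You instead pass to the weaker but sufficient bound $1-\rho_k^2\ge\lambda_{\min}(M)/\lambda_{\max}(M)$, strip off the $P$ factors using $\|P\|\le KC_1$ and $\sigma_{\min}(P)\ge\beta_n$, and reduce everything to the condition number of $R=BB'$ with $B=[\,G_{\cal LL}\ \ T\,]$; the clustering error then enters only as an additive perturbation $\widetilde E$ of the $\theta$-weighted confusion matrix $T$, controlled by the elementary chain $\|\widetilde E\|\le\|\widetilde E\|_F\le\sum_{ij}|\widetilde E_{ij}|\le 2b_0\|\theta\|_1$ after using the permutation invariance of Remark~1 to fix the optimal permutation to the identity. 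All the individual steps check out (the Pythagorean identity $1-\rho_k^2=v'Rv/M_{kk}$, the bound $\|e_k-c_ke_{k^*}\|\ge1$, the estimate $g_k^2+\tfrac12h_k^2\ge\tfrac14\|\theta^{(k)}\|_1^2$, and the balance condition giving $\min_k\|\theta^{(k)}\|_1\ge\|\theta\|_1/(KC_2)$), and your threshold on $b_0$ and final constant $c_0$ depend only on $(K,C_1,C_2)$, as required. The trade-off: your argument is shorter and avoids the normalization bookkeeping entirely, at the cost of a constant that explicitly picks up factors of $K$ and $C_1$ through the condition-number detour, whereas the paper's targeted quadratic form at $e_k-e_{k^*}$ yields the sharper-looking constant $c_0=1/\sqrt{3C_2^2}$.
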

	
	
	\begin{theorem} \label{thm:real}
		Consider the DCBM model where \eqref{cond-P}-\eqref{cond-theta} hold. 
		There exists constant $C > 0$, such that for any $\delta\in (0,1/2)$, with probability $1-\delta$, simultaneously for $1\leq k\leq K$, 
		$|\hat{\psi}_k(\hat{\Pi}_{\cal U})-\psi_k(\hat{\Pi}_{\cal U})|\leq C \left( \sqrt{\frac{\log(1/\delta)}{\|\theta\|_1\cdot \min\{\theta^*, \|\theta_{\cal L}^{(k)}\|_1\}}} + \frac{\|\theta_{\cal L}^{(k)}\|^2}{\|\theta_{\cal L}^{(k)}\|_1\|\theta\|_1}\right)$. 
		
	\end{theorem}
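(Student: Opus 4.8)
The plan is to compare $\hat\psi_k(H)$ and $\psi_k(H)$ for $H=\hat\Pi_{\cal U}$ through a deterministic angle-perturbation bound, after first eliminating the dependence between $H$ and the data entering the angles. Note that $f(A^{(k)};H)$ and $f(X;H)$ in \eqref{AngleMin+} involve only $(A_{\cal LL},A_{\cal LU},X)$, while $\hat\Pi_{\cal U}$ is computed from $A_{\cal UU}$; since the entries of $\bar A$ above the diagonal are mutually independent under \eqref{DCBM}, conditioning on $A_{\cal UU}$ makes $H$ a fixed membership matrix while leaving $(A^{(k)})_{k=1}^K$ and $X$ with their original laws and independent Bernoulli coordinates. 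A second reduction: any angle lies in $[0,\pi]$, so if the claimed bound exceeds a fixed constant it holds trivially; we may therefore assume $\log(1/\delta)\le c\,\|\theta\|_1\min\{\theta^*,\|\theta^{(k)}_{\cal L}\|_1\}$ for all $k$ with a small absolute constant $c$, which also guarantees the relative perturbations below stay under $1/2$. The engine is the elementary fact that if $a,\tilde a,b,\tilde b$ are nonzero with $\|\tilde a-a\|\le\tfrac12\|a\|$ and $\|\tilde b-b\|\le\tfrac12\|b\|$ then $|\psi(\tilde a,\tilde b)-\psi(a,b)|\le 2\pi(\|\tilde a-a\|/\|a\|+\|\tilde b-b\|/\|b\|)$: $\psi$ is the geodesic distance between normalized vectors on the unit sphere, hence obeys the triangle inequality, one has $\|\tilde a/\|\tilde a\|-a/\|a\|\|\le 4\|\tilde a-a\|/\|a\|$, and a chord of length $\ell\le 2$ subtends an arc $\le\tfrac{\pi}{2}\ell$. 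Applying this with $a=f(\Omega^{(k)};H)$, $\tilde a=f(A^{(k)};H)$, $b=f(\mathbb{E}X;H)$, $\tilde b=f(X;H)$ reduces the theorem to (i) lower bounds for $\|f(\Omega^{(k)};H)\|$ and $\|f(\mathbb{E}X;H)\|$ and (ii) concentration of $\|f(A^{(k)};H)-f(\Omega^{(k)};H)\|$ and $\|f(X;H)-f(\mathbb{E}X;H)\|$.

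For (i), the computation behind Lemma~\ref{lemma:1} (immediate from \eqref{DCBM2} and \eqref{Projection}) gives the closed forms $f(\Omega^{(k)};H)=\|\theta^{(k)}_{\cal L}\|_1\,[\,(G_{\cal LL}Pe_k)',\,(H'\Theta_{\cal UU}\Pi_{\cal U}Pe_k)'\,]'$ and $f(\mathbb{E}X;H)=\theta^*\,[\,(G_{\cal LL}Pe_{k^*})',\,(H'\Theta_{\cal UU}\Pi_{\cal U}Pe_{k^*})'\,]'$. Since $G_{\cal LL}=\diag(\|\theta^{(1)}_{\cal L}\|_1,\dots,\|\theta^{(K)}_{\cal L}\|_1)$ and $P_{kk}=1$, the first block has norm at least $\|\theta^{(k)}_{\cal L}\|_1$ (times the scalar prefactor); and because the columns of $H=\hat\Pi_{\cal U}$ are $0/1$ indicators summing to $\vone$, with $P\ge 0$ and $P_{kk}=1$, Cauchy--Schwarz gives $\|H'\Theta_{\cal UU}\Pi_{\cal U}Pe_k\|\ge K^{-1/2}\sum_{i\in{\cal U}}\theta_iP_{y_i,k}\ge K^{-1/2}\|\theta^{(k)}_{\cal U}\|_1$. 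Using $\max(x,y)\ge\tfrac12(x+y)$ to combine the two blocks and then the balance condition in \eqref{cond-theta} ($\|\theta^{(k)}\|_1\ge(C_2K)^{-1}\|\theta\|_1$) yields $\|f(\Omega^{(k)};H)\|\ge c_4\,\|\theta^{(k)}_{\cal L}\|_1\|\theta\|_1$ and likewise $\|f(\mathbb{E}X;H)\|\ge c_4\,\theta^*\|\theta\|_1$, where $c_4>0$ depends only on $K,C_2$.

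For (ii), linearity of $f(\cdot;H)$ splits $f(A^{(k)};H)-f(\Omega^{(k)};H)=f(A^{(k)}-\mathbb{E}A^{(k)};H)+f(\mathbb{E}A^{(k)}-\Omega^{(k)};H)$. The bias $\mathbb{E}A^{(k)}-\Omega^{(k)}$ equals $-\diag(\Omega)$ restricted to ${\cal L}\cap{\cal C}_k$ (no self-loops), so it is supported on ${\cal L}$ with entries of size $\theta_j^2$; by \eqref{Projection} only the $k$-th of the first $K$ coordinates of $f(\mathbb{E}A^{(k)}-\Omega^{(k)};H)$ survives and the last $K$ vanish, giving $\|f(\mathbb{E}A^{(k)}-\Omega^{(k)};H)\|=\|\theta^{(k)}_{\cal L}\|^2$. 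Each of the $2K$ coordinates of $f(A^{(k)}-\mathbb{E}A^{(k)};H)$ is a sum of independent, centered, $[-1,1]$-valued variables $\pm(\bar A_{ij}-\Omega_{ij})$ over disjoint index pairs (with an extra factor $2$ when $i,j$ both range over ${\cal L}\cap{\cal C}_k$); using $\|P\|_{\max}\le C_1$ from \eqref{cond-P}, and noting that in the last $K$ coordinates only $i\in{\cal L}\cap{\cal C}_k$ and $j\in{\cal U}$ with $\hat\pi_j=e_\ell$ contribute, each coordinate has total variance $\le 2C_1\|\theta^{(k)}_{\cal L}\|_1\|\theta\|_1$. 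Bernstein's inequality, a union bound over the $O(K^2)$ coordinates and communities, and the reduced-regime assumption (which absorbs the linear Bernstein term into the square-root term) give, with probability $\ge 1-\delta/2$ and simultaneously over $k$, $\|f(A^{(k)};H)-f(\Omega^{(k)};H)\|\le C\sqrt{\|\theta^{(k)}_{\cal L}\|_1\|\theta\|_1\log(1/\delta)}+\|\theta^{(k)}_{\cal L}\|^2$. The identical argument applied to $X$, whose coordinates have variance $\le C_1\theta^*\|\theta\|_1$, gives $\|f(X;H)-f(\mathbb{E}X;H)\|\le C\sqrt{\theta^*\|\theta\|_1\log(1/\delta)}$ with probability $\ge 1-\delta/2$.

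Dividing the (ii) bounds by the (i) bounds and substituting into the angle-perturbation inequality yields, simultaneously in $k$, $|\hat\psi_k(H)-\psi_k(H)|\le C'\big(\sqrt{\tfrac{\log(1/\delta)}{\|\theta^{(k)}_{\cal L}\|_1\|\theta\|_1}}+\sqrt{\tfrac{\log(1/\delta)}{\theta^*\|\theta\|_1}}+\tfrac{\|\theta^{(k)}_{\cal L}\|^2}{\|\theta^{(k)}_{\cal L}\|_1\|\theta\|_1}\big)$; bounding $\max\{\|\theta^{(k)}_{\cal L}\|_1^{-1},(\theta^*)^{-1}\}\le(\min\{\theta^*,\|\theta^{(k)}_{\cal L}\|_1\})^{-1}$ collapses the two square-root terms into the single one in the statement, and since the bad event has probability $\ge 1-\delta$ conditionally on every realization of $A_{\cal UU}$ it does so unconditionally. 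The main obstacle is the bookkeeping linking (i) and (ii): obtaining $\|\theta\|_1$ (rather than $\|\theta^{(k)}_{\cal L}\|_1$) in the denominator depends on the lower bound $\|f(\Omega^{(k)};H)\|\gtrsim\|\theta^{(k)}_{\cal L}\|_1\|\theta\|_1$, which uses the membership-matrix structure of $H=\hat\Pi_{\cal U}$ together with $P_{kk}=1$ and degree balance, and this must be matched against a variance estimate in (ii) that is genuinely of order $\|\theta^{(k)}_{\cal L}\|_1\|\theta\|_1$ and not larger — in particular ruling out cross terms in the unlabeled block.
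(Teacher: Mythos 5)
Your proposal is correct and follows essentially the same route as the paper's proof: condition on $A_{\cal UU}$ so that $\hat{\Pi}_{\cal U}$ is fixed and independent of $(A_{\cal LL},A_{\cal LU},X)$, control $|\hat{\psi}_k-\psi_k|$ by the angle triangle inequality plus a relative-perturbation bound (the paper uses $\psi(x,x+y)\le\arcsin(\|y\|/\|x\|)$ where you use the normalize-then-chord-arc packaging, which is equivalent), apply Bernstein coordinatewise to the $2K$ entries of $f(\cdot;H)$, and lower-bound $\|f(\Omega^{(k)};H)\|\gtrsim\|\theta^{(k)}_{\cal L}\|_1\|\theta\|_1$ and $\|f(\mathbb{E}X;H)\|\gtrsim\theta^*\|\theta\|_1$ via degree balance. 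The only cosmetic difference is that you split the no-self-loop bias $-\mathrm{diag}(\Omega)$ off as a separate deterministic vector of norm $\|\theta^{(k)}_{\cal L}\|^2$, whereas the paper absorbs it into the deviation threshold in the $l=k$ coordinate; both yield the same second term in the bound.
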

	
	Write $\hat{\psi}_k=\hat{\psi}_k(\hat{\Pi}_{\cal U})$ and $\psi_k=\psi_k(\hat{\Pi}_{\cal U})$ for short. 
	When $\max_{k}\{|\hat{\psi}_k-\psi_k|\}< (1/2)\min_{k\neq k^*}\{\psi_k\}$, the community label of the new node is correctly estimated. We can immediately translate the results in Theorems~\ref{thm:oracle}-\ref{thm:real} to an upper bound for the misclassification probability. 

	\begin{corollary} \label{cor:Error}
		Consider the DCBM model where \eqref{cond-P}-\eqref{cond-theta} hold. 
		Suppose for some constants $b_0\in (0,1)$ and $\epsilon\in (0,1/2)$, $\hat{\Pi}_{\cal U}$ is $b_0$-correct with probability $1-\epsilon$.  
		When $b_0$ is properly small, there exist constants $C_0>0$ and $\bar{C}>0$, which do not depend on $(b_0, \epsilon)$, such that $
		\mathbb{P}(\hat{y}\neq k^*) \leq \epsilon + \bar{C} \sum_{k=1}^K \exp\Bigl(- C_0 \beta_n^2\|\theta\|_1\cdot \min\{\theta^*, \|\theta_{\cal L}^{(k)}\|_1\}\Bigr)$. 
	\end{corollary}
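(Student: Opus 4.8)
The plan is to deduce the bound by combining the population separation of Theorem~\ref{thm:oracle} with the concentration of Theorem~\ref{thm:real}, after first discarding the event on which $\hat{\Pi}_{\cal U}$ fails to be $b_0$-correct. Write $\hat{\psi}_k=\hat{\psi}_k(\hat{\Pi}_{\cal U})$ and $\psi_k=\psi_k(\hat{\Pi}_{\cal U})$, and let $E$ be the event that $\hat{\Pi}_{\cal U}$ is $b_0$-correct, so $\mathbb{P}(E^c)\le\epsilon$ and $\mathbb{P}(\hat{y}\neq k^*)\le\mathbb{P}(\{\hat{y}\neq k^*\}\cap E)+\epsilon$. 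On $E$, Theorem~\ref{thm:oracle} gives $\psi_{k^*}=0$ and $\min_{k\neq k^*}\psi_k\ge c_0\beta_n$, with $c_0>0$ a constant independent of $b_0$. Hence, if $\max_{1\le j\le K}|\hat{\psi}_j-\psi_j|<\tfrac{1}{2} c_0\beta_n$, then $\hat{\psi}_{k^*}<\tfrac{1}{2} c_0\beta_n$ while $\hat{\psi}_k>\psi_k-\tfrac{1}{2} c_0\beta_n\ge\tfrac{1}{2} c_0\beta_n$ for every $k\neq k^*$, so $k^*$ is the unique minimizer of $k\mapsto\hat{\psi}_k$ and $\hat{y}=k^*$. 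Contrapositively, $\{\hat{y}\neq k^*\}\cap E\subseteq\bigcup_{k=1}^K\{|\hat{\psi}_k-\psi_k|\ge\tfrac{1}{2} c_0\beta_n\}$, and these events are controlled directly by Theorem~\ref{thm:real}.

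Next I would bound $\mathbb{P}(|\hat{\psi}_k-\psi_k|\ge\tfrac{1}{2} c_0\beta_n)$ for each fixed $k$. The deterministic (second) term in the bound of Theorem~\ref{thm:real} is $C\|\theta_{\cal L}^{(k)}\|^2/(\|\theta_{\cal L}^{(k)}\|_1\|\theta\|_1)\le Cc_3\beta_n$ by the second inequality of \eqref{cond-theta}; since $c_3$ is a properly small constant, we may assume $Cc_3\le c_0/4$, so this term is at most $\tfrac{1}{4} c_0\beta_n$. Applying Theorem~\ref{thm:real} with failure probability $\delta_k$, on an event of probability at least $1-\delta_k$ the stated inequality holds for all coordinates, in particular $|\hat{\psi}_k-\psi_k|\le\tfrac{1}{4} c_0\beta_n+C\sqrt{\log(1/\delta_k)\big/\bigl(\|\theta\|_1\min\{\theta^*,\|\theta_{\cal L}^{(k)}\|_1\}\bigr)}$. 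Choosing $\delta_k$ so that the square-root term also equals $\tfrac{1}{4} c_0\beta_n$, i.e.\ $\delta_k=\exp\!\bigl(-C_0\beta_n^2\|\theta\|_1\min\{\theta^*,\|\theta_{\cal L}^{(k)}\|_1\}\bigr)$ with $C_0=c_0^2/(16C^2)$, we obtain $\mathbb{P}(|\hat{\psi}_k-\psi_k|\ge\tfrac{1}{2} c_0\beta_n)\le\delta_k$.

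A union bound over the $K$ communities gives $\mathbb{P}(\{\hat{y}\neq k^*\}\cap E)\le\sum_{k=1}^K\delta_k$, and adding $\epsilon$ yields exactly the claimed inequality with $\bar{C}=1$. One caveat: when some $\delta_k\ge1/2$, Theorem~\ref{thm:real} (stated for $\delta\in(0,1/2)$) does not apply for that $k$; but then $\exp(-C_0\beta_n^2\|\theta\|_1\min\{\theta^*,\|\theta_{\cal L}^{(k)}\|_1\})\ge1/2$, so taking $\bar{C}\ge2$ makes the right-hand side at least $1$ and the bound holds trivially. Since $c_0$ (from Theorem~\ref{thm:oracle}) and $C$ (from Theorem~\ref{thm:real}) depend on neither $b_0$ nor $\epsilon$, neither do $C_0$ and $\bar{C}$.

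There is no deep obstacle: the corollary is an assembly of Theorems~\ref{thm:oracle} and \ref{thm:real}. The two points that need care are (i) checking that the deterministic bias term in Theorem~\ref{thm:real} is genuinely dominated by the population margin $c_0\beta_n$, which is precisely what the smallness of $c_3$ in \eqref{cond-theta} provides and is the reason the regularity conditions stipulate $c_3$ be properly small; and (ii) converting the single-$\delta$, $k$-dependent form of Theorem~\ref{thm:real} into a termwise sum of exponentials by allocating one $\delta_k$ per community, while treating separately the vacuous weak-signal regime.
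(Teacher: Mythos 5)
Your proposal is correct and follows essentially the same route as the paper: condition on the $b_0$-correctness event, use Theorem~\ref{thm:oracle} for the margin $c_0\beta_n$, absorb the deterministic bias term via the smallness of $c_3$ in \eqref{cond-theta}, and invert the concentration bound of Theorem~\ref{thm:real} by choosing $\delta$ so the stochastic term matches a fixed fraction of the margin. The only cosmetic differences are that the paper uses a single worst-case $\delta=\exp(-C_0\beta_n^2\|\theta\|_1\min\{\theta^*,\min_k\|\theta_{\cal L}^{(k)}\|_1\})$ (then bounds it by the sum over $k$) rather than one $\delta_k$ per community, and splits the margin as $\tfrac16+\tfrac16\le\tfrac13$ instead of $\tfrac14+\tfrac14=\tfrac12$.
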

	
	{\bf Remark 4:} When $\min_k \|\theta_{\cal L}^{(k)}\|_1  \geq O(\theta^*) $, the stochastic noise in $X$ will dominate the error, and the misspecification probability in Corollary \ref{cor:Error} will not improve with more label information. Typically, the error rate will be the same as in the ideal case that $\Pi_{\cal U}$ is known (except there is no $\epsilon$ in the ideal case). Hence, only little label information can make AngleMin+ perform almost as well as a fully supervised algorithm that possesses all the label information. We will formalize this in Section \ref{sec:effi}.

	{\bf Remark 5:} Notice that $\min_{T}\bigl(\sum_{i\in{\cal U}}\theta_i\cdot 1\{T\hat{\pi}_i\neq\pi_i \}\bigr) \leq \frac{1}{K!} \sum_{T} \bigl(\sum_{i\in{\cal U}}\theta_i\cdot 1\{T\hat{\pi}_i\neq\pi_i \}\bigr) \leq \frac{K - 1}{K} \|\theta_{\cal U}\|_1$. Therefore, if $\|\theta_{\cal L}\|_1 \geq (1 - \frac{K b_0}{K - 1})\|\theta\|_1$, then $\min_{T}\bigl(\sum_{i\in{\cal U}}\theta_i\cdot 1\{T\hat{\pi}_i\neq\pi_i \}\bigr)\leq b_0\|\theta\|_1$ is always true. In other words, as long as the information on the labels is strong enough, AngleMin+ would not require any assumption on the unsupervised community detection algorithm.
	
	For AngleMin+ to be consistent, we need the bound in Corollary~\ref{cor:Error} to be $o(1)$. 
	It then requires that for a small constant $b_0$, $\hat{\Pi}_{\cal U}$ is $b_0$-correct with probability $1-o(1)$. 
	This is a mild requirement and can be achieved by several unsupervised community detection algorithms.  The next corollary studies the specific version of AngleMin+, when $\hat{\Pi}_{\cal U}$ is from SCORE+:


	\begin{corollary} \label{thm:Consistency}
		Consider the DCBM model where \eqref{cond-P}-\eqref{cond-theta} hold. We apply SCORE+ to obtain $\hat{\Pi}_{\cal U}$ and plug it into AngleMin+. As $n\to\infty$, suppose for some constant $q_0>0$, $\min_{i\in {\cal U}}\theta_i\geq q_0\max_{i\in {\cal U}}\theta_i$,  $\beta_n\|\theta_{\cal U}\|\geq q_0\sqrt{\log(n)}$, $\beta_n^2\|\theta\|_1\theta^*\to\infty$, and $\beta_n^2\|\theta\|_1 \min_{k}\{\|\theta_{\cal L}^{(k)}\|_1\}\to\infty$. Then,  $\mathbb{P}(\hat{y}\neq k^*) \to 0$, so the AngleMin+ estimate is consistent. 
	\end{corollary}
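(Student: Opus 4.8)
The plan is to deduce this corollary directly from Corollary~\ref{cor:Error}, after supplying a consistency guarantee for the SCORE+ clustering $\hat\Pi_{\cal U}$. Corollary~\ref{cor:Error} bounds $\mathbb{P}(\hat y\neq k^*)$ by $\epsilon+\bar C\sum_{k=1}^K\exp(-C_0\beta_n^2\|\theta\|_1\min\{\theta^*,\|\theta^{(k)}_{\cal L}\|_1\})$, valid whenever $\hat\Pi_{\cal U}$ is $b_0$-correct with probability $1-\epsilon$ and $b_0$ lies below the ``properly small'' threshold of that corollary. So it suffices to show, under the stated conditions, that (i) a $b_0$ at that threshold can be taken with $\epsilon=o(1)$, and (ii) the exponential sum is $o(1)$.

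Part (ii) is immediate: by $\beta_n^2\|\theta\|_1\theta^*\to\infty$ and $\beta_n^2\|\theta\|_1\min_k\|\theta^{(k)}_{\cal L}\|_1\to\infty$, for every fixed $k$ we have $\beta_n^2\|\theta\|_1\min\{\theta^*,\|\theta^{(k)}_{\cal L}\|_1\}\ge\beta_n^2\|\theta\|_1\min\{\theta^*,\min_\ell\|\theta^{(\ell)}_{\cal L}\|_1\}\to\infty$, and since $K$ is fixed the finite sum of these vanishing exponentials tends to $0$.

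For part (i), I would invoke the consistency theorem for SCORE+ from \cite{Jin_2021}, applied to the DCBM restricted to ${\cal U}$ (passing to the giant component of $A_{\cal UU}$ if it is disconnected). There the signal matrix is $\Omega_{\cal UU}=\Theta_{\cal UU}\Pi_{\cal U}P\Pi_{\cal U}'\Theta_{\cal UU}$; under \eqref{cond-P}-\eqref{cond-theta} together with $\min_{i\in{\cal U}}\theta_i\ge q_0\max_{i\in{\cal U}}\theta_i$, its $K$-th eigenvalue in magnitude is of order $\beta_n\|\theta_{\cal U}\|^2$, while its leading eigenvalue and the spectral norm of the noise $A_{\cal UU}-\mathbb{E}A_{\cal UU}$ are of order $\|\theta_{\cal U}\|^2$ and $\|\theta_{\cal U}\|$, respectively, up to constants and logarithmic factors. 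Hence the signal-to-noise requirement behind the SCORE+ guarantee is, up to a $\sqrt{\log n}$ factor, the divergence of $\beta_n\|\theta_{\cal U}\|^2/\|\theta_{\cal U}\|=\beta_n\|\theta_{\cal U}\|$, which is exactly the hypothesis $\beta_n\|\theta_{\cal U}\|\ge q_0\sqrt{\log n}$. This yields that the $\theta$-weighted misclustering fraction of $\hat\Pi_{\cal U}$ (minimized over column permutations) is $o(1)$ with probability $1-o(1)$; bounding $\min_T\sum_{i\in{\cal U}}\theta_i\,1\{T\hat\pi_i\neq\pi_i\}$ by that fraction times $\|\theta_{\cal U}\|_1\le\|\theta\|_1$ shows that for all large $n$ this quantity is below any fixed positive $b_0$, so $\hat\Pi_{\cal U}$ is $b_0$-correct with probability $\epsilon_n=o(1)$. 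Plugging $\epsilon=\epsilon_n$ and this $b_0$ into Corollary~\ref{cor:Error} and combining with part (ii) gives $\mathbb{P}(\hat y\neq k^*)\to 0$.

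The main obstacle is this middle step — aligning the hypotheses with the precise form of the SCORE+ consistency statement. The care is mostly bookkeeping: handling a possibly disconnected $A_{\cal UU}$; converting a (possibly unweighted) SCORE+ error bound into the $\theta$-weighted form used in the definition of $b_0$-correctness, which is legitimate here because the bounded-degree-ratio condition on ${\cal U}$ makes the two error notions comparable; and checking that \eqref{cond-P}-\eqref{cond-theta} plus the balance condition pin down $\lambda_K(\Omega_{\cal UU})\asymp\beta_n\|\theta_{\cal U}\|^2$ and control $\lambda_1(\Omega_{\cal UU})$, so that the eigenvalue-ratio (coherence) conditions required by SCORE+ hold. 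Once the SCORE+ guarantee is in hand, the remainder is the short deterministic reduction above.
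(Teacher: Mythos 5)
Your proposal is correct and follows essentially the same route as the paper: invoke Corollary~\ref{cor:Error}, observe that the exponential terms vanish under the two divergence hypotheses, and use the SCORE+ consistency guarantee of \cite{Jin_2021} (their Theorem 2.2, enabled by $\min_{i\in{\cal U}}\theta_i\geq q_0\max_{i\in{\cal U}}\theta_i$ and $\beta_n\|\theta_{\cal U}\|\geq q_0\sqrt{\log n}$) to get $b_0$-correctness with probability $1-o(1)$. Your additional bookkeeping on converting the SCORE+ error bound into the $\theta$-weighted $b_0$-correctness notion is a detail the paper leaves implicit, but it does not change the argument.
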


	\paragraph{3.2 Comparison with an information theoretical lower bound} \label{sec:effi}
	

	We compare the performance of AngleMin+ with an ideal estimate that has access to all model parameters, except for the community label $k^*$ of the new node. For simplicity, we first consider the case of $K=2$.  
	For any label predictor $\tilde{y}$ for the new node, define $\mathrm{Risk}(\tilde{y})=\sum_{k^* \in [K]} \mathbb{P}(\tilde{y} \neq k^*| \pi^* = e_{k^*})$. 
	
	\begin{lemma} \label{lemma:opt}
		Consider a DCBM with $K =2$ and $P=(1-b)I_2+b{\bf 1}_2{\bf 1}_2'$. Suppose $\theta^* = o(1)$, $\frac{\theta^*}{\min_{k} \|\theta^{(k)}_{\cal L}\|_1} = o(1)$, $1-b=o(1)$,  $\frac{\|\theta^{(1)}_{\cal L}\|_1}{\|\theta^{(2)}_{\cal L}\|_1}=\frac{\|\theta^{(1)}_{\cal U}\|_1}{ \|\theta^{(2)}_{\cal U}\|_1}=1$. 	There exists a constant $c_4>0$ such that
		$    \label{equ:opt21}
		\inf_{\Tilde{y}}\{\mathrm{Risk}(\Tilde{y})\} \ge c_4 \exp\Bigl\{-2[1+o(1)]\frac{(1-b)^2}{8} \cdot \theta^* (\|\theta_{\cal L}\|_1 + \|\theta_{\cal U}\|_1) \Bigr\},
		$ 
		where the infimum is taken over all measurable functions of $A$, $X$, and parameters $\Pi_{\cal L}$, $\Pi_{\cal U}$, $\Theta$, $P$, $\theta^*$.
		In AngleMin+, suppose the second part of condition \ref{cond-theta} holds with $c_3 = o(1)$, $\hat{\Pi}_{\cal U}$ is $\tilde{b}_0$-correct with $\tilde{b}_0 \overset{a.s.}{\to} 0$. There is a constant $C_4>0$ such that,
		$ \label{equ:opt22}
		\mathrm{Risk}(\hat{y}) \le C_4 \exp\Bigl\{- [1-o(1)]\frac{(1-b)^2}{8} \cdot \theta^* \frac{(\|\theta_{\cal L}\|_1^2 + \|\theta_{\cal U}\|_1^2)^2}{\|\theta_{\cal L}\|_1^3 + \|\theta_{\cal U}\|_1^3}  \Bigr\}. 
		$

	\end{lemma}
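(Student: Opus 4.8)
The two inequalities are proved by different means: the first is an information‑theoretic (Le Cam) lower bound, the second a sharp‑constant refinement of the analysis behind Theorems~\ref{thm:oracle}--\ref{thm:real} specialized to $K=2$.

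\textbf{Lower bound.} Conditionally on all parameters the law of $A$ does not depend on $k^*$, so the likelihood ratio between $\{k^*=1\}$ and $\{k^*=2\}$ is a function of $X$ alone; hence $\inf_{\tilde{y}}\mathrm{Risk}(\tilde{y})=1-\mathrm{TV}(\mathbb P_1,\mathbb P_2)$, where $\mathbb P_k$ is the law of $X$ when $\pi^*=e_k$, a product of Bernoullis with parameters $\theta^*\theta_j(Pe_k)_{c(j)}$ over $j\in{\cal L}\cup{\cal U}$ ($c(j)$ the community of $j$). Since $\mathrm{TV}(\mathbb P_1,\mathbb P_2)\le\sqrt{1-\rho^2}$ with $\rho=\prod_j\rho_j$ the Hellinger affinity, $\inf_{\tilde{y}}\mathrm{Risk}(\tilde{y})\ge\tfrac12\rho^2$. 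For $P=(1-b)I_2+b{\bf 1}_2{\bf 1}_2'$ the two Bernoulli parameters at coordinate $j$ are $\theta^*\theta_j$ and $b\theta^*\theta_j$ in some order, and a second‑order expansion in $a_j:=\theta^*\theta_j$ gives $1-\rho_j=\tfrac{(1-\sqrt{b})^2}{2}a_j+\tfrac{(1-b)^2}{8}a_j^2+O(a_j^3)$; with $\theta^*=o(1)$ and $1-b=o(1)$ (so $(1-\sqrt{b})^2=\tfrac{(1-b)^2}{4}(1+o(1))$) this equals $[1+o(1)]\tfrac{(1-b)^2}{8}\theta^*\theta_j$ uniformly in $j$. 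Summing and using $-\log(1-x)=x(1+o(1))$ yields $-\log\rho=[1+o(1)]\tfrac{(1-b)^2}{8}\theta^*(\|\theta_{\cal L}\|_1+\|\theta_{\cal U}\|_1)$, which gives the claim with $c_4=\tfrac12$.

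\textbf{Upper bound.} Take $k^*=1$ ($k^*=2$ is identical by the symmetry of $P$ and of the degree profiles). With the balance conditions, $G_{\cal LL}=\tfrac12\|\theta_{\cal L}\|_1I_2$, $G_{\cal UU}=\tfrac12\|\theta_{\cal U}\|_1I_2$, and for $\hat{\Pi}_{\cal U}$ exactly correct $Q=I_2$, so Lemma~\ref{lemma:1} gives $M=\tfrac14(\|\theta_{\cal L}\|_1^2+\|\theta_{\cal U}\|_1^2)P^2$ and $\cos\psi_{k'}=\tfrac{2b}{1+b^2}$, i.e. $\psi_{k'}=[1+o(1)]|1-b|$. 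One checks that (for exactly correct $\hat{\Pi}_{\cal U}$) $f(\Omega^{(1)};\hat{\Pi}_{\cal U})$ is parallel to $f(\mathbb E X;\hat{\Pi}_{\cal U})$ and $f(\Omega^{(1)};\hat{\Pi}_{\cal U})-f(\Omega^{(2)};\hat{\Pi}_{\cal U})$ is a multiple (of sign $1-b$) of $(\|\theta_{\cal L}\|_1,-\|\theta_{\cal L}\|_1,\|\theta_{\cal U}\|_1,-\|\theta_{\cal U}\|_1)'$, so that, up to the perturbations discussed below, the AngleMin+ decision is the sign of
\[
S=\|\theta_{\cal L}\|_1\Bigl(\sum_{i\in{\cal L}\cap{\cal C}_1}X_i-\sum_{i\in{\cal L}\cap{\cal C}_2}X_i\Bigr)+\|\theta_{\cal U}\|_1\Bigl(\sum_{i\in{\cal U}\cap{\cal C}_1}X_i-\sum_{i\in{\cal U}\cap{\cal C}_2}X_i\Bigr),
\]
and $\{\hat{y}\neq1\}$ is contained, up to a negligible event, in $\{\mathrm{sign}(1-b)\,S\le c_n|\mathbb E S|\}$ for some $c_n\to0$. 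A Bernoulli moment computation gives $\mathbb E[S\mid k^*=1]=\tfrac{\theta^*(1-b)}{2}(\|\theta_{\cal L}\|_1^2+\|\theta_{\cal U}\|_1^2)$ and, using $b\to1$ and $\theta^*=o(1)$ (so $\mathrm{Var}(X_i)=[1+o(1)]\mathbb E X_i$), $\mathrm{Var}[S\mid k^*=1]=[1+o(1)]\theta^*(\|\theta_{\cal L}\|_1^3+\|\theta_{\cal U}\|_1^3)$. Since $S=\sum_i c_iX_i$ with $|c_i|\le\max\{\|\theta_{\cal L}\|_1,\|\theta_{\cal U}\|_1\}$ and the optimal exponential tilt $\lambda^*=|\mathbb E S|/\mathrm{Var}(S)$ satisfies $\lambda^*\max_i|c_i|=O(1-b)=o(1)$, the tail lies in the moderate‑deviation regime, the cubic and higher cumulant terms are negligible, and a Chernoff bound yields a tail probability $\le\exp\bigl(-[1-o(1)](\mathbb E S)^2/(2\mathrm{Var}(S))\bigr)$; substituting the moments produces exactly $\exp\bigl(-[1-o(1)]\tfrac{(1-b)^2}{8}\theta^*\tfrac{(\|\theta_{\cal L}\|_1^2+\|\theta_{\cal U}\|_1^2)^2}{\|\theta_{\cal L}\|_1^3+\|\theta_{\cal U}\|_1^3}\bigr)$. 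Adding the symmetric $k^*=2$ term and the (much smaller) perturbation probabilities gives the bound with a suitable $C_4$.

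\textbf{Main obstacle.} The crux is showing that replacing $\hat{\Pi}_{\cal U}$ by $\Pi_{\cal U}$ and $f(A^{(k)};\hat{\Pi}_{\cal U})$ by $f(\Omega^{(k)};\hat{\Pi}_{\cal U})$ shifts the decision boundary by an amount that is $o(1)$ relative to $\sqrt{\mathrm{Var}(S)}$ on the scale that governs the exponent. Two structural features make this possible: $\hat{\psi}_k$ depends on $f(A^{(k)};H)$ only through the \emph{ratio} $\langle f(A^{(k)};H),f(X;H)\rangle/\|f(A^{(k)};H)\|$, and in the second‑order expansion of this ratio the bilinear‑form noise of $f(A^{(k)};H)$ enters as multiplicative corrections, controlled by the second inequality in \eqref{cond-theta} with $c_3=o(1)$; and the clustering error enters $\cos\psi_{k'}$ through numerator and denominator with the same leading coefficient, so its effect on $\psi_{k'}$ is a relative $O(\tilde{b}_0)=o(1)$ perturbation. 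Making these heuristics rigorous---Bernstein bounds for the quadratic‑in‑$W$ fluctuations of $f(A^{(k)};H)$, and propagating the $\tilde{b}_0$‑correctness of $\hat{\Pi}_{\cal U}$ through $Q$ and $M$---is where essentially all the work lies; by contrast, the lower bound is a routine Hellinger‑affinity computation.
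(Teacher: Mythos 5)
Your proposal is correct and follows essentially the same route as the paper: the lower bound is the identical Le Cam/Hellinger-affinity computation (the paper invokes Tsybakov's two-point bound, factorizes the affinity over the independent Bernoulli coordinates of $X$, and notes the law of $A$ is free of $k^*$), and the upper bound likewise linearizes the AngleMin+ decision into the weighted sum $\langle w, v^*\rangle$ of the $X_i$'s with $A$-measurable weights, computes the same mean and variance, and applies Bernstein conditionally on $A$. The "main obstacle" you identify — showing the replacement of $\hat{\Pi}_{\cal U}$ by $\Pi_{\cal U}$ and of $f(A^{(k)};H)$ by $f(\Omega^{(k)};H)$ perturbs $w$ only by $o(|1-b|)$ relative terms — is exactly where the paper's proof spends its effort, via the concentration bound of Lemma~\ref{lemma:thm2:1} and the $\tilde b_0$-correctness propagated through $Q$ and $M$, so your outline matches the executed argument.
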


	%

	Lemma \ref{lemma:opt} indicates that the classification error of AngleMin+ is almost the same as the information theoretical lower bound of an algorithm that knows all the parameters except $\pi^*$ apart from a mild difference of the exponents. This difference comes from two sources. The first is the extra "2" in the exponent of $\mathrm{Risk}(\hat{y})$, which is largely an artifact of proof techniques, because we bound the total variation distance by the Hellinger distance (the total variation distance is hard to analyze directly). 
	The second is the  difference of $\|\theta_{\cal L}\|_1 + \|\theta_{\cal U}\|_1 $ in $\inf_{\Tilde{y}}\{\mathrm{Risk}(\Tilde{y})\}$ and $\frac{(\|\theta_{\cal L}\|_1^2 + \|\theta_{\cal U}\|_1^2)^2}{\|\theta_{\cal L}\|_1^3 + \|\theta_{\cal U}\|_1^3}$ in $\mathrm{Risk}(\hat{y})$. Note that $\frac{(\|\theta_{\cal L}\|_1^2 + \|\theta_{\cal U}\|_1^2)^2}{\|\theta_{\cal L}\|_1^3 + \|\theta_{\cal U}\|_1^3} \le \|\theta_{\cal L}\|_1 + \|\theta_{\cal U}\|_1 \le 1.125 \frac{(\|\theta_{\cal L}\|_1^2 + \|\theta_{\cal U}\|_1^2)^2}{\|\theta_{\cal L}\|_1^3 + \|\theta_{\cal U}\|_1^3}$, so this difference is quite mild. It arises from the fact that AngleMin+ does not aggregate the information in labeled and unlabeled data by adding the first and last $K$ coordinates of $f(x; H)$ together. The reason we do not do this is that unsupervised community detection methods only provide class labels up to a permutation, and practically it is really hard to estimate this permutation, which will result in the algorithm being extremely unstable. To conclude, the difference of the error rate of our method and the information theoretical lower bound is
	 mild, demonstrating that our algorithm is nearly optimal. 
	For a general $K$, we have a similar conclusion: 
	\begin{theorem} \label{thm:optimality2}
		Suppose the conditions of Corollary~\ref{cor:Error} hold, where $b_0$ is properly small
		, and suppose that $\hat{\Pi}_{\cal U}$ is $b_0$-correct. Furthermore, we assume for sufficiently large constant $C_3$, $\theta^*\leq \frac{1}{C_3}$, $\theta^*\leq \min_{k \in [K]} C_3\|\theta_{\cal L}^{(k)}\|_1$, and for a constant $r_0>0$, $\min_{k\neq \ell }\{P_{k\ell}\}\geq r_0$.
		Then, there is a constant $\tilde{c}_2=\tilde{c}_2(K, C_1, C_2, C_3, c_3, r_0) > 0$ such that  $[-\log(\tilde{c}_2\mathrm{Risk}(\hat{y}))]/[-\log(\inf_{\Tilde{y}}\{\mathrm{Risk}(\Tilde{y})\})]\geq \tilde{c}_2$. 
	\end{theorem}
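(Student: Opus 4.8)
The plan is to show that both $-\log\mathrm{Risk}(\hat y)$ and $-\log\inf_{\tilde y}\mathrm{Risk}(\tilde y)$ are of order $\tau^2\theta^*\|\theta\|_1$ up to multiplicative constants depending only on $(K,C_1,C_2,C_3,c_3,r_0)$, where $\tau^2:=\min_{a\neq b}\|P(e_a-e_b)\|^2$ is a purely structural quantity; the theorem then follows by elementary bookkeeping. The three extra hypotheses are used as follows: $\theta^*\le 1/C_3$ makes every Bernoulli probability below uniformly small, so Bhattacharyya affinities may be Taylor-expanded; $\theta^*\le C_3\min_k\|\theta^{(k)}_{\cal L}\|_1$ together with the first part of \eqref{cond-theta} reduces $\min\{\theta^*,\|\theta^{(k)}_{\cal L}\|_1\}$ to order $\theta^*$ and (with $c_3$ small) keeps the bias term of Theorem~\ref{thm:real} negligible; and $\min_{k\neq\ell}P_{k\ell}\ge r_0$, together with the normalization $P_{kk}=1$ which forces all entries of $P$ into $[r_0,1]$, is exactly what lets one convert between the angular margins governing AngleMin+ and the $\sqrt{\cdot}$-type separations governing the lower bound. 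A key point is that Corollary~\ref{cor:Error} alone is too weak here: its exponent $\beta_n^2\theta^*\|\theta\|_1$ uses only $\min_{k\neq k^*}\psi_k\ge c_0\beta_n$ from Theorem~\ref{thm:oracle}, and $\beta_n=|\lambda_{\min}(P)|$ can be far below $\tau$ once $K\ge 3$, so I would re-prove the misclassification bound with the genuine margin.

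For the upper bound I would combine Lemma~\ref{lemma:1} with $b_0$-correctness: there $\cos\psi_k(\hat\Pi_{\cal U})=M_{kk^*}/\sqrt{M_{kk}M_{k^*k^*}}$ with $M=P(G^2_{\cal LL}+G_{\cal UU}QQ'G_{\cal UU})P$, and $b_0$-correctness makes $G_{\cal UU}QQ'G_{\cal UU}$ agree with $G^2_{\cal UU}$ up to an $O(b_0)$ relative error, while by \eqref{cond-theta} the diagonal matrix $D:=G^2_{\cal LL}+G^2_{\cal UU}$ is comparable to a multiple of $I_K$ with condition number controlled by $C_2$; hence $\sin^2\psi_k(\hat\Pi_{\cal U})\asymp\sin^2\psi(Pe_k,Pe_{k^*})$. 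A short lemma then gives $\sin^2\psi(Pe_k,Pe_{k^*})\asymp_K\|P(e_k-e_{k^*})\|^2$: writing $Pe_k=tPe_{k^*}+w$ with $w\perp Pe_{k^*}$, the identities $w_k=1-tP_{kk^*}$ and $w_{k^*}=P_{kk^*}-t$ (from $P_{kk}=P_{k^*k^*}=1$) force $(1-t)(1+P_{kk^*})=w_k+w_{k^*}$, hence $|1-t|\le\sqrt2\|w\|$ and $\|P(e_k-e_{k^*})\|^2=(1-t)^2\|Pe_{k^*}\|^2+\|w\|^2\in[\|w\|^2,(2K+1)\|w\|^2]$; dividing by $\|Pe_k\|^2\in[1,K]$ closes the loop. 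Thus $\psi_{\min}(k^*):=\min_{k\neq k^*}\psi_k(\hat\Pi_{\cal U})$ satisfies $\psi_{\min}(k^*)^2\gtrsim\tau^2$, and plugging this margin into the large-deviation balance of Theorem~\ref{thm:real} (the error needs $\max_k|\hat\psi_k-\psi_k|\ge\tfrac12\psi_{\min}(k^*)$, the bias term being $\le c_3\beta_n\le(c_3/c_0)\psi_{\min}(k^*)$) gives $\mathbb{P}(\hat y\neq k^*)\le\bar C\exp(-c_1\tau^2\theta^*\|\theta\|_1)$ for each $k^*$, so $\mathrm{Risk}(\hat y)\le K\bar C\exp(-c_1\tau^2\theta^*\|\theta\|_1)$.

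For the lower bound I would use a two-point (Le Cam) argument. Since the law of $A$ is independent of $\pi^*$ and all edges are independent, for $a\neq b$ the total variation between the laws of $(A,X)$ under $\pi^*=e_a$ and $\pi^*=e_b$ equals that between the $X$-marginals, which are products of independent Bernoullis with means $\theta^*\theta_jP_{ca}$ and $\theta^*\theta_jP_{cb}$ ($c$ the community of $j$). Hence $\inf_{\tilde y}\mathrm{Risk}(\tilde y)\ge 1-\mathrm{TV}\ge\tfrac12\rho^2$ with $\rho=\prod_j\rho_j$ the Bhattacharyya affinity; since all parameters are $\le 1/C_3$, $-\log\rho_j\le C(C_3)\,\theta^*\theta_j(\sqrt{P_{ca}}-\sqrt{P_{cb}})^2$, and summing, grouping by community, using \eqref{cond-theta}, and $(\sqrt{P_{ca}}-\sqrt{P_{cb}})^2\le(4r_0)^{-1}(P_{ca}-P_{cb})^2$ yields $-\log\rho\le C_2^*\,\theta^*\|\theta\|_1\,\|P(e_a-e_b)\|^2$. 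Minimizing over $a\neq b$, $\inf_{\tilde y}\mathrm{Risk}(\tilde y)\ge\tfrac12\exp(-2C_2^*\tau^2\theta^*\|\theta\|_1)$. (For $K=2$ one checks $\tau^2=2\beta_n^2$, so this reduces to the rate in Lemma~\ref{lemma:opt}.)

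Combining the two displays, $-\log(\tilde c_2\mathrm{Risk}(\hat y))\ge c_1\tau^2\theta^*\|\theta\|_1-\log(\tilde c_2K\bar C)$ and $-\log\inf_{\tilde y}\mathrm{Risk}(\tilde y)\le 2C_2^*\tau^2\theta^*\|\theta\|_1+\log 2$. In the regime where $\tau^2\theta^*\|\theta\|_1$ exceeds an absolute constant — the only one in which the bound is meaningful — choosing $\tilde c_2\le 1/(K\bar C)$ makes the numerator $\ge c_1\tau^2\theta^*\|\theta\|_1$ and the denominator $\le 4C_2^*\tau^2\theta^*\|\theta\|_1$, so the ratio is $\ge c_1/(4C_2^*)$; taking $\tilde c_2=\min\{c_1/(4C_2^*),1/(K\bar C)\}$ finishes. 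I expect the main obstacle to be the upper bound: one must bypass Corollary~\ref{cor:Error} and re-derive the misclassification bound with the true angular margin, which requires (i) a perturbation argument turning $M$ in Lemma~\ref{lemma:1} into $d\,P^2$ up to the $O(b_0)$ clustering error and the ill-conditioning of $D$ (a ``$D$-angle'' versus ordinary-angle comparison), and (ii) the normalization-sensitive linear algebra relating $\psi(Pe_k,Pe_{k^*})$, $\|P(e_k-e_{k^*})\|$ and $\sum_c(\sqrt{P_{ca}}-\sqrt{P_{cb}})^2$, for which the hypotheses $P_{kk}=1$ and $\min_{k\neq\ell}P_{k\ell}\ge r_0$ are essential.
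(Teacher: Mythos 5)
Your strategy is essentially the paper's: the paper likewise observes that Corollary~\ref{cor:Error} is too lossy, re-runs its proof with the genuine margin $\tilde\beta_k=\psi_k(\hat\Pi_{\cal U})$ in place of $c_0\beta_n$, lower-bounds $\tilde\beta_k^2$ by a structural separation of the $k$th and $k^*$th rows of $P$ after reducing the noisy $\tilde M$ to the oracle $\tilde M^{(0)}$ (your ``$O(b_0)$ relative error'' step is exactly Lemma~\ref{lemma:sig:noisy:oracle}), and proves the lower bound by the same product-Bernoulli Hellinger/Bhattacharyya computation, using $\min_{k\neq\ell}P_{k\ell}\ge r_0$ and the balance condition to reconcile $(\sqrt{P_{kl}}-\sqrt{P_{k^*l}})^2$ with $(P_{kl}-P_{k^*l})^2$. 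The only real differences are bookkeeping: the paper keeps the weighted information $\mathcal{I}=\theta^*\min_{k\neq k^*}\sum_l\|\theta^{(l)}\|_1(\sqrt{P_{kl}}-\sqrt{P_{k^*l}})^2$ on both sides rather than collapsing to $\tau^2\theta^*\|\theta\|_1$, and it bounds the margin via a Lagrange-identity/Cauchy--Schwarz computation on the Gram matrix $(\sum_l\lambda_l^2P_{kl}P_{k^*l})$ rather than your orthogonal decomposition $Pe_k=tPe_{k^*}+w$; both routes are valid and yield constants depending only on $(K,C_1,C_2,C_3,c_3,r_0)$.

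One gap to close: your final step restricts to the regime where $\tau^2\theta^*\|\theta\|_1$ exceeds an absolute constant, but the theorem makes no such assumption, and your ratio $c_1T/(\log 2+2C_2^*T)$ degenerates as $T\to 0$. The fix is the paper's elementary observation that $\frac{a+cx}{b+dx}\ge\min\{a/b,\,c/d\}$ for $x\ge 0$: choose $\tilde c_2$ small enough that the numerator is $\ge \log 2+c_1T$ (i.e.\ $\log(1/(\tilde c_2 K\bar C))\ge\log 2$), and the ratio is then bounded below by $\min\{1,\,c_1/(2C_2^*)\}$ uniformly in $T$. With that one-line patch your argument is complete.
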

	
	
	\paragraph{3.3 In-sample Classification} \label{sec:in-sample}
	In this part, we briefly discuss the in-sample classification problem. Formally, our goal is to estimate $\pi_i$ for all $i \in \cal U$. As mentioned in section \ref{sec:intro}, an in-sample classification algorithm can be directly derived from AngleMin+: for each $i \in \cal U$, predict the label of $i$ as $ \label{AngleMin+}
	\hat{y}_i(H) = \arg\min_{1\leq k\leq K}\psi\bigl(f(A^{(k)}_{-i}; H_i),\; f(A_{-i, i}; H_i)\bigr)
	$, where $A^{(k)}_{-i}$ is the subvector of $A^{(k)}$ by removing the $i$th entry, $A_{-i, i}$ is the subvector of $A_i$ by removing the $i$th entry, and $H_i$ is a $(|\mathcal{U}| - 1) \times K$ projection matrix which may be different across distinct $i$. As discussed in subsection \ref{subsec:AngleMin}, the choices of $H_i$ are quite flexible. For purely theoretical convenience, we would focus on the case that $H_i = \hat{\Pi}_{\mathcal{U} \backslash \{i\}}$. 
	For any in-sample classifier $\tilde{y} = (\tilde{y}_i)_{i \in \cal U} \in [K]^{|\mathcal{U}|}$, define the in-sample risk $\mathrm{Risk}_{ins}(\tilde{y})= \frac{1}{|\cal U|}\sum_{i \in \cal U} \sum_{k^* \in [K]} \mathbb{P}(\tilde{y}_i \neq k^*| \pi_i = e_{k^*})$. For the above  in-sample classification algorithm, we have similar theoretical results as in section \ref{sec:theory} on consistency and efficiency under some very mild conditions:

	\begin{theorem} \label{thm:Consistency2}
		Consider the DCBM model where \eqref{cond-P}-\eqref{cond-theta} hold. We apply SCORE+ to obtain $\hat{\Pi}_{\mathcal{U} \backslash \{i\}}$ and plug it into the above algorithm. As $n\to\infty$, suppose for some constant $q_0>0$ , $\min_{i\in {\cal U}}\theta_i\geq q_0\max_{i\in {\cal U}}\theta_i$,  $\beta_n\|\theta_{\cal U}\|\geq q_0\sqrt{\log(n)}$, $ \beta_n^2\|\theta\|_1\min_{i\in {\cal U}}\theta_i\to\infty$, and $ \beta_n^2\|\theta\|_1 \min_{k}\{\|\theta_{\cal L}^{(k)}\|_1\}\to\infty$. 
		Then,  $\frac{1}{|\mathcal{U}|}\sum_{i \in \cal U} \mathbb{P}(\hat{y}_i \neq k_i) \to 0$, so the above in-sample classification algorithm is consistent. 
	\end{theorem}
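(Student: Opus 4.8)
The plan is to reduce Theorem~\ref{thm:Consistency2} to the prediction-problem results already established, applying Corollary~\ref{cor:Error} (which rests on Theorems~\ref{thm:oracle} and~\ref{thm:real}) and the SCORE+-specific Corollary~\ref{thm:Consistency} one node at a time. Fix $i\in\mathcal{U}$ and regard node $i$ as the ``new node'' of the prediction problem: the ``existing'' network is $\bar A$ restricted to $[n]\setminus\{i\}$ together with node $i$, which is exactly the original DCBM on $[n]$; the labeled set is still $\mathcal{L}$ (since $i\in\mathcal{U}$); the unlabeled set is $\mathcal{U}\setminus\{i\}$; the new-node degree parameter is $\theta_i$; the vector $X$ is $A_{-i,i}$; the aggregated labeled vectors are $A^{(k)}_{-i}$; and the projection matrix $H_i=\hat\Pi_{\mathcal{U}\setminus\{i\}}$ is obtained by running SCORE+ on the sub-network on $\mathcal{U}\setminus\{i\}$. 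Under this identification $\hat y_i$ is precisely the AngleMin+ estimate of $k_i$ for that prediction instance (the permutation of SCORE+'s output being absorbed into $b_0$-correctness as in Remark~1), so Corollary~\ref{cor:Error} applies verbatim once its hypotheses are checked for the reduced instance, and its bound does not depend on the value of the true label.

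Second, I would verify that the regularity and SCORE+ hypotheses transfer \emph{uniformly in $i$}. Deleting one node perturbs every relevant quantity only negligibly: \eqref{cond-P} and the balance condition in \eqref{cond-theta} involve $P$ and $\Theta$ minus a single small entry and are unaffected; the total degree entering the exponent of Corollary~\ref{cor:Error} stays $\|\theta\|_1$; the min/max ratio of degrees over $\mathcal{U}\setminus\{i\}$ is still bounded below by $q_0$; and $\min_{j\in\mathcal{U}}\theta_j\geq q_0\max_{j\in\mathcal{U}}\theta_j$ gives $\theta_i^2\leq\|\theta_{\mathcal{U}}\|^2/(q_0^2|\mathcal{U}|)$, whence $\|\theta_{\mathcal{U}\setminus\{i\}}\|\geq(1-o(1))\|\theta_{\mathcal{U}}\|$ and $\beta_n\|\theta_{\mathcal{U}\setminus\{i\}}\|\geq q_0'\sqrt{\log n}$ with $q_0'$ independent of $i$. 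Replacing $\theta^*$ by $\theta_i\geq\min_{j\in\mathcal{U}}\theta_j$, the conditions $\beta_n^2\|\theta\|_1\theta^*\to\infty$ and $\beta_n^2\|\theta\|_1\min_k\|\theta^{(k)}_{\mathcal{L}}\|_1\to\infty$ of Corollary~\ref{thm:Consistency} hold for every $i$ along the single sequences $\min_{j\in\mathcal{U}}\theta_j$ and $\min_k\|\theta^{(k)}_{\mathcal{L}}\|_1$. The SCORE+ consistency guarantee then yields a single sequence $\epsilon_n=o(1)$, not depending on $i$, with $\mathbb{P}(\hat\Pi_{\mathcal{U}\setminus\{i\}}\text{ is not }b_0\text{-correct})\leq\epsilon_n$ for all $i$, for a fixed ``properly small'' $b_0$.

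Third, I would substitute these uniform bounds into Corollary~\ref{cor:Error} to obtain a single sequence $\rho_n\to0$ with
\[
\mathbb{P}(\hat y_i\neq k_i)\;\leq\;\epsilon_n+\bar C\sum_{k=1}^K\exp\!\Bigl(-C_0\,\beta_n^2\|\theta\|_1\min\bigl\{\textstyle\min_{j\in\mathcal{U}}\theta_j,\ \min_k\|\theta^{(k)}_{\mathcal{L}}\|_1\bigr\}\Bigr)\;=:\;\rho_n
\]
for every $i\in\mathcal{U}$, and then average: $\frac1{|\mathcal{U}|}\sum_{i\in\mathcal{U}}\mathbb{P}(\hat y_i\neq k_i)\leq\rho_n\to0$. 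No independence across $i$ is needed, since this uses only linearity of expectation (equivalently, a union/averaging argument), and the same bound controls $\mathrm{Risk}_{ins}$ up to the harmless factor $K$.

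The main obstacle is precisely this uniformity, in particular producing the single sequence $\epsilon_n$: one needs the SCORE+ consistency statement in a quantitative form with an explicit (e.g.\ polynomial-in-$n$) failure probability, so that its ``with probability $1-o(1)$'' can be made uniform over the $|\mathcal{U}|$ sub-networks $A_{\mathcal{U}\setminus\{i\},\mathcal{U}\setminus\{i\}}$, and one must check, with constants independent of $i$, that each such sub-network still satisfies SCORE+'s own regularity requirements (including connectivity of its giant component, which follows from $\beta_n\|\theta_{\mathcal{U}}\|\geq q_0\sqrt{\log n}$). Once this bookkeeping is in place, the per-$i$ reduction and the translation of \eqref{cond-P}-\eqref{cond-theta} are routine.
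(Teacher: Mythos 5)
Your proposal is correct and follows essentially the same route as the paper: the paper likewise reduces the in-sample problem to the prediction problem by treating each $i\in\mathcal{U}$ as the new node, bounding the average error by the error at the worst node $i^*=\arg\max_i\mathbb{P}(\hat y_i\neq k_i)$ and invoking Corollary~\ref{thm:Consistency} there, which is logically the same as your uniform-in-$i$ bound followed by averaging. Your write-up is in fact more careful than the paper's, which simply asserts that the hypotheses transfer to the reduced instance without the one-node-deletion and uniformity bookkeeping you carry out.
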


	\begin{theorem} \label{thm:optimality3}
		Suppose the conditions of Corollary~\ref{cor:Error} hold, where $b_0$ is properly small
		, and suppose 
		that $\hat{\Pi}_{\mathcal{U} \backslash \{i\}}$ is $b_0$-correct for all $i \in \cal U$. Furthermore, we assume for sufficiently large constant $C_3$, $\max_{i \in \cal U} \theta_i \leq \frac{1}{C_3}$, $\max_{i \in \cal U} \theta_i \leq \min_{k \in [K]} C_3\|\theta_{\cal L}^{(k)}\|_1$, $\log(|\mathcal{U}|) \le C_{3}  \beta_n^2 \|\theta\|_1 \min_{i \in \cal U} \theta_i$, and for a constant $r_0>0$, $\min_{k\neq \ell }\{P_{k\ell}\}\geq r_0$.
		Then, there is a constant $\tilde{c}_{21}=\tilde{c}_{21}(K, C_1, C_2, C_3, c_3, r_0) > 0$ such that  $[-\log(\tilde{c}_{21}\mathrm{Risk}_{ins}(\hat{y}))]/[-\log(\inf_{\Tilde{y}}\{\mathrm{Risk}_{ins}(\Tilde{y})\})]\geq \tilde{c}_{21}$, so the above in-sample classification algorithm is efficient.  
	\end{theorem}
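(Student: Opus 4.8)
The plan is to reduce Theorem~\ref{thm:optimality3} to the prediction-setting result of Theorem~\ref{thm:optimality2}, applied separately to every node of $\mathcal U$, and then to aggregate the per-node bounds with a single convexity argument. First I would note that both the achieved and the oracle in-sample risks decompose over $\mathcal U$: writing $r_i(\tilde y_i):=\sum_{k^*\in[K]}\mathbb P(\tilde y_i\neq k^*\mid \pi_i=e_{k^*})$, we have $\mathrm{Risk}_{ins}(\tilde y)=|\mathcal U|^{-1}\sum_{i\in\mathcal U}r_i(\tilde y_i)$, and since in the oracle problem each coordinate $\tilde y_i$ may be chosen as an arbitrary measurable function of $(A,\{\pi_j\}_{j\neq i},\Theta,P,\theta_i)$ with no coupling across $i$, the infimum decomposes too: $\inf_{\tilde y}\mathrm{Risk}_{ins}(\tilde y)=|\mathcal U|^{-1}\sum_{i\in\mathcal U}\rho_i$ with $\rho_i:=\inf_{\tilde y_i}r_i(\tilde y_i)$. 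The crucial observation is that, for each fixed $i$, the sub-problem of estimating $\pi_i$ from the network together with the labels of all other nodes is exactly an instance of the prediction problem of Section~\ref{sec:theory}: node $i$ plays the role of the ``new node'', $\mathcal L$ is the labeled set, $\mathcal U\setminus\{i\}$ is the unlabeled set, $A_{-i,i}$ plays the role of $X$, and $\hat\Pi_{\mathcal U\setminus\{i\}}$ is the input projection matrix; thus $r_i(\hat y_i)$ is the $\mathrm{Risk}(\hat y)$ and $\rho_i$ the $\inf_{\tilde y}\mathrm{Risk}(\tilde y)$ of Lemma~\ref{lemma:opt}/Theorem~\ref{thm:optimality2} for that sub-problem.

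Next I would check that the hypotheses of Theorem~\ref{thm:optimality2} hold for every such sub-problem with constants independent of $i$. Deleting a single node from $\mathcal U$ perturbs $\|\theta\|_1$, $G_{\mathcal{UU}}$, $Q$, and the matrix $M$ of Lemma~\ref{lemma:1} only negligibly, so \eqref{cond-P}--\eqref{cond-theta} and the regularity assumptions carry over (after at most enlarging $C_1,C_2$ by a universal factor); $\hat\Pi_{\mathcal U\setminus\{i\}}$ is $b_0$-correct by assumption; the ``new node'' degree parameter is $\theta_i$, and the hypotheses give $\theta_i\le\max_{j\in\mathcal U}\theta_j\le 1/C_3$ and $\theta_i\le\max_{j\in\mathcal U}\theta_j\le\min_k C_3\|\theta^{(k)}_{\mathcal L}\|_1$; and $\min_{k\neq\ell}P_{k\ell}\ge r_0$. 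Theorem~\ref{thm:optimality2} then yields one constant $\tilde c_2=\tilde c_2(K,C_1,C_2,C_3,c_3,r_0)$, which we may assume lies in $(0,1)$, such that $-\log(\tilde c_2\,r_i(\hat y_i))\ge \tilde c_2\,(-\log\rho_i)$, equivalently $r_i(\hat y_i)\le \tilde c_2^{-1}\rho_i^{\tilde c_2}$, for every $i\in\mathcal U$. Here the condition $\log|\mathcal U|\le C_3\beta_n^2\|\theta\|_1\min_{i\in\mathcal U}\theta_i$ enters: together with Corollary~\ref{cor:Error} it guarantees $r_i(\hat y_i)$, and hence $\rho_i$, is bounded away from $1$ for every $i$, so the logarithms above are positive and the per-node inequality is non-vacuous; it is also exactly what is needed if one instead uses the cruder route $\mathrm{Risk}_{ins}(\hat y)\le\max_i r_i(\hat y_i)$ and $\inf_{\tilde y}\mathrm{Risk}_{ins}(\tilde y)\ge |\mathcal U|^{-1}\max_i\rho_i$, to absorb the $|\mathcal U|^{-1}$ factor.

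Finally I would aggregate. Since $t\mapsto t^{\tilde c_2}$ is concave on $[0,\infty)$ for $\tilde c_2\in(0,1)$, Jensen's inequality gives
\[
\mathrm{Risk}_{ins}(\hat y)=\frac1{|\mathcal U|}\sum_{i\in\mathcal U}r_i(\hat y_i)\;\le\;\frac{\tilde c_2^{-1}}{|\mathcal U|}\sum_{i\in\mathcal U}\rho_i^{\tilde c_2}\;\le\;\tilde c_2^{-1}\Bigl(\frac1{|\mathcal U|}\sum_{i\in\mathcal U}\rho_i\Bigr)^{\tilde c_2}=\tilde c_2^{-1}\bigl(\inf_{\tilde y}\mathrm{Risk}_{ins}(\tilde y)\bigr)^{\tilde c_2},
\]
so $\tilde c_2\,\mathrm{Risk}_{ins}(\hat y)\le\bigl(\inf_{\tilde y}\mathrm{Risk}_{ins}(\tilde y)\bigr)^{\tilde c_2}$; taking $-\log$ yields $-\log\bigl(\tilde c_2\,\mathrm{Risk}_{ins}(\hat y)\bigr)\ge\tilde c_2\bigl(-\log\inf_{\tilde y}\mathrm{Risk}_{ins}(\tilde y)\bigr)$, which is the claim with $\tilde c_{21}=\tilde c_2$. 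The step I expect to be the main obstacle is the bookkeeping in the reduction: one must verify carefully that the in-sample leave-one-out oracle for node $i$ — which knows every parameter except $\pi_i$, including the labels $\{\pi_j\}_{j\neq i}$ of the other unlabeled nodes — yields precisely the same information-theoretic lower bound $\rho_i$ as the prediction oracle of Lemma~\ref{lemma:opt}, and that passing from $\mathcal U$ to $\mathcal U\setminus\{i\}$ degrades none of \eqref{cond-P}--\eqref{cond-theta} nor the $b_0$-correctness, uniformly in $i$; everything else is a routine consequence of Theorem~\ref{thm:optimality2} and Corollary~\ref{cor:Error}.
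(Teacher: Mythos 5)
Your proof is correct, and your reduction of each node $i \in \mathcal{U}$ to an instance of the prediction problem (node $i$ as the new node, $\mathcal{U}\setminus\{i\}$ as the unlabeled set, $A_{-i,i}$ in the role of $X$) is exactly the reduction the paper uses. Where you genuinely depart from the paper is the aggregation step. The paper does not decompose the oracle risk: it takes a near-minimizer $\tilde y^{(0)}$ of $\mathrm{Risk}_{ins}$, singles out the worst node $i^*=\arg\max_{i}\mathrm{Risk}(\hat y_i)$, and uses the crude bounds $\mathrm{Risk}_{ins}(\hat y)\le \mathrm{Risk}(\hat y_{i^*})$ and $\mathrm{Risk}_{ins}(\tilde y^{(0)})\ge |\mathcal{U}|^{-1}\mathrm{Risk}(\tilde y^{(0)}_{i^*})$; the second of these injects an additive $\log|\mathcal{U}|$ into the denominator of the log-ratio, and the hypothesis $\log|\mathcal{U}|\le C_3\beta_n^2\|\theta\|_1\min_{i\in\mathcal{U}}\theta_i$ is used precisely to absorb that term into the per-node information $\mathcal{I}_{i^*}$, at the price of the smaller final constant $\min\{1,\,C_8/(2\sqrt2+KC_2C_3(1+\sqrt{C_1})^2)\}$. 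You instead exploit the exact additive decomposition $\inf_{\tilde y}\mathrm{Risk}_{ins}(\tilde y)=|\mathcal{U}|^{-1}\sum_{i}\rho_i$ (valid because the coordinates of an in-sample classifier are uncoupled), convert the per-node guarantee into the multiplicative form $\tilde c_2\, r_i(\hat y_i)\le \rho_i^{\tilde c_2}$, and aggregate by Jensen's inequality applied to the concave map $t\mapsto t^{\tilde c_2}$. This buys a cleaner constant ($\tilde c_{21}=\tilde c_2$) and, notably, makes the $\log|\mathcal{U}|$ hypothesis unnecessary for the aggregation itself. Two small points to shore up: first, the multiplicative form does not follow from the bare ratio statement of Theorem~\ref{thm:optimality2} in the degenerate case $-\log\rho_i\le 0$, so you should instead quote the two intermediate exponential bounds established in its proof, namely $-\log\bigl(\tfrac{1}{2\bar{C}K^2}r_i(\hat y_i)\bigr)\ge \log 2+C_8\mathcal{I}_i$ and $-\log\rho_i\le\log 2+2\sqrt2\,\mathcal{I}_i$, from which $\tilde c_2\, r_i(\hat y_i)\le\rho_i^{\tilde c_2}$ follows unconditionally; second, the final division by $-\log\bigl(\inf_{\tilde y}\mathrm{Risk}_{ins}(\tilde y)\bigr)$ still presumes that quantity is positive, a caveat your argument shares with the paper's own proof.
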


	\begin{figure}[t]
		\centering
		\includegraphics[width=\textwidth, trim=0 10 0 0, clip=true]{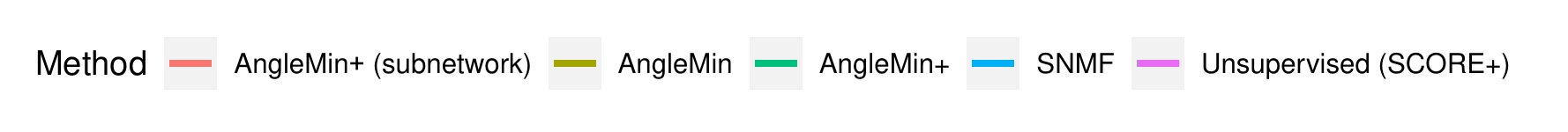}
		\begin{subfigure}[b]{0.24\textwidth}
			\centering
			\includegraphics[width=\textwidth, height=.7\textwidth]{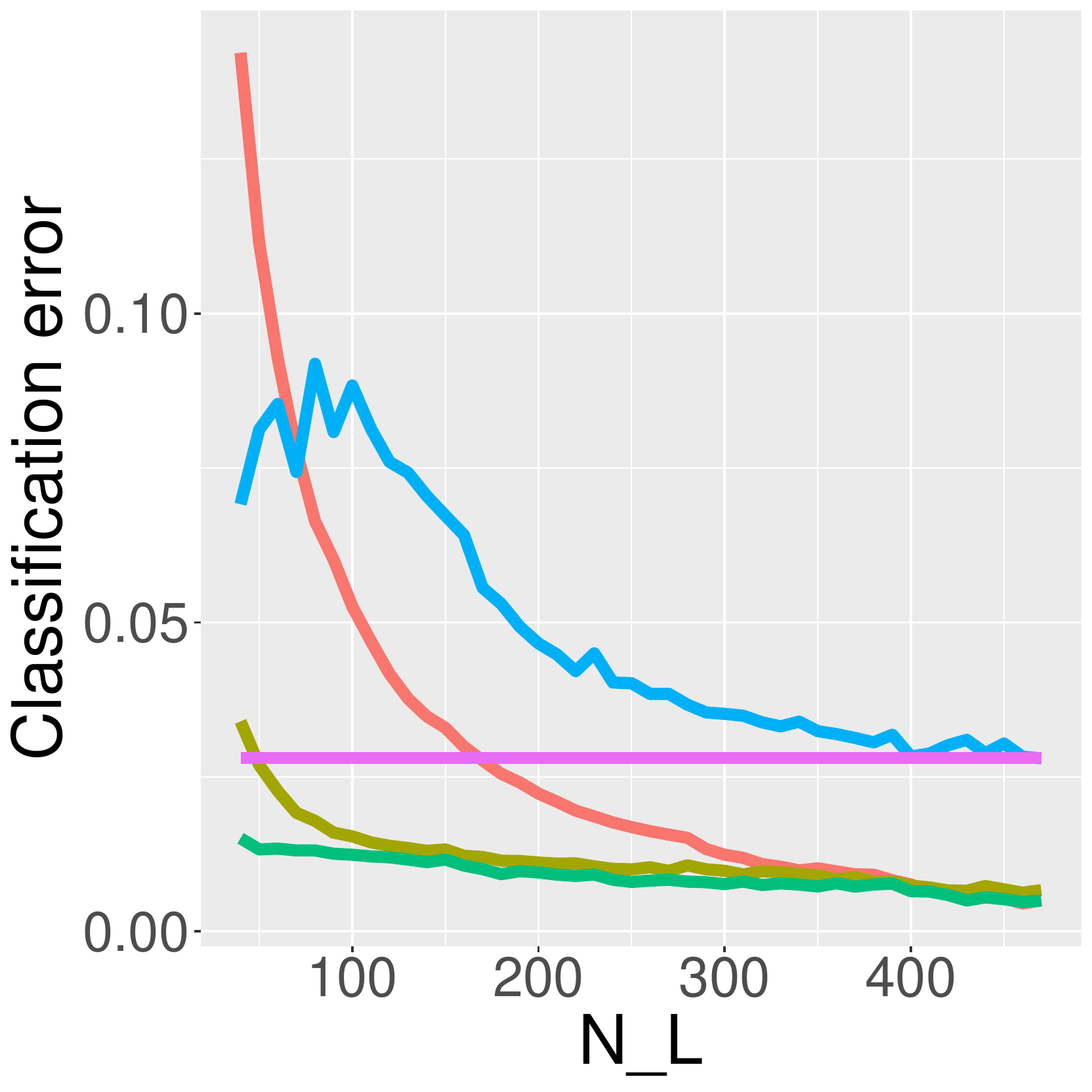}
			\caption{$\frac{log(n)}{\sqrt{n}}$Gamma, bal.}
			\label{fig:gamma_log(n)_balanced}
		\end{subfigure}
		\hfill
		\begin{subfigure}[b]{0.24\textwidth}
			\centering
			\includegraphics[width=\textwidth, height=.7\textwidth]{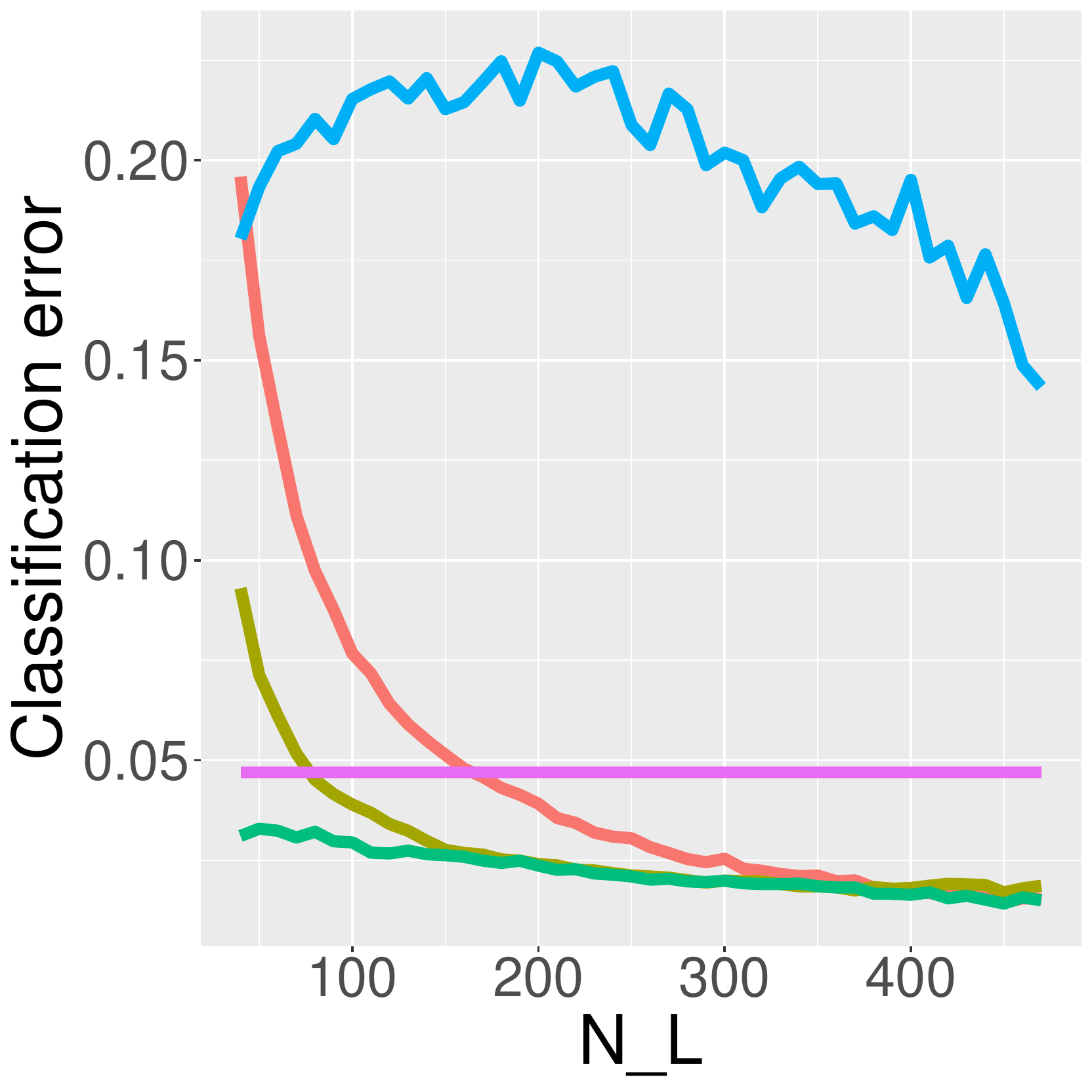}
			\caption{$\frac{log(n)}{\sqrt{n}}$Gamma, imbal.}
			\label{fig:gamma_log(n)_imbalanced}
		\end{subfigure}
		\hfill
		\begin{subfigure}[b]{0.24\textwidth}
			\centering
			\includegraphics[width=\textwidth, height=.7\textwidth]{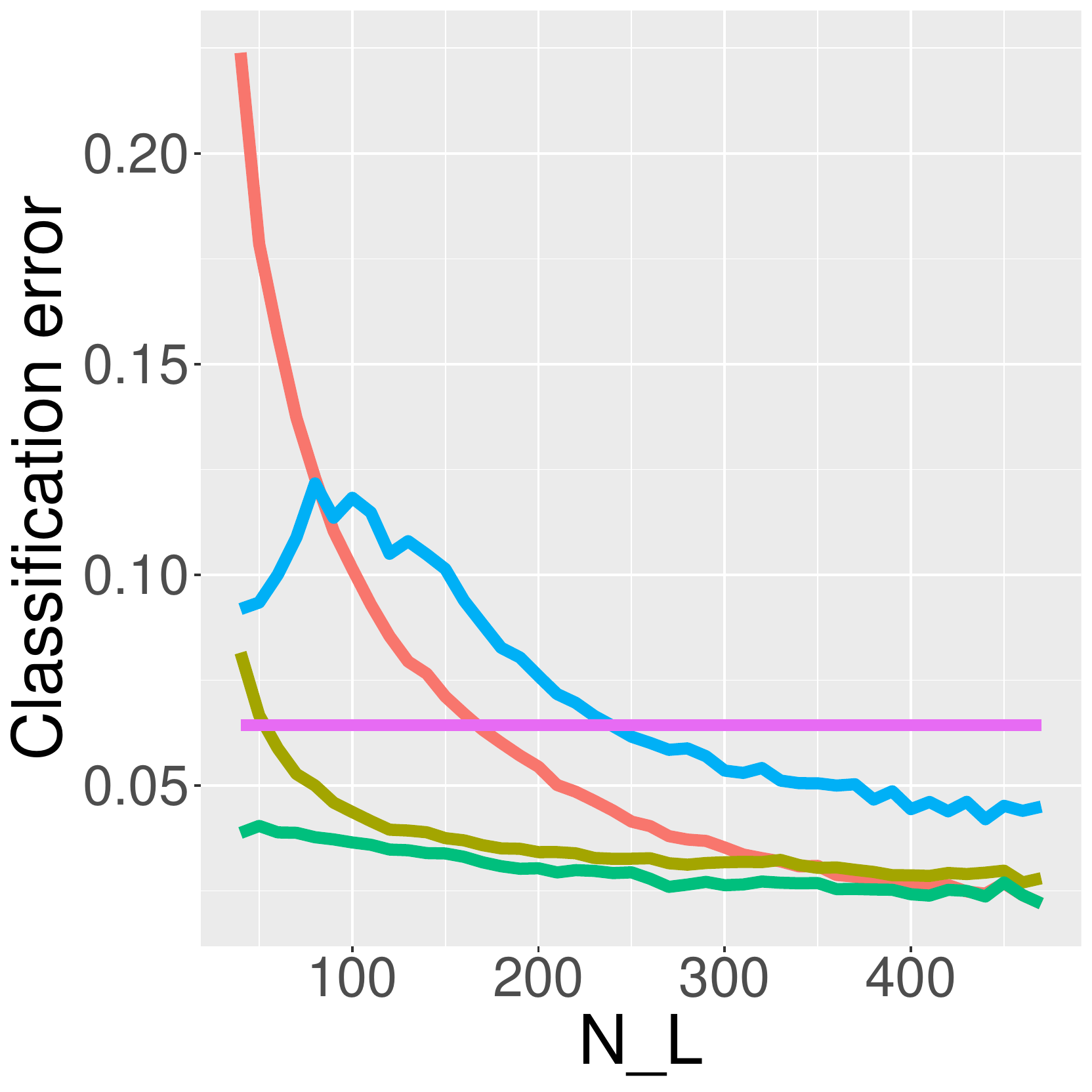}
			\caption{$n^{-0.25}$Gamma, bal.}
			\label{fig:gamma_sqrt(n)_balanced}
		\end{subfigure}
		\hfill
		\begin{subfigure}[b]{0.24\textwidth}
			\centering
			\includegraphics[width=\textwidth, height=.7\textwidth]{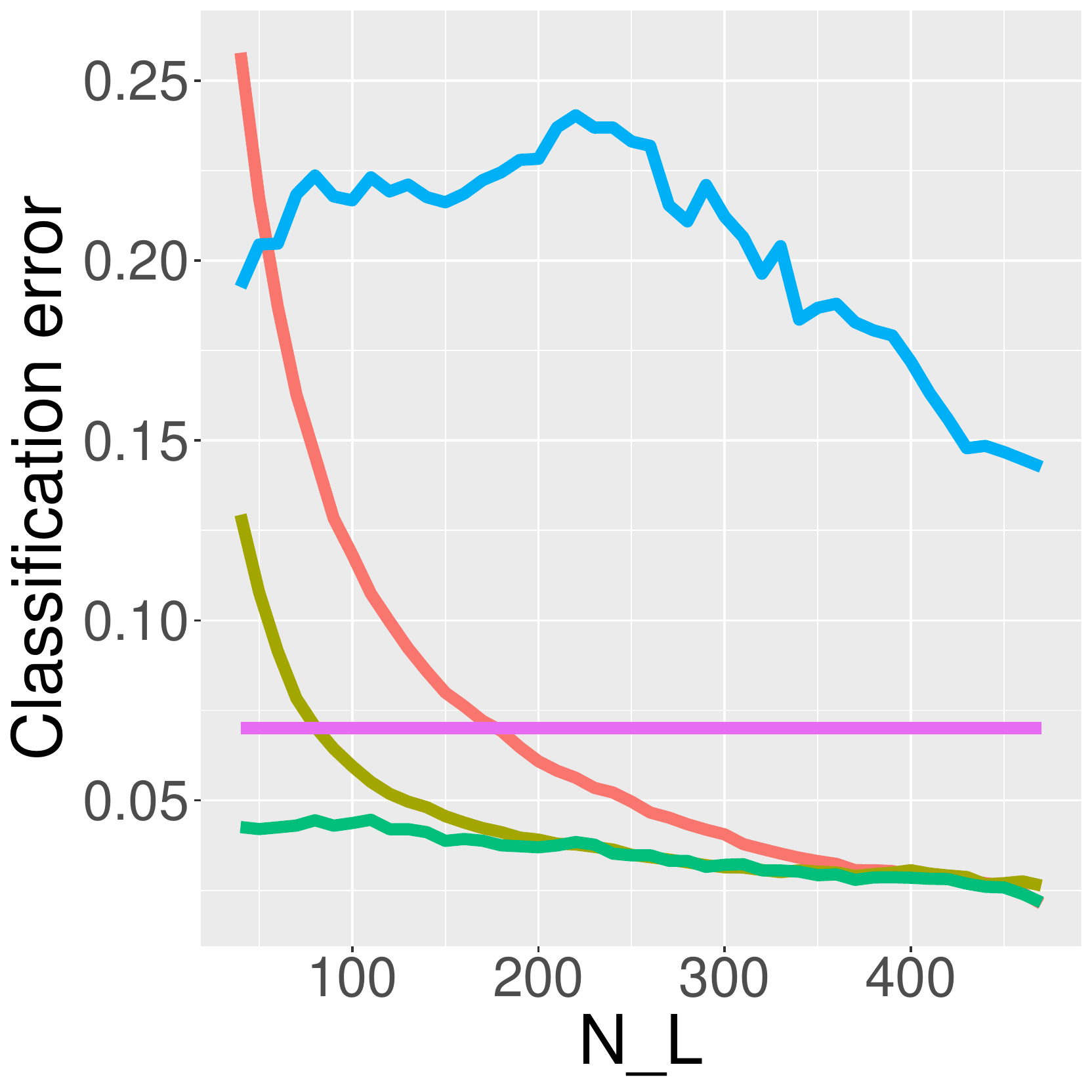}
			\caption{$n^{-0.25}$Gamma, imbal.}
			\label{fig:gamma_sqrt(n)_imbalanced}
		\end{subfigure}
		\vfill
		\begin{subfigure}[b]{0.24\textwidth}
			\centering
			\includegraphics[width=\textwidth, height=.7\textwidth]{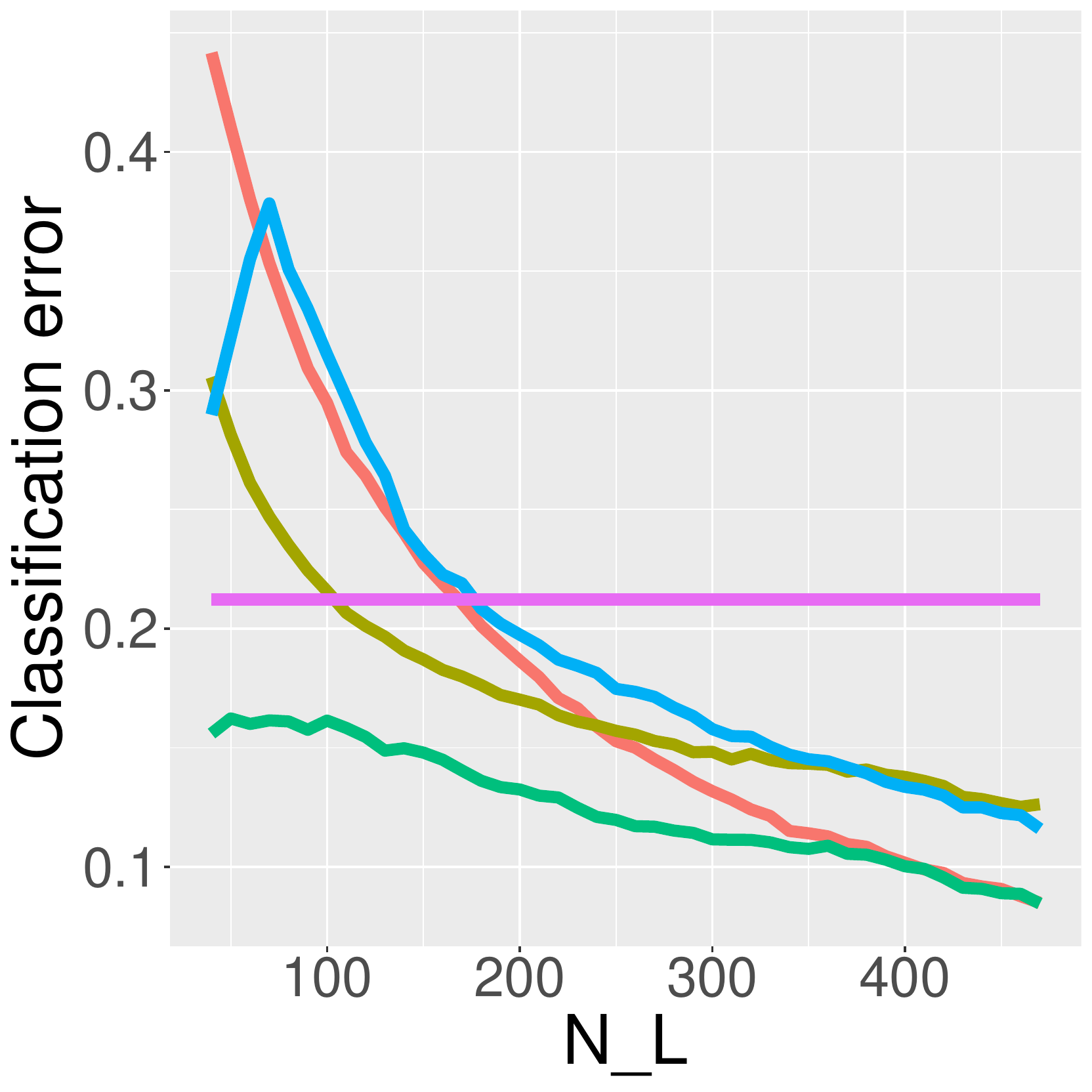}
			\caption{$\frac{log(n)}{\sqrt{n}}$Pareto, bal.}
			\label{fig:pareto_log(n)_balanced}
		\end{subfigure}
		\hfill
		\begin{subfigure}[b]{0.24\textwidth}
			\centering
			\includegraphics[width=\textwidth, height=.7\textwidth]{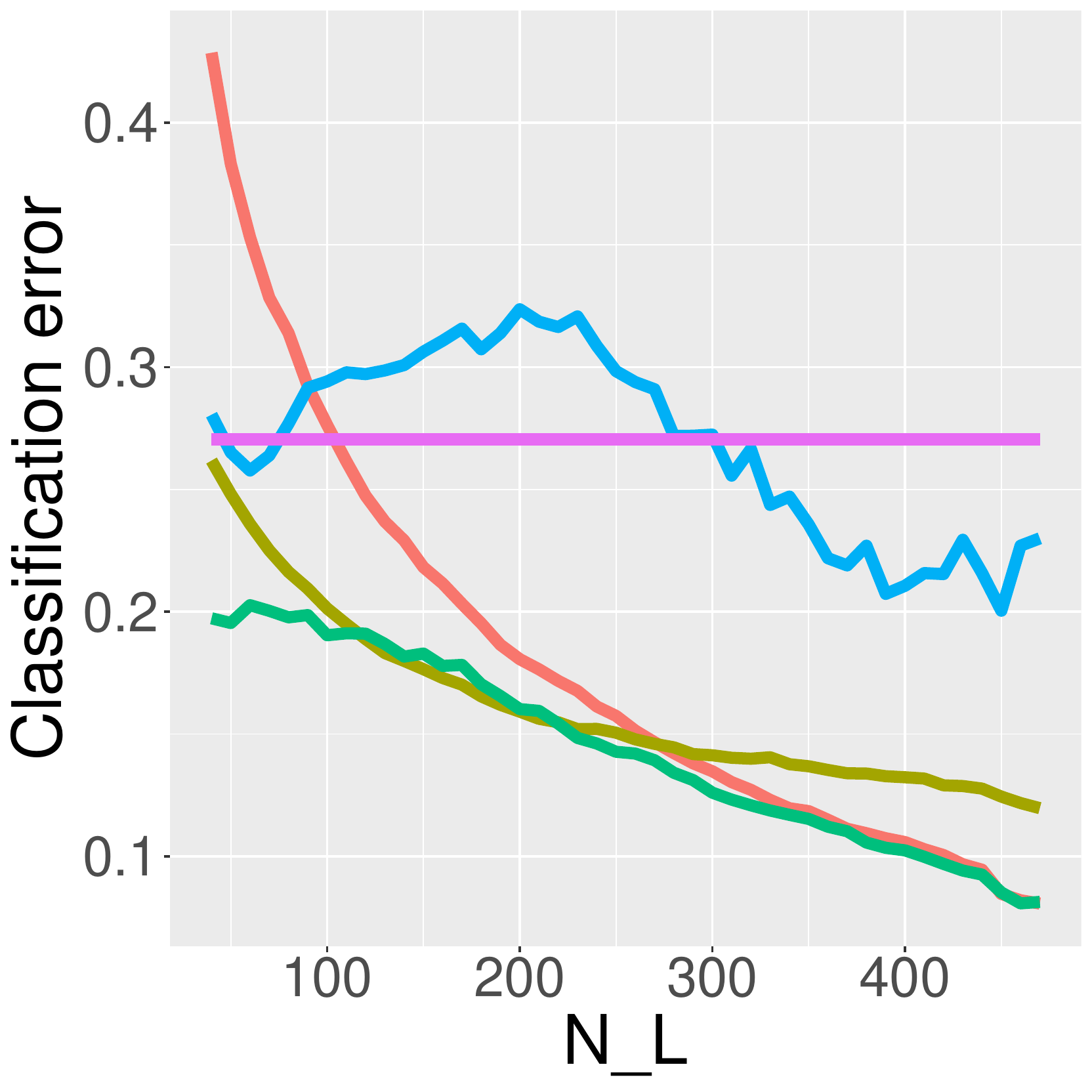}
			\caption{$\frac{log(n)}{\sqrt{n}}$Pareto, imbal.}
			\label{fig:pareto_log(n)_imbalanced}
		\end{subfigure}
		\hfill
		\begin{subfigure}[b]{0.24\textwidth}
			\centering
			\includegraphics[width=\textwidth, height=.7\textwidth]{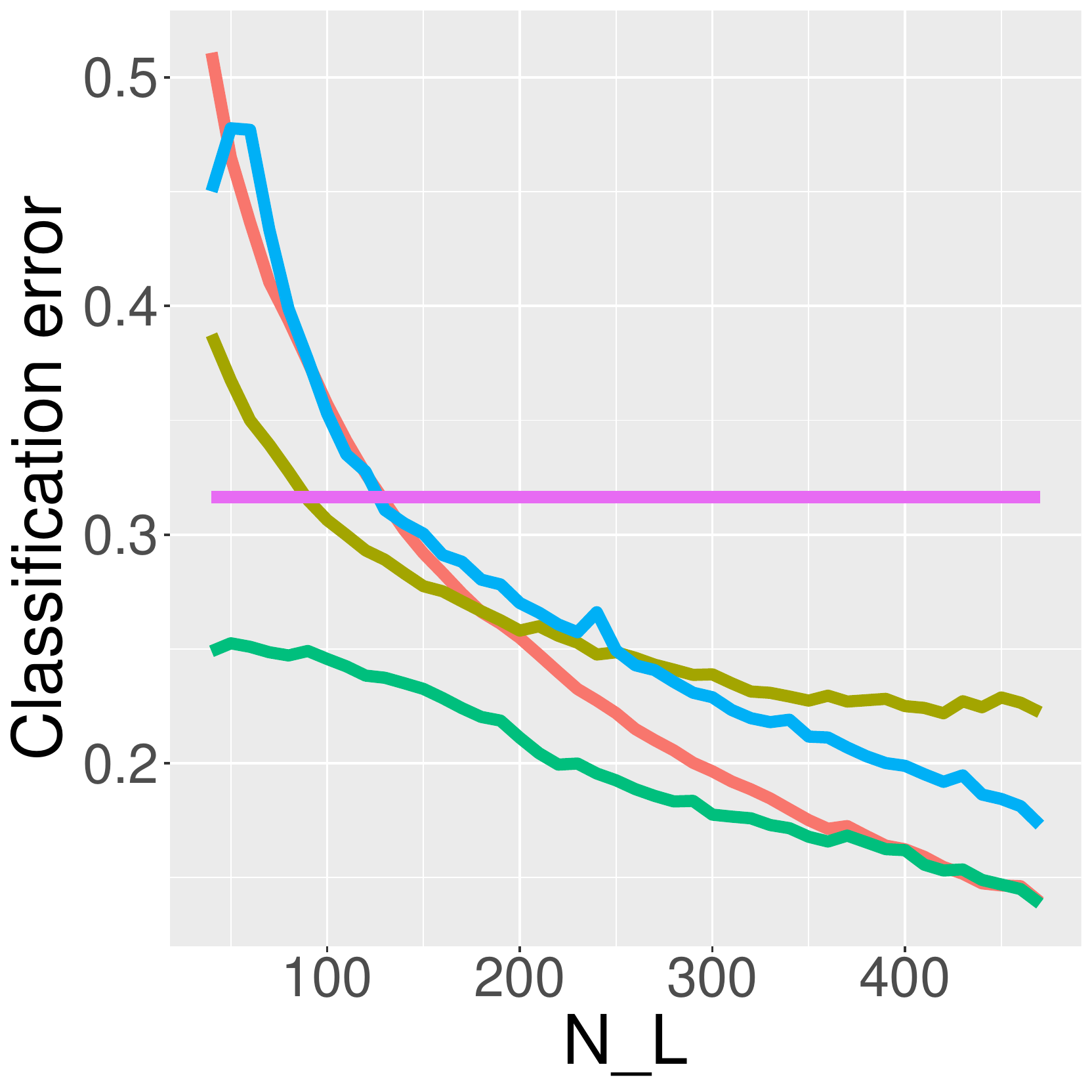}
			\caption{$n^{-0.25}$Pareto, bal.}
			\label{fig:pareto_sqrt(n)_balanced}
		\end{subfigure}
		\hfill
		\begin{subfigure}[b]{0.24\textwidth}
			\centering
			\includegraphics[width=\textwidth, height=.7\textwidth]{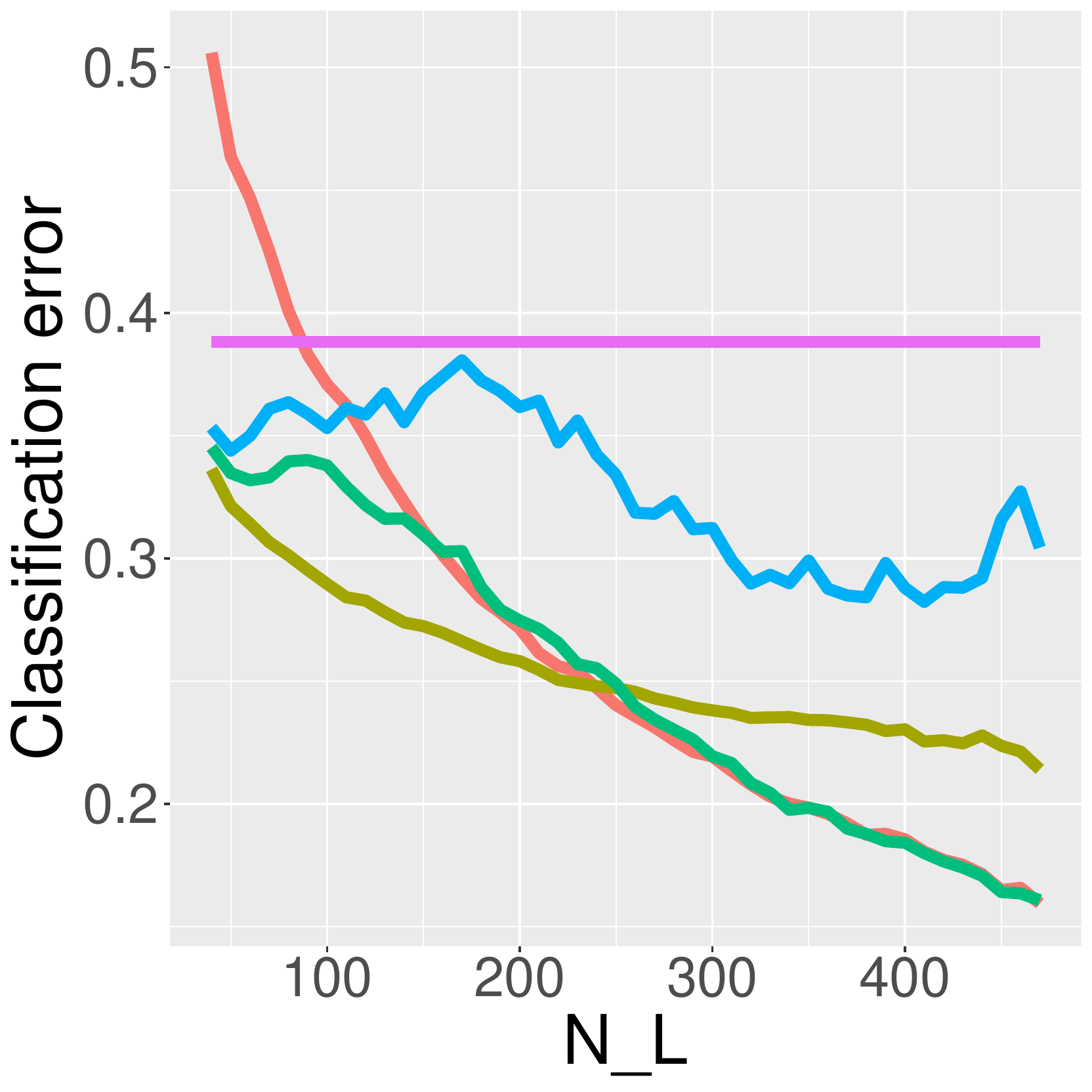}
			\caption{$n^{-0.25}$Pareto, imbal.}
			\label{fig:pareto_sqrt(n)_imbalanced}
		\end{subfigure}
		\vfill
		
		\caption{Simulations ($n=500$, $K=3$; data are generated from DCBM). In each plot, the x-axis is the number of labeled nodes, and the y-axis is the average misclassification rate over 100 repetitions.}
		\label{fig:simu}
	\end{figure}

	\section{Empirical Study} \label{sec:emp}
	
	We study the performance of 
	AngelMin+, where $\hat{\Pi}_{\cal U}$ is from SCORE+ \citep{Jin_2021}. 
	We compare our methods with SNMF \citep{6985550}  (a representative of semi-supervised approaches) and SCORE+ (a fully unsupervised approach). We also compare our algorithm to typical GNN methods \citep{kipf2016semi} in the real data part. 

{\bf Simulations}: 
To illustrate how information in $A_{\cal U \cal U}$ will improve the classification accuracy, we would consider AngleMin in \eqref{AngleMin} in simulations.   
Also, to cast light on how information on unlabeled data will ameliorate the classification accuracy, we consider a special version of AngleMin+ in simulations by feeding into the algorithm only $A_{\cal LL}$ and $X_{\cal L}$. 
It ignores information on unlabeled data and only uses the subnetwork consisting of labeled nodes. We call it AngleMin+(subnetwork). 
This method is practically uninteresting, but it serves as a representative of the fully supervised approach that ignores unlabeled nodes. 
We simulate data from the DCBM with  $(n, K)=(500,3)$. To generate $P$, we draw its (off diagonal) entries from $\mathrm{Uniform}(0, 1)$, and then symmetrize it. 
We generate the degree heterogeneity parameters $\theta_i$ i.i.d. from one of the 4 following distributions: $n^{-0.5}\sqrt{\log(n)}\mathrm{Gamma}(3.5)$, $n^{-0.25}\mathrm{Gamma}(3.5)$,
$n^{-0.5}\sqrt{\log(n)}\mathrm{Pareto}(3.5)$, $n^{-0.25}\mathrm{Pareto}(3.5)$. 
They cover most scenarios: Gamma distributions have considerable mass near 0, so the network has severely low degree nodes; Pareto distributions have heavy tails, so the network has severely high degree nodes. The scaling $n^{-0.5}\sqrt{\log(n)}$ corresponds to the sparse regime, where the average node degree is $\asymp \log(n)^2$, and $n^{-0.25}$ corresponds to the dense regime, with average node degree $\asymp \sqrt{n}$. 
We consider two cases of $\Pi$: the balanced case (bal.) and the imbalanced case (inbal.). In the former, $\pi(i)$ are i.i.d. from $\mathrm{Multinomial}(1/3,1/3,1/3)$, and in the latter, $\pi(i)$ are i.i.d. from $\mathrm{Multinomial}(0.2, 0.2, 0.6)$.
We repeat the simulation 100 times. Our results are presented in Figure \ref{fig:simu}, which shows the average classification error of each algorithm as the number of labeled nodes, $N_L$ increases. The plots indicate that AngleMin+ outperforms  other methods in all the cases. Furthermore, though AngleMin is not so good as AngleMin+ when $N_L$ is small, it still surpasses all the other approaches except AngleMin+ in most scenarios. Compared to supervised and unsupervised methods which only use part of the data, we can see that AngleMin+ gains a great amount of accuracy by leveraging on both the labeled and unlabeled data. 

{\bf Real data}: We consider three benchmark datasets for community detection, Caltech \citep{traud2012social} , Simmons \citep{traud2012social} , and Polblogs \citep{adamic2005political}. 
For each data set, we separate nodes into 10 folds and treat each fold as the test data at a time, with the other 9 folds as training data. In the training network, we randomly choose $n_L$ nodes as labeled nodes. We then estimate the label of each node in the test data and report the misclassification error rate (averaged over 10 folds).  We consider $n_L/ n\in \{0.3, 0.5, 0.7\}$, where $n$ is the number of nodes in training data. The results are shown in Table \ref{fig:real}.  
In most cases, AngleMin+ significantly outperforms the other methods (unsupervised or semi-supervised). Additionally, we notice that in the Polblogs data, the standard deviation of the error of SCORE+  is quite large, indicating that its performance is unstable. Remarkably, even though AngleMin+ uses  SCORE+ to initialize, the performance of AngleMin+ is nearly unaffected: It still achieves low means and standard deviations in misclassification error. This is consistent with our theory in Section~\ref{sec:in-sample}. 
We also compare the running time of different methods (please see Section~\ref{sec:RunTime} of the appendix) and find that AngleMin+ is much faster than SNMF. 

\begin{table}[h]
\caption{Average misclassification error over 10 data splits, with standard deviation in the parentheses.}
\label{fig:real}
\centering
\scalebox{0.55}{
	\begin{tabular}{cccc|ccc|cccccc}
		\toprule
		Dataset & $n$ & $K$ & $n_L/n$ & SCORE+ & AngleMin+    & SNMF         & GNN (cons.) & GNN (random) & GNN (adj.) &GNN (LP) & GNN (node2vec) & GNN ($A\Pi$) \\
		\midrule
		\multirow{3}{*}{Caltech} &
		\multirow{3}{*}{590} &
		\multirow{3}{*}{8} &
		0.3 & 
		\multirow{3}{*}{ \shortstack{0.237\\(0.061)} } &
		\textbf{0.207} (0.059) &
		0.312 (0.049) &
		0.858 (0.038) &
		0.859 (0.035) &
		0.875 (0.038) & 0.839 (0.046)  & 0.859 (0.055) & 0.880 (0.026) \\
		&   &   &        0.5 &    & \textbf{0.151} (0.040) & 0.310 (0.042)  & 0.846 (0.054)    & 0.895 (0.026)  & 0.859 (0.037)  &  0.861 (0.043) &  0.859 (0.039) & 0.856 (0.040) \\
		&   &   &        0.7 &    & \textbf{0.137} (0.046) & 0.264 (0.051) &  0.849 (0.043)   & 0.861 (0.034)  & 0.856 (0.031) & 0.859 (0.036) & 0.880 (0.027) & 0.842 (0.027)  \\
		\midrule
		\multirow{3}{*}{Simmons} &
		\multirow{3}{*}{1137} &
		\multirow{3}{*}{4} &
				0.3 &
		\multirow{3}{*}{\shortstack{0.234\\ (0.084)}} &
		\textbf{0.128} (0.024) &
		0.266 (0.041) &
		0.691 (0.022)  &
		0.702 (0.039) &
		0.702 (0.036)  &  0.698 (0.026) & 0.706 (0.039) & 0.696 (0.028) \\
		&   &   &        0.5 &     & \textbf{0.096} (0.024) & 0.233 (0.033) & 0.691 (0.022)    &  0.711 (0.034)  & 0.685 (0.025) & 0.691 (0.022)&  0.710 (0.031)& 0.691 (0.022) \\
		&   &   &       0.7 &     & \textbf{0.092} (0.015) & 0.220 (0.037) & 0.691 (0.022)    & 0.692 (0.022)  & 0.691 (0.022) &0.691 (0.022) & 0.707 (0.043) & 0.698 (0.026) \\
		\midrule
		\multirow{3}{*}{Polblogs} &
		\multirow{3}{*}{1222} &
		\multirow{3}{*}{2} &
				0.3 &
		\multirow{3}{*}{\shortstack{0.166\\ (0.165)}} &
		0.074 (0.036) &
		\textbf{0.073} (0.019) &
		0.499 (0.044) &
		0.502 (0.038) &
		0.439 (0.048) & 0.482 (0.037) &  0.502 (0.059)  & 0.501 (0.044) \\
		&   &   &    0.5 &     & 0.092 (0.041) & \textbf{0.068} (0.033) & 0.517 (0.040)   & 0.516 (0.038) & 0.453 (0.056) & 0.488 (0.044)  & 0.499 (0.061) & 0.484 (0.041)   \\
		&   &   &     0.7 &   & 0.066 (0.026) & \textbf{0.063} (0.028) & 0.485 (0.041)   & 0.492 (0.043) & 0.430 (0.062)& 0.493 (0.041) & 0.492 (0.050) & 0.486 (0.039) \\ 
		\bottomrule
	\end{tabular}
}
\end{table}

 GNN is a popular approach for attributed node clustering. Although it is not designed for the case of no node attributes, we are still interested in whether GNN can be easily adapted to our setting by self-created features. 
We take the GCN method in \cite{kipf2016semi} and consider 6 schemes of creating a feature vector for each node: i) a 50-dimensional constant vector of $1$'s, ii) a 50-dimensional randomly generated feature vector, iii) the $n$-dimensional adjacency vector, iv) the vector of landing probabilities (LP) \citep{li2019optimizing} (which contains network topology information), v) the embedding vector from node2vec \citep{grover2016node2vec}, and vi) a practically infeasible vector $e_i'A\Pi\in\mathbb{R}^{K}$ (which uses the true $\Pi$). The results are in Table~\ref{fig:real}. GCN performs unsatisfactorily, regardless of how the features are created. 
For example, propagating messages with all-1 vectors seems to result in over-smoothing; and using adjacency vectors as node features means that the feature transformation linear layers' size changes with the number of nodes in a network, which could heavily overfit due to too many parameters. We conclude that it is not easy to adapt GNN to the case of no node attributes. 

For a fairer comparison, we also consider a real network, Citeseer \citep{sen2008collective}, that contains node features. We consider two state-of-the-art semi-supervised GNN algorithms, GCN \citep{kipf2016semi} and MasG \citep{jin2019graph}. Our methods can also be generalized to accommodate node features. 
Using the ``fusion" idea surveyed in \cite{chunaev2019community}, we ``fuse" the adjacency matrix $\bar{A}$ (on $n+1$ nodes) and node features into a weighted adjacency matrix $\bar{A}_{\text{fuse}}$ (see the appendix for details). We denote its top left block by $A_{\text{fuse}} \in \mathbb{R}^{n \times n}$ and its last column by $X_{\text{fuse}} \in \mathbb{R}^{n}$ and  apply AngleMin+
by replacing $(A, X)$ by $(A_{\text{fuse}}, X_{\text{fuse}})$. The misclassification error averaged over 10 data splits is reported in Table~\ref{fig:real:att}. The error rates of GCN and MasG are quoted from those papers, which are based on 1 particular data split. We also re-run GCN on our 10 data splits.

\begin{table}[h]
	\caption{Error rates on Citeseer, where node attributes are available. If the error rate has $^*$, it is quoted from literature and based on one particular data split; otherwise, it is averaged over 10 data splits.}
	\label{fig:real:att}
\centering
	\scalebox{0.7}{
	\begin{tabular}{cccc|cc|c|cc}
		\toprule
		Dataset & $n$ & $K$  & $n_L / n$ & GCN & GCN$^*$ & MasG$^*$ & AngleMin+  \\
		\midrule 
		Citeseer & 3312 & 6 & 0.036 & 0.321 & 0.297 & 0.268 & 0.334 \\
		\bottomrule
	\end{tabular}
}
\end{table}

{\bf Conclusion and discussions}: 
In this paper, we propose a fast semi-supervised community detection algorithm AngleMin+ based on the structural similarity metric of DCBM. Our method is able to address degree heterogeneity and non-assortative network, is computationally fast, and possesses favorable theoretical properties on consistency and efficiency. Also, our algorithm performs well on both simulations and real data, indicating its strong usage in practice. 

There are possible extensions for our method. Our method does not directly deal with soft label (a.k.a mixed membership) where the available label information is the probability of a certain node being in each community. We are currently endeavoring to solve this by fitting our algorithm into the degree-corrected mixed membership model (DCMM), and developing sharp theories for it.

\newpage

\section*{Acknowledgments}
This work is partially supported by the NSF CAREER grant DMS-1943902.

\section*{Ethics Statement}
This paper proposes a novel semi-supervised community detection algorithm, AngleMin+,  based on the structural similarity metric of DCBM.
Our method may be maliciously manipulated to identify certain group of people such as dissenters. This is a common drawback of all the community detection algorithms, and we think that this can be solved by replacing the network data by their differential private counterpart. 
All the real data we use come from public datasets which we have clearly cited, and we do not think that they will raise any privacy issues or other potential problems.  
\section*{Reproducibility Statement}
We provide detailed theory on our algorithm AngleMin+. we derive explicit bounds for the misclassification probability of our method under DCBM, and show that it is consistent.  
We also study the efficiency of our method by comparing its misclassification probability with that of an ideal classifier having access to the community labels of all nodes. Additionally, we provide clear explanations and insights of our theory. All the proofs, together with some generalization of our theory, are available in the appendix. Also, we perform empirical study on our proposed algorithms under both simulations and real data settings, and we consider a large number of scenarios in both cases. All the codes are available in the supplementary materials.  
\bibliography{refs}
\bibliographystyle{iclr2023_conference}

\newpage
\appendix


\startcontents[appendices]
\printcontents[appendices]{l}{1}{\section*{Appendix}\setcounter{tocdepth}{2}}


\newpage
\section{Pseudo code of the algorithm}
Below are the pseudo code of AngleMin+ which is deferred to the appendix due to the page limit. 
\begin{algorithm}[h]
	\caption{AngleMin+}\label{alg:two}
	\KwIn{Number of communities $K$, adjacency matrix $A\in\mathbb{R}^{n\times n}$, community labels $y_i$ for nodes in $i\in {\cal L}$, and the vector of edges between a new node and the existing nodes $X\in\mathbb{R}^n$ .}
	\KwOut{Estimated community label $\hat{y}$ of the new node.}
	
	\begin{enumerate}
		\item Unsupervised community detection: Apply a community detection algorithm (e.g., SCORE+ in Section~\ref{subsec:SCORE+}) on $A_{\cal UU}$, and let $\hat{\Pi}_{\cal U}=[\hat{\pi}_i]_{i\in {\cal U}}$ store the estimated community labels, where $\hat{\pi}_i=e_k$ if and only if node $k$ is clustered to community $k$, $1\leq k\leq K$. 
		\item Assigning the community label to a new node: Let $\Pi_{\cal L}=[\pi_i]_{i\in {\cal L}}$ contain the community memberships of labeled nodes, where $\pi_i=e_k$ if and only if $y_i=k$, $1\leq k\leq K$. Let $H=\hat{\Pi}_{\cal U}$. Compute 
		\[
		x = \bigl[X'_{\cal L}\Pi_{\cal L},\; X'_{\cal U}H\bigr]', \qquad v_k = \bigl[e_k'\Pi'_{\cal L}A_{\cal LL}\Pi_{\cal L},\; e_k'\Pi'_{\cal L}A_{\cal LU}H\bigr]',\quad 1\leq k\leq K.  
		\]
		Suppose $k^*$ minimizes the angle between $v_k$ and $x$, among $1\leq k\leq K$ (if there is a tie, pick the smaller $k$). Output $\hat{y}=k^*$.  
	\end{enumerate}
\end{algorithm}

\section{Running Time} \label{sec:RunTime}
Table \ref{fig:real:time} exhibits the running time of all the algorithms considered in Table \ref{fig:real}. It can be seen from the result that our algorithm AngleMin+ is much faster than all the other algorithms. This is one of the merits of our method.

\begin{table}[h]
	\caption{Running time on Caltech, Simons, and Polblogs networks. The quantities outside and inside the parentheses are the means and standard deviations of the running time, respectively.}
	\label{fig:real:time}
	\centering
	\scalebox{0.52}{
		\begin{tabular}{cccc|ccc|cccccc}
			\toprule
			Dataset & $n$ & $K$  & $n_L / n$ & SCORE+ & AngleMin+    & SNMF         & GNN (cons.) & GNN (random) & GNN (adj.) &GNN (LP) & GNN (node2vec) & GNN ($A\Pi$) \\
			\midrule
			\multirow{3}{*}{Caltech} &
			\multirow{3}{*}{590} &
			\multirow{3}{*}{8} &
			
			0.3 & \multirow{3}{*}{ \shortstack{0.083 \\(0.009)}} &
			\textbf{0.068} (0.064) &
			0.178 (0.017) &
			0.277 (0.154) &
			0.249 (0.049) &
			0.311 (0.100) & 0.296 (0.044) & 0.498 (0.053) & 0.396 (0.097) \\
			&   &   &     0.5   &      & \textbf{0.034} (0.003) & 0.211 (0.069)  & 0.575 (0.133)   & 0.535 (0.061)  & 0.620 (0.133) & 0.609 (0.067) & 0.836 (0.080) & 0.649 (0.045) \\
			&   &   &    0.7     &     & \textbf{0.022} (0.003) & 0.211 (0.054) &  0.861 (0.099)   & 0.892 (0.116) & 1.068 (0.213) &0.949 (0.049) &1.204 (0.186) &  0.998 (0.068)  \\
			\midrule
			\multirow{3}{*}{Simmons} &
			\multirow{3}{*}{1137} &
			\multirow{3}{*}{4} &
			
			0.3 & \multirow{3}{*}{\shortstack{0.157 \\ (0.008)}} &
			\textbf{0.075} (0.008)  &
			0.515 (0.036) &
			0.334 (0.086) &
			0.344 (0.102) &
			0.564 (0.273) & 0.421 (0.094) & 1.045 (0.680) & 0.455 (0.087)\\
			&   &   &    0.5    &      & \textbf{0.054} (0.011)  & 0.577 (0.090) & 0.691 (0.199)    & 0.692 (0.084)  & 1.245 (0.691) &  0.642 (0.032) & 1.106 (0.151) & 0.685 (0.059)  \\
			&   &   &      0.7   &     & \textbf{0.031} (0.003) & 0.541 (0.073) & 0.988 (0.139)    &  0.897 (0.056)  &1.208 (0.454)&  0.958 (0.057) & 1.977 (0.775) & 1.046 (0.069) \\
			\midrule
			\multirow{3}{*}{Polblogs} &
			\multirow{3}{*}{1222} &
			\multirow{3}{*}{2} &
			
			0.3 & \multirow{3}{*}{\shortstack{0.093\\ (0.014) }} &
			\textbf{0.054} (0.006) &
			0.356 (0.034) &
			0.402 (0.127) &
			0.353 (0.093) &
			0.444 (0.160) & 0.311 (0.055) &   0.810 (0.261)  & 0.343 (0.031) \\
			&   &   &     0.5   &      & \textbf{0.031} (0.004) & 0.431 (0.098) & 0.780 (0.147)  &  0.700 (0.181) &  0.965 (0.179) & 0.649 (0.054) &  1.031 (0.190) & 0.644 (0.044) \\
			&   &   &     0.7    &    & \textbf{0.022} (0.004)& 0.351 (0.037) &  1.135 (0.118)   &  1.152 (0.314) & 1.430 (0.169) & 0.986 (0.149) & 1.408 (0.210) & 0.999 (0.060) \\ 
			\bottomrule
		\end{tabular}
	}
\end{table}

\section{Comparison with Local Refinement Algorithm} \label{sec:cwlr}
We would first illustrate why local refinement may not work  with an example and then explain our insight behind it.

Consider a network with $n = 4m$ nodes and $K = 2$ communities. Suppose that there are $2m$ labeled nodes, $m$ of them are in community $\mathcal{C}_1$ and have degree heterogeneity $\theta = 0.8$, and the other $m$ of them are in community $\mathcal{C}_2$ and have degree heterogeneity $\theta = 0.5$. There are $2m$ unlabeled nodes, $m$ of them are in community $\mathcal{C}_1$ and have degree heterogeneity $\theta = 0.6$, and the other $m$ of them are in community $\mathcal{C}_2$ and have degree heterogeneity $\theta = 0.7$. The $P$ matrix is defined as follows:
$$ P = \begin{pmatrix}1 & 0.9 \\ 0.9 & 1 \end{pmatrix}$$

Under this setting, all the assumptions in our paper are satisfied.

On the other hand, recall that the prototypical refinement algorithm, Algorithm 2 of Gao et al. (2018) is defined as follows:
$$ \hat{y}_i = arg\max_{u \in [K]} \frac{1}{|\{j: \hat{y}^{0}(j) = u\}|} \sum_{\{j: \hat{y}^{0}(j) = u\}} A_{ij}$$

where $\hat{y}^{0}$ is a vector of community label and $\hat{y}$ is the refined community label.

For semi-supervised setting, one may consider the following modification of local refinement algorithm:

\begin{itemize}
	\item[(i)] Apply  local refinement algorithm, with known labels to assign nodes in $\cal U$.
	\item[(ii)] With the labels of all nodes, one updates the labels of every node by applying the same refinement procedure.
\end{itemize}

Under the setting of our toy example, for step (i), all the unlabeled nodes which are actually in  community $\mathcal{C}_2$ will be assigned to community $\mathcal{C}_1$ with probability converging to 1 as $n \to \infty$. The reason is that for any unlabeled node $i$ which is actually in  community $\mathcal{C}_2$, when $u = 1$, $\{A_{ij}: j \in \mathcal{L}, y_j = u\}$ are iid $\sim Bern(\theta_i\theta_j P_{21}) = Bern(\theta_i \cdot 0.8 \cdot 0.9) = Bern(0.72 \theta_i)$; when $u = 2$, $\{A_{ij}: j \in \mathcal{L}, y_j = u\}$ are iid $\sim Bern(\theta_i\theta_jP_{22}) = Bern(\theta_i \cdot 0.5 \cdot 1) = Bern(0.5 \theta_i)$. Hence, by law of large numbers,

$$ \frac{1}{\{A_{ij}: j \in \mathcal{L}, y_j = u\}} \sum_{\{A_{ij}: j \in \mathcal{L}, y_j = u\}} A_{ij} \overset{a.s.}{\to} \begin{cases}
	0.72 \theta_i, &u = 1 \\
	0.5 \theta_i, &u = 2
\end{cases}$$

Consequently,  the  prototypical refinement algorithm will incorrectly assign all the unlabeled nodes which are actually in  the community $\mathcal{C}_2$ to $\mathcal{C}_1$ with probability converging to 1 as $n \to \infty$. This will cause a classification error of at least $50\%$.

Based on the huge classification error in step (i), step (ii) will also perform poorly. Similar to the reasoning above, by law of large numbers, it can be shown that after step (ii). the algorithm will still assign all the unlabeled nodes which are actually in  the community $\mathcal{C}_2$ to $\mathcal{C}_1$ with probability converging to 1 as $n \to \infty$. In other words, even if the local refinement algorithm is applied to the whole network, a classification error of at least $50\%$ will always remain.

Even if all the labels of the nodes are known, applying the  local refinement algorithm still can cause severe errors. Still consider our toy example. Suppose now that we know the label of all the nodes, and we perform the local refinement algorithm on these known labels in an attempt to purify them. By the law of large numbers, however, it is not hard to show that  for any node $i$ which is actually in  community $\mathcal{C}_2$,

$$ \frac{1}{\{A_{ij}: j \in \mathcal{L}, y_j = u\}} \sum_{\{A_{ij}: y_j = u\}} A_{ij} \overset{a.s.}{\to} \begin{cases}
	0.63 \theta_i, &u = 1 \\
	0.6 \theta_i, &u = 2
\end{cases}$$

Consequently, similar to the previous cases, the  local refinement algorithm will incorrectly assign all the unlabeled nodes which are actually in  the community $\mathcal{C}_2$ to $\mathcal{C}_1$ with probability converging to 1 as $n \to \infty$.   This will cause a classification error of at least $50\%$, even though the input of the algorithm is actually the true label vector.

To conclude, in general, the local refinement algorithms may not work under the broad settings of our paper. Intrinsically, label refinement is quite challenging when there is moderate degree heterogeneity, not to mention the scenarios where non-assortative networks occur. Local refinement algorithm works theoretically because strong assumptions on degree heterogeneity are imposed. For instance, it is required that the mean of the degree heterogeneity parameter in each community is $1 + o(1)$, which means that the network is extremely dense and that the degree heterogeneity parameters across communities are strongly balanced. Both of these two assumptions are hardly true in the real world, where most of the networks are sparse and imbalanced. Gao et al. (2018) is a very good paper, but we think that local refinement algorithm or similar algorithms might not be good choices for our problem.

\allowdisplaybreaks
\section{Generalization of Lemma \ref{sup:lemma:opt}} \label{sec:gen}

In the main paper, for the smoothness and comprehensibility of the text, we do not present the most general form of Lemma \ref{sup:lemma:opt}. We have a more general version of the lemma by relaxing the condition $\frac{\|\theta^{(1)}_{\cal L}\|_1}{\|\theta^{(2)}_{\cal L}\|_1}=\frac{\|\theta^{(1)}_{\cal U}\|_1}{ \|\theta^{(2)}_{\cal U}\|_1}=1$. Please see Section~\ref{sec:lemma2} for more details.

\section{Preliminaries}

For any positive integer $N$, Define $[N] = \{1, 2, ..., N\}$.

For a matrix $D$ and two index sets $S_1, S_2$, define
$ D_{S_1 S_2}$ to be the submatrix $(D_{ij})_{i \in S_1, j \in S_2}$, $ D_{S_1 \cdot}$ to be the submatrix $(D_{ij})_{i \in S_1, j \in \cal L \cup \cal U}$, and $ D_{\cdot S_2}$ to be the submatrix $(D_{ij})_{i \in \cal L \cup \cal U, j \in S_2}$.

The two main assumptions  \eqref{cond-P}, \eqref{cond-theta} in the main paper are presented below for convenience.
\setcounter{equation}{7}
\beq \label{cond-P}
\|P\|_{\max}\leq C_1, \qquad  |\lambda_{\min}(P)|\geq \beta_n. 
\eeq

\beq \label{cond-theta}
\frac{\max_{1\leq k\leq K}\{ \|\theta^{(k)}\|_1\}}{\min_{1\leq k\leq K}\{ \|\theta^{(k)}\|_1\}}\leq C_2, \qquad \max_{1\leq k\leq K}\biggl\{ \frac{\|\theta_{\cal L}^{(k)}\|^2}{\|\theta_{\cal L}^{(k)}\|_1\|\theta\|_1}\biggr\}\leq c_3\beta_n.   
\eeq

, where constant $c_3$ is properly small. We would specify this precisely in our proofs. 
\setcounter{equation}{10}

A number of lemmas used in our proofs will be presented as follows.

The following lemma shows that $\sin x$ and $x$ have the same order.

\setcounter{lemma}{2}
\begin{lemma} \label{lemma:sin}
	Let $x \in \mathbb{R}$. When $x \ge 0$, $\sin x \le x$; when $x \in [0, \frac{\pi}{2}]$, $\sin x \ge \frac{2}{\pi} x$.
\end{lemma}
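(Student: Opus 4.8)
The plan is to handle the two inequalities separately, each by an elementary monotonicity argument on an auxiliary function; no deep machinery is needed.

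\textbf{First inequality ($\sin x \le x$ for $x \ge 0$).} I would set $g(x) = x - \sin x$, observe $g(0) = 0$, and compute $g'(x) = 1 - \cos x \ge 0$ for all real $x$. Hence $g$ is nondecreasing on $[0,\infty)$, so $g(x) \ge g(0) = 0$ for every $x \ge 0$, which is exactly $\sin x \le x$.

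\textbf{Second inequality ($\sin x \ge \tfrac{2}{\pi} x$ for $x \in [0,\tfrac{\pi}{2}]$).} This is Jordan's inequality, and I would prove it by concavity. Set $h(x) = \sin x - \tfrac{2}{\pi} x$; then $h(0) = 0$ and $h(\tfrac{\pi}{2}) = 1 - 1 = 0$, while $h''(x) = -\sin x \le 0$ on $[0,\tfrac{\pi}{2}]$, so $h$ is concave there. For any $x \in [0,\tfrac{\pi}{2}]$ write $x = (1-t)\cdot 0 + t\cdot \tfrac{\pi}{2}$ with $t = \tfrac{2x}{\pi} \in [0,1]$; concavity gives $h(x) \ge (1-t)\,h(0) + t\,h(\tfrac{\pi}{2}) = 0$, i.e. $\sin x \ge \tfrac{2}{\pi} x$. (Alternatively, one can show $x \mapsto \sin x / x$ is nonincreasing on $(0,\tfrac{\pi}{2}]$ — its derivative has numerator $x\cos x - \sin x < 0$ there because $\tan x > x$ — and then compare with the endpoint value $\sin(\tfrac{\pi}{2})/(\tfrac{\pi}{2}) = \tfrac{2}{\pi}$.)

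There is no genuine obstacle: both parts reduce to sign analysis of an auxiliary function and its derivatives. The only point needing a line of justification is the step "a concave function vanishing at the two endpoints of an interval is nonnegative on that interval," which is immediate from the defining convex-combination inequality as indicated above.
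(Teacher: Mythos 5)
Your proof is correct. The first half is essentially identical to the paper's: both define $x-\sin x$ (the paper uses $g_1(x)=\sin x - x$), note the derivative has constant sign, and conclude by monotonicity from the value at $0$.

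For the second inequality your route differs slightly from the paper's. The paper works with the first derivative of $g_2(x)=\sin x-\tfrac{2}{\pi}x$, locates the critical point $\arccos\tfrac{2}{\pi}$, argues $g_2$ is increasing then decreasing on $[0,\tfrac{\pi}{2}]$, and concludes that the minimum over the interval is attained at an endpoint, where $g_2$ vanishes. You instead observe $h''(x)=-\sin x\le 0$, so $h$ is concave on $[0,\tfrac{\pi}{2}]$ and, vanishing at both endpoints, is nonnegative throughout by the convex-combination inequality. The two arguments are equally elementary and reach the same structural conclusion (minimum at the endpoints); your concavity version is marginally cleaner in that it never needs to identify or name the interior critical point, while the paper's version avoids invoking the second derivative. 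Either is a complete and valid proof.
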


Lemma  \ref{lemma:sin} is quite obvious, but for the completeness of our work, we provide a proof for it.

\begin{proof}
	Let $g_1(x) = \sin x - x$. 
	
	Then 
	$$\frac{d}{dx} g_1(x) = \cos x - 1 \le 0$$
	
	Hence $g_1(x)$ is monotonously decreasing on $\mathbb{R}$. As a result,
	when $x \ge 0$, $g_1(x) \ge g_1(0) = 0$. Therefore, when $x \ge 0$, $\sin x \le x$.

	Let $g_2(x) = \sin x - \frac{2}{\pi} x$.
	Then 
	$$\frac{d}{dx} g_2(x) = \cos x - \frac{2}{\pi}$$
	
	Since $\cos x$ is monotonously decreasing on $[0, \frac{\pi}{2}]$, $\frac{d}{dx} g_2(x) \ge 0$ when $x \in [0, \arccos \frac{2}{\pi}]$ and $\frac{d}{dx} g_2(x) \le 0$ when $x \in [\arccos \frac{2}{\pi}, \frac{\pi}{2}]$. Hence, $g_2(x)$ is monotonously increasing on $[0, \arccos \frac{2}{\pi}]$ and is monotonously decreasing on $ [\arccos \frac{2}{\pi}, \frac{\pi}{2}]$. As a result, when $x \in [0, \frac{\pi}{2}]$, 
	$$ g_2(x) \ge \min\{g_2(0), g_2(\frac{2}{\pi})\} = 0$$
	Therefore, when $x \in [0, \frac{\pi}{2}]$, $\sin x \ge \frac{2}{\pi} x$.

\end{proof}

The following lemma demonstrates that the angle $\psi(u,v)$ in Definition 1 satisfies the triangle inequality, so that it can be regarded as a sort of "metric".

\begin{lemma}[Angle Inequality] \label{lemma: angle}
	Let $x, y, z$ be three real vectors. Then,
	
	$$ \psi(x, z) \le \psi(x, y) + \psi(y, z)$$
\end{lemma}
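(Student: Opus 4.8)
The plan is to interpret $\psi$ as the geodesic distance on the unit sphere and reduce the claim to the classical spherical triangle inequality. First I would normalize, setting $\hat{x} = x/\|x\|$, $\hat{y} = y/\|y\|$, $\hat{z} = z/\|z\|$ (the vectors are nonzero, else $\psi$ is undefined), and writing $\alpha = \psi(x,y)$, $\beta = \psi(y,z)$, $\gamma = \psi(x,z)$, all lying in $[0,\pi]$. The goal becomes $\gamma \le \alpha + \beta$.

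Next I would dispose of the trivial case: if $\alpha + \beta \ge \pi$, then $\gamma \le \pi \le \alpha + \beta$ and there is nothing to prove. So assume $\alpha + \beta < \pi$. Since $\cos$ is strictly decreasing on $[0,\pi]$ and both $\gamma$ and $\alpha+\beta$ lie in $[0,\pi]$, it suffices to prove $\cos\gamma \ge \cos(\alpha+\beta)$. Expanding the right-hand side by the angle-addition formula, this is equivalent to
$$
\langle \hat{x}, \hat{z}\rangle \;\ge\; \langle\hat{x},\hat{y}\rangle\langle\hat{y},\hat{z}\rangle \;-\; \sin\alpha\,\sin\beta .
$$

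The key step is an orthogonal decomposition relative to $\hat{y}$: write $\hat{x} = \langle\hat{x},\hat{y}\rangle\,\hat{y} + x_\perp$ and $\hat{z} = \langle\hat{z},\hat{y}\rangle\,\hat{y} + z_\perp$ with $x_\perp \perp \hat{y}$ and $z_\perp \perp \hat{y}$. Since $\|\hat{x}\| = 1$ we get $\|x_\perp\|^2 = 1 - \cos^2\alpha = \sin^2\alpha$, hence $\|x_\perp\| = \sin\alpha$ (using $\sin\alpha \ge 0$ on $[0,\pi]$), and likewise $\|z_\perp\| = \sin\beta$. Taking the inner product of the two decompositions gives $\langle\hat{x},\hat{z}\rangle = \langle\hat{x},\hat{y}\rangle\langle\hat{z},\hat{y}\rangle + \langle x_\perp, z_\perp\rangle$, and Cauchy–Schwarz yields $\langle x_\perp, z_\perp\rangle \ge -\|x_\perp\|\|z_\perp\| = -\sin\alpha\,\sin\beta$, which is exactly the desired inequality. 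There is no genuinely hard part here; the only points needing care are the case split at $\alpha+\beta \ge \pi$ and the fact that $\sin$ is nonnegative on $[0,\pi]$, both of which are routine.
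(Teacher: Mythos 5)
Your proof is correct. It follows the same overall skeleton as the paper's: reduce the claim to $\cos\psi(x,z)\ge\cos\bigl(\psi(x,y)+\psi(y,z)\bigr)$, then handle the degenerate case $\psi(x,y)+\psi(y,z)\ge\pi$ separately. Where you differ is in how you establish that cosine inequality. The paper forms the $3\times 3$ Gram matrix of the normalized vectors, uses its positive semi-definiteness to get $\det(G)\ge 0$, and massages that determinant identity into $\bigl(\cos\alpha\cos\beta-\cos\gamma\bigr)^2\le\sin^2\alpha\,\sin^2\beta$. You instead decompose $\hat{x}$ and $\hat{z}$ orthogonally against $\hat{y}$ and apply Cauchy--Schwarz to the perpendicular components, obtaining $\langle\hat{x},\hat{z}\rangle\ge\langle\hat{x},\hat{y}\rangle\langle\hat{y},\hat{z}\rangle-\sin\alpha\sin\beta$ directly. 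The two mechanisms are algebraically equivalent --- expanding your one-sided Cauchy--Schwarz bound $(c-ab)^2\le(1-a^2)(1-b^2)$ recovers exactly the Gram determinant inequality $1-a^2-b^2-c^2+2abc\ge 0$ --- but your route is arguably more transparent geometrically (it makes visible why $\hat{y}$ plays the role of the intermediate point on the sphere) and only invokes the one-sided lower bound you actually need, whereas the paper's determinant computation delivers the two-sided bound. Either is a complete and standard proof of the spherical triangle inequality.
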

The proof of Lemma \ref{lemma: angle}  can be seen in \cite{gustafson1997numerical}, pg 56. Also, for completeness of our work, we provide a proof of Lemma \ref{lemma: angle}.
\begin{proof}
	Let $\Tilde{x} = \frac{x}{\|x\|}$, $\Tilde{y} = \frac{y}{\|y\|}$, $\Tilde{z} = \frac{z}{\|z\|}$, then $\cos\psi(x, y) = \langle \Tilde{x}, \Tilde{y} \rangle$, $\cos\psi(y, z) = \langle \Tilde{y}, \Tilde{z} \rangle$, $\cos\psi(x, z) = \langle \Tilde{x}, \Tilde{z} \rangle$. Consider the following matrix
	$$ G = \begin{pmatrix} 1 &  \cos\psi(x, y)  & \cos\psi(x, z) \\
		\cos\psi(x, y) &  1  & \cos\psi(y, z) \\
		\cos\psi(x, z) &  \cos\psi(y, z)  & 1 \end{pmatrix}
	= \begin{pmatrix} \langle \Tilde{x}, \Tilde{x} \rangle & \langle \Tilde{x}, \Tilde{y} \rangle & \langle \Tilde{x}, \Tilde{z} \rangle \\
		\langle \Tilde{y}, \Tilde{x} \rangle & \langle \Tilde{y}, \Tilde{y} \rangle & \langle \Tilde{y}, \Tilde{z} \rangle \\
		\langle \Tilde{z}, \Tilde{x} \rangle & \langle \Tilde{z}, \Tilde{y} \rangle & \langle \Tilde{z}, \Tilde{z} \rangle 
	\end{pmatrix}$$
	
	For any vector $c = (c_1, c_2, c_3)^T \in \mathbb{R}^3$, 
	$$ c^T G c = \langle c_1 \Tilde{x} + c_2 \Tilde{y} + c_3 \Tilde{z}, c_1 \Tilde{x} + c_2 \Tilde{y} + c_3 \Tilde{z} \rangle \ge 0 $$
	
	Also, $G$ is symmetric. Therefore, $G$ is positive semi-definite. As a result, $det(G) \ge 0$. In other word,
	
	$$ 1 - \cos^2\psi(x, y) - \cos^2\psi(y, z) - \cos^2\psi(x, z) + 2 \cos\psi(x, y)\cos\psi(y, z)\cos\psi(x, z) \ge 0 $$ 
	
	The above inequality can be rewritten as
	
	$$ (1 - \cos^2\psi(x, y))(1 - \cos^2\psi(y, z)) \ge (\cos\psi(x, y)\cos\psi(y, z) - \cos\psi(x, z))^2 $$
	
	or
	
	$$ (\sin\psi(x, y)\sin\psi(y, z))^2 \ge (\cos\psi(x, y)\cos\psi(y, z) - \cos\psi(x, z))^2 $$
	
	By definition of $\arccos$, $\psi(x, y), \psi(y, z) \in [0, \pi]$, so $\sin\psi(x, y)\sin\psi(y, z) \ge 0$. Therefore,
	
	$$ - \sin\psi(x, y)\sin\psi(y, z) \le \cos\psi(x, y)\cos\psi(y, z) - \cos\psi(x, z) \le  \sin\psi(x, y)\sin\psi(y, z)$$
	
	$$ \cos\psi(x, z) \ge \cos\psi(x, y)\cos\psi(y, z) - \sin\psi(x, y)\sin\psi(y, z)$$
	$$ \cos\psi(x, z) \ge \cos(\psi(x, y) + \psi(y, z)) $$
	
	If $\psi(x, y) + \psi(y, z) > \pi$, because by definition of $\arccos$, $\psi(x, z) \in [0, \pi]$, it is immediate that
	
	$$ \psi(x, z) \le \psi(x, y) + \psi(y, z)$$

	If $\psi(x, y) + \psi(y, z) \le \pi$, recall that $\psi(x, y), \psi(y, z) \in [0, \pi]$, hence $\psi(x, y) + \psi(y, z) \in [0, \pi]$. Also, $\psi(x, z) \in [0, \pi]$. Since $\cos$ is monotone decreasing on $[0, \pi]$, we obtain
	$$ \psi(x, z) \le \psi(x, y) + \psi(y, z)$$
	
	In all, 
	$$ \psi(x, z) \le \psi(x, y) + \psi(y, z)$$
\end{proof}

The following lemma relates angle to Euclidean distance.

\begin{lemma} \label{lemma: ang-dis}
	Suppose that $x, y \in \mathbb{R}^m$, $\|y\| < \|x\|$. Then, 
	$$ \psi(x, x + y) \le \arcsin\left(\frac{\|y\|}{\|x\|}\right)$$
	
	The equality holds if and only if $\langle y, x + y \rangle = 0$
\end{lemma}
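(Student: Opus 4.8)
The plan is to reduce the inequality to an elementary algebraic identity about inner products, after first noting the geometric picture for intuition: the point $x+y$ lies in the Euclidean ball of radius $\|y\|$ centered at $x$, and since $\|y\|<\|x\|$ the origin lies \emph{outside} this ball, so the largest angle at the origin subtended by a point of the ball is the half-angle of the tangent cone to the ball from the origin, which is exactly $\arcsin(\|y\|/\|x\|)$; equality occurs precisely at the points of tangency, where the radius $y$ is orthogonal to the line through the origin and that point, i.e. $\langle y,x+y\rangle=0$.

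To make this rigorous I would first compute $\sin\psi(x,x+y)$ in closed form. Writing $\psi=\psi(x,x+y)$ and using $\cos\psi=\langle x,x+y\rangle/(\|x\|\,\|x+y\|)$ together with the Lagrange identity $\|x\|^2\|x+y\|^2-\langle x,x+y\rangle^2=\|x\|^2\|y\|^2-\langle x,y\rangle^2$ (which follows by expanding $\langle x,x+y\rangle=\|x\|^2+\langle x,y\rangle$ and $\|x+y\|^2=\|x\|^2+2\langle x,y\rangle+\|y\|^2$), one gets
\[
\sin^2\psi=1-\cos^2\psi=\frac{\|x\|^2\|y\|^2-\langle x,y\rangle^2}{\|x\|^2\,\|x+y\|^2}.
\]
The key step is the claim $\sin^2\psi\le\|y\|^2/\|x\|^2$. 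Clearing denominators (legitimate since $\|x\|>\|y\|\ge 0$ forces $x\neq 0$, and $\|y\|<\|x\|$ forces $x+y\neq 0$) and again substituting $\|x+y\|^2=\|x\|^2+2\langle x,y\rangle+\|y\|^2$, this inequality is seen to be equivalent to $0\le\langle x,y\rangle^2+2\|y\|^2\langle x,y\rangle+\|y\|^4=(\langle x,y\rangle+\|y\|^2)^2=\langle x+y,y\rangle^2$, which is obvious, and equality holds exactly when $\langle x+y,y\rangle=0$. This identity-chasing is the heart of the argument and the only genuine computation.

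It then remains to pass from $\sin\psi\le\|y\|/\|x\|$ to $\psi\le\arcsin(\|y\|/\|x\|)$, for which I need $\psi\in[0,\pi/2)$ so that $\sin$ is increasing there. This is where one must be slightly careful, and it is really the only (very mild) obstacle: $\cos\psi$ has the same sign as $\|x\|^2+\langle x,y\rangle$, and by Cauchy--Schwarz $\langle x,y\rangle\ge-\|x\|\|y\|>-\|x\|^2$ because $\|y\|<\|x\|$, so $\cos\psi>0$ and hence $\psi\in[0,\pi/2)$. Since $\arcsin(\|y\|/\|x\|)$ also lies in $[0,\pi/2)$ and $\sin$ is strictly increasing and injective on that interval, monotonicity gives $\psi\le\arcsin(\|y\|/\|x\|)$, and the equality case transfers verbatim: $\psi=\arcsin(\|y\|/\|x\|)$ iff $\sin\psi=\|y\|/\|x\|$ iff $\langle y,x+y\rangle=0$. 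The remaining details are routine.
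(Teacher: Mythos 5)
Your proposal is correct and is essentially the same argument as the paper's: both reduce the claim to the nonnegativity of the perfect square $\langle y, x+y\rangle^2 = (\langle x,y\rangle + \|y\|^2)^2$, and both identify the equality case as the vanishing of that square. The only cosmetic difference is that you phrase the comparison through $\sin^2\psi$ via the Lagrange identity (which forces you to separately check $\psi\in[0,\pi/2)$, as you correctly do), while the paper works directly with $\cos\psi\ge\sqrt{1-\|y\|^2/\|x\|^2}$ and invokes monotonicity of $\cos$.
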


\begin{proof}
	Let $\rho = \frac{\|y\|}{\|x\|}$, $\psi_0 = \psi(x, y)$. Then $\langle x, y \rangle = \|x\| \|y\| \cos{\psi_0}$. 
	Notice that 
	$$ \rho^2( \rho + \cos{\psi_0})^2 \ge 0$$
	
	This can be rewritten as
	$$ (1 + \rho \cos{\psi_0})^2 \ge (1 + \rho^2 + 2 \rho \cos \psi_0)(1 - \rho^2) $$

	Since $\|y\| < \|x\|$, so $\rho < 1$, $1 + \rho \cos{\psi_0} > 0$. 
	Hence
	$$ \frac{1 + \rho \cos \psi_0}{\sqrt{1 + \rho^2 + 2 \rho \cos \psi_0 }} \ge \sqrt{1 - \rho^2}$$
	
	Plugging in $\rho = \frac{\|y\|}{\|x\|}$, we have
	$$ \frac{\|x\|^2 + \|x\| \|y\| \cos \psi_0}{\sqrt{\|x\|^2 + \|y\|^2 + 2 \|x\|\|y\| \cos \psi_0 }} \ge \sqrt{1 - \frac{\|y\|^2}{\|x\|^2}}$$
	
	Since $\langle x, y \rangle = \|x\| \|y\| \cos{\psi_0}$,
	
	$$ \frac{\|x\|^2 + \langle x, y \rangle}{\|x\|\sqrt{\|x\|^2 + \|y\|^2 + 2 \langle x, y \rangle }} \ge \sqrt{1 - \frac{\|y\|^2}{\|x\|^2}}$$
	
	$$ \frac{\langle x, x+y \rangle}{\|x\|\sqrt{\langle x + y, x + y \rangle }} \ge \sqrt{1 - \frac{\|y\|^2}{\|x\|^2}}$$
	
	In other words,
	$$ \cos \psi(x, x+y) \ge \sqrt{1 - \frac{\|y\|^2}{\|x\|^2}} = \cos \arcsin\left(\frac{\|y\|}{\|x\|}\right)$$
	
	Since $\frac{\|y\|}{\|x\|} \ge 0$,  $\arcsin\left(\frac{\|y\|}{\|x\|}\right) \in [0, \frac{\pi}{2}]$. Therefore, by monotonicity of $\cos$ on $[0, \frac{\pi}{2}]$,
	$$ \psi(x, x + y) \le \arcsin\left(\frac{\|y\|}{\|x\|}\right)$$
	
	The equality holds if and only if  $ \rho^2( \rho + \cos{\psi_0})^2 \ge 0$, or equivalently,
	$$ (\|y\|^2 + \|x\|\|y\| \cos \psi_0)^2 = 0$$
	
	This can be reduced to
	$$\langle y, x + y \rangle = 0$$
	
\end{proof}

	%
	%
	%
		%
		%
		%
		%
			%
			%
			\section{Proof of Lemma \ref{sup:lemma:1}}
			\setcounter{lemma}{0}
			\begin{lemma} \label{sup:lemma:1}
				Consider the DCBM model where \eqref{cond-P}-\eqref{cond-theta} are satisfied. We define three $K\times K$ matrices: $G_{\cal LL}=\Pi_{\cal L}'\Theta_{\cal LL}\Pi_{\cal L}$, $G_{\cal UU}=\Pi_{\cal U}'\Theta_{\cal UU}\Pi_{\cal U}$, and $Q=G^{-1}_{\cal UU}\Pi_{\cal U}'\Theta_{\cal UU}H$. 
				For $1\leq k\leq K$, 
				\[
				\psi_k(H) = \arccos\Bigl(\frac{M_{kk^*}}{\sqrt{M_{kk}}\sqrt{M_{k^*k^*}}}\Bigr), \qquad\mbox{where}\;\; M = P\Bigl(G_{\cal LL}^2 + G_{\cal UU}QQ'G_{\cal UU}\Bigr)P. 
				\]
			\end{lemma}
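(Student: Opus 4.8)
The plan is to compute $f(\Omega^{(k)};H)$ and $f(\mathbb{E}X;H)$ explicitly in terms of the population parameters $\Theta,\Pi,P$, and then observe that all the degree-parameter factors cancel when we form the cosine of the angle between them, leaving only the combinatorial quantity $M$. Recall from \eqref{DCBM2} that $\Omega=\Theta\Pi P\Pi'\Theta$, so the $j$th entry of the aggregated signal vector is $\Omega^{(k)}_j=\sum_{i\in\mathcal L\cap\mathcal C_k}\Omega_{ij}=\bigl(\sum_{i\in\mathcal L\cap\mathcal C_k}\theta_i\theta_j\,\pi_i'P\pi_j\bigr)$. Since $\pi_i=e_k$ for $i\in\mathcal C_k$, this is $\theta_j\bigl(\sum_{i\in\mathcal L\cap\mathcal C_k}\theta_i\bigr)\langle Pe_k,\pi_j\rangle$; in vector form, restricting to $\mathcal L$ and to $\mathcal U$ respectively,
\[
\Omega^{(k)}_{\mathcal L}=\|\theta^{(k)}_{\mathcal L}\|_1\cdot\Theta_{\mathcal L\mathcal L}\Pi_{\mathcal L}Pe_k,
\qquad
\Omega^{(k)}_{\mathcal U}=\|\theta^{(k)}_{\mathcal L}\|_1\cdot\Theta_{\mathcal U\mathcal U}\Pi_{\mathcal U}Pe_k .
\]
Similarly $\mathbb{E}X_j=\theta^*\theta_j\,\pi^{*\prime}P\pi_j=\theta^*\theta_j\langle Pe_{k^*},\pi_j\rangle$, so $\mathbb{E}X_{\mathcal L}=\theta^*\,\Theta_{\mathcal L\mathcal L}\Pi_{\mathcal L}Pe_{k^*}$ and $\mathbb{E}X_{\mathcal U}=\theta^*\,\Theta_{\mathcal U\mathcal U}\Pi_{\mathcal U}Pe_{k^*}$.

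Next I would plug these into the definition \eqref{Projection} of $f(\cdot;H)$. The first $K$ coordinates of $f(\Omega^{(k)};H)$ are $(\Omega^{(k)}_{\mathcal L})'\mathbf 1_{(\ell)}$ for $\ell=1,\dots,K$; noting that $[\mathbf 1_{(1)},\dots,\mathbf 1_{(K)}]=\Pi_{\mathcal L}$, this block equals $\|\theta^{(k)}_{\mathcal L}\|_1\cdot(\Pi_{\mathcal L}'\Theta_{\mathcal L\mathcal L}\Pi_{\mathcal L})Pe_k=\|\theta^{(k)}_{\mathcal L}\|_1\cdot G_{\mathcal L\mathcal L}Pe_k$. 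The last $K$ coordinates are $H'\Omega^{(k)}_{\mathcal U}=\|\theta^{(k)}_{\mathcal L}\|_1\cdot H'\Theta_{\mathcal U\mathcal U}\Pi_{\mathcal U}Pe_k$; writing $H'\Theta_{\mathcal U\mathcal U}\Pi_{\mathcal U}=(G_{\mathcal U\mathcal U}^{-1}\Pi_{\mathcal U}'\Theta_{\mathcal U\mathcal U}H)'G_{\mathcal U\mathcal U}=Q'G_{\mathcal U\mathcal U}$ (using symmetry of $G_{\mathcal U\mathcal U}$), this block equals $\|\theta^{(k)}_{\mathcal L}\|_1\cdot Q'G_{\mathcal U\mathcal U}Pe_k$. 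Hence
\[
f(\Omega^{(k)};H)=\|\theta^{(k)}_{\mathcal L}\|_1\cdot\begin{bmatrix}G_{\mathcal L\mathcal L}P e_k\\ Q'G_{\mathcal U\mathcal U}Pe_k\end{bmatrix},
\qquad
f(\mathbb{E}X;H)=\theta^*\cdot\begin{bmatrix}G_{\mathcal L\mathcal L}Pe_{k^*}\\ Q'G_{\mathcal U\mathcal U}Pe_{k^*}\end{bmatrix}.
\]

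Finally, the angle $\psi_k(H)$ is unchanged under positive scalar multiples of either argument, so the factors $\|\theta^{(k)}_{\mathcal L}\|_1$ and $\theta^*$ drop out. The inner product of the two bracketed vectors is
\[
(Pe_k)'\bigl(G_{\mathcal L\mathcal L}^2+G_{\mathcal U\mathcal U}QQ'G_{\mathcal U\mathcal U}\bigr)(Pe_{k^*})=e_k'\,M\,e_{k^*}=M_{kk^*},
\]
using $G_{\mathcal L\mathcal L}'=G_{\mathcal L\mathcal L}$, $G_{\mathcal U\mathcal U}'=G_{\mathcal U\mathcal U}$ and the definition $M=P(G_{\mathcal L\mathcal L}^2+G_{\mathcal U\mathcal U}QQ'G_{\mathcal U\mathcal U})P$; the squared norms are $M_{kk}$ and $M_{k^*k^*}$ respectively. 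Therefore $\cos\psi_k(H)=M_{kk^*}/\sqrt{M_{kk}M_{k^*k^*}}$, which is the claimed identity (one also notes $M_{kk},M_{k^*k^*}>0$ since $G_{\mathcal L\mathcal L}\succ0$ makes $M\succeq P G_{\mathcal L\mathcal L}^2P\succ0$ under \eqref{cond-P}, so the expression is well defined). The only mildly delicate point is the bookkeeping that identifies $[\mathbf 1_{(1)},\dots,\mathbf 1_{(K)}]$ with $\Pi_{\mathcal L}$ and rewrites $H'\Theta_{\mathcal U\mathcal U}\Pi_{\mathcal U}$ as $Q'G_{\mathcal U\mathcal U}$; there is no real analytic obstacle, since everything is an exact population computation with the noise $\bar W$ and the bias $\diag(\Omega)$ absent by definition of $\psi_k$.
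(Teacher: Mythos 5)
Your proposal is correct and follows essentially the same route as the paper's proof: compute $f(\Omega^{(k)};H)=\|\theta^{(k)}_{\mathcal L}\|_1\,[G_{\mathcal LL}Pe_k;\,Q'G_{\mathcal UU}Pe_k]$ and $f(\mathbb{E}X;H)=\theta^*[G_{\mathcal LL}Pe_{k^*};\,Q'G_{\mathcal UU}Pe_{k^*}]$, then let the scalar degree factors cancel in the cosine so that only $M=P(G_{\mathcal LL}^2+G_{\mathcal UU}QQ'G_{\mathcal UU})P$ survives. The only cosmetic difference is that you extract the factor $\|\theta^{(k)}_{\mathcal L}\|_1$ entrywise while the paper routes it through the identity $\Pi_{\mathcal L}'\Theta_{\mathcal LL}\Pi_{\mathcal L}=\diag(\|\theta^{(1)}_{\mathcal L}\|_1,\dots,\|\theta^{(K)}_{\mathcal L}\|_1)$; your added remark on positive definiteness of $M$ is a small bonus the paper omits.
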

			
			\begin{proof}
				Recall that 
				
				\beq 
				\psi_k(H) = \psi\Bigl( f(\Omega^{(k)}; H),\; f(\mathbb{E}X; H)\Bigr), \qquad \mbox{for }1\leq k\leq K.  
				\eeq
				
				where 
				\beq \label{Projection}
				f(x; H) = \Bigl[x_{\cal L}'{\bf 1}_{(1)},\;  \ldots,\; x_{\cal L}'{\bf 1}_{(k)}, \; x_{\cal U}'h_1, \ldots, x_{\cal U}'h_K \Bigr]' = [x_{\cal L}'\Pi_{\cal L}, x_{\cal U}'H]'. 
				\eeq
				
				and
				
				$$\Omega^{(k)}_j=\sum_{i\in {\cal L}\cap {\cal C}_k}\Omega_{ij} = e_k' \Pi_{\cal L}' \Omega_{\cal L \cdot} e_j$$ 
				
				which indicates
				
				$$ \Omega^{(k)} = (e_k' \Pi_{\cal L}' \Omega_{\cal L \cdot})' $$

				Hence
				\begin{align}
					f(\Omega^{(k)}; H) &= f((e_k' \Pi_{\cal L}' \Omega_{\cal L \cdot})' ; H) \notag \\
					&= [e_k' \Pi_{\cal L}' \Omega_{\cal L \cal L}\Pi_{\cal L}, e_k' \Pi_{\cal L}' \Omega_{\cal L \cal U}H]' \notag \\
					&= [e_k' \Pi_{\cal L}' \Theta_{\cal L \cal L}\Pi_{\cal L} P \Pi_{\cal L}^T \Theta_{\cal L \cal L} \Pi_{\cal L}, e_k' \Pi_{\cal L}' \Theta_{\cal L \cal L}\Pi_{\cal L} P \Pi_{\cal U}^T \Theta_{\cal U \cal U} H]' \notag \\
					&=  [\Pi_{\cal L}' \Theta_{\cal L \cal L} \Pi_{\cal L}, H' \Pi_{\cal U}' \Theta_{\cal U \cal U} ]' P \Pi_{\cal L}' \Theta_{\cal L \cal L}\Pi_{\cal L} e_k  \notag \\
					&= [G_{\cal L \cal L}, Q' G_{\cal U \cal U} ]' P \Pi_{\cal L}' \Theta_{\cal L \cal L}\Pi_{\cal L} e_k
				\end{align}
				Notice that 
				$$ (\Pi_{\cal L}' \Theta_{\cal L \cal L}\Pi_{\cal L})_{kl} = {\bf 1}_{(k)}\Theta_{\cal L \cal L} {\bf 1}_{(l)} = \left\{ 
				\begin{aligned}
					&\|\theta_{\cal L}^{(k)}\|_1, &k = l \\
					&0, &k \ne l
				\end{aligned} \right.$$
				
				In other words,
				
				$$ \Pi_{\cal L}' \Theta_{\cal L \cal L}\Pi_{\cal L} = diag\left(\|\theta_{\cal L}^{(1)}\|_1, ..., \|\theta_{\cal L}^{(K)}\|_1\right)$$
				
				Hence
				$$ f(\Omega^{(k)}; H) = \|\theta_{\cal L}^{(k)}\|_1 [G_{\cal L \cal L}, Q' G_{\cal U \cal U} ]' P e_k    $$
				
				Similarly,
				$$ f(\mathbb{E}X; H) = \theta^*  [G_{\cal L \cal L}, Q' G_{\cal U \cal U} ]' P e_{k^*}  $$
				
				Therefore,
				\begin{align}
					\langle f(\Omega^{(k)}; H), f(\mathbb{E}X; H) \rangle &= (\|\theta_{\cal L}^{(k)}\|_1 [G_{\cal L \cal L}, Q' G_{\cal U \cal U} ]' P e_k )' \theta^*  [G_{\cal L \cal L}, Q' G_{\cal U \cal U} ]' P e_{k^*}
					\notag \\
					&= \theta^* \|\theta_{\cal L}^{(k)}\|_1 e_k' P [G_{\cal L \cal L}, G_{\cal U \cal U} Q]  [G_{\cal L \cal L}, Q' G_{\cal U \cal U} ]' P e_{k^*} \notag \\
					&= \theta^* \|\theta_{\cal L}^{(k)}\|_1 e_k'P\Bigl(G_{\cal LL}^2 + G_{\cal UU}QQ'G_{\cal UU}\Bigr)P e_{k^*} \notag \\
					&= \theta^* \|\theta_{\cal L}^{(k)}\|_1 e_k'M e_{k^*} \notag \\
					&= \theta^* \|\theta_{\cal L}^{(k)}\|_1 M_{kk^*}
				\end{align}

				Similarly,
				\beq
				\| f(\Omega^{(k)}; H)\| = \sqrt{\langle f(\Omega^{(k)}; H), f(\Omega^{(k)}; H)\rangle} = \|\theta_{\cal L}^{(k)}\|_1 \sqrt{M_{kk}}
				\eeq
				
				\beq
				\| f(\mathbb{E}X; H)\| = \sqrt{\langle f(\mathbb{E}X; H), f(\mathbb{E}X; H) \rangle} = \theta^* \sqrt{M_{k^* k^*}}
				\eeq
				
				Hence, 
				\begin{align}
					\psi_k(H) &= \psi\Bigl( f(\Omega^{(k)}; H),\; f(\mathbb{E}X; H)\Bigr) \notag \\
					&= \arccos \left( \frac{\langle f(\Omega^{(k)}; H), f(\mathbb{E}X; H) \rangle}{\| f(\Omega^{(k)}; H)\|\| f(\mathbb{E}X; H)\|}\right) \notag \\
					&=  \arccos \left( \frac{\theta^* \|\theta_{\cal L}^{(k)}\|_1 M_{kk^*}}{\|\theta_{\cal L}^{(k)}\|_1 \sqrt{M_{kk}} \theta^* \sqrt{M_{k^* k^*}}}\right) \notag \\
					&=  \arccos \left( \frac{ M_{kk^*}}{\sqrt{M_{kk}}\sqrt{M_{k^* k^*}}}\right) 
				\end{align}
			\end{proof}
			\setcounter{theorem}{0}
			\section{Proof of Theorem \ref{sup:thm:oracle}} \label{sec:thm1}
			
			\begin{theorem} \label{sup:thm:oracle} 
				Consider the DCBM model where \eqref{cond-P}-\eqref{cond-theta} hold.  Let $k^*$ denote the true community label of the new node. Suppose $\hat{\Pi}_{\cal U}$ is $b_0$-correct, for a constant $b_0\in (0,1)$.   When $b_0$ is properly small, there exists a constant $c_0>0$, which does not depend on $b_0$, such that $\psi_{k^*}(\hat{\Pi}_{\cal U})=0$ and $\min_{k\neq k^*}\{\psi_k(\hat{\Pi}_{\cal U})\}
				\geq c_0\beta_n$. 
			\end{theorem}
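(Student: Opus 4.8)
The plan is to read everything off the closed form in Lemma~\ref{sup:lemma:1}. Write $M=PSP$ with $S=G_{\cal LL}^2+G_{\cal UU}QQ'G_{\cal UU}$ (the matrices as defined there, now with $H=\hat{\Pi}_{\cal U}$), and set $N:=S^{1/2}P$. Since $M_{k\ell}=e_k'PSPe_\ell=\langle Ne_k,\,Ne_\ell\rangle$, Lemma~\ref{sup:lemma:1} says precisely that $\psi_k(\hat{\Pi}_{\cal U})$ is the ordinary angle between the $k$-th and $k^*$-th columns of $N$. For $k=k^*$ this angle is $\arccos(1)=0$ (note $M_{k^*k^*}=\|Ne_{k^*}\|^2>0$ because $S\succ0$ and the $k^*$-th column of $P$ is nonzero, as $P_{k^*k^*}=1$), which settles the first assertion. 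So all the work is the bound $\min_{k\neq k^*}\angle(Ne_k,Ne_{k^*})\geq c_0\beta_n$.

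For this I would first record a purely linear-algebraic estimate. For $k\neq k^*$, using that the distance from $Ne_k$ to the line $\mathbb{R}\,Ne_{k^*}$ equals $\|Ne_k\|\sin\angle(Ne_k,Ne_{k^*})$, we get $\sin\angle(Ne_k,Ne_{k^*})=\|Ne_k\|^{-1}\min_{t\in\mathbb{R}}\|N(e_k-te_{k^*})\|\ge\|Ne_k\|^{-1}\sigma_{\min}(N)$, since $\|e_k-te_{k^*}\|\ge1$. Because $S^{1/2}$ and $P$ are invertible $K\times K$ matrices, $\sigma_{\min}(N)\ge\sigma_{\min}(S^{1/2})\sigma_{\min}(P)=\lambda_{\min}(S)^{1/2}|\lambda_{\min}(P)|\ge\lambda_{\min}(S)^{1/2}\beta_n$ by \eqref{cond-P} and the symmetry of $P$, while $\|Ne_k\|\le\lambda_{\max}(S)^{1/2}\|Pe_k\|\le\lambda_{\max}(S)^{1/2}KC_1$, again by \eqref{cond-P}. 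Hence $\sin\psi_k(\hat{\Pi}_{\cal U})\ge\beta_n\big/\bigl(KC_1\sqrt{\lambda_{\max}(S)/\lambda_{\min}(S)}\bigr)$, and since $\psi_k\in[0,\pi]$, $\sin x\le x$ on $[0,\pi/2]$ (Lemma~\ref{lemma:sin}), and $\beta_n\le|\lambda_{\min}(P)|\le KC_1$, we conclude $\psi_k(\hat{\Pi}_{\cal U})\ge c_0\beta_n$ as soon as $\lambda_{\max}(S)/\lambda_{\min}(S)$ is bounded by a constant.

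The remaining, and main, step is to control $S$. Note that $G_{\cal LL}=\diag(\|\theta_{\cal L}^{(1)}\|_1,\dots,\|\theta_{\cal L}^{(K)}\|_1)$ and $G_{\cal UU}=\diag(\|\theta_{\cal U}^{(1)}\|_1,\dots,\|\theta_{\cal U}^{(K)}\|_1)$ are diagonal, and $G_{\cal UU}QQ'G_{\cal UU}=\widetilde G\widetilde G'$ with $\widetilde G:=\Pi_{\cal U}'\Theta_{\cal UU}\hat{\Pi}_{\cal U}$ the degree-weighted confusion matrix, $\widetilde G_{k\ell}=\sum_{i\in{\cal U}\cap{\cal C}_k:\,\hat y_i=\ell}\theta_i$. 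By the permutation invariance of Remark~1 (permuting columns of $H$ sends $Q$ to $QT'$, hence leaves $QQ'$ and $M$ unchanged), we may assume the optimal permutation in the definition of $b_0$-correctness is the identity, so $\widetilde G=G_{\cal UU}-E$ with $\sum_{k,\ell}|E_{k\ell}|=2\sum_{i\in{\cal U}:\,\hat y_i\neq y_i}\theta_i\le2b_0\|\theta\|_1$, whence $\|E\|\le\|E\|_F\le2b_0\|\theta\|_1$. Writing $S=(G_{\cal LL}^2+G_{\cal UU}^2)+(\widetilde G\widetilde G'-G_{\cal UU}^2)$, the first summand is diagonal with entries $\|\theta_{\cal L}^{(k)}\|_1^2+\|\theta_{\cal U}^{(k)}\|_1^2\in[\tfrac12\|\theta^{(k)}\|_1^2,\,\|\theta^{(k)}\|_1^2]$, which by the first part of \eqref{cond-theta} all lie between fixed constant multiples of $\|\theta\|_1^2$, while $\|\widetilde G\widetilde G'-G_{\cal UU}^2\|\le\|E\|\bigl(\|\widetilde G\|+\|G_{\cal UU}\|\bigr)\le C b_0\|\theta\|_1^2$. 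Taking $b_0$ below a threshold depending only on $(K,C_2)$ makes this perturbation at most half of $\lambda_{\min}(G_{\cal LL}^2+G_{\cal UU}^2)$, so $c_1\|\theta\|_1^2 I\preceq S\preceq c_2\|\theta\|_1^2 I$ with $c_1,c_2$ depending only on $(K,C_1,C_2)$; in particular $\lambda_{\max}(S)/\lambda_{\min}(S)\le c_2/c_1$, and the constant $c_0$ produced above is then independent of $b_0$.

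I expect this last paragraph to be the only real obstacle: converting the $\ell_1$-type, degree-weighted clustering-error guarantee of $b_0$-correctness into a spectral perturbation bound on $\widetilde G$, and checking that the threshold on $b_0$ as well as all the constants are functions of $(K,C_1,C_2)$ alone, uniformly in $n$, $\theta$, and $P$. The permutation bookkeeping that reduces the confusion matrix to a near-diagonal form is a small but necessary preliminary; everything else is elementary.
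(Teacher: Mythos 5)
Your proof is correct, and while it follows the same skeleton as the paper's (invoke Lemma~\ref{sup:lemma:1}, reduce the angle to a spectral quantity built from $M=PSP$, then handle the $\hat{\Pi}_{\cal U}$-versus-$\Pi_{\cal U}$ discrepancy by perturbation under $b_0$-correctness), the two technical pivots are genuinely different and yours is leaner. For the angle-to-algebra step, the paper uses the chord identity $\psi_k \ge 2\sin(\psi_k/2)=\sqrt{(e_k-e_{k^*})'\tilde M(e_k-e_{k^*})}$ with $\tilde M=D_M^{-1/2}MD_M^{-1/2}$, which forces it to track the diagonal normalizations $D_M$ and $D_{M^{(0)}}$ separately; your distance-to-line bound $\sin\psi_k\ge\sigma_{\min}(N)/\|Ne_k\|$ with $N=S^{1/2}P$ sidesteps the normalization entirely and reduces everything to the condition number of $S$. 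For the perturbation step, the paper's Lemma~\ref{lemma:sig:noisy:oracle} runs a lengthy entrywise analysis of the degree-weighted confusion quantities $\eta_{l\tilde l}$ and then a second pass to compare $D_M$ with $D_{M^{(0)}}$, whereas you get the same control in a few lines from $\|\widetilde G-G_{\cal UU}\|\le\|E\|_F\le\sum_{k,\ell}|E_{k\ell}|\le 2b_0\|\theta\|_1$ plus a Weyl-type bound on $\widetilde G\widetilde G'-G_{\cal UU}^2$ (your reduction to the identity permutation via the invariance of $QQ'$ is the same observation as the paper's Remark~1 and is legitimate). The only substantive things to verify, which you flag correctly, all check out: $\lambda_{\min}(G_{\cal LL}^2+G_{\cal UU}^2)\ge\tfrac{1}{2}\min_k\|\theta^{(k)}\|_1^2\ge\tfrac{1}{2K^2C_2^2}\|\theta\|_1^2$ by Cauchy--Schwarz and the first part of \eqref{cond-theta}, so a threshold $b_0\lesssim 1/(K^2C_2^2)$ (matching the paper's $b_0\le \tfrac{1}{32K^2\sqrt{K}C_2^2}$ up to a $\sqrt{K}$) makes the perturbation at most half of it, and all constants depend only on $(K,C_1,C_2)$, uniformly in $n$, $\theta$, $P$, and $b_0$. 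Your resulting $c_0$ picks up a factor of $C_1$ from $\|Pe_k\|\le\sqrt{K}C_1$ that the paper's does not, but the theorem only asks for some constant independent of $b_0$, so this is immaterial.
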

			
			\begin{proof}
				Define $G_{\cal LL}=\Pi_{\cal L}'\Theta_{\cal LL}\Pi_{\cal L}$, $G_{\cal UU}=\Pi_{\cal U}'\Theta_{\cal UU}\Pi_{\cal U}$,  $Q=G^{-1}_{\cal UU}\Pi_{\cal U}'\Theta_{\cal UU}\hat{\Pi}_{\cal U}$,  and
				$M = P\Bigl(G_{\cal LL}^2 + G_{\cal UU}QQ'G_{\cal UU}\Bigr)P$ as in Lemma \ref{sup:lemma:1}.
				According to Lemma \ref{sup:lemma:1}, 
				$$ \psi_k(\hat{\Pi}_{\cal U}) = \arccos\Bigl(\frac{M_{kk^*}}{\sqrt{M_{kk}}\sqrt{M_{k^*k^*}}}\Bigr)$$
				
				Hence, 
				$$\psi_{k^*}(\hat{\Pi}_{\cal U}) = \arccos\Bigl(\frac{M_{k^* k^*}}{\sqrt{M_{k^* k^*}}\sqrt{M_{k^*k^*}}}\Bigr) = \arccos 1 = 0$$
				
				When $k \ne k^*$, according to Lemma \ref{lemma:sin},
				\begin{align}
					\psi_k(\hat{\Pi}_{\cal U}) &= 2 \cdot \frac{1}{2}\psi_k(\hat{\Pi}_{\cal U}) \notag \\
					&\ge 2 \sin{\frac{1}{2} \psi_k(\hat{\Pi}_{\cal U})} \notag \\
					&= 2 \sqrt{\frac{1 - \cos \psi_k(\hat{\Pi}_{\cal U})}{2}} \notag \\
					&= \sqrt{2 \Bigl(1 - \cos \arccos\Bigl(\frac{M_{kk^*}}{\sqrt{M_{kk}}\sqrt{M_{k^*k^*}}}\Bigr) \Bigr)} \notag \\
					&= \sqrt{2 \Bigl(1 - \frac{M_{kk^*}}{\sqrt{M_{kk}}\sqrt{M_{k^*k^*}}}\Bigr)}  
				\end{align}

				Let $D_M = \diag (M_{11}, ..., M_{KK})$, $\tilde{M} = D_M^{-\frac{1}{2}}MD_M^{-\frac{1}{2}}$. Then
				\begin{align}
					&(e_k - e_{k^*})' \tilde{M}(e_k - e_{k^*}) \notag \\
					&=  \tilde{M}_{kk} + \tilde{M}_{k^*k^*} - \tilde{M}_{kk^*} - \tilde{M}_{k^*k} \notag \\
					&= \frac{M_{kk}}{\sqrt{M_{kk}}\sqrt{M_{kk}}} + \frac{M_{k^*k^*}}{\sqrt{M_{k^*k^*}}\sqrt{M_{k^*k^*}}} - \frac{M_{kk^*}}{\sqrt{M_{kk}}\sqrt{M_{k^*k^*}}} - \frac{M_{k^*k}}{\sqrt{M_{k^*k^*}\sqrt{M_{kk}}}} \notag \\
					&= 2 \Bigl(1 - \frac{M_{kk^*}}{\sqrt{M_{kk}}\sqrt{M_{k^*k^*}}}\Bigr)
				\end{align}
				
				Hence,
				
				\beq \label{equ:bound:angle}
				\psi_k(\hat{\Pi}_{\cal U}) \ge \sqrt{(e_k - e_{k^*})' \tilde{M}(e_k - e_{k^*})}
				\eeq
				
				$\tilde{M}$ is affected by $\hat{\Pi}_{\cal U}$ and is complicated to evaluate directly. Hence, we would first evaluate its oracle version and then reduce the noisy version to the oracle version.
				
				Define the oracle version of $M$ as follow, where $\hat{\Pi}_{\cal U}$ is replaced by $\Pi_{\cal U}$
				$$M^{(0)} = P\Bigl(G_{\cal L \cal L}^2 + G_{\cal U \cal U}^2\Bigr)P$$
				
				Similarly, define the oracle version of $D_M$, $D_{M^{(0)}} =  \diag (M^{(0)}_{11}, ..., M^{(0)}_{KK})$, and the oracle version of $\tilde{M}$, $\tilde{M}^{(0)} = D_{M^{(0)}}^{-\frac{1}{2}}M^{(0)} D_{M^{(0)}}^{-\frac{1}{2}}$

				\paragraph{Oracle Case} We first study  the oracle case $ | \alpha' \tilde{M}^{(0)} \alpha|$.

				Since $G_{\cal L \cal L} = \Pi_{\cal L}'\Theta_{\cal LL}\Pi_{\cal L} =\diag\Bigl(\|\theta_{\cal L}^{(1)}\|_1, ..., \|\theta_{\cal L}^{(K)}\|_1\Bigr)$, $G_{\cal U \cal U} = \Pi_{\cal U}'\Theta_{\cal UU}\Pi_{\cal U} =\diag\Bigl(\|\theta_{\cal U}^{(1)}\|_1, ..., \|\theta_{\cal U}^{(K)}\|_1\Bigr)$, which indicates that $G_{\cal L \cal L}^2 + G_{\cal U \cal U}^2 = \diag\Bigl(\Bigl(\|\theta_{\cal L}^{(k)}\|_1^2 + \|\theta_{\cal U}^{(k)}\|_1^2\Bigr)_{k=1}^{K}\Bigr)$, for any vector $\alpha \in \mathbb{R}^k$, 
				\begin{align} \label{equ:signal:oracle}
					| \alpha' \tilde{M}^{(0)} \alpha| &= |\alpha'    D_{M^{(0)}}^{-\frac{1}{2}}M^{(0)} D_{M^{(0)}}^{-\frac{1}{2}} \alpha| \notag \\
					&= |\alpha'    D_{M^{(0)}}^{-\frac{1}{2}}P\Bigl(G_{\cal L \cal L}^2 + G_{\cal U \cal U}^2\Bigr) P D_{M^{(0)}}^{-\frac{1}{2}} \alpha| \notag \\
					& \ge \|P D_{M^{(0)}}^{-\frac{1}{2}} \alpha\|^2 \min_{k}(\|\theta_{\cal L}^{(k)}\|_1^2 + \|\theta_{\cal U}^{(k)}\|_1^2) \notag \\
					(\text{Cauchy-Schwartz Inequality}) &\ge  |\alpha'    D_{M^{(0)}}^{-\frac{1}{2}}PP' D_{M^{(0)}}^{-\frac{1}{2}} \alpha| \min_{k} \frac{1}{2}(\|\theta_{\cal L}^{(k)}\|_1 + \|\theta_{\cal U}^{(k)}\|_1)^2 \notag \\
					&\ge  \frac{1}{2} \lambda_{\min}(P)^2 \|D_{M^{(0)}}^{-\frac{1}{2}} \alpha\|^2 \min_{k} (\|\theta^{(k)}\|_1)^2 \notag \\
					(\text{Condition \eqref{cond-P}})&\ge \frac{1}{2} \beta_n^2 |\alpha D_{M^{(0)}}^{-1} \alpha| \min_{k} (\|\theta^{(k)}\|_1)^2 \notag \\
					&\ge \frac{1}{2} \beta_n^2 \|\alpha\|^2 \min_{k}  (\|\theta_{\cal L}^{(k)}\|_1^2 + \|\theta_{\cal U}^{(k)}\|_1^2)^{-1} \min_{k} (\|\theta^{(k)}\|_1)^2 \notag \\ 
					&\ge \frac{1}{2} \beta_n^2 \|\alpha\|^2\min_{k} \Bigl[(\|\theta_{\cal L}^{(k)}\|_1 + \|\theta_{\cal U}^{(k)}\|_1)^{2}\Bigr]^{-1} (\min_{k} \|\theta^{(k)}\|_1)^2 \notag \\
					&= \frac{1}{2} \beta_n^2 \|\alpha\|^2 \left(\frac{\min_{k} \|\theta^{(k)}\|_1}{\max_{k} \|\theta^{(k)}\|_1}\right)^2 \notag \\
					(\text{Condition \eqref{cond-theta}})&\ge \frac{\beta_n^2\|\alpha\|^2}{2C_2^2} 
				\end{align}
				
				It remains to study the noisy case. We reduce the noisy case to the oracle case through the following lemma.
				
				\setcounter{lemma}{5}
				\begin{lemma} \label{lemma:sig:noisy:oracle}
					Denote 
					\beq \label{equ:c4}
					C_5 = 8 K^2 \sqrt{K}  C_2^2 b_0 \frac{\|\theta_{\cal U}\|_1}{\|\theta\|_1}
					\eeq	
					Suppose that $C_5 \le \frac{1}{4}$. Then, for any vector $\alpha \in \mathbb{R}^k$, 
					\begin{equation} \label{equ:piu-reduce}
						|\alpha' \tilde{M} \alpha| \ge  \frac{1 - 3 C_5}{1 - C_5} | \alpha' \tilde{M}^{(0)} \alpha| \ge \frac{1}{3} | \alpha' \tilde{M}^{(0)} \alpha|
					\end{equation}
				\end{lemma}
				
				The proof of Lemma \ref{lemma:sig:noisy:oracle} is quite tedious and we would defer it to the end of this section.

				%
				
				Set $b_0 \le \frac{1}{32 K^2 \sqrt{K}  C_2^2}$, then $C_5 \le \frac{\|\theta_{\cal U}\|_1}{4\|\theta\|_1} \le \frac{1}{4}$.
				As a result, combining \eqref{equ:signal:oracle} with Lemma \ref{lemma:sig:noisy:oracle}, we have for any vector $\alpha \in \mathbb{R}^k$, 
				
				\beq \label{equ:thm1:final}
				|\alpha' \tilde{M} \alpha| \ge \frac{1}{3} | \alpha' \tilde{M}^{(0)} \alpha| \ge \frac{\beta_n^2\|\alpha\|^2}{6C_2^2} 
				\eeq
				
				Hence, take $\alpha = e_k - e_{k^*}$ in \eqref{equ:thm1:final} and combine it with \eqref{equ:bound:angle}, we obtain that for any $k \in [K]$
				
				\begin{equation}
					\psi_k(\hat{\Pi}_{\cal U}) \ge  \sqrt{(e_k - e_{k^*})' \tilde{M}(e_k - e_{k^*})} \ge \sqrt{\frac{1}{6C_2^2}\beta_n^2 \|e_k - e_{k^*}\|^2 } = \sqrt{\frac{1}{3C_2^2} } \beta_n
				\end{equation}
				
				%
				
				Therefore, set $c_0 = \sqrt{\frac{1}{3C_2^2} }$, we have
				
				\begin{equation}
					\min_{k \ne k^*} \{\psi_k(\hat{\Pi}_{\cal U})\} \ge c_0 \beta_n
				\end{equation} 
				
				In all, when $b_0$ is properly small such that  $b_0 \le \frac{1}{32 K^2 \sqrt{K}  C_2^2}$, there exists constant $c_0 = \sqrt{\frac{1}{3C_2^2}} > 0$ not depending on $b_0$ such that  $\psi_{k^*}(\hat{\Pi}_{\cal U})=0$ and $\min_{k\neq k^*}\{\psi_k(\hat{\Pi}_{\cal U})\} \geq c_0\beta_n$. 
			\end{proof}
			\subsection{Proof of Lemma \ref{lemma:sig:noisy:oracle}} 
			\begin{proof}

				%
				
				For any vector $\alpha \in \mathbb{R}^k$,
				\begin{align} \label{thm1:orcle-noisy}
					|\alpha' \tilde{M} \alpha - \alpha' \tilde{M}^{(0)} \alpha| &= |\alpha' D_M^{-\frac{1}{2}}MD_M^{-\frac{1}{2}} \alpha - \alpha'  D_{M^{(0)}}^{-\frac{1}{2}}M^{(0)} D_{M^{(0)}}^{-\frac{1}{2}} \alpha| \notag \\
					&= |\alpha' D_M^{-\frac{1}{2}}(M - M^{(0)})D_M^{-\frac{1}{2}} \alpha \notag \\
					&\quad + \alpha'  D_{M}^{-\frac{1}{2}}M^{(0)} D_{M}^{-\frac{1}{2}} \alpha - \alpha'  D_{M^{(0)}}^{-\frac{1}{2}}M^{(0)} D_{M^{(0)}}^{-\frac{1}{2}} \alpha| \notag \\
					& \le |\alpha' D_M^{-\frac{1}{2}}(M - M^{(0)})D_M^{-\frac{1}{2}} \alpha| \notag \\
					& \quad + |\alpha'  \Bigl( D_{M}^{-\frac{1}{2}}M^{(0)} D_{M}^{-\frac{1}{2}}  -   D_{M^{(0)}}^{-\frac{1}{2}}M^{(0)} D_{M^{(0)}}^{-\frac{1}{2}} \Bigr) \alpha| 
				\end{align}
				
				The first part on the RHS of \eqref{thm1:orcle-noisy},
				\begin{align} \label{equ:first-orc-noi}
					|\alpha' D_M^{-\frac{1}{2}}(M - M^{(0)})D_M^{-\frac{1}{2}} \alpha| &= |\alpha' D_M^{-\frac{1}{2}}P\Pi_{\cal U}'\Theta_{\cal UU} (\hat{\Pi}_{\cal U}\hat{\Pi}_{\cal U}' - \Pi_{\cal U}\Pi_{\cal U}') \Theta_{\cal UU}\Pi_{\cal U}P D_M^{-\frac{1}{2}} \alpha| \notag \\
					&\le \|P D_M^{-\frac{1}{2}} \alpha\|^2\|\Pi_{\cal U}'\Theta_{\cal UU} (\hat{\Pi}_{\cal U}\hat{\Pi}_{\cal U}' - \Pi_{\cal U}\Pi_{\cal U}') \Theta_{\cal UU}\Pi_{\cal U} \|_2 \notag \\
					&\le \sqrt{K}\|P D_M^{-\frac{1}{2}} \alpha\|^2\|\Pi_{\cal U}'\Theta_{\cal UU} (\hat{\Pi}_{\cal U}\hat{\Pi}_{\cal U}' - \Pi_{\cal U}\Pi_{\cal U}') \Theta_{\cal UU}\Pi_{\cal U} \|_\infty 
				\end{align}
				
				Denote $G^{(d)} = \Pi_{\cal U}'\Theta_{\cal UU} (\hat{\Pi}_{\cal U}\hat{\Pi}_{\cal U}' - \Pi_{\cal U}\Pi_{\cal U}') \Theta_{\cal UU}\Pi_{\cal U}$.
				Define $\eta_{l\tilde{l}}  = \sum_{i \in \mathcal{U}, {\pi_i = e_l, \hat{\pi}_i = e_{\tilde{l}}}}\theta_{i}$. In other words, $\eta_{l\tilde{l}}$ is the sum of the degree heterogeneity parameters of all the nodes in $\cal U$ with true label $l$ and estimated label $\tilde{l}$.
				
				Then,
				$$ (\Pi_{\cal U}'\Theta_{\cal UU} \hat{\Pi}_{\cal U})_{l\tilde{l}} = \eta_{l\tilde{l}}$$
				
				\begin{align}
					(\Pi_{\cal U}'\Theta_{\cal UU} \Pi_{\cal U})_{l\tilde{l}} = \sum_{i \in \mathcal{U}, {\pi_i = e_l, \pi_i = e_{\tilde{l}}}}\theta_{i} \notag 
					= I_{l = \tilde{l}} \sum_{s \in [K]} \eta_{ls}  \notag
				\end{align}
				
				where $I_{l = \tilde{l}}$ is the indicator function of event $\{l = \tilde{l}\}$.
				
				Hence, 
				
				\begin{align}
					G^{(d)}_{l\tilde{l}}&=(\Pi_{\cal U}'\Theta_{\cal UU} (\hat{\Pi}_{\cal U}\hat{\Pi}_{\cal U}' - \Pi_{\cal U}\Pi_{\cal U}') \Theta_{\cal UU}\Pi_{\cal U})_{l\tilde{l}} \notag \\
					&= ((\Pi_{\cal U}'\Theta_{\cal UU} \hat{\Pi}_{\cal U}) (\Pi_{\cal U}'\Theta_{\cal UU} \hat{\Pi}_{\cal U})')_{l\tilde{l}} - ((\Pi_{\cal U}'\Theta_{\cal UU} \Pi_{\cal U}) (\Pi_{\cal U}'\Theta_{\cal UU} \Pi_{\cal U})')_{l\tilde{l}} \notag \\
					&= \sum_{s \in [K]} \eta_{ls} \eta_{\tilde{l}s} - I_{l = \tilde{l}} (\sum_{s \in [K]} \eta_{ls})^2
				\end{align}
				
				Since $\hat{\Pi}_{\cal U}$ is $b_0$ correct, there exists permutation $T$ of $K$ columns of $\hat{\Pi}_{\cal U}$ such that $\bigl(\sum_{i\in{\cal U}}\theta_i\cdot 1\{T\hat{\pi}_i\neq\pi_i \}\bigr)\leq b_0\|\theta\|_1$.
				
				Let $r = r(l)$ satisfies $e_r = T^{-1} e_l$.
				
				When $l = \tilde{l}$, we have
				\begin{align}
					|G^{(d)}_{l\tilde{l}}| &= |\sum_{s \in [K]} \eta_{ls}^2  - (\sum_{s \in [K]} \eta_{ls})^2| \notag \\
					&= (\sum_{s \in [K]} \eta_{ls})^2 - \sum_{s \in [K]} \eta_{ls}^2 \notag \\
					&\le (\sum_{s \in [K]} \eta_{ls})^2 -  \eta_{lr}^2\notag \\
					&= (\sum_{s \ne r} \eta_{ls})(\eta_{lr} + \sum_{s \in [K]} \eta_{ls}) \notag \\
					&\le 2  (\sum_{s \ne r} \eta_{ls})(\sum_{s \in [K]} \eta_{ls}) 
				\end{align}
				
				When $l \ne \tilde{l}$, we have
				\begin{align}
					|G^{(d)}_{l\tilde{l}}| &= |\sum_{s \in [K]} \eta_{ls}\eta_{\tilde{l}s}| \notag \\
					&= \sum_{s \in [K]} \eta_{ls}\eta_{\tilde{l}s}
				\end{align}
				
				Therefore, 
				
				\begin{align}
					\|G^{(d)}\|_\infty &\le \sum_{l, \tilde{l} \in [K]} |G^{(d)}_{l\tilde{l}}| \notag \\
					&= \sum_{l \in [K]} |G^{(d)}_{ll}| + \sum_{l \ne \tilde{l}} |G^{(d)}_{ll}|\notag \\
					&\le  \sum_{l \in [K]} \Bigl(2  (\sum_{s \ne r} \eta_{ls})(\sum_{s \in [K]} \eta_{ls})\Bigr) +  \sum_{l \ne \tilde{l}} \sum_{s \in [K]} \eta_{ls}\eta_{\tilde{l}s} \notag \\
					& \le 2 \max_{l} \Bigl(\sum_{s \in [K]} \eta_{ls}\Bigr) \bigl(\sum_{l \in [K], s \ne r(l)} \eta_{ls}\bigr) + \sum_{s \in [K]} \sum_{l \ne \tilde{l}}  \eta_{ls}\eta_{\tilde{l}s} \notag \\
					& = 2 \max_{l} \Bigl(\sum_{s \in [K]} \eta_{ls}\Bigr) \bigl(\sum_{l \in [K], s \ne r(l)} \eta_{ls}\bigr) + \sum_{s \in [K]} \bigl[\bigl(\sum_{l \in [k]} \eta_{ls}\bigr)^2 - \sum_{l \in [k]} \eta_{ls}^2\bigr] \notag \\
					&\le 2 \max_{l} \Bigl(\sum_{s \in [K]} \eta_{ls}\Bigr) \bigl(\sum_{l \in [K], s \ne r(l)} \eta_{ls}\bigr) + \sum_{s \in [K]} \bigl[\bigl(\sum_{l \in [k]} \eta_{ls}\bigr)^2 - \sum_{r(l) = s} \eta_{ls}^2\bigr] \notag \\
					&= 2 \max_{l} \Bigl(\sum_{s \in [K]} \eta_{ls}\Bigr) \bigl(\sum_{l \in [K], s \ne r(l)} \eta_{ls}\bigr) + \sum_{s \in [K]} \bigl[\bigl(\sum_{r(l) \ne s} \eta_{ls}\bigr)\bigl(\sum_{l \in [K]} \eta_{ls} + \sum_{r(l) = s} \eta_{ls}\bigr)\bigr] \notag \\
					&\le 2 \max_{l} \Bigl(\sum_{s \in [K]} \eta_{ls}\Bigr) \bigl(\sum_{l \in [K], s \ne r(l)} \eta_{ls}\bigr) + 2\sum_{s \in [K]} \bigl[\bigl(\sum_{r(l) \ne s} \eta_{ls}\bigr)\bigl(\sum_{l \in [K]} \eta_{ls}\bigr)\bigr] \notag \\
					&\le 2 \max_{l} \Bigl(\sum_{s \in [K]} \eta_{ls}\Bigr) \bigl(\sum_{l \in [K], s \ne r(l)} \eta_{ls}\bigr) + 2 \max_{s} \Bigl(\sum_{l \in [K]} \eta_{ls}\Bigr) \sum_{s \in [K]} \sum_{r(l) \ne s} \eta_{ls} \notag \\
					&= 2 \bigl(\sum_{l \in [K], s \ne r(l)} \eta_{ls}\bigr)  \Bigl( \max_{l} \Bigl(\sum_{s \in [K]} \eta_{ls}  \Bigr) + \max_{s} \Bigl(\sum_{l \in [K]} \eta_{ls}\Bigr)\Bigr) \notag \\
					&\le 2 \bigl(\sum_{l \in [K], s \ne r(l)} \eta_{ls}\bigr)  \Bigl( \sum_{l \in [K]} \sum_{s \in [K]} \eta_{ls} + \sum_{s \in [K]} \sum_{l \in [K]} \eta_{ls}\Bigr) \notag \\
					&= 4 \|\theta_{\cal U}\|_1 \bigl(\sum_{l \in [K], s \ne r(l)} \eta_{ls}\bigr)
				\end{align}
				
				Recall that $T$ satisfies $\bigl(\sum_{i\in{\cal U}}\theta_i\cdot 1\{T\hat{\pi}_i\neq\pi_i \}\bigr)\leq b_0\|\theta\|_1$, hence 
				
				$$ \sum_{l \in [K], s \ne r(l)}\eta_{ls} = \sum_{\substack{l \in [K], s \ne r(l), \\ i \in \mathcal{U}, {\pi_i = e_l, \hat{\pi}_i = e_{s}}}}\theta_i =  \sum_{i\in{\cal U}}\theta_i\cdot 1\{T\hat{\pi}_i\neq\pi_i \} \le  b_0\|\theta\|_1$$
				
				Therefore,
				\beq \label{equ:gd}
				\|G^{(d)}\|_\infty \le 4 b_0 \|\theta_{\cal U}\|_1\|\theta\|_1
				\eeq
				
				Plugging \eqref{equ:gd} into \eqref{equ:first-orc-noi}, we obtain
				\beq \label{equ:bound1}
				|\alpha' D_M^{-\frac{1}{2}}(M - M^{(0)})D_M^{-\frac{1}{2}} \alpha| \le 4\sqrt{K} b_0 \|\theta_{\cal U}\|_1\|\theta\|_1 \|P D_M^{-\frac{1}{2}} \alpha\|^2
				\eeq

				On the other hand,
				$$
				|\alpha' D_M^{-\frac{1}{2}}M^{(0)}D_M^{-\frac{1}{2}} \alpha| = |\alpha' D_M^{-\frac{1}{2}}P\Bigl(G_{\cal L \cal L}^2 + G_{\cal U \cal U}^2\Bigr) PD_M^{-\frac{1}{2}} \alpha| 
				$$
				
				Since $G_{\cal L \cal L} = \Pi_{\cal L}'\Theta_{\cal LL}\Pi_{\cal L} =\diag(\|\theta_{\cal L}^{(1)}\|_1, ..., \|\theta_{\cal L}^{(K)}\|_1)$, $G_{\cal U \cal U} = \Pi_{\cal U}'\Theta_{\cal UU}\Pi_{\cal U} =\diag(\|\theta_{\cal U}^{(1)}\|_1, ..., \|\theta_{\cal U}^{(K)}\|_1)$, 
				\begin{align}
					|\alpha' D_M^{-\frac{1}{2}}M^{(0)}D_M^{-\frac{1}{2}} \alpha| &\ge \|P D_M^{-\frac{1}{2}} \alpha\|^2 \min_{k}(\|\theta_{\cal L}^{(k)}\|_1^2 + \|\theta_{\cal U}^{(k)}\|_1^2) \notag \\
					(\text{Cauchy-Schwartz Inequality}) &\ge \|P D_M^{-\frac{1}{2}} \alpha\|^2 \min_{k} \frac{1}{2}(\|\theta_{\cal L}^{(k)}\|_1 + \|\theta_{\cal U}^{(k)}\|_1)^2 \notag \\
					&=\frac{1}{2} \|P D_M^{-\frac{1}{2}} \alpha\|^2 \min_{k} \|\theta^{(k)}\|_1^2 \notag
				\end{align}
				
				Recall condition \eqref{cond-theta} in the main paper,

				$$\frac{\max_{1\leq k\leq K}\{ \|\theta^{(k)}\|_1\}}{\min_{1\leq k\leq K}\{ \|\theta^{(k)}\|_1\}}\leq C_2$$
				
				Hence
				\begin{align} \label{equ:38}
					|\alpha' D_M^{-\frac{1}{2}}M^{(0)}D_M^{-\frac{1}{2}} \alpha| &\ge \frac{1}{2} \|P D_M^{-\frac{1}{2}} \alpha\|^2 (\frac{1}{C_2}\max_{k} \|\theta^{(k)}\|_1)^2 \notag \\
					& \ge \frac{1}{2} \|P D_M^{-\frac{1}{2}} \alpha\|^2 (\frac{1}{KC_2}\|\theta\|_1)^2 \notag \\
					&= \frac{1}{2K^2 C_2^2} \|P D_M^{-\frac{1}{2}} \alpha\|^2 \|\theta\|_1^2
				\end{align}
				
				Comparing \eqref{equ:38} with \eqref{equ:bound1}, we obtain
				\begin{align} \label{equ:bound2}
					&|\alpha' D_M^{-\frac{1}{2}}(M - M^{(0)})D_M^{-\frac{1}{2}} \alpha| \notag \\
					&\le 8 K^2\sqrt{K} C_2^2 b_0 \frac{\|\theta_{\cal U}\|_1}{\|\theta\|_1}|\alpha' D_M^{-\frac{1}{2}}M^{(0)}D_M^{-\frac{1}{2}} \alpha| \notag \\
					& \le  C_5 |\alpha' D_{M^{(0)}}^{-\frac{1}{2}}M^{(0)}D_{M^{(0)}}^{-\frac{1}{2}} \alpha| \notag \\
					&\quad + C_5 |\alpha'  \Bigl( D_{M}^{-\frac{1}{2}}M^{(0)} D_{M}^{-\frac{1}{2}}  -   D_{M^{(0)}}^{-\frac{1}{2}}M^{(0)} D_{M^{(0)}}^{-\frac{1}{2}} \Bigr) \alpha|
				\end{align}
				
				%
				
				Consequently, we bound the first part  of \eqref{thm1:orcle-noisy} by the second part of \eqref{thm1:orcle-noisy}. It remains to bound the second part of \eqref{thm1:orcle-noisy}.
				
				Since $D_{M}$, $D_{M^{(0)}}$, $M^{(0)}$ are all diagonal matrices, we can rewrite the second part on the LHS of \eqref{thm1:orcle-noisy} as follows:
				\begin{align} \label{equ:bound:second}
					&|\alpha'  \Bigl( D_{M}^{-\frac{1}{2}}M^{(0)} D_{M}^{-\frac{1}{2}}  -   D_{M^{(0)}}^{-\frac{1}{2}}M^{(0)} D_{M^{(0)}}^{-\frac{1}{2}} \Bigr) \alpha| \notag \\
					&= |\alpha'  D_{M^{(0)}}^{-\frac{1}{2}} (M^{(0)})^{\frac{1}{2}} \Bigl( (M^{(0)})^{-\frac{1}{2}} D_{M^{(0)}}^{\frac{1}{2}}D_{M}^{-\frac{1}{2}}M^{(0)} D_{M}^{-\frac{1}{2}}D_{M^{(0)}}^{\frac{1}{2}}(M^{(0)})^{-\frac{1}{2}}  -   1 \Bigr)(M^{(0)})^{\frac{1}{2}}D_{M^{(0)}}^{-\frac{1}{2}} \alpha| \notag \\
					&=|\alpha'  D_{M^{(0)}}^{-\frac{1}{2}} (M^{(0)})^{\frac{1}{2}} \Bigl(  D_{M^{(0)}}D_{M}^{-1}  -   1 \Bigr)(M^{(0)})^{\frac{1}{2}}D_{M^{(0)}}^{-\frac{1}{2}} \alpha| \notag \\
					&\le \lambda_{\max}\Bigl(  D_{M^{(0)}}D_{M}^{-1}  -   1 \Bigr) \|(M^{(0)})^{\frac{1}{2}}D_{M^{(0)}}^{-\frac{1}{2}} \alpha\|^2 \notag \\
					&= \max_{k \in [K]} \left|\frac{M^{(0)}_{kk}}{M_{kk}} - 1\right|  \cdot \|(M^{(0)})^{\frac{1}{2}}D_{M^{(0)}}^{-\frac{1}{2}} \alpha\|^2 \notag \\
					&= \max_{k \in [K]} \left|\frac{1}{\frac{M^{(0)}_{kk}}{(M^{(0)} - M)_{kk}} - 1}\right| \cdot  |\alpha'    D_{M^{(0)}}^{-\frac{1}{2}}M^{(0)} D_{M^{(0)}}^{-\frac{1}{2}} \alpha| 
				\end{align}
				
				Notice that for any $k \in [K]$
				\begin{align} \label{equ:boundm0kk}
					\left|\frac{M^{(0)}_{kk}}{(M^{(0)} - M)_{kk}}\right| &= \frac{\Bigl|e_k' M^{(0)} e_k\Bigr|}{\Bigl|e_k'(M^{(0)} - M)e_k\Bigr|} \notag \\
					&= \frac{\Bigl|e_k' P\Bigl(G_{\cal L \cal L}^2 + G_{\cal U \cal U}^2\Bigr) P e_k\Bigr|}{\Bigl|e_k'P G^{(d)} Pe_k\Bigr|} \notag \\
					& \ge \frac{\|P e_k\|^2 \min_{k}(\|\theta_{\cal L}^{(k)}\|_1^2 + \|\theta_{\cal U}^{(k)}\|_1^2)}{\|P e_k\|^2 \|G^{(d)}\|_2} \notag \\
					(\text{Cauchy-Schwartz Inequality}) & \ge \frac{ \min_{k}\frac{1}{2}(\|\theta_{\cal L}^{(k)}\|_1 + \|\theta_{\cal U}^{(k)}\|_1)^2}{\sqrt{K} \|G^{(d)}\|_\infty} \notag \\
					(\text{Plugging in \eqref{equ:gd}})& \ge \frac{ \min_{k}(\|\theta^{(k)}\|_1)^2}{8 \sqrt{K} b_0 \|\theta_{\cal U}\|_1\|\theta\|_1} \notag \\
					(\text{Condition \eqref{cond-theta}})& \ge \frac{ (\max_{k}\frac{1}{C_2}\|\theta^{(k)}\|_1)^2}{8 \sqrt{K} b_0 \|\theta_{\cal U}\|_1\|\theta\|_1} \notag \\
					& \ge \frac{ (\frac{1}{C_2 K}\|\theta\|_1)^2}{8 \sqrt{K} b_0 \|\theta_{\cal U}\|_1\|\theta\|_1} \notag \\
					&= \frac{1}{C_5} \notag \\
					& \ge 4 > 1
				\end{align}
				

				Plugging \eqref{equ:boundm0kk} into \eqref{equ:bound:second}, we have
				\begin{align} \label{equ:bound:second:precise}
					&|\alpha'  \Bigl( D_{M}^{-\frac{1}{2}}M^{(0)} D_{M}^{-\frac{1}{2}}  -   D_{M^{(0)}}^{-\frac{1}{2}}M^{(0)} D_{M^{(0)}}^{-\frac{1}{2}} \Bigr) \alpha| \notag \\
					&\le \max_{k \in [K]} \frac{1}{\left|\frac{M^{(0)}_{kk}}{(M^{(0)} - M)_{kk}}\right| - 1} \cdot  |\alpha'    D_{M^{(0)}}^{-\frac{1}{2}}M^{(0)} D_{M^{(0)}}^{-\frac{1}{2}} \alpha| \notag \\
					& \le \max_{k \in [K]} \frac{1}{\frac{1}{C_5} - 1} \cdot  |\alpha'    D_{M^{(0)}}^{-\frac{1}{2}}M^{(0)} D_{M^{(0)}}^{-\frac{1}{2}} \alpha| \notag \\
					& = \frac{C_5}{1 - C_5} |\alpha'    D_{M^{(0)}}^{-\frac{1}{2}}M^{(0)} D_{M^{(0)}}^{-\frac{1}{2}} \alpha|
				\end{align}
				
				Combining \eqref{thm1:orcle-noisy}, \eqref{equ:bound2}, and \eqref{equ:bound:second:precise}, we have
				
				\begin{align} 
					|\alpha' \tilde{M} \alpha - \alpha' \tilde{M}^{(0)} \alpha| 
					& \le |\alpha' D_M^{-\frac{1}{2}}(M - M^{(0)})D_M^{-\frac{1}{2}} \alpha| \notag \\ 
					&\quad + |\alpha'  \Bigl( D_{M}^{-\frac{1}{2}}M^{(0)} D_{M}^{-\frac{1}{2}}  -   D_{M^{(0)}}^{-\frac{1}{2}}M^{(0)} D_{M^{(0)}}^{-\frac{1}{2}} \Bigr) \alpha|  \notag \\
					& \le C_5|\alpha' D_{M^{(0)}}^{-\frac{1}{2}}M^{(0)}D_{M^{(0)}}^{-\frac{1}{2}} \alpha| \notag \\
					&\quad +C_5  |\alpha'  \Bigl( D_{M}^{-\frac{1}{2}}M^{(0)} D_{M}^{-\frac{1}{2}}  -   D_{M^{(0)}}^{-\frac{1}{2}}M^{(0)} D_{M^{(0)}}^{-\frac{1}{2}} \Bigr) \alpha| \notag \\
					&\quad + |\alpha'  \Bigl( D_{M}^{-\frac{1}{2}}M^{(0)} D_{M}^{-\frac{1}{2}}  -   D_{M^{(0)}}^{-\frac{1}{2}}M^{(0)} D_{M^{(0)}}^{-\frac{1}{2}} \Bigr) \alpha|  \notag \\
					&= C_5 |\alpha' D_{M^{(0)}}^{-\frac{1}{2}}M^{(0)}D_{M^{(0)}}^{-\frac{1}{2}} \alpha| \notag \\
					&\quad +(C_5 + 1)|\alpha'  \Bigl( D_{M}^{-\frac{1}{2}}M^{(0)} D_{M}^{-\frac{1}{2}}  -   D_{M^{(0)}}^{-\frac{1}{2}}M^{(0)} D_{M^{(0)}}^{-\frac{1}{2}} \Bigr) \alpha| \notag \\
					&\le C_5 |\alpha' D_{M^{(0)}}^{-\frac{1}{2}}M^{(0)}D_{M^{(0)}}^{-\frac{1}{2}} \alpha| \notag \\
					&\quad +(C_5 + 1)\frac{C_5}{1 - C_5} |\alpha'    D_{M^{(0)}}^{-\frac{1}{2}}M^{(0)} D_{M^{(0)}}^{-\frac{1}{2}} \alpha| \notag \\
					&= \frac{2 C_5}{1 - C_5}| \alpha' \tilde{M}^{(0)} \alpha|
				\end{align}
				
				Hence for any vector $\alpha \in \mathbb{R}^k$, 
				\begin{equation} \label{equ:piu-reduce1}
					|\alpha' \tilde{M} \alpha| \ge |\alpha' \tilde{M}^{(0)} \alpha| -|\alpha' \tilde{M} \alpha - \alpha' \tilde{M}^{(0)} \alpha|  \ge \frac{1 - 3 C_5}{1 - C_5} | \alpha' \tilde{M}^{(0)} \alpha|
				\end{equation}
				
				To conclude, in this subsection, we successfully reduce the noisy case $|\alpha' \tilde{M} \alpha|$ to the oracle case $ | \alpha' \tilde{M}^{(0)} \alpha|$. Result \eqref{equ:piu-reduce1} will also be used in the proof of other claims.
				
			\end{proof}
			
			
			%
			%
			%
			%
			%
			%
			%
			%

			\section{Proof of Theorem \ref{sup:thm:real}}
			\begin{theorem} \label{sup:thm:real}
				Consider the DCBM model where \eqref{cond-P}-\eqref{cond-theta} hold. 
				There exists constant $C > 0$, such that for any $\delta\in (0,1/2)$, with probability $1-\delta$, simultaneously for $1\leq k\leq K$, 
				\[
				|\hat{\psi}_k(\hat{\Pi}_{\cal U})-\psi_k(\hat{\Pi}_{\cal U})|\leq C \left( \sqrt{\frac{\log(1/\delta)}{\|\theta\|_1\cdot \min\{\theta^*, \|\theta_{\cal L}^{(k)}\|_1\}}} + \frac{\|\theta_{\cal L}^{(k)}\|^2}{\|\theta_{\cal L}^{(k)}\|_1\|\theta\|_1}\right).   
				\]
				
			\end{theorem}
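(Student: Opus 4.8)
The plan is to split $|\hat\psi_k(\hat\Pi_{\mathcal U})-\psi_k(\hat\Pi_{\mathcal U})|$ into a contribution from the randomness of $A^{(k)}$ and one from the randomness of $X$, and to bound each as a noise-to-signal norm ratio. Since $f(\cdot;H)$ is linear in its first argument, the triangle inequality for the angle metric (Lemma~\ref{lemma: angle}) applied to the three pairs among $f(A^{(k)};\hat\Pi_{\mathcal U})$, $f(\Omega^{(k)};\hat\Pi_{\mathcal U})$, $f(\mathbb E X;\hat\Pi_{\mathcal U})$, $f(X;\hat\Pi_{\mathcal U})$ gives, for every $k$,
\[
|\hat\psi_k(\hat\Pi_{\mathcal U})-\psi_k(\hat\Pi_{\mathcal U})|\le \psi\bigl(f(A^{(k)};\hat\Pi_{\mathcal U}),f(\Omega^{(k)};\hat\Pi_{\mathcal U})\bigr)+\psi\bigl(f(X;\hat\Pi_{\mathcal U}),f(\mathbb E X;\hat\Pi_{\mathcal U})\bigr),
\]
so it suffices to bound these two terms on the event where the relevant noise norms are small. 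Throughout I work conditionally on $\hat\Pi_{\mathcal U}$, which is a function of $A_{\mathcal{UU}}$ only and hence independent of all the edge variables $\{A_{ij}:i\in\mathcal L\}$ and $\{X_j\}$ that enter $A^{(k)}$ and $X$; thus the $\hat\pi_j$ used to define $f(\cdot;\hat\Pi_{\mathcal U})$ may be treated as fixed when applying concentration.

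\textbf{Signal lower bounds and the bias term.} By Lemma~\ref{sup:lemma:1}, $\|f(\Omega^{(k)};\hat\Pi_{\mathcal U})\|=\|\theta_{\mathcal L}^{(k)}\|_1\sqrt{M_{kk}}$ and $\|f(\mathbb E X;\hat\Pi_{\mathcal U})\|=\theta^*\sqrt{M_{k^*k^*}}$, where $M_{kk}=\|G_{\mathcal{LL}}Pe_k\|^2+\|\hat\Pi_{\mathcal U}'\Theta_{\mathcal{UU}}\Pi_{\mathcal U}Pe_k\|^2$. The first summand is $\ge\|\theta_{\mathcal L}^{(k)}\|_1^2$ since $G_{\mathcal{LL}}$ is diagonal with entries $\|\theta_{\mathcal L}^{(l)}\|_1$ and $P_{kk}=1$; for the second, writing $v_j=\theta_j P_{c(j),k}\ge 0$ (with $c(j)$ the community of $j$, using $P\ge 0$) and noting $\hat\Pi_{\mathcal U}$ partitions $\mathcal U$ into $K$ groups, Cauchy--Schwarz gives $\|\hat\Pi_{\mathcal U}'\Theta_{\mathcal{UU}}\Pi_{\mathcal U}Pe_k\|^2=\sum_{\tilde\ell}\bigl(\sum_{j:\hat\pi_j=e_{\tilde\ell}}v_j\bigr)^2\ge\frac1K\bigl(\sum_{j\in\mathcal U}v_j\bigr)^2\ge\frac1K\|\theta_{\mathcal U}^{(k)}\|_1^2$. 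Hence $M_{kk}\ge\frac1{2K}\|\theta^{(k)}\|_1^2$, and with the balance part of \eqref{cond-theta} this is $\gtrsim_{K,C_2}\|\theta\|_1^2$ whenever $\|\theta\|_1$ is not $O(1)$ (the $O(1)$ case being trivial, see below) — importantly this uses no hypothesis on the accuracy of $\hat\Pi_{\mathcal U}$. Therefore $\|f(\Omega^{(k)};\hat\Pi_{\mathcal U})\|\gtrsim\|\theta_{\mathcal L}^{(k)}\|_1\|\theta\|_1$ and $\|f(\mathbb E X;\hat\Pi_{\mathcal U})\|\gtrsim\theta^*\|\theta\|_1$. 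For the bias: since $A_{ii}=0$ while $\Omega_{ii}=\theta_i^2$, we have $\mathbb E A^{(k)}=\Omega^{(k)}-b^{(k)}$ with $b^{(k)}$ supported on $\mathcal L\cap\mathcal C_k$ and $b^{(k)}_j=\theta_j^2$, so $f(b^{(k)};\hat\Pi_{\mathcal U})=\|\theta_{\mathcal L}^{(k)}\|^2 e_k$ has norm $\|\theta_{\mathcal L}^{(k)}\|^2$, which by \eqref{cond-theta} is $\le c_3\beta_n\|\theta_{\mathcal L}^{(k)}\|_1\|\theta\|_1\le\frac12\|f(\Omega^{(k)};\hat\Pi_{\mathcal U})\|$ (using that $\beta_n\le\|P\|_{\mathrm{op}}\le KC_1$ and $c_3$ is small). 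Lemma~\ref{lemma: ang-dis} then yields $\psi\bigl(f(\Omega^{(k)};\hat\Pi_{\mathcal U}),f(\mathbb E A^{(k)};\hat\Pi_{\mathcal U})\bigr)\le\arcsin\!\bigl(\|\theta_{\mathcal L}^{(k)}\|^2/\|f(\Omega^{(k)};\hat\Pi_{\mathcal U})\|\bigr)\lesssim \|\theta_{\mathcal L}^{(k)}\|^2/(\|\theta_{\mathcal L}^{(k)}\|_1\|\theta\|_1)$, which is the second term of the claimed bound; there is no analogous bias for $X$ because $n+1\notin[n]$, and $\|f(\mathbb E A^{(k)};\hat\Pi_{\mathcal U})\|\ge\frac12\|f(\Omega^{(k)};\hat\Pi_{\mathcal U})\|\gtrsim\|\theta_{\mathcal L}^{(k)}\|_1\|\theta\|_1$.

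\textbf{Stochastic term and assembly.} Writing $W^{(k)}=A^{(k)}-\mathbb E A^{(k)}$ and using linearity, $f(A^{(k)};\hat\Pi_{\mathcal U})=f(\mathbb E A^{(k)};\hat\Pi_{\mathcal U})+f(W^{(k)};\hat\Pi_{\mathcal U})$. Each of the $2K$ coordinates of $f(W^{(k)};\hat\Pi_{\mathcal U})$ is a signed sum of edge variables $\{A_{ij}-\mathbb E A_{ij}:i\in\mathcal L\cap\mathcal C_k,\ j\in S\}$ where $S$ is a subset of some $\mathcal L\cap\mathcal C_\ell$ or of $\{j\in\mathcal U:\hat\pi_j=e_\ell\}$; these variables are mutually independent except in the labeled-to-labeled block $\ell=k$, where the symmetry $A_{ij}=A_{ji}$ merely contributes a factor $2$ to a sum over an independent family (the diagonal vanishing). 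Using $\mathbb E A_{ij}\le C_1\theta_i\theta_j$, each coordinate has variance $\lesssim_{C_1}\|\theta_{\mathcal L}^{(k)}\|_1\|\theta\|_1$ and summands bounded by $2$, so Bernstein's inequality plus a union bound over the $2K$ coordinates and over $k\in[K]$ gives, with probability $\ge 1-\delta/2$, $\|f(W^{(k)};\hat\Pi_{\mathcal U})\|\le C\bigl(\sqrt{\|\theta_{\mathcal L}^{(k)}\|_1\|\theta\|_1\log(1/\delta)}+\log(1/\delta)\bigr)$ simultaneously in $k$; the same argument for $X$ (whose entries are independent of each other and of $A$, with variances summing to $\lesssim\theta^*\|\theta\|_1$) gives $\|f(X-\mathbb E X;\hat\Pi_{\mathcal U})\|\le C\bigl(\sqrt{\theta^*\|\theta\|_1\log(1/\delta)}+\log(1/\delta)\bigr)$ with probability $\ge 1-\delta/2$. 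On the intersection event, whenever these noise norms are below the corresponding signal norms, Lemma~\ref{lemma: ang-dis} together with $\arcsin t\le\frac\pi2 t$ gives $\psi\bigl(f(A^{(k)};\hat\Pi_{\mathcal U}),f(\mathbb E A^{(k)};\hat\Pi_{\mathcal U})\bigr)\lesssim\sqrt{\log(1/\delta)/(\|\theta\|_1\|\theta_{\mathcal L}^{(k)}\|_1)}$ and $\psi\bigl(f(X;\hat\Pi_{\mathcal U}),f(\mathbb E X;\hat\Pi_{\mathcal U})\bigr)\lesssim\sqrt{\log(1/\delta)/(\|\theta\|_1\theta^*)}$; combining with Step~1, the bias bound, and $\sqrt a+\sqrt b\le 2\sqrt{a\vee b}$ produces the stated inequality. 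Finally, in the degenerate regime — the Bernstein additive $\log(1/\delta)$ term dominates, or a noise norm exceeds the signal norm (so Lemma~\ref{lemma: ang-dis} does not apply), or $\|\theta\|_1=O(1)$ — one checks that the claimed right-hand side is already $\gtrsim 1\ge\frac1\pi|\hat\psi_k-\psi_k|$, so the bound holds trivially after enlarging $C$.

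\textbf{Main obstacle.} The delicate part is the concentration step: identifying precisely which edge variables appear in each of the $2K$ coordinates of $f(W^{(k)};\hat\Pi_{\mathcal U})$, confirming their (near-)independence — the only coupling being the harmless factor-$2$ symmetry in the labeled--labeled same-community block — bounding the variance proxies by $\|\theta_{\mathcal L}^{(k)}\|_1\|\theta\|_1$ and $\theta^*\|\theta\|_1$ respectively, and carefully tracking the Bernstein additive term so that it is genuinely negligible outside the trivial regime. The other point requiring care is the uniform (over all membership matrices $\hat\Pi_{\mathcal U}$) signal lower bound $M_{kk}\gtrsim\|\theta\|_1^2$, but the Cauchy--Schwarz argument exploiting nonnegativity of $P$ settles it cleanly.
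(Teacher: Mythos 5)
Your proposal is correct and follows essentially the same route as the paper's proof: the angle triangle inequality to reduce to $\psi(f(A^{(k)}),f(\Omega^{(k)}))+\psi(f(X),f(\mathbb{E}X))$, the angle-to-distance lemma, coordinatewise Bernstein bounds conditional on $\hat{\Pi}_{\cal U}$ (using its independence from $A_{\cal L\cdot}$ and $X$), and the Cauchy--Schwarz lower bound $\|f(\Omega^{(k)})\|\gtrsim\|\theta_{\cal L}^{(k)}\|_1\|\theta\|_1$ that holds for any membership matrix $H$. The only differences are organizational — you peel off the diagonal bias $\Omega_{ii}$ as a separate deterministic angle term and dispose of the Bernstein additive term via a trivial-regime check, whereas the paper absorbs the diagonal into the deviation threshold of the $l=k$ coordinate and kills the additive term by using $\phi_k\le\pi$ — and neither affects correctness.
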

			
			%
			%
			
			To prove Theorem \ref{sup:thm:real}, we need a famous concentration inequality, Bernstein inequality:
			
			\begin{lemma}[Bernstein inequality] \label{lemma:berstein}
				Suppose $X_1, ..., X_n$ are independent random variables such that $\mathbb{E}X_i = 0$, $|X_i| \le b$ and $Var(X_i) \le \sigma_i^2$ for all $i$. Let $\sigma^2 = n^{-1}\sum_{i=1}^{n} \sigma_i^2$. Then, for any $t > 0$,
				$$ \mathbb{P}\Bigl(n^{-1}|\sum_{i=1}^{n}X_i| \ge t \Bigr) \le 2 \exp \left( -\frac{nt^2 / 2}{\sigma^2 + bt/3}\right) $$
			\end{lemma}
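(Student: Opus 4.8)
The plan is to use the exponential-moment (Chernoff) method, which is the standard route to this inequality. Write $S_n=\sum_{i=1}^n X_i$. By symmetry it suffices to control the upper tail $\mathbb{P}(S_n\ge nt)$: the bound for $\mathbb{P}(S_n\le -nt)$ follows by applying the same argument to the variables $-X_i$, which again have mean zero, satisfy $|-X_i|\le b$, and have variance at most $\sigma_i^2$; a union bound over the two tails then produces the factor $2$. For any $\lambda>0$, Markov's inequality applied to $e^{\lambda S_n}$ gives
\[
\mathbb{P}(S_n\ge nt)\le e^{-\lambda nt}\,\mathbb{E}\bigl[e^{\lambda S_n}\bigr]=e^{-\lambda nt}\prod_{i=1}^n \mathbb{E}\bigl[e^{\lambda X_i}\bigr],
\]
where independence is used to factorize the moment generating function.

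First I would bound each factor $\mathbb{E}[e^{\lambda X_i}]$. Expanding the exponential and using $\mathbb{E}X_i=0$ gives $\mathbb{E}[e^{\lambda X_i}]=1+\sum_{k\ge 2}\lambda^k\mathbb{E}[X_i^k]/k!$. The key variance-aware moment estimate is $|\mathbb{E}[X_i^k]|\le \mathbb{E}|X_i|^k\le b^{k-2}\mathbb{E}[X_i^2]\le b^{k-2}\sigma_i^2$ for $k\ge 2$, obtained from $|X_i|\le b$ together with $\mathbb{E}[X_i^2]=\mathrm{Var}(X_i)$. Combining this with the elementary factorial bound $k!\ge 2\cdot 3^{k-2}$ and summing the resulting geometric series (valid when $\lambda b<3$) yields
\[
\mathbb{E}\bigl[e^{\lambda X_i}\bigr]\le 1+\frac{\sigma_i^2}{b^2}\sum_{k\ge 2}\frac{(\lambda b)^k}{k!}\le 1+\frac{\sigma_i^2\lambda^2/2}{1-\lambda b/3}\le \exp\!\Bigl(\frac{\sigma_i^2\lambda^2/2}{1-\lambda b/3}\Bigr),
\]
where the last step uses $1+x\le e^x$. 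Multiplying over $i$ and recalling $\sum_i\sigma_i^2=n\sigma^2$ gives $\mathbb{E}[e^{\lambda S_n}]\le \exp\bigl(n\sigma^2\lambda^2/(2(1-\lambda b/3))\bigr)$, hence $\mathbb{P}(S_n\ge nt)\le \exp\bigl(-\lambda nt+n\sigma^2\lambda^2/(2(1-\lambda b/3))\bigr)$.

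Finally I would optimize the free parameter by the explicit choice $\lambda=t/(\sigma^2+bt/3)$. This value satisfies $\lambda b<3$ (equivalent to the always-true inequality $0<3\sigma^2$), so the moment bound applies; substituting it gives $1-\lambda b/3=\sigma^2/(\sigma^2+bt/3)$ and collapses the exponent to exactly $-nt^2/(2(\sigma^2+bt/3))$, which is the claimed $-\,(nt^2/2)/(\sigma^2+bt/3)$. Adding the symmetric lower-tail bound and the union-bound factor of $2$ completes the proof. The only genuinely delicate point is the single-variable moment generating function estimate: it is here that the variance $\sigma_i^2$ (rather than merely the range $b$) must enter, and the factorial inequality $k!\ge 2\cdot 3^{k-2}$ is precisely what converts the tail of the exponential series into the $1-\lambda b/3$ denominator, thereby producing the $bt/3$ term that distinguishes Bernstein's inequality from the cruder Hoeffding bound.
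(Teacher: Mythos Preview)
Your proof is correct and is exactly the standard Chernoff/exponential-moment derivation of Bernstein's inequality. The paper does not actually prove this lemma; it simply cites a textbook (Uspensky), so your argument is precisely the kind of proof the citation points to, with the one trivial caveat that the claim ``$0<3\sigma^2$ is always true'' needs the degenerate case $\sigma^2=0$ handled separately (there all $X_i\equiv 0$ a.s.\ and the bound is vacuous).
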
 
			
			The proof of Lemma \ref{lemma:berstein}, Bernstein inequality, can be seen in most probability textbooks such as \cite{UspenskyJ.V1937Itmp}. 
			
			\begin{proof}
				Recall that  for $k \in [K]$, 
				$$\hat{\psi}_k(\hat{\Pi}_{\cal U})=\psi(f(A^{(k)}; \hat{\Pi}_{\cal U}), f(X; H)),$$ 
				$$\psi_k(\hat{\Pi}_{\cal U}) = \psi\Bigl( f(\Omega^{(k)}; \hat{\Pi}_{\cal U}),\; f(\mathbb{E}X; \hat{\Pi}_{\cal U})\Bigr).$$  
				
				Denote $v_k = f(A^{(k)}; \hat{\Pi}_{\cal U})$, $v^* = f(X; \hat{\Pi}_{\cal U})$, $\tilde{v}_k = f(\Omega^{(k)}; \hat{\Pi}_{\cal U})$, $\tilde{v}^* = f(EX; \hat{\Pi}_{\cal U})$, $k \in [K]$. Then, by Lemma \ref{lemma: angle}, 
				\begin{align}
					\hat{\psi}_k(\hat{\Pi}_{\cal U}) = \psi(v_k, v^*) 
					&\le \psi(v_k, \tilde{v}_k) + \psi(\tilde{v}_k, v^*) \notag \\ 
					&\le \psi(v_k, \tilde{v}_k) + \psi(\tilde{v}_k, \tilde{v}^*) + \psi(\tilde{v}^*, v^*)  \notag \\
					&= \psi_k(\hat{\Pi}_{\cal U}) + \psi(v_k, \tilde{v}_k) +  \psi(\tilde{v}^*, v^*)   \notag
				\end{align}
				
				Similarly, 
				$$ \psi_k(\hat{\Pi}_{\cal U}) \le \hat{\psi}_k(\hat{\Pi}_{\cal U}) + \psi(v_k, \tilde{v}_k) +  \psi(\tilde{v}^*, v^*)$$
				
				Therefore, 
				\beq \label{equ:thm2:err}
				|\hat{\psi}_k(\hat{\Pi}_{\cal U})-\psi_k(\hat{\Pi}_{\cal U})|\leq \psi(v_k, \tilde{v}_k) +  \psi(\tilde{v}^*, v^*).
				\eeq
				

				
				For any $\phi_1, ..., \phi_{K} \ge 0$. 
				\begin{align} \label{equ:thm2:bound1}
					&\mathbb{P}\Bigl(\forall k \in [K], |\hat{\psi}_k(\hat{\Pi}_{\cal U})-\psi_k(\hat{\Pi}_{\cal U})|\leq \phi_k \Bigr) \notag \\
					&= 1 - \mathbb{P}\Bigl(\exists k \in [K], |\hat{\psi}_k(\hat{\Pi}_{\cal U})-\psi_k(\hat{\Pi}_{\cal U})| >  \phi_k \Bigr) \notag \\
					& \ge 1 - \sum_{k=1}^{K} \mathbb{P}\Bigl(|\hat{\psi}_k(\hat{\Pi}_{\cal U})-\psi_k(\hat{\Pi}_{\cal U})| >  \phi_k \Bigr) 
				\end{align}
				By definition of $\psi$, $ \hat{\psi}_k(\hat{\Pi}_{\cal U}), \psi_k(\hat{\Pi}_{\cal U}) \in [0, \pi]$. Hence, $|\hat{\psi}_k(\hat{\Pi}_{\cal U})-\psi_k(\hat{\Pi}_{\cal U})| \in [0, \pi]$. As a result, when $\phi_k \ge \pi$, 
				$$ \mathbb{P}\Bigl(|\hat{\psi}_k(\hat{\Pi}_{\cal U})-\psi_k(\hat{\Pi}_{\cal U})| >  \phi_k \Bigr) = 0$$
				
				When $\phi_k < \pi$, by \eqref{equ:thm2:err},
				
				\begin{align} \label{equ:thm2:err2}
					\mathbb{P}\Bigl(|\hat{\psi}_k(\hat{\Pi}_{\cal U})-\psi_k(\hat{\Pi}_{\cal U})| >  \phi_k \Bigr) &\le \mathbb{P}\Bigl(\psi(v_k, \tilde{v}_k) +  \psi(\tilde{v}^*, v^*) >  \phi_k \Bigr) \notag \\
					&\le \mathbb{P}\Bigl(\psi(v_k, \tilde{v}_k)> \frac{1}{2}\phi_k \text{ or }  \psi(\tilde{v}^*, v^*) >  \frac{1}{2}\phi_k \Bigr) \notag \\
					&\le  \mathbb{P}\Bigl(\psi(v_k, \tilde{v}_k)> \frac{1}{2}\phi_k \Bigr) + \mathbb{P}\Bigl(\psi(\tilde{v}^*, v^*) >  \frac{1}{2}\phi_k \Bigr) 
				\end{align}
				
				
				%
				%
				
				By lemma \ref{lemma: ang-dis}, when $\|v_k - \tilde{v}_k\| < \|\tilde{v}_k\|$
				$$ \psi(v_k, \tilde{v}_k) \le \arcsin \frac{\|v_k - \tilde{v}_k\|}{\|\tilde{v}_k\|}$$
				
				Hence, for any $\phi \in [0, \frac{\pi}{2})$, $ \|v_k - \tilde{v}_k\| \le \sin (\phi) \|\tilde{v}_k\| $ implies $\psi(v_k, \tilde{v}_k) \le \phi$.
				
				As a result, for any $\phi \in [0, \frac{\pi}{2})$, $\psi(v_k, \tilde{v}_k) > \phi$ implies $ \|v_k - \tilde{v}_k\| > \sin (\phi) \|\tilde{v}_k\| $.
				
				%
				
				Similarly, for any $\phi \in [0, \frac{\pi}{2})$, $\psi(v^*, \tilde{v}^*) > \phi$ implies $ \|v^* - \tilde{v}^*\| \ge \sin (\phi) \|\tilde{v}^*\| $. 
				
				By definition of $\phi_k$, $\phi_k \ge 0$. Hence, when $\phi_k < \pi$, $\frac{1}{2} \phi_k \in [0, \frac{\pi}{2})$.  Plugging the above results into \eqref{equ:thm2:err2}, we have
				
				\begin{align} \label{equ:thm2:bound2}
					&\mathbb{P}\Bigl(|\hat{\psi}_k(\hat{\Pi}_{\cal U})-\psi_k(\hat{\Pi}_{\cal U})| >  \phi_k \Bigr) \notag \\
					&\le \mathbb{P}\Bigl(\psi(v_k, \tilde{v}_k)> \frac{1}{2}\phi_k \Bigr) + \mathbb{P}\Bigl(\psi(\tilde{v}^*, v^*) >  \frac{1}{2}\phi_k \Bigr) \notag \\
					& \le  \mathbb{P}\Bigl(\|v_k - \tilde{v}_k\| \ge \sin (\frac{1}{2}\phi_k) \|\tilde{v}_k\| \Bigr) + \mathbb{P}\Bigl(\|v^* - \tilde{v}^*\| \ge \sin (\frac{1}{2}\phi_k) \|\tilde{v}^*\| \Bigr) \notag \\
					& \le \mathbb{P}\Bigl( \exists l \in [2K], |(v_k - \tilde{v}_k)_l| \ge \frac{1}{\sqrt{K}}\sin (\frac{1}{2}\phi_k) \|\tilde{v}_k\|\Bigr) \notag \\
					& \quad + \mathbb{P}\Bigl(\exists l \in [2K], |(v^* - \tilde{v}^*)_l| \ge \frac{1}{\sqrt{K}}\sin (\frac{1}{2}\phi_k) \|\tilde{v}^*\| \Bigr) \notag \\
					& \le \sum_{l=1}^{2K }\mathbb{P}\Bigl(|(v_k - \tilde{v}_k)_l| \ge \frac{1}{\sqrt{K}}\sin (\frac{1}{2}\phi_k) \|\tilde{v}_k\|\Bigr) \notag \\
					& \quad + \sum_{l=1}^{2K } \mathbb{P}\Bigl( |(v^* - \tilde{v}^*)_l| \ge \frac{1}{\sqrt{K}}\sin (\frac{1}{2}\phi_k) \|\tilde{v}^*\| \Bigr) \notag \\
				\end{align}
				
				Since when $\phi_k < \pi$, $\frac{1}{2} \phi_k \in [0, \frac{\pi}{2}]$, by Lemma \ref{lemma:sin}, $\sin (\frac{1}{2}\phi_k) \ge \frac{2}{\pi} \frac{1}{2}\phi_k = \frac{1}{\pi} \phi_k$. Plugging back to \eqref{equ:thm2:bound2}, we have when $\phi_k < \pi$,
				\begin{align} \label{equ:thm2:bound3}
					&\mathbb{P}\Bigl(|\hat{\psi}_k(\hat{\Pi}_{\cal U})-\psi_k(\hat{\Pi}_{\cal U})| >  \phi_k \Bigr) \notag \\
					& \le \sum_{l=1}^{2K }\mathbb{P}\Bigl(|(v_k - \tilde{v}_k)_l| \ge \frac{1}{\pi\sqrt{K}}\phi_k \|\tilde{v}_k\|\Bigr)
					+ \sum_{l=1}^{2K } \mathbb{P}\Bigl( |(v^* - \tilde{v}^*)_l| \ge \frac{1}{\pi\sqrt{K}}\phi_k \|\tilde{v}^*\| \Bigr)
				\end{align}
				
				It remains to evaluate  $\mathbb{P}\Bigl(|(v_k - \tilde{v}_k)_l| \ge \frac{1}{\pi\sqrt{K}}\phi_k \|\tilde{v}_k\|\Bigr)$ and $\mathbb{P}\Bigl( |(v^* - \tilde{v}^*)_l| \ge \frac{1}{\pi\sqrt{K}}\phi_k \|\tilde{v}^*\| \Bigr)$, which are illustrated in the following two lemmas.
				
				\begin{lemma}\label{lemma:thm2:1}
					Define $C_6 = \frac{C^2}{16\sqrt{2}\pi^2C_2 K^2(\sqrt{K} + \frac{1}{3})}$. When $ \phi_k  \ge 2\sqrt{2}\pi C_2 K^2 \frac{\|\theta_{\cal L}^{(k)}\|^2}{\|\theta_{\cal L}^{(k)}\|_1\|\theta\|_1}$,
					$$\mathbb{P}\Bigl(|(v_k - \tilde{v}_k)_l| \ge \frac{1}{\pi\sqrt{K}}\phi_k \|\tilde{v}_k\|\Bigr) \le 2\exp\left( -\frac{C_6}{C^2}\phi_k^2\|\theta_{\cal L}^{(k)}\|_1  \|\theta\|_1  \right)$$
					
				\end{lemma}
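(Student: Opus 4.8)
\textbf{Proof proposal for Lemma \ref{lemma:thm2:1}.}
The plan is to apply Bernstein's inequality (Lemma~\ref{lemma:berstein}) to a single coordinate of $v_k-\tilde v_k$, after separating the stochastic fluctuation from a deterministic bias. The bias is unavoidable: since $A_{jj}=0$ while $\Omega_{jj}=\theta_j^2 P_{y_jy_j}=\theta_j^2$, the vector $A^{(k)}$ is \emph{not} centred at $\Omega^{(k)}$ --- its mean is $\Omega^{(k)}-b^{(k)}$, where $b^{(k)}_j=\theta_j^2\cdot\mathbf 1\{j\in{\cal L}\cap{\cal C}_k\}$. First I would compute the bias in the projected coordinates: because $b^{(k)}$ is supported on ${\cal L}\cap{\cal C}_k$, the vector $f(b^{(k)};\hat\Pi_{\cal U})$ has all entries zero except the $k$-th, which equals $\|\theta_{\cal L}^{(k)}\|^2$. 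Hence $v_k-\tilde v_k=(v_k-\mathbb{E}v_k)+(\mathbb{E}v_k-\tilde v_k)$ with $\|(\mathbb{E}v_k-\tilde v_k)_l\|\le\|\theta_{\cal L}^{(k)}\|^2$ for every $l\in[2K]$, and $v_k-\mathbb{E}v_k$ is now a genuinely centred sum of independent Bernoulli fluctuations.

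The second ingredient is a lower bound on $\|\tilde v_k\|$ that does not require knowing $\hat\Pi_{\cal U}$. As in the proof of Lemma~\ref{sup:lemma:1}, the $k$-th coordinate of $\tilde v_k=f(\Omega^{(k)};\hat\Pi_{\cal U})$ is exactly $\|\theta_{\cal L}^{(k)}\|_1^2$ (using the identifiability normalization $P_{kk}=1$), while the last $K$ coordinates are nonnegative with sum $\ge \|\theta_{\cal L}^{(k)}\|_1\|\theta_{\cal U}^{(k)}\|_1$ (using $P\ge 0$ entrywise and $P_{kk}=1$). Cauchy--Schwarz on those $K$ nonnegative entries, followed by the balance condition in \eqref{cond-theta} ($\|\theta^{(k)}\|_1\ge \|\theta\|_1/(C_2K)$), yields $\|\tilde v_k\|\ge \|\theta_{\cal L}^{(k)}\|_1\|\theta\|_1/(\sqrt2\,C_2K^{3/2})$; an analogous computation (with $P_{kl}\le C_1$) gives the matching upper bound $\|\tilde v_k\|\le \sqrt2\,C_1\|\theta_{\cal L}^{(k)}\|_1\|\theta\|_1$. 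The hypothesis $\phi_k\ge 2\sqrt2\,\pi C_2K^2\,\|\theta_{\cal L}^{(k)}\|^2/(\|\theta_{\cal L}^{(k)}\|_1\|\theta\|_1)$ combined with the lower bound forces $\|\theta_{\cal L}^{(k)}\|^2\le \tfrac12\cdot\tfrac{\phi_k\|\tilde v_k\|}{\pi\sqrt K}$, so the event $\{|(v_k-\tilde v_k)_l|\ge \tfrac{\phi_k}{\pi\sqrt K}\|\tilde v_k\|\}$ is contained in $\{|(v_k-\mathbb{E}v_k)_l|\ge s\}$ with $s:=\tfrac{\phi_k}{2\pi\sqrt K}\|\tilde v_k\|\ge \phi_k\|\theta_{\cal L}^{(k)}\|_1\|\theta\|_1/(2\sqrt2\,\pi C_2K^2)$.

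Finally I would apply Lemma~\ref{lemma:berstein} to $(v_k-\mathbb{E}v_k)_l$, which is a weighted sum of the independent centred variables $\{A_{ij}-\Omega_{ij}\}$: the weights are $1$, except (only when $l=k$) for pairs with both endpoints in ${\cal L}\cap{\cal C}_k$, which carry weight $2$ by the symmetry $A_{ij}=A_{ji}$. Thus the uniform bound is $b\le 2$, and since $\mathrm{Var}(A_{ij})\le\Omega_{ij}\le C_1\theta_i\theta_j$ and the $\theta_i$'s sum to at most $\|\theta\|_1$ over the relevant index block, the total variance is $V\le C'C_1\|\theta_{\cal L}^{(k)}\|_1\|\theta\|_1$ for an absolute constant $C'$. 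Bernstein then gives a bound $2\exp\!\big(-\tfrac{s^2/2}{V+2s/3}\big)$; inserting the two-sided estimates on $s$ and the bound on $V$ (here one uses that $\phi_k$ is bounded --- in every invocation of this lemma $\phi_k<\pi$, which suffices to absorb the $s$-term in the denominator) collapses the exponent to $-(\mathrm{const})\cdot\phi_k^2\|\theta_{\cal L}^{(k)}\|_1\|\theta\|_1$, and choosing the constant as in the definition of $C_6$ finishes the proof. I expect the main obstacles to be (i) correctly identifying and bounding the diagonal-bias term $f(b^{(k)};\hat\Pi_{\cal U})$ and (ii) the lower bound on $\|\tilde v_k\|$, which must go through only the structural facts $P_{kk}=1$, $P\ge 0$, and the degree-balance condition, since the noisy clustering $\hat\Pi_{\cal U}$ is otherwise uncontrolled here.
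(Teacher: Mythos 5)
Your proposal is correct and follows essentially the same route as the paper's proof: decompose $v_k-\tilde v_k$ into the deterministic diagonal bias (which the paper handles only in the $l=k$ coordinate, where it equals $\|\theta_{\cal L}^{(k)}\|^2$ and is absorbed into half the threshold using the hypothesis on $\phi_k$) plus a centred sum of independent Bernoullis, establish the lower bound $\|\tilde v_k\|\geq \|\theta_{\cal L}^{(k)}\|_1\|\theta\|_1/(C_2K\sqrt{2K})$ by Cauchy--Schwarz and the balance condition, and apply Bernstein coordinate-wise conditionally on $\hat\Pi_{\cal U}$. The only substantive difference is that you bound the variance directly by $C_1\|\theta_{\cal L}^{(k)}\|_1\|\theta\|_1$, which drags $C_1$ into the final constant, whereas the paper identifies the variance proxy with $|(\tilde v_k)_l|\leq\|\tilde v_k\|$ and thereby obtains the specific $C_6$ stated; since the lemma is only used qualitatively downstream, this discrepancy in the constant is immaterial.
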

				
				\begin{lemma}\label{lemma:thm2:2}
					Define $C_7 = \frac{C^2}{2\sqrt{2}\pi^2C_2 K^2(\sqrt{K} + \frac{1}{3})}$. Then,
					$$ \mathbb{P}\Bigl( |(v^* - \tilde{v}^*)_l| \ge \frac{1}{\pi\sqrt{K}}\phi_k \|\tilde{v}^*\| \Bigr) \le 2\exp\left( -\frac{C_7}{C^2}\phi_k^2 \theta^*   \|\theta\|_1  \right)$$
					
				\end{lemma}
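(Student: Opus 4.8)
The plan is to bound $\psi(v^*,\tilde v^*)$ by reducing it---via the chain of inequalities already used in the proof of Theorem~\ref{sup:thm:real} (Lemmas~\ref{lemma:sin}, \ref{lemma: angle}, \ref{lemma: ang-dis})---to a coordinatewise deviation $|(v^*-\tilde v^*)_l|$, applying Bernstein's inequality (Lemma~\ref{lemma:berstein}) to each coordinate, and then substituting a lower bound $\|\tilde v^*\|\gtrsim\theta^*\|\theta\|_1$. So the real content is (i) a clean variance estimate for the coordinates of $v^*-\tilde v^*$ and (ii) that lower bound on $\|\tilde v^*\|$.

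For (i), since $\hat\Pi_{\cal U}$ is a $0/1$ membership matrix, each coordinate of $v^*-\tilde v^*=f(X-\mathbb{E}X;\hat\Pi_{\cal U})$ is a centered sum of independent Bernoulli variables: for $l\le K$ it is $\sum_{i\in{\cal L}\cap{\cal C}_l}(X_i-\mathbb{E}X_i)$, and for $l=K+\tilde l$ it is $\sum_{i\in{\cal U}:\,\hat\pi_i=e_{\tilde l}}(X_i-\mathbb{E}X_i)$. Each summand has magnitude $\le 1$, and since $X_i\sim\mathrm{Bern}(\theta^*\theta_iP_{k^*y_i})$ we have $\mathrm{Var}(X_i)\le\mathbb{E}X_i$; summing over the appropriate block gives $\mathrm{Var}((v^*)_l)\le(\tilde v^*)_l$, which (all coordinates of $\tilde v^*$ being nonnegative) is at most $\|\tilde v^*\|$. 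Bernstein's inequality then gives, for every $l$ and every $s>0$,
\[
\mathbb{P}\bigl(|(v^*-\tilde v^*)_l|\ge s\bigr)\le 2\exp\Bigl(-\tfrac{s^2/2}{\|\tilde v^*\|+s/3}\Bigr).
\]
I would take $s=\tfrac{1}{\pi\sqrt K}\phi_k\|\tilde v^*\|$, the threshold in the statement; working (as in the proof of Theorem~\ref{sup:thm:real}) in the range $\phi_k<\pi$ forces $s\le\|\tilde v^*\|/\sqrt K$, so $s/3$ is lower order in the denominator and the exponent is at least $\phi_k^2\|\tilde v^*\|\,/\,[\,2\pi^2\sqrt K(\sqrt K+\tfrac13)\,]$.

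For (ii), the computation inside the proof of Lemma~\ref{sup:lemma:1} gives $\|\tilde v^*\|=\theta^*\sqrt{M_{k^*k^*}}$ with $M_{k^*k^*}=\|G_{\cal LL}Pe_{k^*}\|^2+\|Q'G_{\cal UU}Pe_{k^*}\|^2$ and $Q'G_{\cal UU}=\hat\Pi_{\cal U}'\Theta_{\cal UU}\Pi_{\cal U}$. Using the identifiability normalization $P_{k^*k^*}=1$ and keeping only the $k^*$-block of the first term yields $\|G_{\cal LL}Pe_{k^*}\|^2\ge\|\theta_{\cal L}^{(k^*)}\|_1^2$; writing the $\tilde l$-th coordinate of the second term as $\sum_l\eta_{l\tilde l}P_{lk^*}\ge\eta_{k^*\tilde l}$ (all terms nonnegative, $\eta_{l\tilde l}$ as in the proof of Lemma~\ref{lemma:sig:noisy:oracle}) and applying Cauchy--Schwarz across the $K$ coordinates yields $\|Q'G_{\cal UU}Pe_{k^*}\|^2\ge\tfrac1K\|\theta_{\cal U}^{(k^*)}\|_1^2$. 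Hence $M_{k^*k^*}\ge\tfrac1{2K}\|\theta^{(k^*)}\|_1^2$, and \eqref{cond-theta} (which forces $\|\theta^{(k^*)}\|_1\ge\|\theta\|_1/(C_2K)$) gives $\|\tilde v^*\|\ge\theta^*\|\theta\|_1/(\sqrt2\,C_2K^{3/2})$. Substituting this into the exponent above produces exactly $\tfrac{C_7}{C^2}\phi_k^2\theta^*\|\theta\|_1$, which proves the lemma.

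I expect step (ii) to be the main obstacle: it is the only place where one must use the algebraic structure of $M$ rather than generic concentration, and some care is needed to route both the labeled block and the $\hat\Pi_{\cal U}$-projected unlabeled block into the bound so that the exponent carries $\|\theta\|_1$ and not merely $\|\theta_{\cal L}^{(k^*)}\|_1$; it is also worth noting that a crude Cauchy--Schwarz suffices, so no accuracy assumption on $\hat\Pi_{\cal U}$ enters. The companion Lemma~\ref{lemma:thm2:1} is handled identically, except that $A^{(k)}$ is a \emph{biased} estimate of $\Omega^{(k)}$ (through the ignored $\diag(\Omega)$), whose projected bias is of size $\|\theta_{\cal L}^{(k)}\|^2$ in one coordinate---which is precisely why that lemma requires $\phi_k\ge 2\sqrt2\pi C_2K^2\|\theta_{\cal L}^{(k)}\|^2/(\|\theta_{\cal L}^{(k)}\|_1\|\theta\|_1)$ before the probabilistic bound kicks in.
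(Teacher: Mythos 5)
Your proposal is correct and follows essentially the same route as the paper's proof: condition on $\hat{\Pi}_{\cal U}$ (which is independent of $X$ since it depends only on $A_{\cal UU}$), apply Bernstein coordinatewise with variance proxy $(\tilde v^*)_l\le\|\tilde v^*\|$, and then lower-bound $\|\tilde v^*\|\ge \theta^*\|\theta\|_1/(C_2K\sqrt{2K})$. Your derivation of that lower bound via $M_{k^*k^*}$ from Lemma~\ref{sup:lemma:1} is a cosmetic variant of the paper's direct Cauchy--Schwarz on $\sum_l(\tilde v^*)_l=\sum_{j\in[n]}\mathbb{E}X_j$, and it lands on the identical constant $C_7$.
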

				
				The proof of Lemma \ref{lemma:thm2:1} and \ref{lemma:thm2:2} are quite tedious. We would defer their proofs to the end of this section.
				
				Choose $$ \phi_k = \phi_k(C, \delta) =  C \left( \sqrt{\frac{\log(1/\delta)}{\|\theta\|_1\cdot \min\{\theta^*, \|\theta_{\cal L}^{(k)}\|_1\}}} + \frac{\|\theta_{\cal L}^{(k)}\|^2}{\|\theta_{\cal L}^{(k)}\|_1\|\theta\|_1}\right). $$
				
				Then leveraging on Lemma \ref{lemma:thm2:1} and \ref{lemma:thm2:2}, we have when $C \ge 2\sqrt{2}\pi K^2$,
				
				\begin{align} \label{equ:thm:vk:final}
					\mathbb{P}\Bigl(|(v_k - \tilde{v}_k)_l| \ge \frac{1}{\pi\sqrt{K}}\phi_k \|\tilde{v}_k\|\Bigr) &\le 2\exp\left( -\frac{C_6 C^2\log(1/\delta)\|\theta_{\cal L}^{(k)}\|_1  \|\theta\|_1}{C^2 \|\theta\|_1\cdot \min\{\theta^*, \|\theta_{\cal L}^{(k)}\|_1\}}  \right) \notag \\
					&\le 2\exp\left( -C_6 \log(1/\delta)  \right) \notag \\
					&= 2\delta^{C_6}
				\end{align}
				
				\begin{align} \label{equ:thm:vk:final2}
					\mathbb{P}\Bigl(|(v^* - \tilde{v}^*)_l| \ge \frac{1}{\pi\sqrt{K}}\phi_k \|\tilde{v}^*\|\Bigr) &\le 2\exp\left( -\frac{C_7 C^2\log(1/\delta)\theta^*  \|\theta\|_1}{C^2 \|\theta\|_1\cdot \min\{\theta^*, \|\theta_{\cal L}^{(k)}\|_1\}}  \right) \notag \\
					&\le 2\exp\left( -C_7 \log(1/\delta)  \right) \notag \\
					&= 2\delta^{C_7}
				\end{align}

				Plugging \eqref{equ:thm:vk:final} and \eqref{equ:thm:vk:final2} back to \eqref{equ:thm2:bound3}, leveraging on the fact that $\delta \le \frac{1}{2} < 1$, we obtain when $\phi_k < \pi$, and $C \ge 2\sqrt{2}\pi K^2$,
				$$ \mathbb{P}\Bigl(|\hat{\psi}_k(\hat{\Pi}_{\cal U})-\psi_k(\hat{\Pi}_{\cal U})| >  \phi_k \Bigr) \le 4K \delta^{C_6} + 4K \delta^{C_7} \le 8K \delta^{C_6}$$
				
				Recall that when $\phi_k \ge \pi$, 
				$$\mathbb{P}\Bigl(|\hat{\psi}_k(\hat{\Pi}_{\cal U})-\psi_k(\hat{\Pi}_{\cal U})| >  \phi_k \Bigr) = 0$$.
				
				In all, we have that when $C \ge 2\sqrt{2}\pi K^2$,
				\begin{align} \label{equ:thm2:bound:final}
					\mathbb{P}\Bigl(|\hat{\psi}_k(\hat{\Pi}_{\cal U})-\psi_k(\hat{\Pi}_{\cal U})| >  \phi_k \Bigr) \le 8K \delta^{C_6}
				\end{align}
				
				Substituting \eqref{equ:thm2:bound:final} into \eqref{equ:thm2:bound1}, we obtain that when $C \ge 2\sqrt{2}\pi K^2$,
				\begin{align}
					\mathbb{P}\Bigl(\forall k \in [K], |\hat{\psi}_k(\hat{\Pi}_{\cal U})-\psi_k(\hat{\Pi}_{\cal U})|\leq \phi_k \Bigr) \ge 1 -  8K^2 \delta^{C_6}
				\end{align}
				
				Hence, it suffices to make $8K^2 \delta^{C_6} \le \delta$. Choose 
				\beq \label{lemma:thm2:C}
				C = \max\{2\sqrt{2}\pi K^2, \sqrt{16\sqrt{2}\pi^2C_2 K^2(\sqrt{K} + \frac{1}{3})(1 + \frac{\log(8K^2)}{\log 2})}\},
				\eeq
				
				Then $C_6 - 1 \ge \frac{\log(8K^2)}{\log 2} \ge 1$. Since $\delta \le \frac{1}{2}$,
				$$ \delta^{C_6 - 1} \le \left(\frac{1}{2}\right) ^{C_6 - 1} \le \frac{1}{2^{\frac{\log(8K^2)}{\log 2}}} = \frac{1}{8K^2}$$
				
				As a result, $8K^2 \delta^{C_6} \le \delta$.
				
				Hence, choose $C$ as in \eqref{lemma:thm2:C}, then $C>)$, and for any $\delta\in (0,1/2)$, 
				\begin{align}
					\mathbb{P}\Bigl(\forall k \in [K], |\hat{\psi}_k(\hat{\Pi}_{\cal U})-\psi_k(\hat{\Pi}_{\cal U})|\leq \phi_k \Bigr) \ge 1 -  \delta
				\end{align}
				
				To conclude, there exists constant $C > 0$, such that for any $\delta\in (0,1/2)$, with probability $1-\delta$, simultaneously for $1\leq k\leq K$, 
				\[
				|\hat{\psi}_k(\hat{\Pi}_{\cal U})-\psi_k(\hat{\Pi}_{\cal U})|\leq C  \sqrt{\frac{\log(1/\delta)}{\|\theta\|_1\cdot \min\{\theta^*, \|\theta_{\cal L}^{(k)}\|_1\}}}.   
				\]

			\end{proof}

			\subsection{Proof of Lemma \ref{lemma:thm2:1}}
			\begin{proof}
				
				When $l \in [K]$,
				$$ 
				(\tilde{v}_k)_l = (f(\Omega^{(k)}; \hat{\Pi}_{\cal U}))_l 
				= \Omega^{(k)} {\bf 1}_{(l)}
				= \sum_{i \in \mathcal{C}_k\cap \cal L} \sum_{j \in \mathcal{C}_l \cap \cal L} \Omega_{ij}
				$$
				
				When $l \in \{K+1, ..., 2K\}$, define 
				$$ \hat{\mathcal{C}}_l = \{i \in \mathcal{U}: \hat{\pi}_i  = e_{l - K}\}$$, then
				$$ 
				(\tilde{v}_k)_l = (f(\Omega^{(k)}; \hat{\Pi}_{\cal U}))_l 
				= \Omega^{(k)} \Bigl(\hat{\Pi}_{\cal U}\Bigr)_l
				= \sum_{i \in \mathcal{C}_k \cap \mathcal{L}} \sum_{j \in \hat{\mathcal{C}}_l} \Omega_{ij}
				$$
				
				Hence   
				\begin{align} \label{equ:thm2:tildevk}
					\|\tilde{v}_k\| &= \sqrt{\sum_{l \in [2K]}(\tilde{v}_k)_l^2} \notag \\
					(\text{Cauchy-Schwartz})& \ge \sqrt{\frac{1}{2K} (\sum_{l \in [2K]}(\tilde{v}_k)_l)^2} \notag \\
					&= \frac{1}{\sqrt{2K}} |\sum_{l \in [2K]}(\tilde{v}_k)_l| \notag \\
					&= \frac{1}{\sqrt{2K}} |\sum_{l=1}^{K}\sum_{i \in \mathcal{C}_k \cap \mathcal{L}} \sum_{j \in \mathcal{C}_l \cap \mathcal{L}} \Omega_{ij} + \sum_{l=K+1}^{2K}\sum_{i \in \mathcal{C}_k \cap \mathcal{L}} \sum_{j \in \hat{\mathcal{C}}_l} \Omega_{ij}| \notag \\
					&= \frac{1}{\sqrt{2K}} |\sum_{i \in \mathcal{C}_k \cap \mathcal{L}} \sum_{j \in \mathcal{L}} \Omega_{ij} + \sum_{i \in \mathcal{C}_k \cap \mathcal{L}} \sum_{j \in \hat{\mathcal{U}}} \Omega_{ij}| \notag \\
					&= \frac{1}{\sqrt{2K}} \sum_{i \in \mathcal{C}_k \cap \mathcal{L}} \sum_{j \in [n]} \Omega_{ij} \notag \\
					&\ge \frac{1}{\sqrt{2K}} \sum_{i \in \mathcal{C}_k \cap \mathcal{L}} \sum_{j \in \mathcal{C}_k} \Omega_{ij} \notag \\
					&= \frac{1}{\sqrt{2K}} \sum_{i \in \mathcal{C}_k \cap \mathcal{L}} \sum_{j \in \mathcal{C}_k} \theta_i \theta_j P_{kk} \notag \\
					(\text{Identifiability condition})&= \frac{1}{\sqrt{2K}} \sum_{i \in \mathcal{C}_k \cap \mathcal{L}} \sum_{j \in \mathcal{C}_k} \theta_i \theta_j \notag \\
					&= \frac{1}{\sqrt{2K}} \|\theta_{\cal L}^{(k)}\|_1 \|\theta^{(k)}\|_1 \notag \\
					&\ge \frac{1}{\sqrt{2K}} \|\theta_{\cal L}^{(k)}\|_1 \min_{l \in [K]} \|\theta^{(l)}\|_1 \notag \\
					(\text{Condition \eqref{cond-theta}})&\ge \frac{1}{C_2\sqrt{2K}} \|\theta_{\cal L}^{(k)}\|_1 \max_{l \in [K]} \|\theta^{(l)}\|_1 \notag \\
					&\ge \frac{1}{C_2K\sqrt{2K}} \|\theta_{\cal L}^{(k)}\|_1  \|\theta\|_1 
				\end{align}
				
				When $l \in [K]$, 
				$$
				(v_k)_l = (f(A^{(k)}; \hat{\Pi}_{\cal U}))_l 
				= A^{(k)} {\bf 1}_{(l)}
				= \sum_{i \in \mathcal{C}_k \cap \cal L} \sum_{j \in \mathcal{C}_l \cap \cal L} A_{ij}
				$$
				
				Recall that,
				$$ 
				(\tilde{v}_k)_l = (f(\Omega^{(k)}; \hat{\Pi}_{\cal U}))_l 
				= \Omega^{(k)} {\bf 1}_{(l)}
				= \sum_{i \in \mathcal{C}_k\cap \cal L} \sum_{j \in \mathcal{C}_l \cap \cal L} \Omega_{ij}
				$$
				
				So 
				$$ |(v_k - \tilde{v}_k)_l| = \sum_{i \in \mathcal{C}_k \cap \cal L} \sum_{j \in \mathcal{C}_l \cap \cal L} (A_{ij} - \Omega_{ij})$$
				
				%
				%
				
				When $l \in [K] \backslash \{k\}$, since $\hat{\Pi}_{\cal U}$ only depends on $A_{\cal UU}$, it is independent of $ A_{\cal LL}$. Hence, given $\hat{\Pi}_{\cal U}$, $\{ A_{ij} - \Omega_{ij}: i \in \mathcal{C}_k \cap \mathcal{L}, j \in \mathcal{C}_l \cap \mathcal{L} \}$ are a collection of  $|\mathcal{C}_k\cap \mathcal{L}||\mathcal{C}_l\cap \mathcal{L}|$ independent random variables. Furthermore, given $\hat{\Pi}_{\cal U}$, for any $ i \in \mathcal{C}_k \cap \mathcal{L}, j \in \mathcal{C}_l \cap \mathcal{L}$,
				$$ \mathbb{E}  \Bigl[ A_{ij} - \Omega_{ij}|\hat{\Pi}_{\cal U} \Bigr] = \mathbb{E} \Bigl[ A_{ij} |\hat{\Pi}_{\cal U} \Bigr] - \Omega_{ij} = \Omega_{ij}- \Omega_{ij} = 0$$
				Also,
				$$ -1 \le - \Omega_{ij} \le A_{ij} - \Omega_{ij}  \le A_{ij} \le 1$$
				So $| A_{ij} - \Omega_{ij}| \le 1$.
				Additionally, 
				$$var\Bigl(A_{ij} - \Omega_{ij} \Bigr) = var\Bigl(A_{ij}|\hat{\Pi}_{\cal U}\Bigr) =  \Omega_{ij}(1 - \Omega_{ij}) \le \Omega_{ij}$$
				
				Therefore, denote $n_{kl} = |\mathcal{C}_k \cap \mathcal{L}||\mathcal{C}_l \cap \mathcal{L}|$, by Lemma \ref{lemma:berstein},
				\begin{align}
					&\mathbb{P}\Bigl(|(v_k - \tilde{v}_k)_l| \ge \frac{1}{\pi\sqrt{K}}\phi_k \|\tilde{v}_k\|\Bigr) \notag \\
					&= \mathbb{E}\left[\mathbb{P}\Bigl(|(v_k - \tilde{v}_k)_l| \ge \frac{1}{\pi\sqrt{K}}\phi_k \|\tilde{v}_k\|\Bigr) | \hat{\Pi}_{\cal U} \right] \notag \\
					&= \mathbb{E}\left[\mathbb{P}\Bigl(\frac{1}{n_{kl}}|\sum_{i \in \mathcal{C}_k \cap \mathcal{L}} \sum_{j \in \mathcal{C}_l \cap \mathcal{L}} (A_{ij} - \Omega_{ij})| \ge \frac{1}{\pi\sqrt{K}n_{kl}}\phi_k \|\tilde{v}_k\|\Bigr)| \hat{\Pi}_{\cal U} \right] \notag \\
					&\le 2\mathbb{E}\exp\left(-\frac{\frac{1}{2}n_{kl}\Bigl(\frac{1}{\pi\sqrt{K}n_{kl}}\phi_k \|\tilde{v}_k\|\Bigr)^2}{\frac{1}{n_{kl}} \sum_{i \in \mathcal{C}_k \cap \mathcal{L}} \sum_{j \in \mathcal{C}_l \cap \mathcal{L}} \Omega_{ij} + \frac{1}{3}\frac{1}{\pi\sqrt{K}n_{kl}}\phi_k \|\tilde{v}_k\|}\right) \notag \\
					&= 2\mathbb{E}\exp\left( -\frac{\phi_k^2}{2\pi\sqrt{K}} \frac{\|\tilde{v}_k\|^2}{\pi\sqrt{K}\sum_{i \in \mathcal{C}_k \cap \mathcal{L}} \sum_{j \in \mathcal{C}_l \cap \mathcal{L}} \Omega_{ij} + \frac{1}{3} \phi_k \|\tilde{v}_k\|} \right) \notag \\
					&= 2\mathbb{E}\exp\left( -\frac{\phi_k^2}{2\pi\sqrt{K}} \frac{\|\tilde{v}_k\|^2}{\pi\sqrt{K}|(\tilde{v}_k)_l| + \frac{1}{3} \phi_k \|\tilde{v}_k\|} \right) \notag \\
					&= 2\mathbb{E}\exp\left( -\frac{\phi_k^2\|\tilde{v}_k\|}{2\pi\sqrt{K}} \frac{1}{\pi\sqrt{K}\frac{|(\tilde{v}_k)_l|}{\|\tilde{v}_k\|} + \frac{1}{3} \phi_k } \right) \notag \\
					&\le 2\mathbb{E}\exp\left( -\frac{\phi_k^2\|\tilde{v}_k\|}{2\pi\sqrt{K}(\pi \sqrt{K} + \frac{\pi}{3})}  \right) 
				\end{align}
				
				When $l = k$, $\{ A_{ij} - \Omega_{ij}: i, j \in \mathcal{C}_k \cap \mathcal{L}, i < j \}$ are a collection of  $\frac{1}{2}|\mathcal{C}_k \cap \mathcal{L}|(|\mathcal{C}_k \cap \mathcal{L}| - 1)$ independent random variables.  Furthermore, for any $i, j \in \mathcal{C}_k \cap \mathcal{L}, i < j$, 
				$$ \mathbb{E}  ( A_{ij} - \Omega_{ij}) = \mathbb{E} A_{ij} - \Omega_{ij} = \Omega_{ij}- \Omega_{ij} = 0$$
				Also,
				$$ -1 \le - \Omega_{ij} \le A_{ij} - \Omega_{ij}  \le A_{ij} \le 1$$
				So $| A_{ij} - \Omega_{ij}| \le 1$.
				
				Additionally,
				$$var(A_{ij} - \Omega_{ij}) = var(A_{ij}) =  \Omega_{ij}(1 - \Omega_{ij}) \le \Omega_{ij}$$
				
				Denote $n_{kk} = \frac{1}{2}|\mathcal{C}_k \cap \mathcal{L}|(|\mathcal{C}_k \cap \mathcal{L}| - 1)$, we have
				\begin{align}
					&\mathbb{P}\Bigl(|(v_k - \tilde{v}_k)_l| \ge \frac{1}{\pi\sqrt{K}}\phi_k \|\tilde{v}_k\|\Bigr) \notag \\
					&= \mathbb{P}\Bigl(\frac{1}{n_{kk}}|\sum_{i \in \mathcal{C}_k \cap \mathcal{L}} \sum_{j \in \mathcal{C}_k \cap \mathcal{L}} (A_{ij} - \Omega_{ij})| \ge \frac{1}{\pi\sqrt{K}n_{kk}}\phi_k \|\tilde{v}_k\|\Bigr) \notag \\
					&= \mathbb{P}\Bigl(\frac{1}{n_{kk}}|2\sum_{i < j \in \mathcal{C}_k \cap \mathcal{L}}  (A_{ij} - \Omega_{ij}) + \sum_{i \in \mathcal{C}_k \cap \mathcal{L}}\Omega_{ii}| \ge \frac{1}{\pi\sqrt{K}n_{kk}}\phi_k \|\tilde{v}_k\|\Bigr) \notag \\
					& \le \mathbb{P}\Bigl(\frac{1}{n_{kk}}|2\sum_{i < j \in \mathcal{C}_k \cap \mathcal{L}}  (A_{ij} - \Omega_{ij}) | \ge    \frac{1}{\pi\sqrt{K}n_{kk}}\phi_k \|\tilde{v}_k\| - \frac{1}{n_{kk}} \sum_{i \in \mathcal{C}_k \cap \mathcal{L}}\Omega_{ii}\Bigr) \notag \\
					& = \mathbb{P}\Bigl(\frac{1}{n_{kk}}|\sum_{i < j \in \mathcal{C}_k \cap \mathcal{L}}  (A_{ij} - \Omega_{ij}) | \ge    \frac{1}{2\pi\sqrt{K}n_{kk}}\phi_k \|\tilde{v}_k\| - \frac{1}{2n_{kk}} \sum_{i \in \mathcal{C}_k \cap \mathcal{L}}\Omega_{ii}\Bigr) \notag \\
					& = \mathbb{E}\left[\mathbb{P}\Bigl(\frac{1}{n_{kk}}|\sum_{i < j \in \mathcal{C}_k \cap \mathcal{L}}  (A_{ij} - \Omega_{ij}) | \ge   \frac{1}{2\pi\sqrt{K}n_{kk}}\phi_k \|\tilde{v}_k\| - \frac{1}{2n_{kk}} \sum_{i \in \mathcal{C}_k \cap \mathcal{L}}\Omega_{ii}\Bigr) | \hat{\Pi}_{\cal U} \right] 
				\end{align}
				
				Notice that 
				$$ \phi_k \ge C \frac{\|\theta_{\cal L}^{(k)}\|^2}{\|\theta_{\cal L}^{(k)}\|_1\|\theta\|_1}.$$
				
				Since $ \phi_k \ge 2\sqrt{2}\pi C_2 K^2 \frac{\|\theta_{\cal L}^{(k)}\|^2}{\|\theta_{\cal L}^{(k)}\|_1\|\theta\|_1} \|\tilde{v}_k\|$,
				\begin{align}
					\frac{1}{2\pi\sqrt{K}n_{kk}}\phi_k \|\tilde{v}_k\| &\ge \frac{2\sqrt{2}\pi K^2}{2\pi\sqrt{K}n_{kk}}\frac{\|\theta_{\cal L}^{(k)}\|^2}{\|\theta_{\cal L}^{(k)}\|_1\|\theta\|_1} \|\tilde{v}_k\| \notag \\
					(\text{(By \eqref{equ:thm2:tildevk})})& \ge\frac{ C_2 K\sqrt{2K}}{n_{kk}}\frac{\|\theta_{\cal L}^{(k)}\|^2}{\|\theta_{\cal L}^{(k)}\|_1\|\theta\|_1} \frac{1}{C_2K\sqrt{2K}} \|\theta_{\cal L}^{(k)}\|_1  \|\theta\|_1 \notag \\
					&= 2\| \frac{1}{2n_{kk}} \theta_{\cal L}^{(k)}\|^2 \notag \\
					&= 2 \frac{1}{2n_{kk}} \sum_{i \in \mathcal{C}_k \cap \mathcal{L}} \theta_i^2 \notag \\
					(\text{Identifiability condition})&= \frac{C}{\sqrt{2}\pi K^2} \frac{1}{2n_{kk}} \sum_{i \in \mathcal{C}_k \cap \mathcal{L}} \theta_i^2P_{kk} \notag \\
					&= 2 \frac{1}{2n_{kk}} \sum_{i \in \mathcal{C}_k \cap \mathcal{L}} \Omega_{ii}		
				\end{align}
				
				
				Therefore, by Lemma \ref{lemma:berstein},
				\begin{align}
					&\mathbb{P}\Bigl(|(v_k - \tilde{v}_k)_l| \ge \frac{1}{\pi\sqrt{K}}\phi_k \|\tilde{v}_k\|\Bigr) \notag \\
					& = \mathbb{E}\left[\mathbb{P}\Bigl(\frac{1}{n_{kk}}|\sum_{i < j \in \mathcal{C}_k \cap \mathcal{L}}  (A_{ij} - \Omega_{ij}) | \ge   \frac{1}{2\pi\sqrt{K}n_{kk}}\phi_k \|\tilde{v}_k\| - \frac{1}{2}\frac{1}{2\pi\sqrt{K}n_{kk}}\phi_k \|\tilde{v}_k\|\Bigr) | \hat{\Pi}_{\cal U} \right] \notag \\
					&\le 2\mathbb{E}\exp\left(-\frac{\frac{1}{2}n_{kk}\Bigl(   \frac{1}{4\pi\sqrt{K}n_{kk}}\phi_k \|\tilde{v}_k\| \Bigr)^2}{\frac{1}{n_{kk}} \sum_{i < j \in \mathcal{C}_k \cap \mathcal{L}} \Omega_{ij} + \frac{1}{3}\Bigl( \frac{1}{2\pi\sqrt{K}n_{kk}}\phi_k \|\tilde{v}_k\| - \frac{1}{2n_{kk}} \sum_{i \in \mathcal{C}_k \cap \mathcal{L}}\Omega_{ii} \Bigr)}\right) \notag \\
					&\le 2\mathbb{E}\exp\left( -\frac{1}{16\pi\sqrt{K}} \frac{\Bigl(\phi_k \|\tilde{v}_k\| \Bigr)^2}{\pi\sqrt{K} \cdot 2\sum_{i < j \in \mathcal{C}_k \cap \mathcal{L}} \Omega_{ij}  + \frac{1}{3} \phi_k \|\tilde{v}_k\|}\right) \notag \\
					& \le 2\mathbb{E}\exp\left( -\frac{1}{16\pi\sqrt{K}} \frac{\Bigl(\phi_k \|\tilde{v}_k\|\Bigr)^2}{\pi\sqrt{K} \sum_{i, j \in \mathcal{C}_k \cap \mathcal{L}} \Omega_{ij} + \frac{1}{3} \phi_k \|\tilde{v}_k\|}\right) \notag \\
					& = 2\mathbb{E}\exp\left( -\frac{\phi_k^2}{16\pi\sqrt{K}} \frac{ \|\tilde{v}_k\|^2}{\pi\sqrt{K} |(\tilde{v}_k)_k| + \frac{1}{3} \phi_k \|\tilde{v}_k\|}\right) \notag \\
					&= 2\mathbb{E}\exp\left( -\frac{\phi_k^2\|\tilde{v}_k\|}{16\pi\sqrt{K}} \frac{1}{\pi\sqrt{K}\frac{|(\tilde{v}_k)_k|}{\|\tilde{v}_k\|} + \frac{1}{3} \phi_k } \right) \notag \\
					&\le 2\mathbb{E}\exp\left( -\frac{\phi_k^2\|\tilde{v}_k\|}{16\pi\sqrt{K}(\pi \sqrt{K} + \frac{\pi}{3})}  \right) 
				\end{align}

				When $l \in \{K+1, ..., 2K\}$, recall 
				$$ \hat{\mathcal{C}}_l = \{i \in \mathcal{U}: \hat{\pi}_i  = e_{l - K}\}$$
				So
				$$
				(v_k)_l = (f(A^{(k)}; \hat{\Pi}_{\cal U}))_l 
				= A^{(k)} \Bigl(\hat{\Pi}_{\cal U}\Bigr)_l
				= \sum_{i \in \mathcal{C}_k \cap \mathcal{L}} \sum_{j \in \hat{\mathcal{C}}_l} A_{ij}
				$$
				
				Recall that
				$$ 
				(\tilde{v}_k)_l = (f(\Omega^{(k)}; \hat{\Pi}_{\cal U}))_l 
				= \Omega^{(k)} \Bigl(\hat{\Pi}_{\cal U}\Bigr)_l
				= \sum_{i \in \mathcal{C}_k \cap \mathcal{L}} \sum_{j \in \hat{\mathcal{C}}_l} \Omega_{ij}
				$$
				
				So 
				$$ |(v_k - \tilde{v}_k)_l| = \sum_{i \in \mathcal{C}_k \cap \mathcal{L}} \sum_{j \in \hat{\mathcal{C}}_l} (A_{ij} - \Omega_{ij})$$
				
				Since $\hat{\Pi}_{\cal U}$ only depends on $A_{\cal UU}$, it is independent of $ A_{\cal LU}$. Hence, given $\hat{\Pi}_{\cal U}$, $\{ A_{ij} - \Omega_{ij}: i \in \mathcal{C}_k \cap \mathcal{L}, j \in \hat{\mathcal{C}}_l \}$ are a collection of  $|\mathcal{C}_k \cap \mathcal{L}||\hat{\mathcal{C}}_l|$ independent random variables. Furthermore, given $\hat{\Pi}_{\cal U}$, for any $i \in \mathcal{C}_k \cap \mathcal{L}, j \in \hat{\mathcal{C}}_l$, 
				$$ \mathbb{E}  \Bigl[ A_{ij} - \Omega_{ij}|\hat{\Pi}_{\cal U} \Bigr] = \mathbb{E} \Bigl[ A_{ij} |\hat{\Pi}_{\cal U} \Bigr] - \Omega_{ij} = \Omega_{ij}- \Omega_{ij} = 0$$
				Also,
				$$ -1 \le - \Omega_{ij} \le A_{ij} - \Omega_{ij}  \le A_{ij} \le 1$$
				So $| A_{ij} - \Omega_{ij}| \le 1$.
				Additionally, 
				$$var\Bigl(A_{ij} - \Omega_{ij} \Bigr) = var\Bigl(A_{ij}|\hat{\Pi}_{\cal U}\Bigr) =  \Omega_{ij}(1 - \Omega_{ij}) \le \Omega_{ij}$$
				
				Therefore, denote $\hat{n}_{kl} = |\mathcal{C}_k \cap \mathcal{L}||\hat{\mathcal{C}}_l|$, by Lemma \ref{lemma:berstein},
				\begin{align}
					&\mathbb{P}\Bigl(|(v_k - \tilde{v}_k)_l| \ge \frac{1}{\pi\sqrt{K}}\phi_k \|\tilde{v}_k\|\Bigr) \notag \\
					&= \mathbb{E}\left[\mathbb{P}\Bigl(|(v_k - \tilde{v}_k)_l| \ge \frac{1}{\pi\sqrt{K}}\phi_k \|\tilde{v}_k\|\Bigr) | \hat{\Pi}_{\cal U} \right] \notag \\
					&= \mathbb{E}\left[\mathbb{P}\Bigl(\frac{1}{\hat{n}_{kl}}|\sum_{i \in \mathcal{C}_k \cap \mathcal{L}} \sum_{j \in \hat{\mathcal{C}}_l} (A_{ij} - \Omega_{ij})| \ge \frac{1}{\pi\sqrt{K}\hat{n}_{kl}}\phi_k \|\tilde{v}_k\|\Bigr) | \hat{\Pi}_{\cal U} \right] \notag \\
					&\le 2\mathbb{E} \exp\left(-\frac{\frac{1}{2}\hat{n}_{kl}\Bigl(\frac{1}{\pi\sqrt{K}\hat{n}_{kl}}\phi_k \|\tilde{v}_k\|\Bigr)^2}{\frac{1}{\hat{n}_{kl}} \sum_{i \in \mathcal{C}_k \cap \mathcal{L}} \sum_{j \in \hat{\mathcal{C}}_l} \Omega_{ij} + \frac{1}{3}\frac{1}{\pi\sqrt{K}\hat{n}_{kl}}\phi_k \|\tilde{v}_k\|}\right) \notag \\
					&= 2\mathbb{E}\exp\left( -\frac{\phi_k^2}{2\pi\sqrt{K}} \frac{\|\tilde{v}_k\|^2}{\pi\sqrt{K}\sum_{i \in \mathcal{C}_k \cap \mathcal{L}} \sum_{j \in \hat{\mathcal{C}}_l} \Omega_{ij} + \frac{1}{3} \phi_k \|\tilde{v}_k\|} \right) \notag \\
					&= 2\mathbb{E}\exp\left( -\frac{\phi_k^2}{2\pi\sqrt{K}} \frac{\|\tilde{v}_k\|^2}{\pi\sqrt{K}|(\tilde{v}_k)_l| + \frac{1}{3} \phi_k \|\tilde{v}_k\|} \right) \notag \\
					&= 2\mathbb{E}\exp\left( -\frac{\phi_k^2\|\tilde{v}_k\|}{2\pi\sqrt{K}} \frac{1}{\pi\sqrt{K}\frac{|(\tilde{v}_k)_l|}{\|\tilde{v}_k\|} + \frac{1}{3} \phi_k } \right) \notag \\
					&\le 2\mathbb{E}\exp\left( -\frac{\phi_k^2\|\tilde{v}_k\|}{2\pi\sqrt{K}(\pi \sqrt{K} + \frac{\pi}{3})}  \right) 
				\end{align}
				
				In all, for any $l \in [2K]$, 
				\begin{align} \label{equ:thm2:vkmain}
					\mathbb{P}\Bigl(|(v_k - \tilde{v}_k)_l| \ge \frac{1}{\pi\sqrt{K}}\phi_k \|\tilde{v}_k\|\Bigr) \le 2\mathbb{E}\exp\left( -\frac{\phi_k^2\|\tilde{v}_k\|}{16\pi\sqrt{K}(\pi \sqrt{K} + \frac{\pi}{3})}  \right) 
				\end{align}

				Plugging \eqref{equ:thm2:tildevk} into \eqref{equ:thm2:vkmain}, we obtain
				\begin{align}
					\mathbb{P}\Bigl(|(v_k - \tilde{v}_k)_l| \ge \frac{1}{\pi\sqrt{K}}\phi_k \|\tilde{v}_k\|\Bigr) &\le 2\exp\left( -\frac{\phi_k^2\|\theta_{\cal L}^{(k)}\|_1  \|\theta\|_1}{16\sqrt{2}\pi^2C_2 K^2(\sqrt{K} + \frac{1}{3})}  \right) \notag \\
					&=  2\exp\left( -\frac{C_6}{C^2}\phi_k^2\|\theta_{\cal L}^{(k)}\|_1  \|\theta\|_1  \right)
				\end{align}
				
				That concludes the proof.
				
				%
				
				
				%
				%

			\end{proof}

			\subsection{Proof of Lemma \ref{lemma:thm2:2}}
			The proof of Lemma \ref{lemma:thm2:2} is nearly the same as Lemma \ref{lemma:thm2:1}. For the completeness of our paper, we will present a proof of Lemma \ref{lemma:thm2:2} as follows.
			\begin{proof}
				When $l \in [K]$, 
				$$
				(v^*)_l = (f(X; \hat{\Pi}_{\cal U}))_l 
				= X {\bf 1}_{(l)}
				=  \sum_{j \in \mathcal{C}_l \cap \cal L} X_{j}
				$$
				
				Similarly,
				$$ 
				(\tilde{v}^*)_l = (f(\mathbb{E}[X] ; \hat{\Pi}_{\cal U}))_l 
				= \mathbb{E}[X] {\bf 1}_{(l)}
				=  \sum_{j \in \mathcal{C}_l \cap \cal L} \mathbb{E}[X_{j}]
				$$
				
				So 
				$$ |(v^* - \tilde{v}^*)_l| = \sum_{j \in \mathcal{C}_l \cap \cal L} (X_{j} - \mathbb{E}[X_{j}])$$
				
				%
				%
				
				When $l \in [K]$, since $\hat{\Pi}_{\cal U}$ only depends on $A_{\cal UU}$, it is independent of $X$. Hence, given $\hat{\Pi}_{\cal U}$, $\{ X_{j} - \mathbb{E}[X_{j}]: j \in \mathcal{C}_l \cap \mathcal{L} \}$ are a collection of  $|\mathcal{C}_l\cap \mathcal{L}|$ independent random variables. Furthermore, given $\hat{\Pi}_{\cal U}$, for any $j \in \mathcal{C}_l \cap \mathcal{L}$,
				$$ \mathbb{E}  \Bigl[X_{j} - \mathbb{E}[X_{j}]|\hat{\Pi}_{\cal U} \Bigr] = \mathbb{E} \Bigl[ X_{j} |\hat{\Pi}_{\cal U} \Bigr] - \mathbb{E}[X_{j}] = \mathbb{E}[X_{j}]- \mathbb{E}[X_{j}] = 0$$
				Also,
				$$ -1 \le - \mathbb{E}[X_{j}] \le X_{j} - \mathbb{E}[X_{j}]  \le X_{j} \le 1$$
				So $|X_{j} - \mathbb{E}[X_{j}]| \le 1$.
				Additionally, 
				$$var\Bigl(X_{j} - \mathbb{E}[X_{j}] \Bigr) = var\Bigl(X_{j}|\hat{\Pi}_{\cal U}\Bigr) =  \mathbb{E}[X_{j}](1 - \mathbb{E}[X_{j}]) \le \mathbb{E}[X_{j}]$$
				
				Therefore, denote $n_{l} = |\mathcal{C}_l \cap \mathcal{L}|$, by Lemma \ref{lemma:berstein},
				\begin{align}
					&\mathbb{P}\Bigl(|(v^* - \tilde{v}^*)_l| \ge \frac{1}{\pi\sqrt{K}}\phi_k \|\tilde{v}^*\|\Bigr) \notag \\
					&= \mathbb{E}\left[\mathbb{P}\Bigl(|(v^* - \tilde{v}^*)_l| \ge \frac{1}{\pi\sqrt{K}}\phi_k \|\tilde{v}^*\|\Bigr) | \hat{\Pi}_{\cal U} \right] \notag \\
					&= \mathbb{E}\left[\mathbb{P}\Bigl(\frac{1}{n_{l}}|\sum_{j \in \mathcal{C}_l \cap \mathcal{L}} (X_{j} - \mathbb{E}[X_{j}])| \ge \frac{1}{\pi\sqrt{K}n_{l}}\phi_k \|\tilde{v}^*\|\Bigr)| \hat{\Pi}_{\cal U} \right] \notag \\
					&\le 2\mathbb{E}\exp\left(-\frac{\frac{1}{2}n_{l}\Bigl(\frac{1}{\pi\sqrt{K}n_{l}}\phi_k \|\tilde{v}^*\|\Bigr)^2}{\frac{1}{n_{l}}  \sum_{j \in \mathcal{C}_l \cap \mathcal{L}} \mathbb{E}[X_{j}] + \frac{1}{3}\frac{1}{\pi\sqrt{K}n_{l}}\phi_k \|\tilde{v}^*\|}\right) \notag \\
					&= 2\mathbb{E}\exp\left( -\frac{\phi_k^2}{2\pi\sqrt{K}} \frac{\|\tilde{v}^*\|^2}{\pi\sqrt{K} \sum_{j \in \mathcal{C}_l \cap \mathcal{L}} \mathbb{E}[X_{j}] + \frac{1}{3} \phi_k \|\tilde{v}^*\|} \right) \notag \\
					&= 2\mathbb{E}\exp\left( -\frac{\phi_k^2}{2\pi\sqrt{K}} \frac{\|\tilde{v}^*\|^2}{\pi\sqrt{K}|(\tilde{v}^*)_l| + \frac{1}{3} \phi_k \|\tilde{v}^*\|} \right) \notag \\
					&= 2\mathbb{E}\exp\left( -\frac{\phi_k^2\|\tilde{v}^*\|}{2\pi\sqrt{K}} \frac{1}{\pi\sqrt{K}\frac{|(\tilde{v}^*)_l|}{\|\tilde{v}^*\|} + \frac{1}{3} \phi_k } \right) \notag \\
					&\le 2\mathbb{E}\exp\left( -\frac{\phi_k^2\|\tilde{v}^*\|}{2\pi\sqrt{K}(\pi \sqrt{K} + \frac{\pi}{3})}  \right) 
				\end{align}

				When $l \in \{K+1, ..., 2K\}$, define 
				$$ \hat{\mathcal{C}}_l = \{i \in \mathcal{U}: \hat{\pi}_i  = e_{l - K}\}$$
				Then
				$$
				(v^*)_l = (f(X; \hat{\Pi}_{\cal U}))_l 
				= X {\bf 1}_{(l)}
				=  \sum_{j \in \hat{\mathcal{C}}_l} X_{j}
				$$
				
				Similarly,
				$$ 
				(\tilde{v}^*)_l = (f(\mathbb{E}[X] ; \hat{\Pi}_{\cal U}))_l 
				= \mathbb{E}[X] {\bf 1}_{(l)}
				=  \sum_{j \in \hat{\mathcal{C}}_l} \mathbb{E}[X_{j}]
				$$
				
				So 
				$$ |(v^* - \tilde{v}^*)_l| = \sum_{j \in \hat{\mathcal{C}}_l} (X_{j} - \mathbb{E}[X_{j}])$$
				
				%
				%
				
				When $l \in [K]$, since $\hat{\Pi}_{\cal U}$ only depends on $A_{\cal UU}$, it is independent of $X$. Hence, given $\hat{\Pi}_{\cal U}$, $\{ X_{j} - \mathbb{E}[X_{j}]: j \in \hat{\mathcal{C}}_l \}$ are a collection of  $|\mathcal{C}_l\cap \mathcal{L}|$ independent random variables. Furthermore, given $\hat{\Pi}_{\cal U}$, for any $j \in \hat{\mathcal{C}}_l$,
				$$ \mathbb{E}  \Bigl[X_{j} - \mathbb{E}[X_{j}]|\hat{\Pi}_{\cal U} \Bigr] = \mathbb{E} \Bigl[ X_{j} |\hat{\Pi}_{\cal U} \Bigr] - \mathbb{E}[X_{j}] = \mathbb{E}[X_{j}]- \mathbb{E}[X_{j}] = 0$$
				Also,
				$$ -1 \le - \mathbb{E}[X_{j}] \le X_{j} - \mathbb{E}[X_{j}]  \le X_{j} \le 1$$
				So $|X_{j} - \mathbb{E}[X_{j}]| \le 1$.
				Additionally, 
				$$var\Bigl(X_{j} - \mathbb{E}[X_{j}] \Bigr) = var\Bigl(X_{j}|\hat{\Pi}_{\cal U}\Bigr) =  \mathbb{E}[X_{j}](1 - \mathbb{E}[X_{j}]) \le \mathbb{E}[X_{j}]$$
				
				Therefore, denote $\hat{n}_{l} = |\hat{\mathcal{C}}_l|$, by Lemma \ref{lemma:berstein},
				\begin{align}
					&\mathbb{P}\Bigl(|(v^* - \tilde{v}^*)_l| \ge \frac{1}{\pi\sqrt{K}}\phi_k \|\tilde{v}^*\|\Bigr) \notag \\
					&= \mathbb{E}\left[\mathbb{P}\Bigl(|(v^* - \tilde{v}^*)_l| \ge \frac{1}{\pi\sqrt{K}}\phi_k \|\tilde{v}^*\|\Bigr) | \hat{\Pi}_{\cal U} \right] \notag \\
					&= \mathbb{E}\left[\mathbb{P}\Bigl(\frac{1}{\hat{n}_{l}}|\sum_{j \in \hat{\mathcal{C}}_l} (X_{j} - \mathbb{E}[X_{j}])| \ge \frac{1}{\pi\sqrt{K}\hat{n}_{l}}\phi_k \|\tilde{v}^*\|\Bigr)| \hat{\Pi}_{\cal U} \right] \notag \\
					&\le 2\mathbb{E}\exp\left(-\frac{\frac{1}{2}\hat{n}_{l}\Bigl(\frac{1}{\pi\sqrt{K}\hat{n}_{l}}\phi_k \|\tilde{v}^*\|\Bigr)^2}{\frac{1}{\hat{n}_{l}}  \sum_{j \in \hat{\mathcal{C}}_l} \mathbb{E}[X_{j}] + \frac{1}{3}\frac{1}{\pi\sqrt{K}\hat{n}_{l}}\phi_k \|\tilde{v}^*\|}\right) \notag \\
					&= 2\mathbb{E}\exp\left( -\frac{\phi_k^2}{2\pi\sqrt{K}} \frac{\|\tilde{v}^*\|^2}{\pi\sqrt{K} \sum_{j \in \hat{\mathcal{C}}_l} \mathbb{E}[X_{j}] + \frac{1}{3} \phi_k \|\tilde{v}^*\|} \right) \notag \\
					&= 2\mathbb{E}\exp\left( -\frac{\phi_k^2}{2\pi\sqrt{K}} \frac{\|\tilde{v}^*\|^2}{\pi\sqrt{K}|(\tilde{v}^*)_l| + \frac{1}{3} \phi_k \|\tilde{v}^*\|} \right) \notag \\
					&= 2\mathbb{E}\exp\left( -\frac{\phi_k^2\|\tilde{v}^*\|}{2\pi\sqrt{K}} \frac{1}{\pi\sqrt{K}\frac{|(\tilde{v}^*)_l|}{\|\tilde{v}^*\|} + \frac{1}{3} \phi_k } \right) \notag \\
					&\le 2\mathbb{E}\exp\left( -\frac{\phi_k^2\|\tilde{v}^*\|}{2\pi\sqrt{K}(\pi \sqrt{K} + \frac{\pi}{3})}  \right) 
				\end{align}
				
				In all, for any $l \in [2K]$, 
				\begin{align} \label{equ:thm2:vkmain2}
					\mathbb{P}\Bigl(|(v^* - \tilde{v}^*)_l| \ge \frac{1}{\pi\sqrt{K}}\phi_k \|\tilde{v}^*\|\Bigr) \le 2\mathbb{E}\exp\left( -\frac{\phi_k^2\|\tilde{v}^*\|}{2\pi\sqrt{K}(\pi \sqrt{K} + \frac{\pi}{3})}  \right) 
				\end{align}
				
				Notice that   
				\begin{align} \label{equ:thm2:tildevk2}
					\|\tilde{v}^*\| &= \sqrt{\sum_{l \in [2K]}(\tilde{v}^*)_l^2} \notag \\
					(\text{Cauchy-Schwartz})& \ge \sqrt{\frac{1}{2K} (\sum_{l \in [2K]}(\tilde{v}^*)_l)^2} \notag \\
					&= \frac{1}{\sqrt{2K}} |\sum_{l \in [2K]}(\tilde{v}^*)_l| \notag \\
					&= \frac{1}{\sqrt{2K}} |\sum_{l=1}^{K}\sum_{j \in \hat{\mathcal{C}}_l} \mathbb{E}[X_{j}]  + \sum_{l=K+1}^{2K} \sum_{j \in \hat{\mathcal{C}}_l} \mathbb{E}[X_{j}] | \notag \\
					&= \frac{1}{\sqrt{2K}} | \sum_{j \in \mathcal{L}} \mathbb{E}[X_{j}]  +  \sum_{j \in \hat{\mathcal{U}}} \mathbb{E}[X_{j}] | \notag \\
					&= \frac{1}{\sqrt{2K}} \sum_{j \in [n]} \mathbb{E}[X_{j}]  \notag \\
					&\ge \frac{1}{\sqrt{2K}}  \sum_{j \in \mathcal{C}_{k^*}} \mathbb{E}[X_{j}]  \notag \\
					&= \frac{1}{\sqrt{2K}}  \sum_{j \in \mathcal{C}_k} \theta^* \theta_j P_{k^* k^*} \notag \\
					(\text{Identifiability condition})&= \frac{1}{\sqrt{2K}} \sum_{j \in \mathcal{C}_k} \theta^* \theta_j \notag \\
					&= \frac{1}{\sqrt{2K}} \theta^* \|\theta^{(k)}\|_1 \notag \\
					&\ge \frac{1}{\sqrt{2K}} \theta^*  \min_{l \in [K]} \|\theta^{(l)}\|_1 \notag \\
					(\text{Condition \eqref{cond-theta}})&\ge \frac{1}{C_2\sqrt{2K}} \theta^*  \max_{l \in [K]} \|\theta^{(l)}\|_1 \notag \\
					&\ge \frac{1}{C_2K\sqrt{2K}} \theta^*   \|\theta\|_1 
				\end{align}
				
				Plugging \eqref{equ:thm2:tildevk2} into \eqref{equ:thm2:vkmain2}, we obtain
				\begin{align}
					\mathbb{P}\Bigl(|(v^* - \tilde{v}^*)_l| \ge \frac{1}{\pi\sqrt{K}}\phi_k \|\tilde{v}^*\|\Bigr) &\le 2\exp\left( -\frac{\phi_k^2 \theta^*   \|\theta\|_1}{2\sqrt{2}\pi^2C_2 K^2(\sqrt{K} + \frac{1}{3})}  \right) \notag \\
					&= 2\exp\left( -\frac{C_7}{C^2}\phi_k^2 \theta^*   \|\theta\|_1  \right)
				\end{align}
				
				That concludes the proof.
				
				%
				
				
				%
				%

			\end{proof}
			\setcounter{corollary}{0}
			\section{Proof of Corollary \ref{sup:cor:Error}, \ref{sup:thm:Consistency}}
			
			\subsection{Proof of Corollary \ref{sup:cor:Error}}
			
			\begin{corollary} \label{sup:cor:Error}
				Consider the DCBM model where \eqref{cond-P}-\eqref{cond-theta} hold. 
				Suppose for some constants $b_0\in (0,1)$ and $\epsilon\in (0,1/2)$, $\hat{\Pi}_{\cal U}$ is $b_0$-correct with probability $1-\epsilon$.  
				When $b_0$ is properly small, there exist constants $C_0>0$ and $\bar{C}>0$, which do not depend on $(b_0, \epsilon)$, such that $
				\mathbb{P}(\hat{y}\neq k^*) \leq \epsilon + \bar{C} \sum_{k=1}^K \exp\Bigl(- C_0 \beta_n^2\|\theta\|_1\cdot \min\{\theta^*, \|\theta_{\cal L}^{(k)}\|_1\}\Bigr)$. 
			\end{corollary}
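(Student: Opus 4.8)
The plan is to chain the two main results already proved for $\hat\psi_k$: the deterministic population‑angle separation of Theorem~\ref{sup:thm:oracle} and the uniform deviation bound of Theorem~\ref{sup:thm:real}. First I would peel off the event $E=\{\hat\Pi_{\cal U}\text{ is }b_0\text{-correct}\}$, which costs $\mathbb{P}(E^c)\le\epsilon$; on $E$, Theorem~\ref{sup:thm:oracle} (applied with $b_0$ small enough, say $b_0\le 1/(32K^2\sqrt K C_2^2)$) gives $\psi_{k^*}(\hat\Pi_{\cal U})=0$ and $\min_{k\ne k^*}\psi_k(\hat\Pi_{\cal U})\ge c_0\beta_n$. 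Writing $\hat\psi_k=\hat\psi_k(\hat\Pi_{\cal U})$ and $\psi_k=\psi_k(\hat\Pi_{\cal U})$, the elementary observation from the main text is that whenever $\max_{1\le k\le K}|\hat\psi_k-\psi_k|<\tfrac12 c_0\beta_n$ one has $\hat y=k^*$: indeed $\hat\psi_{k^*}<\tfrac12 c_0\beta_n$, while for $k\ne k^*$, $\hat\psi_k\ge\psi_k-|\hat\psi_k-\psi_k|>\tfrac12 c_0\beta_n$, so the minimizing index is $k^*$.

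This already reduces the corollary to a tail bound on the deviations. Using a union bound and noting that Theorem~\ref{sup:thm:real} holds with no reference to $E$, I would write
\[
\mathbb{P}(\hat y\ne k^*)\ \le\ \epsilon\ +\ \mathbb{P}\Bigl(\max_{1\le k\le K}|\hat\psi_k-\psi_k|\ge\tfrac12 c_0\beta_n\Bigr)\ \le\ \epsilon\ +\ \sum_{k=1}^K \mathbb{P}\bigl(|\hat\psi_k-\psi_k|\ge\tfrac12 c_0\beta_n\bigr),
\]
so it remains to bound each summand by $\bar C\exp(-C_0\beta_n^2\|\theta\|_1\min\{\theta^*,\|\theta_{\cal L}^{(k)}\|_1\})$.

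For a fixed $k$, I would invoke Theorem~\ref{sup:thm:real}. By \eqref{cond-theta} the bias term there obeys $\|\theta_{\cal L}^{(k)}\|^2/(\|\theta_{\cal L}^{(k)}\|_1\|\theta\|_1)\le c_3\beta_n$, so if $c_3$ is taken properly small (namely $Cc_3\le\tfrac14 c_0$) this term is at most $\tfrac14 c_0\beta_n$. Then I would set
\[
\delta_k\ :=\ \exp\Bigl(-\tfrac{1}{16}(c_0/C)^2\,\beta_n^2\|\theta\|_1\cdot\min\{\theta^*,\|\theta_{\cal L}^{(k)}\|_1\}\Bigr).
\]
If $\delta_k<\tfrac12$, Theorem~\ref{sup:thm:real} with $\delta=\delta_k$ gives $|\hat\psi_k-\psi_k|\le C\sqrt{\log(1/\delta_k)/(\|\theta\|_1\min\{\theta^*,\|\theta_{\cal L}^{(k)}\|_1\})}+\tfrac14 c_0\beta_n=\tfrac14 c_0\beta_n+\tfrac14 c_0\beta_n=\tfrac12 c_0\beta_n$ with probability $\ge1-\delta_k$, hence $\mathbb{P}(|\hat\psi_k-\psi_k|\ge\tfrac12 c_0\beta_n)\le\delta_k$. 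If instead $\delta_k\ge\tfrac12$, then $\tfrac{1}{16}(c_0/C)^2\beta_n^2\|\theta\|_1\min\{\theta^*,\|\theta_{\cal L}^{(k)}\|_1\}\le\log 2$, so the target $\exp(-C_0\beta_n^2\|\theta\|_1\min\{\theta^*,\|\theta_{\cal L}^{(k)}\|_1\})\ge\tfrac12$ and the trivial bound $\mathbb{P}(\cdot)\le1$ already suffices. Either way $\mathbb{P}(|\hat\psi_k-\psi_k|\ge\tfrac12 c_0\beta_n)\le 2\exp(-C_0\beta_n^2\|\theta\|_1\min\{\theta^*,\|\theta_{\cal L}^{(k)}\|_1\})$ with $C_0:=\tfrac{1}{16}(c_0/C)^2$. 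Summing over $k$ gives the corollary with $\bar C=2$ and $C_0=\tfrac{1}{16}(c_0/C)^2$, neither depending on $(b_0,\epsilon)$.

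The proof is mostly bookkeeping, so there is no deep obstacle; the points that need care are: (i) the exponent depends on $k$ through $\min\{\theta^*,\|\theta_{\cal L}^{(k)}\|_1\}$, which is why one must apply Theorem~\ref{sup:thm:real} with a $k$‑specific $\delta_k$ rather than a single $\delta$ — and one must check that the constant $C$ in that theorem does not depend on $\delta$; (ii) the bias term $\|\theta_{\cal L}^{(k)}\|^2/(\|\theta_{\cal L}^{(k)}\|_1\|\theta\|_1)$ has to be absorbed into half of the separation gap $c_0\beta_n$, which is the only place the "properly small $c_3$" hypothesis is genuinely invoked; and (iii) the boundary regime $\delta_k\ge\tfrac12$, where Theorem~\ref{sup:thm:real} is vacuous, must be disposed of separately by the trivial probability bound and the choice of $\bar C$.
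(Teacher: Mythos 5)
Your proposal is correct and follows essentially the same route as the paper's proof: condition on the $b_0$-correctness event, invoke Theorem~\ref{sup:thm:oracle} for the angular separation $c_0\beta_n$, reduce misclassification to the event that some deviation $|\hat\psi_k-\psi_k|$ exceeds a constant fraction of $c_0\beta_n$, and then tune $\delta$ in Theorem~\ref{sup:thm:real} (absorbing the bias term via small $c_3$ and handling $\delta\ge 1/2$ by the trivial bound). The only cosmetic difference is that you apply Theorem~\ref{sup:thm:real} with a $k$-specific $\delta_k$ while the paper uses a single $\delta$ built from $\min_k\|\theta_{\cal L}^{(k)}\|_1$ and then upper-bounds it by the sum over $k$; both yield the stated bound with $\bar C=2$.
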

			
			\begin{proof}
				Let $B_0$ be the event that $\hat{\Pi}_{\cal U}$ is $b_0$-correct. Then, 
				
				\begin{align} \label{equ:cor1:bound1}
					&\mathbb{P}(\hat{y}\neq k^*) \notag \\
					&= \mathbb{P}(\hat{y}\neq k^*, B_0^C) + \mathbb{P}\Bigl(\hat{y}\neq k^*, B_0\Bigr) \notag \\
					&\le \mathbb{P}(B_0^C) + \mathbb{P}\Bigl(\Bigl\{\exists k \ne k^*, \hat{\psi}_k(\hat{\Pi}_{\cal U}) \le \hat{\psi}_{k^*}(\hat{\Pi}_{\cal U})\Bigr\} , B_0\Bigr) \notag \\
					&\le \epsilon +  \mathbb{P}\Bigl(\Bigl\{\exists k \ne k^*, \Bigl(\psi_k(\hat{\Pi}_{\cal U}) - \hat{\psi}_k(\hat{\Pi}_{\cal U})\Bigr) \notag \\ & \qquad \qquad + \Bigl(\hat{\psi}_{k^*}(\hat{\Pi}_{\cal U}) - \psi_{k^*}(\hat{\Pi}_{\cal U})\Bigr) \ge \Bigl(\psi_k(\hat{\Pi}_{\cal U}) - \psi_{k^*}(\hat{\Pi}_{\cal U})\Bigr) \Bigr\} , B_0\Bigr) \notag \\
					&\le \epsilon + \mathbb{P}\Bigl(\Bigl\{\exists k \ne k^*, \Bigl|\psi_k(\hat{\Pi}_{\cal U}) - \hat{\psi}_k(\hat{\Pi}_{\cal U})\Bigr| \notag \\ & \qquad \qquad + \Bigl|\hat{\psi}_{k^*}(\hat{\Pi}_{\cal U}) - \psi_{k^*}(\hat{\Pi}_{\cal U})\Bigr| \ge \Bigl|\psi_k(\hat{\Pi}_{\cal U}) - \psi_{k^*}(\hat{\Pi}_{\cal U})\Bigr| \Bigr\} , B_0\Bigr) 
				\end{align}
				By Theorem \ref{sup:thm:oracle}, when $b_0$ is properly small, $B_0$ implies that there exists a constant $c_0>0$, which does not depend on $b_0$, such that $\psi_{k^*}(\hat{\Pi}_{\cal U})=0$ and $\min_{k\neq k^*}\{\psi_k(\hat{\Pi}_{\cal U})\}\geq c_0\beta_n$.
				
				Substituting this result into \eqref{equ:cor1:bound1}, we have
				\begin{align} \label{equ:cor1:bound2}
					&\mathbb{P}(\hat{y}\neq k^*) \notag \\
					&\le \epsilon +  \mathbb{P}\Bigl(\Bigl\{\exists k \ne k^*, \Bigl|\psi_k(\hat{\Pi}_{\cal U}) - \hat{\psi}_k(\hat{\Pi}_{\cal U})\Bigr| + \Bigl|\hat{\psi}_{k^*}(\hat{\Pi}_{\cal U}) - \psi_{k^*}(\hat{\Pi}_{\cal U})\Bigr| \ge c_0 \beta_n \Bigr\} , B_0\Bigr) \notag \\
					& \le \epsilon + \mathbb{P}\Bigl(\exists k \ne k^*, \Bigl|\psi_k(\hat{\Pi}_{\cal U}) - \hat{\psi}_k(\hat{\Pi}_{\cal U})\Bigr| + \Bigl|\hat{\psi}_{k^*}(\hat{\Pi}_{\cal U}) - \psi_{k^*}(\hat{\Pi}_{\cal U})\Bigr| \ge c_0 \beta_n \Bigr) \notag \\
					& \le \epsilon + \mathbb{P}\Bigl(\exists k \in [K], \Bigl|\psi_k(\hat{\Pi}_{\cal U}) - \hat{\psi}_k(\hat{\Pi}_{\cal U})\Bigr| \ge \frac{1}{2} c_0 \beta_n \Bigr) \notag \\
					& \le \epsilon + \mathbb{P}\Bigl(\exists k \in [K], \Bigl|\psi_k(\hat{\Pi}_{\cal U}) - \hat{\psi}_k(\hat{\Pi}_{\cal U})\Bigr| > \frac{1}{3} c_0 \beta_n \Bigr)
				\end{align}
				
				According to Theorem \ref{sup:thm:real}, there exists a constant $C > 0$, such that for any $\delta\in (0,1/2)$, with probability $1-\delta$, simultaneously for $1\leq k\leq K$, 
				\[
				|\hat{\psi}_k(\hat{\Pi}_{\cal U})-\psi_k(\hat{\Pi}_{\cal U})|\leq C \left( \sqrt{\frac{\log(1/\delta)}{\|\theta\|_1\cdot \min\{\theta^*, \|\theta_{\cal L}^{(k)}\|_1\}}} + \frac{\|\theta_{\cal L}^{(k)}\|^2}{\|\theta_{\cal L}^{(k)}\|_1\|\theta\|_1}\right).   
				\]
				
				Take $C_0 = \frac{c_0 ^2}{36C^2}$, 
				$$\delta =  \exp\Bigl(-C_0 \beta_n^2 \|\theta\|_1\cdot \min\Bigl\{\theta^*, \min_{k \in [K]}\|\theta_{\cal L}^{(k)}\|_1\Bigr\}\Bigr)$$
				
				Then 
				\begin{align} \label{equ:cor1:thm2:1}
					C  \sqrt{\frac{\log(1/\delta)}{\|\theta\|_1\cdot \min\{\theta^*, \|\theta_{\cal L}^{(k)}\|_1\}}} &= C  \sqrt{\frac{\log\Bigl(1/\exp\Bigl(-C_0 \beta_n^2 \|\theta\|_1\cdot \min\Bigl\{\theta^*, \min_{k \in [K]}\|\theta_{\cal L}^{(k)}\|_1\Bigr\}\Bigr)\Bigr)}{\|\theta\|_1\cdot \min\{\theta^*, \|\theta_{\cal L}^{(k)}\|_1\}}}	\notag \\
					&= C  \sqrt{\frac{C_0 \beta_n^2 \|\theta\|_1\cdot \min\Bigl\{\theta^*, \min_{k \in [K]}\|\theta_{\cal L}^{(k)}\|_1\Bigr\}}{\|\theta\|_1\cdot \min\{\theta^*, \|\theta_{\cal L}^{(k)}\|_1\}}}	\notag \\
					&\le \frac{1}{6}c_0 \beta_n
				\end{align}
				
				On the other hand, take $c_3$ in condition \eqref{cond-theta} properly small such that $c_3 \le \frac{c_0}{6C}$, then according to condition \eqref{cond-theta},
				\begin{align} \label{equ:cor1:thm2:2}
					C \frac{\|\theta_{\cal L}^{(k)}\|^2}{\|\theta_{\cal L}^{(k)}\|_1\|\theta\|_1} \le C\cdot  c_3 \beta_n \le \frac{1}{6}c_0 \beta_n
				\end{align}
				
				Combining \eqref{equ:cor1:thm2:1} and \eqref{equ:cor1:thm2:2}, we have
				$$ C \left( \sqrt{\frac{\log(1/\delta)}{\|\theta\|_1\cdot \min\{\theta^*, \|\theta_{\cal L}^{(k)}\|_1\}}} + \frac{\|\theta_{\cal L}^{(k)}\|^2}{\|\theta_{\cal L}^{(k)}\|_1\|\theta\|_1}\right) \le \frac{1}{3}c_0 \beta_n$$
				
				Therefore, when $\delta < \frac{1}{2}$, by Theorem \ref{sup:thm:real}, with probability $1-\delta$, simultaneously for $1\leq k\leq K$, 
				\[
				|\hat{\psi}_k(\hat{\Pi}_{\cal U})-\psi_k(\hat{\Pi}_{\cal U})|\leq \frac{1}{3}c_0 \beta_n.   
				\]
				
				As a result, when $\delta < \frac{1}{2}$,
				$$
				\mathbb{P}\Bigl(\exists k \in [K], \Bigl|\psi_k(\hat{\Pi}_{\cal U}) - \hat{\psi}_k(\hat{\Pi}_{\cal U})\Bigr| > \frac{1}{3} c_0 \beta_n \Bigr) \le \delta
				$$
				
				When $\delta \ge \frac{1}{2}$,
				$$
				\mathbb{P}\Bigl(\exists k \in [K], \Bigl|\psi_k(\hat{\Pi}_{\cal U}) - \hat{\psi}_k(\hat{\Pi}_{\cal U})\Bigr| > \frac{1}{3} c_0 \beta_n \Bigr) \le 1 \le 2 \delta
				$$
				
				Hence in total, we have
				\begin{align} \label{equ:cor1:noise}
					\mathbb{P}\Bigl(\exists k \in [K], \Bigl|\psi_k(\hat{\Pi}_{\cal U}) - \hat{\psi}_k(\hat{\Pi}_{\cal U})\Bigr| > \frac{1}{3} c_0 \beta_n \Bigr) \le 2 \delta 
				\end{align}
				
				Plugging \eqref{equ:cor1:noise} into \eqref{equ:cor1:bound2}, we obtain
				\begin{align}
					\mathbb{P}(\hat{y}\neq k^*) &\le \epsilon + 2 \delta \notag \\
					&= \epsilon + 2\exp\Bigl(-C_0 \beta_n^2 \|\theta\|_1\cdot \min\Bigl\{\theta^*, \min_{k \in [K]}\|\theta_{\cal L}^{(k)}\|_1\Bigr\}\Bigr) \notag \\
					&\le \epsilon + 2 \sum_{k=1}^K \exp\Bigl(- C_0 \beta_n^2\|\theta\|_1\cdot \min\{\theta^*, \|\theta_{\cal L}^{(k)}\|_1\}\Bigr)
				\end{align}
				
				Choose $\bar{C} = 2$, we obtain 
				$$
				\mathbb{P}(\hat{y}\neq k^*) \leq \epsilon + \bar{C} \sum_{k=1}^K \exp\Bigl(- C_0 \beta_n^2\|\theta\|_1\cdot \min\{\theta^*, \|\theta_{\cal L}^{(k)}\|_1\}\Bigr)$$
				
				To conclude, when $b_0$ is properly small, there exist constants $C_0= \frac{c_0 ^2}{36C^2} >0$ and $\bar{C} = 2 >0$, which do not depend on $(b_0, \epsilon)$, such that $
				\mathbb{P}(\hat{y}\neq k^*) \leq \epsilon + \bar{C} \sum_{k=1}^K \exp\Bigl(- C_0 \beta_n^2\|\theta\|_1\cdot \min\{\theta^*, \|\theta_{\cal L}^{(k)}\|_1\}\Bigr)$. 
				
			\end{proof}

			\subsection{Proof of Corollary \ref{sup:thm:Consistency}}
			\begin{corollary} \label{sup:thm:Consistency}
				Consider the DCBM model where \eqref{cond-P}-\eqref{cond-theta} hold. We apply SCORE+ to obtain $\hat{\Pi}_{\cal U}$ and plug it into AngleMin+. As $n\to\infty$, suppose for some constant $q_0>0$, $\min_{i\in {\cal U}}\theta_i\geq q_0\max_{i\in {\cal U}}\theta_i$,  $\beta_n\|\theta_{\cal U}\|\geq q_0\sqrt{\log(n)}$, $\beta_n^2\|\theta\|_1\theta^*\to\infty$, and $\beta_n^2\|\theta\|_1 \min_{k}\{\|\theta_{\cal L}^{(k)}\|_1\}\to\infty$. Then,  $\mathbb{P}(\hat{y}\neq k^*) \to 0$, so the AngleMin+ estimate is consistent. 
			\end{corollary}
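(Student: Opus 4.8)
The plan is to combine Corollary~\ref{sup:cor:Error} with a known consistency result for SCORE+ on the sub-network restricted to ${\cal U}$. The structure of the argument is: (i) verify that under the stated conditions the unsupervised estimate $\hat{\Pi}_{\cal U}$ is $b_0$-correct with probability $1-o(1)$ for \emph{any} fixed small $b_0>0$, (ii) check that the error bound in Corollary~\ref{sup:cor:Error} tends to $0$, and (iii) conclude.

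For step (i), note that the sub-network on ${\cal U}$ still follows a DCBM with connectivity matrix $P$ and degree parameters $\{\theta_i\}_{i\in{\cal U}}$. The condition $\min_{i\in{\cal U}}\theta_i\geq q_0\max_{i\in{\cal U}}\theta_i$ controls the degree heterogeneity within ${\cal U}$, and $\beta_n\|\theta_{\cal U}\|\geq q_0\sqrt{\log n}$ is exactly the signal-to-noise condition needed for spectral methods such as SCORE+ to achieve vanishing clustering error (here $\beta_n=|\lambda_{\min}(P)|$ plays the role of the eigen-gap lower bound, and $\|\theta_{\cal U}\|$ controls the spectral norm of the noise $W_{\cal UU}$ relative to that of $\Omega_{\cal UU}$). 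I would cite the consistency theorem of \cite{Jin_2021} (or a Hoeffding/Bernstein plus Davis--Kahan argument) to assert that the fraction of misclustered nodes, suitably weighted by $\theta_i$, is $o_{\mathbb{P}}(1)$; since a sequence tending to $0$ in probability is eventually below any fixed $b_0$ with probability $1-o(1)$, we get that $\hat{\Pi}_{\cal U}$ is $b_0$-correct with probability $1-\epsilon_n$ where $\epsilon_n\to 0$. The mild balance condition \eqref{cond-theta} on $\max_k\|\theta^{(k)}\|_1/\min_k\|\theta^{(k)}\|_1$ and the identifiability scaling $P_{kk}=1$ are inherited from the global assumptions, so SCORE+'s regularity requirements are met.

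For step (ii), apply Corollary~\ref{sup:cor:Error} with this $b_0$ (chosen small enough to satisfy the ``properly small'' requirement of the corollary) and $\epsilon=\epsilon_n$. The bound becomes
\[
\mathbb{P}(\hat{y}\neq k^*)\;\leq\;\epsilon_n + \bar{C}\sum_{k=1}^{K}\exp\!\Bigl(-C_0\,\beta_n^2\|\theta\|_1\cdot\min\{\theta^*,\|\theta_{\cal L}^{(k)}\|_1\}\Bigr).
\]
Since $\epsilon_n\to 0$ by step (i), it remains to show each exponential term vanishes. For a fixed $k$, the exponent is $-C_0\beta_n^2\|\theta\|_1\min\{\theta^*,\|\theta_{\cal L}^{(k)}\|_1\}$, and $\min\{\theta^*,\|\theta_{\cal L}^{(k)}\|_1\}\geq \min\{\theta^*,\min_k\|\theta_{\cal L}^{(k)}\|_1\}$; the hypotheses $\beta_n^2\|\theta\|_1\theta^*\to\infty$ and $\beta_n^2\|\theta\|_1\min_k\{\|\theta_{\cal L}^{(k)}\|_1\}\to\infty$ together force $\beta_n^2\|\theta\|_1\min\{\theta^*,\min_k\|\theta_{\cal L}^{(k)}\|_1\}\to\infty$, so each of the $K$ (a fixed number) exponential terms tends to $0$. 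Hence the whole right-hand side is $o(1)$, giving $\mathbb{P}(\hat{y}\neq k^*)\to 0$.

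The main obstacle is step (i): one must be careful that the consistency guarantee for SCORE+ genuinely holds under precisely the conditions listed — in particular that the $\theta$-weighted misclustering error (as opposed to an unweighted count) is the quantity controlled, since the definition of $b_0$-correctness is in terms of $\sum_{i\in{\cal U}}\theta_i\mathbf{1}\{T\hat\pi_i\neq\pi_i\}\leq b_0\|\theta\|_1$. If the cited result only bounds an unweighted fraction, I would need the within-${\cal U}$ degree-balance condition $\min_{i\in{\cal U}}\theta_i\geq q_0\max_{i\in{\cal U}}\theta_i$ to convert between the two (which it does, up to the constant $q_0$, because then $\theta_i\asymp\|\theta_{\cal U}\|_1/|{\cal U}|$ uniformly). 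Everything else — the arithmetic that the exponents diverge, and the fact that $\epsilon_n$ can be absorbed — is routine. I would also remark that Remark~5 already shows the assumption on the unsupervised algorithm is only needed when the label information is not overwhelming, but here we simply invoke SCORE+ directly.
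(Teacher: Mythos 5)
Your proposal is correct and follows essentially the same route as the paper's own proof: invoke Corollary~\ref{sup:cor:Error} with $\epsilon=\epsilon_n$ equal to the probability that the SCORE+ output fails to be $b_0$-correct, use Theorem 2.2 of \cite{Jin_2021} under the stated degree-balance and signal conditions to get $\epsilon_n\to 0$, and use the two divergence hypotheses to kill the $K$ exponential terms. Your additional care about converting an unweighted misclustering fraction into the $\theta$-weighted quantity in the definition of $b_0$-correctness (via $\min_{i\in{\cal U}}\theta_i\geq q_0\max_{i\in{\cal U}}\theta_i$) is a detail the paper elides but does not change the argument.
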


			\begin{proof}
				By Corollary \ref{sup:thm:Consistency}, let $\epsilon$ be the probability that $\hat{\Pi}_{\cal U}$ obtained through SCORE+ is not $b_0$-correct, then
				$$
				\mathbb{P}(\hat{y}\neq k^*) \leq \epsilon + \bar{C} \sum_{k=1}^K \exp\Bigl(- C_0 \beta_n^2\|\theta\|_1\cdot \min\{\theta^*, \|\theta_{\cal L}^{(k)}\|_1\}\Bigr)$$
				
				Since  $\beta_n^2\|\theta\|_1\theta^*\to\infty$, and $\beta_n^2\|\theta\|_1 \min_{k}\{\|\theta_{\cal L}^{(k)}\|_1\}\to\infty$, 
				$$ \bar{C} \sum_{k=1}^K \exp\Bigl(- C_0 \beta_n^2\|\theta\|_1\cdot \min\{\theta^*, \|\theta_{\cal L}^{(k)}\|_1\}\Bigr) \to 0 $$
				
				By Theorem 2.2 in \cite{Jin_2021}, when $q_0$ is sufficiently large, $\min_{i\in {\cal U}}\theta_i\geq q_0\max_{i\in {\cal U}}\theta_i$ and  $\beta_n\|\theta_{\cal U}\|\geq q_0\sqrt{\log(n)}$ imply that $\epsilon \to 0$.
				
				Hence, in all, we have $\mathbb{P}(\hat{y}\neq k^*) \to 0$, so the AngleMin+ estimate is consistent.
			\end{proof}

			\section{Proof of Lemma \ref{sup:lemma:opt}} \label{sec:lemma2}
			
			As mentioned in section \ref{sec:gen}, in the main paper, for the smoothness and comprehensibility of the text, we do not present the most general form of Lemma \ref{sup:lemma:opt}. Here, we present both the original version, Lemma \ref{sup:lemma:opt}, and the generalized version, Lemma \ref{sup:lemma:opt'} below, where we relax the assumption that $\frac{\|\theta^{(1)}_{\cal L}\|_1}{\|\theta^{(2)}_{\cal L}\|_1}=\frac{\|\theta^{(1)}_{\cal U}\|_1}{ \|\theta^{(2)}_{\cal U}\|_1}=1$ to the much weaker assumption: the first part of condition \eqref{cond-theta} in the main text, which only assumes that $\|\theta^{(1)}\|_1$ and $\|\theta^{(2)}\|_1$ are of the same order. 
			
			\setcounter{lemma}{1}

			\setcounter{lemma}{1}
			%
			\begin{lemma} \label{sup:lemma:opt}
				\setcounter{equation}{11}
				Consider a DCBM with $K =2$ and $P=(1-b)I_2+b{\bf 1}_2{\bf 1}_2'$. Suppose $\theta^* = o(1)$, $\frac{\theta^*}{\min_{k} \|\theta^{(k)}_{\cal L}\|_1} = o(1)$, $1-b=o(1)$,  $\frac{\|\theta^{(1)}_{\cal L}\|_1}{\|\theta^{(2)}_{\cal L}\|_1}=\frac{\|\theta^{(1)}_{\cal U}\|_1}{ \|\theta^{(2)}_{\cal U}\|_1}=1$. 	There exists a constant $c_4>0$ such that
				\beq    \label{equ:opt211}
				\inf_{\Tilde{y}}\{\mathrm{Risk}(\Tilde{y})\} \ge c_4 \exp\Bigl\{-2[1+o(1)]\frac{(1-b)^2}{8} \cdot \theta^* (\|\theta_{\cal L}\|_1 + \|\theta_{\cal U}\|_1) \Bigr\},
				\eeq 
				where the infimum is taken over all measurable functions of $A$, $X$, and parameters $\Pi_{\cal L}$, $\Pi_{\cal U}$, $\Theta$, $P$, $\theta^*$.
				In AngleMin+, suppose the second part of condition \ref{cond-theta} holds with $c_3 = o(1)$, $\hat{\Pi}_{\cal U}$ is $\tilde{b}_0$-correct with $\tilde{b}_0 \overset{a.s.}{\to} 0$. There is a constant $C_4>0$ such that,
				\beq \label{equ:opt212}
				\mathrm{Risk}(\hat{y}) \le C_4 \exp\Bigl\{- [1-o(1)]\frac{(1-b)^2}{8} \cdot \theta^* \frac{(\|\theta_{\cal L}\|_1^2 + \|\theta_{\cal U}\|_1^2)^2}{\|\theta_{\cal L}\|_1^3 + \|\theta_{\cal U}\|_1^3}  \Bigr\}. 
				\eeq
				
			\end{lemma}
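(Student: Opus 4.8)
The plan is to prove the two halves of Lemma~\ref{sup:lemma:opt} separately: the information-theoretic lower bound via a two-point (Le Cam) argument, and the upper bound via the error bound for AngleMin+ already developed in Corollary~\ref{sup:cor:Error}.

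\textbf{Lower bound.} The key observation is that the hardest case for \emph{any} estimator is to distinguish $\pi^*=e_1$ from $\pi^*=e_2$ when all other parameters are fixed and known. So I would fix $(\Pi_{\cal L},\Pi_{\cal U},\Theta,P,\theta^*)$ at their true values and let $P_1,P_2$ denote the laws of $(A,X)$ under $\pi^*=e_1$ and $\pi^*=e_2$ respectively. Since $A$ does not depend on $\pi^*$, only the law of $X$ changes, and the two $X$-laws are products of Bernoulli's: $X_j\sim\mathrm{Bern}(\theta^*\theta_j\cdot\pi^{*\prime}P\pi_j)$. By Le Cam's two-point method, $\inf_{\tilde y}\mathrm{Risk}(\tilde y)\geq \tfrac12(1-\mathrm{TV}(P_1,P_2))\geq \tfrac12(1-\sqrt{1-\mathrm{Hel}^2(P_1,P_2)/2\cdot(2-\mathrm{Hel}^2)})$, or more cleanly via $1-\mathrm{TV}\geq \tfrac12\exp(-\mathrm{KL})$ or the Bretagnolle--Huber-type bound; I would use $1-\mathrm{TV}(P_1,P_2)\geq \tfrac12 \exp\{-\mathrm{KL}(P_1\|P_2)\}$ after symmetrizing, or directly bound via squared Hellinger $H^2(P_1,P_2)=\sum_j H^2(\mathrm{Bern}(p_{1j}),\mathrm{Bern}(p_{2j}))$. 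For node $j$ in community $\ell$, $p_{1j}=\theta^*\theta_j P_{1\ell}$ and $p_{2j}=\theta^*\theta_j P_{2\ell}$; since $P_{11}=P_{22}=1$ and $P_{12}=P_{21}=b$, these differ only by the factor ($1$ vs $b$), and $|p_{1j}-p_{2j}|=\theta^*\theta_j|1-b|$. With $1-b=o(1)$ and $\theta^*=o(1)$ the Hellinger distance of each Bernoulli pair is $[1+o(1)]\tfrac{(p_{1j}-p_{2j})^2}{8 p_j}$ up to the standard expansion, and summing over $j$ (using the balance assumption $\|\theta^{(1)}_{\cal L}\|_1=\|\theta^{(2)}_{\cal L}\|_1$, $\|\theta^{(1)}_{\cal U}\|_1=\|\theta^{(2)}_{\cal U}\|_1$ to make the leading term exactly $\tfrac{(1-b)^2}{8}\theta^*(\|\theta_{\cal L}\|_1+\|\theta_{\cal U}\|_1)$) gives $H^2=[1+o(1)]\tfrac{(1-b)^2}{8}\theta^*(\|\theta_{\cal L}\|_1+\|\theta_{\cal U}\|_1)$. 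Then $\mathrm{TV}\leq H\sqrt{2-H^2/2}$ so $1-\mathrm{TV}\geq$ a constant times $\exp\{-2[1+o(1)]H^2\}$ (this is where the extra ``2'' alluded to in the text enters), yielding \eqref{equ:opt211}.

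\textbf{Upper bound.} Here I would specialize Corollary~\ref{sup:cor:Error} (and behind it Theorems~\ref{sup:thm:oracle}, \ref{sup:thm:real}) to $K=2$ and track constants sharply rather than absorbing them. For $K=2$ with this $P$, the gap $\min_{k\neq k^*}\psi_k(\hat\Pi_{\cal U})$ can be computed essentially exactly from Lemma~\ref{sup:lemma:1}: with $G_{\cal LL}=\mathrm{diag}(\|\theta^{(1)}_{\cal L}\|_1,\|\theta^{(2)}_{\cal L}\|_1)$, $Q\approx I$ (since $\tilde b_0\to0$ so $\hat\Pi_{\cal U}$ is almost exactly $\Pi_{\cal U}$, forcing $Q=G_{\cal UU}^{-1}\Pi_{\cal U}'\Theta_{\cal UU}\hat\Pi_{\cal U}\to I$), $M\approx P(G_{\cal LL}^2+G_{\cal UU}^2)P$, and a direct $2\times2$ computation shows $1-\cos\psi_k=[1+o(1)]\tfrac{(1-b)^2}{2}\cdot\tfrac{(\text{something})}{(\text{something})}$. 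Then the misclassification probability is controlled by the large-deviation bound of Theorem~\ref{sup:thm:real}: the classifier errs only if the noise $|\hat\psi_k-\psi_k|$ exceeds half the gap, and the tail is $\exp\{-c\,(\text{gap})^2\|\theta\|_1\min\{\theta^*,\|\theta^{(k)}_{\cal L}\|_1\}\}$. The delicate point is to get the constant in the exponent right so that the exponent becomes $[1-o(1)]\tfrac{(1-b)^2}{8}\theta^*\tfrac{(\|\theta_{\cal L}\|_1^2+\|\theta_{\cal U}\|_1^2)^2}{\|\theta_{\cal L}\|_1^3+\|\theta_{\cal U}\|_1^3}$; this requires redoing the Bernstein step of Lemma~\ref{lemma:thm2:2} with exact (not order-of-magnitude) variance proxies, exploiting that $\theta^*=o(1)$ makes the Bernstein bound essentially Gaussian (the $bt/3$ term is negligible), and that $c_3=o(1)$ kills the bias term $\|\theta^{(k)}_{\cal L}\|^2/(\|\theta^{(k)}_{\cal L}\|_1\|\theta\|_1)$. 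The factor $\tfrac{(\|\theta_{\cal L}\|_1^2+\|\theta_{\cal U}\|_1^2)^2}{\|\theta_{\cal L}\|_1^3+\|\theta_{\cal U}\|_1^3}$ arises from the way $\|\tilde v_k\|$ (the ``signal length'' in the projected space) combines labeled and unlabeled contributions additively in $f$ rather than after normalization.

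\textbf{Main obstacle.} The routine part is the Le Cam / Hellinger machinery; the genuinely delicate part is the \emph{sharp-constant} bookkeeping in the upper bound. The earlier theorems were stated with generic constants $C,C_0$, so I cannot cite them as black boxes here---I would need to re-run the Bernstein argument of Lemma~\ref{lemma:thm2:2} and the $2\times2$ eigen-computation of $M$ keeping the exact leading coefficients, and verify that the $o(1)$ slack coming from $1-b=o(1)$, $\theta^*=o(1)$, $c_3=o(1)$, $\tilde b_0\to0$ all propagate consistently to give exactly the exponent $[1-o(1)]\tfrac{(1-b)^2}{8}\theta^*\tfrac{(\|\theta_{\cal L}\|_1^2+\|\theta_{\cal U}\|_1^2)^2}{\|\theta_{\cal L}\|_1^3+\|\theta_{\cal U}\|_1^3}$. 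Matching that precise algebraic form is where most of the real work lies.
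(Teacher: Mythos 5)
Your lower bound is essentially the paper's argument: fix all parameters, note that the law of $A$ does not depend on $\pi^*$ so only the product-Bernoulli law of $X$ matters, tensorize the Hellinger affinity, expand each per-coordinate term to leading order using $\theta^*=o(1)$ and $1-b=o(1)$, and convert back to total variation at the cost of the factor $2$ in the exponent. The paper routes this through $\inf_{\tilde y}\mathrm{Risk}(\tilde y)\ge\frac12\bigl(1-\frac12H^2\bigr)^2$ and a small lemma bounding $\log\bigl(xy+\sqrt{(1-x^2)(1-y^2)}\bigr)$ from below by $-\frac12(x-y)^2$ up to a $(1-a^2)^{-3/2}$ factor, but these are interchangeable with the inequalities you cite; this half is fine.

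The upper bound is where your proposed route has a genuine gap. You plan to specialize Corollary~\ref{cor:Error} (equivalently Theorems~\ref{thm:oracle}--\ref{thm:real}) to $K=2$ and ``track constants sharply.'' But that chain of inequalities is structurally lossy in the exponent, not merely sloppy about constants: it (i) replaces the misclassification event by the event $|\hat\psi_1-\psi_1|+|\hat\psi_2-\psi_2|\ge\psi_2$ and then requires one term to exceed half the gap (a factor $4$ loss in the quadratic exponent), (ii) further splits each $|\hat\psi_k-\psi_k|$ via the angle triangle inequality into $\psi(v_k,\tilde v_k)+\psi(v^*,\tilde v^*)$ (another constant-factor loss), and (iii) passes from $\|v-\tilde v\|$ to coordinatewise deviations by a union bound over $2K$ coordinates (a factor $2K$ loss). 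No amount of sharpening the Bernstein variance proxy inside Lemma~\ref{lemma:thm2:2} removes these multiplicative losses, so this route can only deliver an exponent $c\cdot\frac{(1-b)^2}{8}\theta^*\frac{(\|\theta_{\cal L}\|_1^2+\|\theta_{\cal U}\|_1^2)^2}{\|\theta_{\cal L}\|_1^3+\|\theta_{\cal U}\|_1^3}$ with $c$ a small absolute constant, not $1-o(1)$. The paper instead observes that, conditionally on $A$, the event $\{\hat y=2\}$ is \emph{exactly} the one-sided tail event $\langle v^*,w\rangle\ge0$ for the single linear statistic with $A$-measurable weights $w=\frac{v_2}{\|v_2\|}-\frac{v_1}{\|v_1\|}$, applies Bernstein once to this sum of independent Bernoullis, and then computes $\langle w,\tilde v^*\rangle$ and $\langle w\circ w,\tilde v^*\rangle$ to leading order (using $\tilde b_0\to0$, $c_3=o(1)$, and the high-probability event that $v_k$ is within $o(|1-b|)\|\tilde v_k\|$ of $\tilde v_k$); the sharp constant is the Gaussian-type ratio $\frac{\langle w,\tilde v^*\rangle^2}{2\langle w\circ w,\tilde v^*\rangle}$, and the specific algebraic form $\frac{(\|\theta_{\cal L}\|_1^2+\|\theta_{\cal U}\|_1^2)^2}{\|\theta_{\cal L}\|_1^3+\|\theta_{\cal U}\|_1^3}$ falls out of the explicit coordinates of $\tilde w$. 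You would need to switch to this direct analysis of the decision statistic to close the argument.
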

			\begin{manualtheorem}{\ref{sup:lemma:opt}'}\label{sup:lemma:opt'}
				Consider a DCBM with $K =2$ and $P=(1-b)I_2+b{\bf 1}_2{\bf 1}_2'$. Suppose  $1-b=o(1)$. 	There exists a constant $c_4>0$ such that
				\setcounter{equation}{74}
				\beq  \label{equ:opt21}
				\inf_{\Tilde{y}}\{\mathrm{Risk}(\Tilde{y})\} \ge c_4 \exp\Bigl\{-2[1+o(1)]\frac{(1-b)^2}{8} \cdot \theta^* (\|\theta_{\cal L}\|_1 + \|\theta_{\cal U}\|_1) \Bigr\},
				\eeq 
				where the infimum is taken over all measurable functions of $A$, $X$, and parameters $\Pi_{\cal L}$, $\Pi_{\cal U}$, $\Theta$, $P$, $\theta^*$.
				In AngleMin+, suppose  condition \ref{cond-theta} holds with $c_3 = o(1)$, $\theta^* = o(1)$, $\frac{\theta^*}{\min_{k} \|\theta^{(k)}_{\cal L}\|_1} = o(1)$, $\hat{\Pi}_{\cal U}$ is $\tilde{b}_0$-correct with $\tilde{b}_0 \overset{a.s.}{\to} 0$. There is a constant $C_4>0$ such that,
				\beq \label{equ:opt22'}
				\mathrm{Risk}(\hat{y}) \le C_4  \exp \left(-[1 - o(1)]\frac{(1 - b)^2}{8} \cdot \theta^* \frac{4}{\frac{\|\theta_{\cal L}^{(1)}\|_1^3 + \|\theta_{\cal U}^{(1)}\|_1^3}{(\|\theta_{\cal L}^{(1)}\|_1^2 + \|\theta_{\cal U}^{(1)}\|_1^2)^2} + \frac{\|\theta_{\cal L}^{(2)}\|_1^3 + \|\theta_{\cal U}^{(2)}\|_1^3}{(\|\theta_{\cal L}^{(2)}\|_1^2 + \|\theta_{\cal U}^{(2)}\|_1^2)^2}}\right). \tag{76'}
				\eeq
			\end{manualtheorem}
			
			When conditions of Lemma \ref{sup:lemma:opt} hold, conditions of Lemma \ref{sup:lemma:opt'} hold. Also, with $\frac{\|\theta^{(1)}_{\cal L}\|_1}{\|\theta^{(2)}_{\cal L}\|_1}=\frac{\|\theta^{(1)}_{\cal U}\|_1}{ \|\theta^{(2)}_{\cal U}\|_1}=1$ assumed in Lemma \ref{sup:lemma:opt}, 
			the results of Lemma \ref{sup:lemma:opt'} imply the results of \ref{sup:lemma:opt}.
			Therefore, it suffices to prove the generalized version, Lemma \ref{sup:lemma:opt'}.

			\setcounter{lemma}{9}
			We prove the lower bound \eqref{equ:opt21} and upper bound \eqref{equ:opt22'} separately.
			
			\subsection{Proof of Lower Bound \eqref{equ:opt21}} \label{sec:lower}
			\begin{proof}
				Let $\mathbb{P}^{(1)}$ and $\mathbb{P}^{(2)}$ be the joint distribution of $A$ and $X$ given $\pi^* = e_1$ and $\pi^* = e_2$, respectively. For a random variable or vector or matrix $Y$, let $\mathbb{P}^{(1)}_Y$ and $\mathbb{P}^{(2)}_Y$ be the distribution of $Y$ given $\pi^* = e_1$ and $\pi^* = e_2$, respectively.
				
				According to Theorem 2.2 in Section 2.4.2 of \cite{TsybakovAlexandreB2009ItNE},
				\begin{align} \label{equ:lemma2:low1}
					\inf_{\Tilde{y}}\{\mathrm{Risk}(\Tilde{y})\} &\ge 2 \cdot \frac{1}{2}(1 - \sqrt{H^2(\mathbb{P}^{(1)}, \mathbb{P}^{(2)})(1 - H^2(\mathbb{P}^{(1)}, \mathbb{P}^{(2)}) / 4)}) \notag \\
					&= 1 - \sqrt{1 - \Bigl(1 - \frac{1}{2}H^2(\mathbb{P}^{(1)}, \mathbb{P}^{(2)})\Bigr)^2} \notag \\
					&\ge 1 - \Bigl(1 - \frac{1}{2}\Bigl(1 - \frac{1}{2}H^2(\mathbb{P}^{(1)}, \mathbb{P}^{(2)})\Bigr)^2\Bigr) \notag \\
					&= \frac{1}{2} \Bigl(1 - \frac{1}{2}H^2(\mathbb{P}^{(1)}, \mathbb{P}^{(2)})\Bigr)^2
				\end{align}

				where 
				$$H^2(\mathbb{P}^{(1)}, \mathbb{P}^{(2)}) = \int \Bigl(\sqrt{d\mathbb{P}^{(1)}} - \sqrt{d\mathbb{P}^{(2)}}\Bigr)^2 $$
				is the Hellinger distance between $\mathbb{P}^{(1)}$  and $\mathbb{P}^{(2)}$.
				
				As in Section 2.4 of \cite{TsybakovAlexandreB2009ItNE}, one key property of Hellinger distance is that if $\mathbb{Q}^{(1)}$ and $\mathbb{Q}^{(2)}$ are product measures, $\mathbb{Q}^{(1)} = \otimes_{i=1}^{N} \mathbb{Q}^{(1)}_i$, $\mathbb{Q}^{(2)} = \otimes_{i=1}^{N} \mathbb{Q}^{(2)}_i$, then 
				
				\begin{equation} \label{equ:lemma2:hel}
					H^2(\mathbb{Q}^{(1)}, \mathbb{Q}^{(2)}) = 2 \Bigl(1 - \prod_{i=1}^{N}\Bigl(1 - \frac{H^2(\mathbb{Q}^{(1)}_i, \mathbb{Q}^{(2)}_i)}{2}\Bigr)\Bigr)
				\end{equation}
				
				Notice that for $k =1, 2$, since according to DCBM, $A, X_1, ..., X_n$ are independent,
				
				\begin{equation} \label{equ:lemma2:hel:P}
					\mathbb{P}^{(k)} = \mathbb{P}^{(k)}_A \times \otimes_{i=1}^{n} \mathbb{P}^{(k)}_{X_i}
				\end{equation}
				
				Combining \eqref{equ:lemma2:hel} and \eqref{equ:lemma2:hel:P}, we obtain
				
				\begin{equation} \label{equ:lemma2:hel:2}
					H^2(\mathbb{P}^{(1)}, \mathbb{P}^{(2)}) = 2 \Bigl(1 - \Bigl(1 - \frac{H^2(\mathbb{P}^{(1)}_{A}, \mathbb{P}^{(2)}_{A})}{2}\Bigr) \prod_{i=1}^{n}\Bigl(1 - \frac{H^2(\mathbb{P}^{(1)}_{X_i}, \mathbb{P}^{(2)}_{X_i})}{2}\Bigr)\Bigr)
				\end{equation}
				
				Given $\pi^* = e_1$ and $\pi^* = e_2$, according to DCBM,  the distribution of $A$ remains the same. As a result, $H^2(\mathbb{P}^{(1)}_{A}, \mathbb{P}^{(2)}_{A}) = 0$.
				
				On the other hand, for $k =1, 2$ and $i \in [n]$, according to DCBM model $$\mathbb{P}^{(k)}_{X_i} \sim Bern(\theta^* \theta_i P_{k k_i})$$
				where $k_i$ is the true label of node $i$.
				
				As a result,
				
				\begin{align}
					&1 - \frac{H^2(\mathbb{P}^{(1)}_{X_i}, \mathbb{P}^{(2)}_{X_i})}{2} \notag \\
					&= 1 - \frac{1}{2} \Bigl[\Bigl(\sqrt{\theta^* \theta_i P_{1 k_i}} - \sqrt{\theta^* \theta_i P_{2 k_i}}\Bigr)^2 + \Bigl(\sqrt{1 - \theta^* \theta_i P_{1 k_i}} - \sqrt{1 - \theta^* \theta_i P_{2 k_i}}\Bigr)^2\Bigr] \notag \\
					&= \sqrt{\theta^* \theta_i P_{1 k_i}\theta^* \theta_i P_{2 k_i}} + \sqrt{(1 - \theta^* \theta_i P_{1 k_i})(1 - \theta^* \theta_i P_{2 k_i})} 
				\end{align}
				
				For $x, y \in \mathbb{R}$, denote $g(x, y) =  \log(xy + \sqrt{(1 - x^2)(1-y^2)})$.
				
				Then, $1 - \frac{H^2(\mathbb{P}^{(1)}_{X_i}, \mathbb{P}^{(2)}_{X_i})}{2} = \exp g(\sqrt{\theta^* \theta_i P_{1 k_i}}, \sqrt{\theta^* \theta_i P_{2 k_i}})$
				
				Hence, by \eqref{equ:lemma2:hel:2}
				
				\begin{align} \label{equ:lemma2:helg}
					1 - \frac{1}{2}H^2(\mathbb{P}^{(1)}, \mathbb{P}^{(2)}) &= \Bigl(1 - \frac{H^2(\mathbb{P}^{(1)}_{A}, \mathbb{P}^{(2)}_{A})}{2}\Bigr) \prod_{i=1}^{n}\Bigl(1 - \frac{H^2(\mathbb{P}^{(1)}_{X_i}, \mathbb{P}^{(2)}_{X_i})}{2}\Bigr)  \notag \\
					&= \exp \sum_{i=1}^{n}g(\sqrt{\theta^* \theta_i P_{1 k_i}}, \sqrt{\theta^* \theta_i P_{2 k_i}})
				\end{align}
				
				Substituting \eqref{equ:lemma2:helg} into \eqref{equ:lemma2:low1}, we obtain
				\begin{equation} \label{equ:lemma2:hel2}
					\inf_{\Tilde{y}}\{\mathrm{Risk}(\Tilde{y})\} \ge \frac{1}{2}\exp \Bigl(2\sum_{i=1}^{n}g(\sqrt{\theta^* \theta_i P_{1 k_i}}, \sqrt{\theta^* \theta_i P_{2 k_i}})\Bigr)
				\end{equation} 
				
				To reduce the RHS of \eqref{equ:lemma2:hel2}, we need to evaluate $g$. The following lemma shows that $g(x, y) \approx -\frac{1}{2}(x - y)^2$.
				\begin{lemma} \label{lemma:log}
					Suppose that $0 \le x, y \le a < 1$. Then,
					$$g(x, y) \ge -\frac{1}{2(1- a ^2)^{\frac{3}{2}}}(x - y)^2$$
				\end{lemma}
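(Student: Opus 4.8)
The plan is to pass to a trigonometric parametrization. Since $0\le x,y\le a<1$, I would set $x=\cos\alpha$ and $y=\cos\beta$ with $\alpha=\arccos x$, $\beta=\arccos y$, both lying in $[\arccos a,\tfrac\pi2]$; then $\sqrt{1-x^2}=\sin\alpha$, $\sqrt{1-y^2}=\sin\beta$, and the cosine addition formula gives $xy+\sqrt{(1-x^2)(1-y^2)}=\cos\alpha\cos\beta+\sin\alpha\sin\beta=\cos(\alpha-\beta)$. Hence $g(x,y)=\log\cos\theta$ with $\theta:=|\alpha-\beta|$, which reduces the lemma to a one-variable estimate for $\log\cos\theta$ once $\theta$ is controlled by $|x-y|$.

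The next step is that control. Both $\alpha$ and $\beta$ lie in the interval $[\arccos a,\tfrac\pi2]$, so $\theta\le \tfrac\pi2-\arccos a=\arcsin a$, and in particular $\cos\theta\ge\cos(\arcsin a)=\sqrt{1-a^2}$. Writing $x-y=\cos\alpha-\cos\beta=-2\sin\tfrac{\alpha+\beta}{2}\sin\tfrac{\alpha-\beta}{2}$ and noting $\tfrac{\alpha+\beta}{2}\in[\arccos a,\tfrac\pi2]$, so that $\sin\tfrac{\alpha+\beta}{2}\ge\sin(\arccos a)=\sqrt{1-a^2}$, I obtain $|x-y|\ge 2\sqrt{1-a^2}\,\sin\tfrac\theta2$, i.e. $\sin^2\tfrac\theta2\le \dfrac{(x-y)^2}{4(1-a^2)}$.

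Finally I would combine these two facts with the elementary inequality $\log(1-t)\ge -\dfrac{t}{1-t}$ for $t\in[0,1)$ (immediate by checking that $t\mapsto\log(1-t)+\tfrac{t}{1-t}$ vanishes at $0$ and has nonnegative derivative). Taking $t=2\sin^2\tfrac\theta2=1-\cos\theta\in[0,1)$,
$$
g(x,y)=\log\cos\theta=\log\!\left(1-2\sin^2\tfrac\theta2\right)\ \ge\ -\frac{2\sin^2(\theta/2)}{1-2\sin^2(\theta/2)}=-\frac{2\sin^2(\theta/2)}{\cos\theta}\ \ge\ -\frac{2\sin^2(\theta/2)}{\sqrt{1-a^2}}\ \ge\ -\frac{(x-y)^2}{2(1-a^2)^{3/2}},
$$
which is the claimed bound. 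The only mildly delicate point is getting exactly the exponent $3/2$: the cruder route of bounding $g=\log u\ge-(1-u)/u$ and using only $u\ge 1-a^2$ loses an extra factor $(1-a^2)^{-1/2}$ (and in fact $1-u$ is \emph{not} bounded by $\tfrac{(x-y)^2}{2\sqrt{1-a^2}}$ in general), so one must use the sharper denominator bound $\cos\theta\ge\sqrt{1-a^2}$ inside $\log(1-t)\ge -t/(1-t)$, which the trigonometric picture makes transparent. I would also dispatch the degenerate cases $x=y$ or $a=0$, where both sides vanish; everything else is routine calculus.
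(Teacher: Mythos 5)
Your proof is correct and follows essentially the same route as the paper's: a trigonometric substitution turning $g(x,y)$ into $\log\cos(\text{angle difference})$, the elementary inequality $\log z\ge (z-1)/z$, a product-to-sum identity relating $(x-y)^2$ to $\sin^2$ of half the angle gap, and lower bounds of $\sqrt{1-a^2}$ coming from the restricted angle range. The only difference is cosmetic — you parametrize via $\arccos$ where the paper uses $\arcsin$, which merely swaps the roles of the sines and cosines in the intermediate bounds.
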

				
				\begin{proof}
					
					We first prove a short inequality on $\log$. For $z > 0 $, define $g_3(z) = \log(z) - \frac{z - 1}{z}$.
					Then
					$$\frac{d}{dz}g_3(z) = \frac{1}{z} - \frac{1}{z^2} = \frac{z - 1}{z^2}$$
					
					Therefore, when $z \le 1$, $\frac{d}{dz}g_3(z) \le 0$, $g_3(z)$ is monotonously deceasing, hence $g_3(z) \ge g_3(1) = 0$; when $z \ge 1$, $\frac{d}{dz}g_3(z) \ge 0$, $g_3(z)$ is monotonously increasing, hence $g_3(z) \ge g_3(1) = 0$.
					
					In all, $g_3(z) \ge 0$, so
					\beq \label{equ:log:basic}
					log(z) \ge \frac{z-1}{z}.
					\eeq
					
					Since $x, y \in [0, 1]$, we could define $\phi_x = \arcsin x \in [0, \frac{\pi}{2}]$, $\phi_y = \arcsin y \in [0, \frac{\pi}{2}]$. Then,
					
					\begin{align} \label{equ:lemma10:1}
						g(x, y) &= g(\sin \phi_x, \sin \phi_y) \notag \\
						&=\log\Bigl(\sin \phi_x \sin \phi_y + \sqrt{(1 - \sin^2 \phi_x)(1 - \sin^2 \phi_y)}\Bigr) \notag \\
						&= \log(\sin \phi_x \sin \phi_y + \cos \phi_x \cos \phi_y) \notag \\
						&= \log(\cos(\phi_x - \phi_y)) \notag \\
						(\text{By \eqref{equ:log:basic}})&\ge \frac{\cos(\phi_x - \phi_y) - 1}{\cos(\phi_x - \phi_y)} \notag \\
						&= \frac{-2\sin^2\frac{(\phi_x - \phi_y)}{2}}{\cos(\phi_x - \phi_y)} \notag \\
						&= -\frac{1}{2} \frac{4\sin^2\frac{(\phi_x - \phi_y)}{2}}{(\sin\phi_x - \sin \phi_y)^2} \frac{1}{\cos(\phi_x - \phi_y)} (\sin\phi_x - \sin \phi_y)^2 \notag \\
						&= -\frac{1}{2} \frac{4\sin^2\frac{(\phi_x - \phi_y)}{2}}{(2 \sin \frac{(\phi_x - \phi_y)}{2} \cos \frac{(\phi_x + \phi_y)}{2})^2} \frac{1}{\cos(\phi_x - \phi_y)} (x - y)^2 \notag \\
						&= -\frac{1}{2}(x - y)^2 \frac{1}{\cos(|\phi_x - \phi_y|) \cos^2 \frac{(\phi_x + \phi_y)}{2}}
					\end{align}
					
					Since $x, y \in [0, a]$, $\phi_x, \phi_y \in [0, \arcsin a]$.As a result,
					$|\phi_x - \phi_y|, \frac{(\phi_x + \phi_y)}{2} \in [0, \arcsin a]$. Plugging this result back into \eqref{equ:lemma10:1}, we have
					\begin{align}
						g(x, y) \ge -\frac{1}{2}(x - y)^2 \frac{1}{\cos(\arcsin a) \cos^2 \arcsin a} = -\frac{1}{2(1- a ^2)^{\frac{3}{2}}}(x - y)^2
					\end{align}
					
					This concludes the proof.
				\end{proof}
				
				Back to the proof of lower bound \eqref{equ:opt21}. Define $\theta^a = \sqrt{\theta^* \Bigl(\max_{i \in [n]} \theta_i \Bigr) (\max\{1, b\})}$, then for any $k \in \{1, 2\}$, $i \in [n]$, $\sqrt{\theta^* \theta_i P_{k k_i}} \le \theta^a$. Therefore, applying Lemma \ref{lemma:log}  in \eqref{equ:lemma2:hel2}, we have when $\theta^a < 1$,
				\begin{align} \label{equ:lemma2:low:final}
					\inf_{\Tilde{y}}\{\mathrm{Risk}(\Tilde{y})\} &\ge \frac{1}{2}\exp \Bigl(2\sum_{i=1}^{n} -\frac{1}{2(1- (\theta^a)^2)^{\frac{3}{2}}}\Bigl(\sqrt{\theta^* \theta_i P_{1 k_i}} -  \sqrt{\theta^* \theta_i P_{2 k_i}}\Bigr)^2 \Bigr) \notag \\
					&= \frac{1}{2}\exp \Bigl(-\frac{1}{(1- (\theta^a)^2)^{\frac{3}{2}}} \sum_{i=1}^{n} \theta^* \theta_i \Bigl(\sqrt{ P_{1 k_i}} -  \sqrt{P_{2 k_i}}\Bigr)^2 \Bigr) \notag \\
					&= \frac{1}{2}\exp \Bigl(-\frac{1}{(1- (\theta^a)^2)^{\frac{3}{2}}} \sum_{i=1}^{n} \theta^* \theta_i (1 - \sqrt{b})^2 \Bigr) \notag \\
					&= \frac{1}{2}\exp \Bigl(-\frac{1}{(1- (\theta^a)^2)^{\frac{3}{2}}}  \theta^* \|\theta\|_1 \frac{(1 - b)^2}{(1 + \sqrt{b})^2} \Bigr) \notag \\
					&= \frac{1}{2}\exp \Bigl(-2 \frac{4}{(1 + \sqrt{b})^2(1- (\theta^a)^2)^{\frac{3}{2}}} \frac{(1 - b)^2}{8} \cdot \theta^* (\|\theta_{\cal L}\|_1 + \|\theta_{\cal U}\|_1)  \Bigr)
				\end{align}
				
				Since $\theta^* = o(1)$, $b = 1 - o(1)$, and by DCBM model, $\max_{i \in [n]} \theta_i \le$, we have $\theta^a = \sqrt{\theta^* \Bigl(\max_{i \in [n]} \theta_i \Bigr) (\max\{1, b\})} = o(1)$. Since $b = 1 - o(1)$, $(1 + \sqrt{b})^2 \to 4$. Therefore, $\frac{4}{(1 + \sqrt{b})^2(1- (\theta^a)^2)^{\frac{3}{2}}} = 1 - o(1)$. Substituting  these results into \eqref{equ:lemma2:low:final}, we obtain 
				
				\beq  
				\inf_{\Tilde{y}}\{\mathrm{Risk}(\Tilde{y})\} \ge \frac{1}{2} \exp\Bigl\{-2[1+o(1)]\frac{(1-b)^2}{8} \cdot \theta^* (\|\theta_{\cal L}\|_1 + \|\theta_{\cal U}\|_1) \Bigr\},
				\eeq 
				
				This concludes our proof of lower bound \eqref{equ:opt21}, with $c_4 = \frac{1}{2}$.

			\end{proof}

			\subsection{Proof of Upper Bound \eqref{equ:opt22'}}
			\begin{proof}
				When $K=2$, $\mathrm{Risk}(\hat{y}) = \mathbb{P}(\hat{y} = 2| \pi^* = e_1) + \mathbb{P}(\hat{y} = 1| \pi^* = e_2)$. The evaluation of $\mathbb{P}(\hat{y} = 2| \pi^* = e_1)$ and  $\mathbb{P}(\hat{y} = 1| \pi^* = e_2)$ are exactly the same. Without the loss of generosity, we would focus on $\mathbb{P}(\hat{y} = 2| \pi^* = e_1)$.
				
				Recall that in the proof of \ref{sup:thm:real}, we define $v_k = f(A^{(k)}; \hat{\Pi}_{\cal U})$, $v^* = f(X; \hat{\Pi}_{\cal U})$, $\tilde{v}_k = f(\Omega^{(k)}; \hat{\Pi}_{\cal U})$, $\tilde{v}^* = f(EX; \hat{\Pi}_{\cal U})$, $k \in [K]$. 
				
				We have
				\begin{align}
					\mathbb{P}(\hat{y} = 2| \pi^* = e_1) &= \mathbb{P}(\psi(v_2, v^*) \ge \psi(v_1, v^*)| \pi^* = e_1) \notag \\
					&= \mathbb{P}\Bigl(\frac{\langle v^*, v_2\rangle}{\|v^*\| \|v_2\|} \ge \frac{\langle v^*, v_1\rangle}{\|v^*\| \|v_1\|}\Big| \pi^* = e_1\Bigr) \notag \\
					&= \mathbb{P}\Bigl(\langle v^*, \Bigl(\frac{v_2}{\|v_2\|} - \frac{v_1}{\|v_1\|}\Bigr)\rangle \ge 0\Big| \pi^* = e_1\Bigr) \notag \\
				\end{align}
				
				Recall that in proof of Lemma \ref{lemma:thm2:1} \ref{lemma:thm2:2}, we define $ \hat{\mathcal{C}}_k = \{i \in \mathcal{U}: \hat{\pi}_i  = e_{k - K}\}, k = K+1, ..., 2K$.
				Let $w = \frac{v_2}{\|v_2\|} - \frac{v_1}{\|v_1\|}$.
				Since when $l \in [K]$, 
				$$
				(v^*)_l = (f(X; \hat{\Pi}_{\cal U}))_l 
				= X {\bf 1}_{(l)}
				=  \sum_{j \in \mathcal{C}_l \cap \cal L} X_{j}
				$$
				; when $l \in \{K+1, ..., 2K\}$, 
				$$
				(v^*)_l = (f(X; \hat{\Pi}_{\cal U}))_l 
				= X {\bf 1}_{(l)}
				=  \sum_{j \in \hat{\mathcal{C}}_l} X_{j}
				$$
				
				we have
				\begin{align}
					\mathbb{P}(\hat{y} = 2| \pi^* = e_1) &=  \mathbb{P}\Bigl( \sum_{k=1}^{K}w_k \sum_{i \in \mathcal{C}_k \cap \mathcal{L}} X_i +  \sum_{k=K+1}^{2K}w_k \sum_{i \in  \hat{\mathcal{C}}_k} X_i \ge \Big| \pi^* = e_1\Bigr) \notag \\
					&=  \mathbb{P}\Bigl( \sum_{k=1}^{K}\sum_{i \in \mathcal{C}_k \cap \mathcal{L}} w_k  (X_i -\mathbb{E} X_i) +  \sum_{k=K+1}^{2K}\sum_{i \in  \hat{\mathcal{C}}_k} w_k  (X_i -\mathbb{E} X_i) \ge \notag \\
					&\qquad -\sum_{k=1}^{K}\sum_{i \in \mathcal{C}_k \cap \mathcal{L}} w_k   \mathbb{E} X_i -  \sum_{k=K+1}^{2K}\sum_{i \in  \hat{\mathcal{C}}_k} w_k  \mathbb{E} X_i \Big| \pi^* = e_1\Bigr) \notag \\
					&\le\mathbb{P}\Bigl( \Bigl|\sum_{k=1}^{K}\sum_{i \in \mathcal{C}_k \cap \mathcal{L}} w_k  (X_i -\mathbb{E} X_i) +  \sum_{k=K+1}^{2K}\sum_{i \in  \hat{\mathcal{C}}_k} w_k  (X_i -\mathbb{E} X_i)\Bigr| \ge \notag \\
					&\qquad \quad \Bigl|\sum_{k=1}^{K}\sum_{i \in \mathcal{C}_k \cap \mathcal{L}} w_k   \mathbb{E} X_i + \sum_{k=K+1}^{2K}\sum_{i \in  \hat{\mathcal{C}}_k} w_k  \mathbb{E} X_i\Bigr| \Big| \pi^* = e_1\Bigr) \notag \\
					&=\mathbb{E}\Big[\mathbb{P}\Bigl( \Bigl|\sum_{k=1}^{K}\sum_{i \in \mathcal{C}_k \cap \mathcal{L}} w_k  (X_i -\mathbb{E} X_i) +  \sum_{k=K+1}^{2K}\sum_{i \in  \hat{\mathcal{C}}_k} w_k  (X_i -\mathbb{E} X_i)\Bigr| \ge  \notag \\
					&\qquad \quad \Bigl|\sum_{k=1}^{K}\sum_{i \in \mathcal{C}_k \cap \mathcal{L}} w_k   \mathbb{E} X_i + \sum_{k=K+1}^{2K}\sum_{i \in  \hat{\mathcal{C}}_k} w_k  \mathbb{E} X_i\Bigr| \Big|A, \pi^* = e_1\Bigr)\Big]
				\end{align}

				
				Since $A$ and $X$ are independent and for $k \in [2K]$, $w_k$ is measurable with respect to $A$, given $A$, $\{w_k  (X_i -\mathbb{E} X_i): k \in [K], i \in \mathcal{C}_k \cap \mathcal{L}\} \cup \{w_k  (X_i -\mathbb{E} X_i): k \in [2K] \backslash [K], i \in \hat{\mathcal{C}}_k \}$ are a collection of independent random variables. Also, for any $k \in [2K]$, $i \in [n]$,
				$$ \mathbb{E}[ w_k  (X_i -\mathbb{E} X_i) | A] = w_k(\mathbb{E}[X_i|A] - \mathbb{E} X_i) = w_k(\mathbb{E}X_i - \mathbb{E} X_i) = 0$$
				Furthermore,
				$$ -1 \le - \mathbb{E} X_i \le X_i -\mathbb{E} X_i   \le X_i \le 1$$
				So $| w_k  (X_i -\mathbb{E} X_i) | \le \max_{k \in [2K]} |w_k|$.
				
				Additionally,
				$$var(w_k  (X_i -\mathbb{E} X_i)|A) = w_k^2 var(X_i) =  w_k^2 \mathbb{E} X_i(1 - \mathbb{E} X_i) \le w_k^2 \mathbb{E} X_i$$
				
				Let 
				$$t = \frac{1}{n}\Bigl|\sum_{k=1}^{K}\sum_{i \in \mathcal{C}_k \cap \mathcal{L}} w_k   \mathbb{E} X_i + \sum_{k=K+1}^{2K}\sum_{i \in  \hat{\mathcal{C}}_k} w_k  \mathbb{E} X_i\Bigr| = \frac{1}{n} |\langle w, \tilde{v}^*\rangle|$$
				$$ \sigma^2  = \frac{1}{n}\Bigl(\sum_{k=1}^{K}\sum_{i \in \mathcal{C}_k \cap \mathcal{L}} w_k^2   \mathbb{E} X_i + \sum_{k=K+1}^{2K}\sum_{i \in  \hat{\mathcal{C}}_k} w_k^2  \mathbb{E} X_i\Bigr) = \frac{1}{n} \langle w \circ w, \tilde{v}^*\rangle$$
				
				where $w \circ w$ is defined as $(w_1^2, ..., w_{2K}^2)$.
				
				Then by Lemma \ref{lemma:berstein},
				\begin{align} \label{equ:lemma2:bern1}
					\mathbb{P}(\hat{y} = 2| \pi^* = e_1) &\le \mathbb{E}\Big[\mathbb{P}\Bigl( \frac{1}{n}\Bigl|\sum_{k=1}^{K}\sum_{i \in \mathcal{C}_k \cap \mathcal{L}} w_k  (X_i -\mathbb{E} X_i) +  \sum_{k=K+1}^{2K}\sum_{i \in  \hat{\mathcal{C}}_k} w_k  (X_i -\mathbb{E} X_i)\Bigr| \ge  \notag \\
					&\qquad \quad \frac{1}{n}\Bigl(\sum_{k=1}^{K}\sum_{i \in \mathcal{C}_k \cap \mathcal{L}} w_k   \mathbb{E} X_i + \sum_{k=K+1}^{2K}\sum_{i \in  \hat{\mathcal{C}}_k} w_k  \mathbb{E} X_i\Bigr) \Big|A, \pi^* = e_1\Bigr)\Big] \notag \\
					& \le 2 \mathbb{E}\exp \left( -\frac{\frac{1}{2}n t^2}{\sigma^2 + \frac{1}{3} \max_{k \in [2K]} |w_k| t}\right) \notag \\
					&= 2 \mathbb{E}\exp \left( -\frac{\frac{1}{2}\langle w, \tilde{v}^*\rangle^2}{\langle w \circ w, \tilde{v}^*\rangle + \frac{1}{3} \max_{k \in [2K]} |w_k| |\langle w, \tilde{v}^*\rangle|}\right)
				\end{align}
				
				
				By Lemma \ref{lemma:thm2:1}, 
				when $ \phi  \ge 2\sqrt{2}\pi C_2 K^2 \frac{\|\theta_{\cal L}^{(k)}\|^2}{\|\theta_{\cal L}^{(k)}\|_1\|\theta\|_1}$,
				$$\mathbb{P}\Bigl(|(v_k - \tilde{v}_k)_l| \ge \frac{1}{\pi\sqrt{K}}\phi \|\tilde{v}_k\|\Bigr) \le 2\exp\left( -\frac{C_6}{C^2}\phi^2\|\theta_{\cal L}^{(k)}\|_1  \|\theta\|_1  \right)$$
				
				Take 
				$$\phi = \max\{2\sqrt{2}\pi C_2 K^2 \frac{\|\theta_{\cal L}^{(k)}\|^2}{\|\theta_{\cal L}^{(k)}\|_1\|\theta\|_1}, |1 - b|\Big(\frac{\theta^*}{\min_{k \in [K]}\|\theta_{\cal L}^{(k)}\|_1}\Big)^{0.25}\}$$
				
				Then because $\frac{\theta^*}{\min_{k \in [K]}\|\theta_{\cal L}^{(k)}\|_1} = o(1)$ and by condition \ref{cond-theta}, $\frac{\|\theta_{\cal L}^{(k)}\|^2}{\|\theta_{\cal L}^{(k)}\|_1\|\theta\|_1} \le c_3\beta_n = o(|1 - b|)$, we have $\phi = o(|1 - b|)$. Also,
				\begin{align}
					&\mathbb{P}\Bigl(\exists k \in [K], l \in [2K], |(v_k - \tilde{v}_k)_l| \ge \frac{1}{\pi\sqrt{K}}\phi \|\tilde{v}_k\|\Bigr) \notag \\
					&\le \sum_{k \in [K]} \sum_{l \in [2K]} \mathbb{P}\Bigl(|(v_k - \tilde{v}_k)_l| \ge \frac{1}{\pi\sqrt{K}}\phi \|\tilde{v}_k\|\Bigr) \notag \\
					&\le \sum_{k \in [K]} \sum_{l \in [2K]} 2\exp\left( -\frac{C_6}{C^2}\phi^2\|\theta_{\cal L}^{(k)}\|_1  \|\theta\|_1  \right) \notag \\
					&=\sum_{k \in [K]} \sum_{l \in [2K]} 2\exp\left( -\frac{C_6}{C^2}(1 - b)^2 \Big(\frac{\theta^*}{\min_{k \in [K]}\|\theta_{\cal L}^{(k)}\|_1}\Big)^{-0.5}\frac{\|\theta_{\cal L}^{(k)}\|_1}{\min_{k \in [K]}\|\theta_{\cal L}^{(k)}\|_1}  \theta^* \|\theta\|_1  \right) \notag \\
					&\le 4K^2 \exp\Bigl(- \frac{1}{o(1)} (1 - b)^2 \theta^* \|\theta\|_1\Bigr) \notag \\
					&\ll \inf_{\Tilde{y}}\{\mathrm{Risk}(\Tilde{y})\}
				\end{align}

				Hence, we can focus on the case where for all $k \in [K]$, $l \in [2K]$, $|(v_k - \tilde{v}_k)_l| = o(|1 - b|)\|\tilde{v}_k\|$. Until the end of the proof, we assume that we are under this case.
				
				We first evaluate $\langle w, \tilde{v}^* \rangle$. Let $\tilde{w} = \frac{\tilde{v}_2}{\|\tilde{v}_2\|} - \frac{\tilde{v}_1}{\|\tilde{v}_1\|}$. Then,
				\begin{align}
					|\langle w, \tilde{v}^* \rangle - \langle \tilde{w}, \tilde{v}^* \rangle| &= |\langle (\frac{v_2}{\|v_2\|} - \frac{v_1}{\|v_1\|}) - (\frac{\tilde{v}_2}{\|\tilde{v}_2\|} - \frac{\tilde{v}_1}{\|\tilde{v}_1\|}), \tilde{v}^*\rangle| \notag \\
					&= \|\tilde{v}^*\| \cdot |\langle (\frac{v_2}{\|v_2\|} - \frac{\tilde{v}_2}{\|\tilde{v}_2\|} ) - (\frac{v_1}{\|v_1\|} - \frac{\tilde{v}_1}{\|\tilde{v}_1\|}), \frac{\tilde{v}^*}{\|\tilde{v}^*\|} \rangle| \notag \\
					&= \|\tilde{v}^*\| \cdot |(\cos \psi(v_2, \tilde{v}^*) - \cos \psi(\tilde{v}_2, \tilde{v}^*))- (\cos \psi(v_1, \tilde{v}^*) - \cos \psi(\tilde{v}_1, \tilde{v}^*)) | \notag \\
					&= \|\tilde{v}^*\| \cdot |- 2 \sin \frac{\psi(v_2, \tilde{v}^*) - \psi(\tilde{v}_2, \tilde{v}^*)}{2} \sin \frac{\psi(v_2, \tilde{v}^*) + \psi(\tilde{v}_2, \tilde{v}^*)}{2}  \notag \\
					&\qquad \qquad + 2 \sin \frac{\psi(v_1, \tilde{v}^*) - \psi(\tilde{v}_1, \tilde{v}^*)}{2} \sin \frac{\psi(v_1, \tilde{v}^*) + \psi(\tilde{v}_1, \tilde{v}^*)}{2}| \notag \\
					&\le 2\|\tilde{v}^*\| \cdot \sin \frac{|\psi(v_2, \tilde{v}^*) - \psi(\tilde{v}_2, \tilde{v}^*)|}{2} \sin \frac{\psi(v_2, \tilde{v}^*) + \psi(\tilde{v}_2, \tilde{v}^*)}{2}\notag \\
					&\quad  + 2\|\tilde{v}^*\| \cdot \sin \frac{|\psi(v_1, \tilde{v}^*) - \psi(\tilde{v}_1, \tilde{v}^*)|}{2} \sin \frac{\psi(v_1, \tilde{v}^*) + \psi(\tilde{v}_1, \tilde{v}^*)}{2} \notag \\
					(\text{By Lemma \ref{lemma: angle}})&\le  2\|\tilde{v}^*\| \cdot \sin \frac{\psi(v_2, \tilde{v}_2)}{2} \sin \frac{2\psi(\tilde{v}_2, \tilde{v}^*) + \psi(v_2, \tilde{v}_2)}{2}\notag \\
					&\quad  + 2\|\tilde{v}^*\| \cdot \sin \frac{\psi(v_1, \tilde{v}_1)}{2} \sin \frac{2\psi(\tilde{v}_1, \tilde{v}^*) + \psi(v_1, \tilde{v}_1)}{2}
				\end{align}
				
				Since $\pi^* = 1$, $\tilde{b}_0 \to 0$, by Theorem \ref{sup:thm:oracle}, $\psi(\tilde{v}_1, \tilde{v}^*) = \psi_1 = 0$, $\psi(\tilde{v}_2, \tilde{v}^*) = \psi_2 \ge c_0 \beta_n = c_0 |1 - b|$. On the other hand, that for all $k \in [K]$, $l \in [2K]$, $|(v_k - \tilde{v}_k)_l| = o(|1 - b|)\|\tilde{v}_k\|$ indicates that $\|v_k - \tilde{v}_k\| = o(|1 - b|)\|\tilde{v}_k\|$. By lemma \ref{lemma: ang-dis}, this implies that $\psi(v_1, \tilde{v}_1) = o(|1-b|), k =1, 2$. Therefore,
				\begin{align}
					|\langle w, \tilde{v}^* \rangle - \langle \tilde{w}, \tilde{v}^* \rangle| &\le o(1) \cdot 2\|\tilde{v}^*\| \sin \frac{\psi(\tilde{v}_2, \tilde{v}^*)}{2}  \sin \psi(\tilde{v}_2, \tilde{v}^*) \notag \\
					&=  o(1) \cdot 2\|\tilde{v}^*\| \sin \frac{\psi(\tilde{v}_2, \tilde{v}^*)}{2}  2\sin \frac{\psi(\tilde{v}_2, \tilde{v}^*)}{2} \cos \frac{\psi(\tilde{v}_2, \tilde{v}^*)}{2}  \notag \\
					&= o(1) \cdot \|\tilde{v}^*\| \sin^2 \frac{\psi(\tilde{v}_2, \tilde{v}^*)}{2}\cos \frac{\psi(\tilde{v}_2, \tilde{v}^*)}{2}  \notag \\
					&\le   o(1) \cdot \|\tilde{v}^*\| \sin^2 \frac{\psi(\tilde{v}_2, \tilde{v}^*)}{2} \notag \\
					&\le o(1) \cdot \|\tilde{v}^*\| (1 - \cos  \psi(\tilde{v}_2, \tilde{v}^*)) \notag \\
					&= o(1) \cdot \|\tilde{v}^*\| (\cos \psi(\tilde{v}_1, \tilde{v}^*) - \cos  \psi(\tilde{v}_2, \tilde{v}^*)) \notag \\
					&= o(1)  \cdot \|\tilde{v}^*\| \langle \frac{\tilde{v}_1}{\|\tilde{v}_1\|} - \frac{\tilde{v}_2}{\|\tilde{v}_2\|}  ,  \frac{\tilde{v}^*}{\|\tilde{v}^*\|} \rangle \notag \\
					&= o(1) \cdot (- \langle \tilde{w}, \tilde{v}^* \rangle) \notag \\
					&\le o(1) \cdot |\langle \tilde{w}, \tilde{v}^* \rangle|
				\end{align}
				
				Therefore, $\langle w, \tilde{v}^* \rangle$ = $ (1 + o(1))  \langle \tilde{w}, \tilde{v}^* \rangle$.
				
				Let $\eta_{kl} = \sum_{\pi_i = e_k, \hat{\pi}_i = e_l} \theta_i$. Let $\mu_{a}^{(k)} = \|\theta_{a}^{(k)}\|_1, a \in \{\mathcal{L}, \mathcal{U}\}, k \in [K]$. Then,
				$$ \tilde{v}_1 = \mu_{\cal L}^{(1)} (\mu_{\cal L}^{(1)}, b\mu_{\cal L}^{(2)}, \mu_{\cal U}^{(1)} - \eta_{12} + b \eta_{21}, b\mu_{\cal U}^{(2)} + \eta_{12} -b \eta_{21})$$
				$$ \tilde{v}_2 = \mu_{\cal L}^{(2)} (b\mu_{\cal L}^{(1)}, \mu_{\cal L}^{(2)}, b\mu_{\cal U}^{(1)} - b\eta_{12} +  \eta_{21}, \mu_{\cal U}^{(2)} + b\eta_{12} -  \eta_{21})$$
				$$ \tilde{v}^* = \theta^* (\mu_{\cal L}^{(1)}, b\mu_{\cal L}^{(2)}, \mu_{\cal U}^{(1)} - \eta_{12} + b \eta_{21}, b\mu_{\cal U}^{(2)} + \eta_{12} - b\eta_{21})$$

				Hence,
				\begin{align}
					-\langle \tilde{w}, \tilde{v}^* \rangle &= \|\tilde{v}^*\| \Bigl(\frac{\langle \tilde{v}_1 \tilde{v}^* \rangle}{\|\tilde{v}_1\| \|\tilde{v}^*\|} -  \frac{\langle \tilde{v}_2 \tilde{v}^* \rangle}{\|\tilde{v}_2\| \|\tilde{v}^*\|}\Bigr) \notag \\
					&= \|\tilde{v}^*\| \left(1 - \sqrt{1 - (1 - b^2)^2 (\mu_{\cal L}^{(1)})^2 (\mu_{\cal L}^{(2)})^2 \frac{\mathcal{I}_1\mathcal{I}_2 - 4\mathcal{I}_3^2}{\|\tilde{v}_1\|^2 \|\tilde{v}_2\|^2}}\right)
				\end{align}
				
				where
				
				$$ \mathcal{I}_1 = (\mu_{\cal L}^{(1)})^2 + (\mu_{\cal U}^{(1)} - \eta_{12})^2 + \eta_{21}^2$$
				
				$$ \mathcal{I}_2 = (\mu_{\cal L}^{(2)})^2 + (\mu_{\cal U}^{(2)} - \eta_{21})^2 + \eta_{12}^2$$
				
				$$\mathcal{I}_3 = (\mu_{\cal U}^{(1)} - \eta_{12})\eta_{21} + (\mu_{\cal U}^{(2)} - \eta_{21})\eta_{12}$$
				
				Notice that
				\begin{align}
					&| \mathcal{I}_1\mathcal{I}_2 - 4\mathcal{I}_3^2 - ((\mu_{\cal L}^{(1)})^2 + (\mu_{\cal U}^{(1)})^2) ((\mu_{\cal L}^{(2)})^2 + (\mu_{\cal U}^{(2)})^2) | \notag \\
					&\le ((\mu_{\cal L}^{(1)})^2 + (\mu_{\cal U}^{(1)})^2 + (\mu_{\cal L}^{(2)})^2 + (\mu_{\cal U}^{(2)})^2) (\eta_{12}^2 + \eta_{21}^2) \notag \\
					& \quad + 2\eta_{12} \mu_{\cal U}^{(1)} ((\mu_{\cal L}^{(2)})^2 + (\mu_{\cal U}^{(2)})^2) + 2\eta_{21} \mu_{\cal U}^{(2)} ((\mu_{\cal L}^{(1)})^2 + (\mu_{\cal U}^{(1)})^2) \notag \\
					& \quad + 4 \mathcal{I}_3^2 \notag \\
					& \le ((\mu_{\cal L}^{(1)})^2 + (\mu_{\cal U}^{(1)})^2 + (\mu_{\cal L}^{(2)})^2 + (\mu_{\cal U}^{(2)})^2) (\eta_{12} + \eta_{21})^2 \notag \\
					& \quad + 2\eta_{12} \mu_{\cal U}^{(1)} ((\mu_{\cal L}^{(2)})^2 + (\mu_{\cal U}^{(2)})^2) + 2\eta_{21} \mu_{\cal U}^{(2)} ((\mu_{\cal L}^{(1)})^2 + (\mu_{\cal U}^{(1)})^2) \notag \\
					& \quad + 8 (\mu_{\cal U}^{(1)})^2 \eta_{21}^2 + 8 (\mu_{\cal U}^{(2)})^2 \eta_{12}^2 \notag \\
					& \le \|\theta\|_1^2(\eta_{12} + \eta_{21})^2 +2\|\theta\|_1^3(\eta_{12} + \eta_{21}) + 8\|\theta\|_1^2(\eta_{12} + \eta_{21})^2 \notag \\
					&\le (9 \tilde{b}_0^2+ 2 \tilde{b}_0) \|\theta\|_1^4 \notag \\
					&= o(1) \cdot  \|\theta\|_1^4
				\end{align}
				
				On the other hand,
				\begin{align}
					((\mu_{\cal L}^{(1)})^2 + (\mu_{\cal U}^{(1)})^2) ((\mu_{\cal L}^{(2)})^2 + (\mu_{\cal U}^{(2)})^2) &\ge \frac{1}{4}(\mu_{\cal L}^{(1)} + \mu_{\cal U}^{(1)})^2 (\mu_{\cal L}^{(2)} + \mu_{\cal U}^{(2)})^2  \notag \\
					&= \frac{1}{4}\|\theta^{(1)}\|_1^2\|\theta^{(2)}\|_1^2 \notag \\
					&= \frac{1}{4} \min_{k \in [K]} \|\theta^{(k)}\|_1^2 \max_{k \in [K]} \|\theta^{(k)}\|_1^2 \notag \\
					(\text{By condition \eqref{cond-theta}})&\ge  \frac{1}{4C_2}\max_{k \in [K]} \|\theta^{(k)}\|_1^4 \notag \\
					&\ge  \frac{1}{64C_2} \|\theta\|_1^4 \notag \\
				\end{align}
				
				Therefore, 
				$$ \mathcal{I}_1\mathcal{I}_2 - 4\mathcal{I}_3^2 = (1 + o(1)) ((\mu_{\cal L}^{(1)})^2 + (\mu_{\cal U}^{(1)})^2) ((\mu_{\cal L}^{(2)})^2 + (\mu_{\cal U}^{(2)})^2)$$
				
				Similarly,
				$$\frac{\|\tilde{v}_1\|^2 \|\tilde{v}_2\|^2}{(\mu_{\cal L}^{(1)})^2 (\mu_{\cal L}^{(2)})^2} = (1 + o(1))((\mu_{\cal L}^{(1)})^2 + (\mu_{\cal U}^{(1)})^2 + (\mu_{\cal L}^{(2)})^2 + (\mu_{\cal U}^{(2)})^2)^2$$
				
				$$\|\tilde{v}^*\| = (1 + o(1))\theta^* \sqrt{(\mu_{\cal L}^{(1)})^2 + (\mu_{\cal U}^{(1)})^2 + (\mu_{\cal L}^{(2)})^2 + (\mu_{\cal U}^{(2)})^2}$$
				
				Therefore,
				\begin{align} \label{equ:lemma2:wv}
					-\langle \tilde{w}, \tilde{v}^* \rangle &= \|\tilde{v}^*\| \left(1 - \sqrt{1 - (1+o(1))(1 - b^2)^2  \frac{ ((\mu_{\cal L}^{(1)})^2 + (\mu_{\cal U}^{(1)})^2) ((\mu_{\cal L}^{(2)})^2 + (\mu_{\cal U}^{(2)})^2)}{((\mu_{\cal L}^{(1)})^2 + (\mu_{\cal U}^{(1)})^2 + (\mu_{\cal L}^{(2)})^2 + (\mu_{\cal U}^{(2)})^2)^2}}\right) \notag \\
					&= \frac{1}{2}(1 + o(1)) \theta^* (1 - b^2)^2 \frac{ ((\mu_{\cal L}^{(1)})^2 + (\mu_{\cal U}^{(1)})^2) ((\mu_{\cal L}^{(2)})^2 + (\mu_{\cal U}^{(2)})^2)}{((\mu_{\cal L}^{(1)})^2 + (\mu_{\cal U}^{(1)})^2 + (\mu_{\cal L}^{(2)})^2 + (\mu_{\cal U}^{(2)})^2)^{1.5}} \notag \\
					&= 2(1 + o(1)) \theta^* (1 - b)^2 \frac{ ((\mu_{\cal L}^{(1)})^2 + (\mu_{\cal U}^{(1)})^2) ((\mu_{\cal L}^{(2)})^2 + (\mu_{\cal U}^{(2)})^2)}{((\mu_{\cal L}^{(1)})^2 + (\mu_{\cal U}^{(1)})^2 + (\mu_{\cal L}^{(2)})^2 + (\mu_{\cal U}^{(2)})^2)^{1.5}} 
				\end{align}
				
				We turn to $\langle w \circ w, \tilde{v}^*\rangle$.

				Denote that 
				$$\tilde{\nu}_1 = \frac{1}{\mu_{\cal L}^{(1)} }\tilde{v}_1 = (\mu_{\cal L}^{(1)}, b\mu_{\cal L}^{(2)}, \mu_{\cal U}^{(1)} - \eta_{12} + b \eta_{21}, b\mu_{\cal U}^{(2)} + \eta_{12} -b \eta_{21})$$
				
				$$\tilde{\nu}_2 = \frac{1}{\mu_{\cal L}^{(2)} }\tilde{v}_2 =  (b\mu_{\cal L}^{(1)}, \mu_{\cal L}^{(2)}, b\mu_{\cal U}^{(1)} - b\eta_{12} +  \eta_{21}, \mu_{\cal U}^{(2)} + b\eta_{12} -  \eta_{21})$$
				
				One simple but useful fact is that $\|\tilde{\nu}_1\|, \|\tilde{\nu}_2\| = O(\|\theta\|_1)$. The reason is that for $k \in [K]$
				
				$$ \|\tilde{\nu}_k\|  = O(\|\tilde{\nu}_k\|_1) = O(\mu_{\cal L}^{(1)} + b\mu_{\cal L}^{(2)} + \mu_{\cal U}^{(1)} + b\mu_{\cal U}^{(2)} ) = O(\|\theta\|_1)$$
				Notice that
				
				\begin{align}
					\tilde{w}_3 &= \frac{b\mu_{\cal U}^{(1)} - b\eta_{12} +  \eta_{21}}{\|\tilde{\nu}_2\|} - \frac{\mu_{\cal U}^{(1)} - \eta_{12} + b \eta_{21}}{\|\tilde{\nu}_1\|} \notag \\
					&= (b\mu_{\cal U}^{(1)} - b\eta_{12} +  \eta_{21}) (\frac{1}{\|\tilde{\nu}_2\|} - \frac{1}{\|\tilde{\nu}_1\|}) - \frac{1 - b}{\|\tilde{\nu}_1\|}(\mu_{\cal U}^{(1)} - \eta_{12} - \eta_{21}) \notag \\
					&= -\frac{b\mu_{\cal U}^{(1)} - b\eta_{12} +  \eta_{21}}{\|\tilde{\nu}_1\|} (1 - \frac{\|\tilde{\nu}_1\|}{\|\tilde{\nu}_2\|}) - \frac{1 - b}{\|\tilde{\nu}_1\|}\mu_{\cal U}^{(1)} + o(1 - b) \notag \\
					&= -\frac{b\mu_{\cal U}^{(1)} - b\eta_{12} +  \eta_{21}}{\|\tilde{\nu}_1\|} \left(1 - \sqrt{1 - \frac{\|\tilde{\nu}_2\|^2 - \|\tilde{\nu}_1\|^2}{\|\tilde{\nu}_2\|^2}}\right) - \frac{1 - b}{\|\tilde{v}_1\|}\mu_{\cal U}^{(1)} + o(1 - b) 
				\end{align}
				
				Since
				\begin{align}
					\|\tilde{\nu}_2\|^2 - \|\tilde{\nu}_1\|^2 &= (1 - b^2) ((\mu_{\cal L}^{(2)})^2 + (\mu_{\cal U}^{(2)} - \eta_{21})^2 - (\mu_{\cal L}^{(1)})^2 - (\mu_{\cal U}^{(1)} - \eta_{12})^2) \notag \\
					&= (1 - b^2) ((\mu_{\cal L}^{(2)})^2 + (\mu_{\cal U}^{(2)})^2 - (\mu_{\cal L}^{(1)})^2 - (\mu_{\cal U}^{(1)})^2 + o(1) \cdot \|\theta\|_1^2)
				\end{align}
				
				We have 
				\begin{align}
					\tilde{w}_3 &=  -\frac{b\mu_{\cal U}^{(1)} +  o(1) \cdot \|\theta\|_1}{\|\tilde{\nu}_1\|} \frac{1}{2\|\tilde{\nu}_2\|^2}(1 - b^2) ((\mu_{\cal L}^{(2)})^2 + (\mu_{\cal U}^{(2)})^2 - (\mu_{\cal L}^{(1)})^2 - (\mu_{\cal U}^{(1)})^2 + o(1) \cdot \|\theta\|_1^2)  \notag \\
					&\qquad - \frac{1 - b}{\|\tilde{\nu}_1\|}\mu_{\cal U}^{(1)} + o(1 - b) \notag \\
					&= - \frac{1 - b}{\|\tilde{\nu}_1\|}\mu_{\cal U}^{(1)}\left(1 + \frac{(\mu_{\cal L}^{(2)})^2 + (\mu_{\cal U}^{(2)})^2 - (\mu_{\cal L}^{(1)})^2 - (\mu_{\cal U}^{(1)})^2}{\|\tilde{\nu}_2\|^2}\right) + o(1 - b)
				\end{align}
				$$$$
				
				Since for all $k \in [K]$, $l \in [2K]$, $|(v_k - \tilde{v}_k)_l| = o(|1 - b|)\|\tilde{v}_k\|$, we have $|w_k - \tilde{w}_k| = o(|1 - b|), k = 1, 2, ..., 2K$. Hence, denote $\gamma = \frac{(\mu_{\cal L}^{(2)})^2 + (\mu_{\cal U}^{(2)})^2 - (\mu_{\cal L}^{(1)})^2 - (\mu_{\cal U}^{(1)})^2}{\|\tilde{\nu}_2\|^2}$, we obtain
				\beq \label{equ:lemma2:w3}
				w_3 = - \frac{1 - b}{\|\tilde{\nu}_1\|}\mu_{\cal U}^{(1)} (1 + \gamma) + o(1 - b) 
				\eeq
				
				Similarly, we can show that
				
				\beq \label{equ:lemma2:w1}
				w_1 = - \frac{1 - b}{\|\tilde{\nu}_1\|}\mu_{\cal L}^{(1)}(1 + \gamma) + o(1 - b) 
				\eeq
				
				\beq \label{equ:lemma2:w2}
				w_2 = \frac{1 - b}{\|\tilde{\nu}_2\|}\mu_{\cal L}^{(2)}(1 - \gamma) + o(1 - b) 
				\eeq
				
				\beq \label{equ:lemma2:w4}
				w_4 = \frac{1 - b}{\|\tilde{\nu}_2\|}\mu_{\cal U}^{(2)}(1 - \gamma) + o(1 - b) 
				\eeq
				
				As a result,
				
				\begin{align}
					&\langle w \circ w, \tilde{\nu}^*\rangle \notag \\
					&=  \sum_{a \in \{\mathcal{L}, \mathcal{U}\}} \sum_{k=1}^{K} (\frac{1 - b}{\|\tilde{\nu}_k\|}\mu_{a}^{(k)}(1 - (-1)^{k}\gamma) + o(1 - b))^2 \theta^*\mu_{a}^{(k)} \notag \\
					&= \theta^* (1 - b)^2\sum_{a \in \{\mathcal{L}, \mathcal{U}\}} \sum_{k=1}^{K} \left(\frac{(\mu_{a}^{(k)})^3(1 - (-1)^{k}\gamma)^2}{\|\tilde{\nu}_k\|^2} + 2 o(1)  \frac{(\mu_{a}^{(k)})^2(1 - (-1)^{k}\gamma)}{\|\tilde{\nu}_k\|^2} + o(1)^2 \frac{\mu_{a}^{(k)}}{\|\tilde{\nu}_k\|^2}\right)
				\end{align}
				
				When $\|\theta\|_1 = o(1)$, the bounds $\eqref{equ:opt21}$ and $\eqref{equ:opt22'}$ both become trivial, so we can focus on the case where  $\|\theta\|_1 \ge O(1)$.
				
				In this case, $\frac{(\mu_{a}^{(k)})^2(1 - (-1)^{k}\gamma)}{\|\tilde{\nu}_k\|^2} \le O(1)$, $\frac{\mu_{a}^{(k)}}{\|\tilde{\nu}_k\|^2} \le O(1)$. On the other hand, for $k \in [K]$,
				\begin{align}
					\sum_{a \in \{\mathcal{L}, \mathcal{U}\}} \frac{(\mu_{a}^{(k)})^3}{\|\tilde{\nu}_k\|^2} &= \frac{(\mu_{\cal L}^{(k)})^3 + (\mu_{\cal U}^{(k)})^3}{\|\tilde{\nu}_k\|^2} \notag \\
					(\text{Holder Inequality})&\ge \frac{(\mu_{\cal L}^{(k)} + \mu_{\cal U}^{(k)})^3}{4\|\tilde{\nu}_k\|^2} \notag \\
					&= \frac{\|\theta^{(k)}\|_1^3}{4\|\tilde{\nu}_k\|^2} \notag \\
					&\ge \frac{\min_{k \in [K]} \|\theta^{(k)}\|_1^3}{4\|\tilde{\nu}_k\|^2} \notag \\
					(\text{By \eqref{cond-theta}})& \ge \frac{\max_{k \in [K]} \|\theta^{(k)}\|_1^3}{4C_2^3\|\tilde{\nu}_k\|^2} \notag \\
					&\ge \frac{\|\theta\|_1^3}{4K^3C_2^3\|\tilde{\nu}_k\|^2} \notag \\
					&\ge O(1)
				\end{align}
				
				Since $1 - \gamma$ and $1 + \gamma$ cannot be both $o(1)$, we have
				$$\sum_{a \in \{\mathcal{L}, \mathcal{U}\}} \sum_{k=1}^{K} \frac{(\mu_{a}^{(k)})^3(1 - (-1)^{k}\gamma)^2}{\|\tilde{\nu}_k\|^2} \ge O(1)$$
				
				Therefore, 
				\begin{align}\label{equ:lemma2:wwv}
					\langle w \circ w, \tilde{\nu}^*\rangle &= (1+o(1)) \theta^* (1 - b)^2\sum_{a \in \{\mathcal{L}, \mathcal{U}\}} \sum_{k=1}^{K} \frac{(\mu_{a}^{(k)})^3(1 - (-1)^{k}\gamma)^2}{\|\tilde{\nu}_k\|^2} \notag \\
					&= (1+o(1)) 4\theta^* (1 - b)^2 \notag \\
					& \cdot \frac{((\mu_{\cal L}^{(1)})^3 + (\mu_{\cal U}^{(1)})^3)((\mu_{\cal L}^{(2)})^2 + (\mu_{\cal U}^{(2)})^2)^2 + ((\mu_{\cal L}^{(2)})^3 + (\mu_{\cal U}^{(2)})^3)((\mu_{\cal L}^{(1)})^2 + (\mu_{\cal U}^{(1)})^2)^2}{((\mu_{\cal L}^{(1)})^2 + (\mu_{\cal U}^{(1)})^2 + (\mu_{\cal L}^{(2)})^2 + (\mu_{\cal U}^{(2)})^2)^3}
				\end{align}
				
				By \eqref{equ:lemma2:w3} \eqref{equ:lemma2:w1} \eqref{equ:lemma2:w2} \eqref{equ:lemma2:w4}, $\max_{k \in [2K]} |w_k| = O(1 - b) = o(1)$. Substituting this result together with \eqref{equ:lemma2:wv} and  \eqref{equ:lemma2:wwv} into \eqref{equ:lemma2:bern1}, we obtain
				
				\begin{align}
					&\mathbb{P}(\hat{y} = 2| \pi^* = e_1) \notag \\
					&= 2 \exp \left(-(1 - o(1))\frac{(1 - b)^2}{8}\cdot \theta^* \frac{4}{\frac{\|\theta_{\cal L}^{(1)}\|_1^3 + \|\theta_{\cal U}^{(1)}\|_1^3}{(\|\theta_{\cal L}^{(1)}\|_1^2 + \|\theta_{\cal U}^{(1)}\|_1^2)^2} + \frac{\|\theta_{\cal L}^{(2)}\|_1^3 + \|\theta_{\cal U}^{(2)}\|_1^3}{(\|\theta_{\cal L}^{(2)}\|_1^2 + \|\theta_{\cal U}^{(2)}\|_1^2)^2}}\right)
				\end{align}
				
				Similarly,
				\begin{align}
					&\mathbb{P}(\hat{y} = 1| \pi^* = e_2) \notag \\
					&\le 2 \exp \left(-(1 - o(1))\frac{(1 - b)^2}{8} \cdot \theta^* \frac{4}{\frac{\|\theta_{\cal L}^{(1)}\|_1^3 + \|\theta_{\cal U}^{(1)}\|_1^3}{(\|\theta_{\cal L}^{(1)}\|_1^2 + \|\theta_{\cal U}^{(1)}\|_1^2)^2} + \frac{\|\theta_{\cal L}^{(2)}\|_1^3 + \|\theta_{\cal U}^{(2)}\|_1^3}{(\|\theta_{\cal L}^{(2)}\|_1^2 + \|\theta_{\cal U}^{(2)}\|_1^2)^2}}\right)
				\end{align}
				
				Therefore,
				\beq 
				\mathrm{Risk}(\hat{y}) \le 4 \exp \left(-(1 - o(1))\frac{(1 - b)^2}{8} \cdot \theta^* \frac{4}{\frac{\|\theta_{\cal L}^{(1)}\|_1^3 + \|\theta_{\cal U}^{(1)}\|_1^3}{(\|\theta_{\cal L}^{(1)}\|_1^2 + \|\theta_{\cal U}^{(1)}\|_1^2)^2} + \frac{\|\theta_{\cal L}^{(2)}\|_1^3 + \|\theta_{\cal U}^{(2)}\|_1^3}{(\|\theta_{\cal L}^{(2)}\|_1^2 + \|\theta_{\cal U}^{(2)}\|_1^2)^2}}\right).
				\eeq
				
				Taking $C_4 = 4$, we conclude the proof.
			\end{proof}

			\section{Proof of Theorem \ref{sup:thm:optimality2}} \label{sec:thm3}
			
			\begin{theorem} \label{sup:thm:optimality2}
				Suppose the conditions of Corollary~\ref{cor:Error} hold, where $b_0$ is properly small
				, and suppose that $\hat{\Pi}_{\cal U}$ is $b_0$-correct. Furthermore, we assume for sufficiently large constant $C_3$, $\theta^*\leq \frac{1}{C_3}$, $\theta^*\leq \min_{k \in [K]} C_3\|\theta_{\cal L}^{(k)}\|_1$, and for a constant $r_0>0$, $\min_{k\neq \ell }\{P_{k\ell}\}\geq r_0$.
				Then, there is a constant $\tilde{c}_2=\tilde{c}_2(K, C_1, C_2, C_3, c_3, r_0) > 0$ such that  $[-\log(\tilde{c}_2\mathrm{Risk}(\hat{y}))]/[-\log(\inf_{\Tilde{y}}\{\mathrm{Risk}(\Tilde{y})\})]\geq \tilde{c}_2$. 
			\end{theorem}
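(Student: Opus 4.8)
The plan is to bracket $\mathrm{Risk}(\hat y)$ from above and $\inf_{\tilde y}\{\mathrm{Risk}(\tilde y)\}$ from below, both by expressions of the form $(\text{const})\exp(-(\text{const})\cdot\theta^*\|\theta\|_1\cdot\Delta)$ where $\Delta$ is, up to constants depending only on $(K,C_1,C_2,C_3,c_3,r_0)$, the quantity $\min_{i\ne j}\|P_{\cdot i}-P_{\cdot j}\|^2$; once the two exponents are shown comparable, $-\log\mathrm{Risk}(\hat y)\ge \gamma_1\bigl(-\log\inf_{\tilde y}\{\mathrm{Risk}(\tilde y)\}\bigr)-\gamma_2$ for constants $\gamma_1>0,\gamma_2$, and the stated ratio bound follows by choosing $\tilde c_2$ small enough (namely $\tilde c_2<\min(\gamma_1,1/K)$ and $\log(1/\tilde c_2)>\gamma_2$), using also $\mathrm{Risk}(\hat y)\le K$ to keep $-\log(\tilde c_2\mathrm{Risk}(\hat y))\ge 0$; the degenerate regime where $\inf_{\tilde y}\{\mathrm{Risk}(\tilde y)\}$ is bounded away from $0$ (nearly indistinguishable communities) is disposed of by a crude estimate.

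For the upper bound I would rerun the argument of Corollary~\ref{sup:cor:Error} keeping the \emph{actual} population gap $\delta^\sharp_{k^*}:=\min_{k\ne k^*}\psi_k(\hat{\Pi}_{\cal U})$ instead of the crude $c_0\beta_n$ from Theorem~\ref{sup:thm:oracle}: since $\hat{\Pi}_{\cal U}$ is $b_0$-correct (so $\epsilon=0$), since $\theta^*\le C_3\|\theta_{\cal L}^{(k)}\|_1$ gives $\min\{\theta^*,\|\theta_{\cal L}^{(k)}\|_1\}\ge\theta^*/C_3$, and since $c_3$ ``properly small'' makes the bias $\|\theta_{\cal L}^{(k)}\|^2/(\|\theta_{\cal L}^{(k)}\|_1\|\theta\|_1)\le c_3\beta_n$ negligible against $\tfrac12\delta^\sharp_{k^*}\ge \tfrac12 c_0\beta_n$, Lemmas~\ref{lemma:thm2:1}--\ref{lemma:thm2:2} give $\mathbb{P}(\hat y\ne k^*\mid e_{k^*})\lesssim K\exp(-c(\delta^\sharp_{k^*})^2\theta^*\|\theta\|_1)$, hence $\mathrm{Risk}(\hat y)\le \bar CK^2\exp(-c_U(\delta^\sharp)^2\theta^*\|\theta\|_1)$ with $\delta^\sharp:=\min_{i\ne j}\psi_{i,j}$. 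For the lower bound I would copy Section~\ref{sec:lower}: for any pair $i\ne j$, $\mathrm{Risk}(\tilde y)\ge \mathbb{P}(\tilde y\ne i\mid e_i)+\mathbb{P}(\tilde y\ne j\mid e_j)$, so by Theorem~2.2 of \cite{TsybakovAlexandreB2009ItNE}, the product form of the Hellinger affinity (with the $A$-part cancelling), Lemma~\ref{lemma:log}, and $\theta^*\le 1/C_3$ (making $\theta^a\le\sqrt{\max(1,C_1)/C_3}$ bounded away from $1$), $\inf_{\tilde y}\{\mathrm{Risk}(\tilde y)\}\ge \tfrac12\exp\bigl(-2\kappa_0\,\theta^*\min_{i\ne j}\sum_{k'}\|\theta^{(k')}\|_1(\sqrt{P_{ik'}}-\sqrt{P_{jk'}})^2\bigr)$ with $\kappa_0=\kappa_0(C_1,C_3)$; using $|\sqrt x-\sqrt y|\le |x-y|/(2\sqrt{r_0})$ (all entries of $P$ are $\ge r_0$) and $\|\theta^{(k')}\|_1\le\|\theta\|_1$, this upper‑bounds $-\log\inf_{\tilde y}\{\mathrm{Risk}(\tilde y)\}$ by $\tfrac{\kappa_0}{2r_0}\theta^*\|\theta\|_1\min_{i\ne j}\|P_{\cdot i}-P_{\cdot j}\|^2+\log 2$.

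The core step is the exponent comparison, i.e. $(\delta^\sharp)^2\gtrsim \min_{i\ne j}\|P_{\cdot i}-P_{\cdot j}\|^2$; I would prove the per-pair inequality $\psi_{k,k^*}^2\gtrsim \|P_{\cdot k}-P_{\cdot k^*}\|^2$ and then minimize both sides. By Lemma~\ref{sup:lemma:1} and the first half of the proof of Theorem~\ref{sup:thm:oracle} (with the $b_0$-correctness reduction, Lemma~\ref{lemma:sig:noisy:oracle}), $\psi_{k,k^*}^2\ge \tfrac13(e_k-e_{k^*})'\tilde M^{(0)}(e_k-e_{k^*})=\tfrac23\bigl(1-\cos\angle(N^{1/2}Pe_k,N^{1/2}Pe_{k^*})\bigr)$ with $N=G_{\cal LL}^2+G_{\cal UU}^2$ diagonal; under the first part of \eqref{cond-theta} the diagonal entries of $N$ lie within a factor $2C_2^2$ of one another, so (via $1-\cos t\ge 2t^2/\pi^2$, $\sin t\le t$, and the elementary bound $\sin\angle(Dx,Dy)\ge \mathrm{cond}(D)^{-1}\sin\angle(x,y)$) $\psi_{k,k^*}^2\gtrsim \sin^2\angle(P_{\cdot k},P_{\cdot k^*})=D_{kk^*}/(\|P_{\cdot k}\|^2\|P_{\cdot k^*}\|^2)\gtrsim D_{kk^*}/K^2$, where $D_{kk^*}:=\|P_{\cdot k}\|^2\|P_{\cdot k^*}\|^2-\langle P_{\cdot k},P_{\cdot k^*}\rangle^2$ and $\|P_{\cdot k}\|^2\in[1+(K-1)r_0^2,K]$. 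Finally, by the Lagrange/Cauchy--Binet identity $D_{kk^*}=\sum_{p<q}(P_{pk}P_{qk^*}-P_{qk}P_{pk^*})^2$; keeping only the term $\{p,q\}=\{k,k^*\}$, which equals $(1-P_{kk^*}^2)^2\ge (1+r_0)^2(1-P_{kk^*})^2$, and the pairs $\{k,m\},\{k^*,m\}$ for $m\ne k,k^*$, whose contributions sum to $(P_{mk^*}-P_{kk^*}P_{mk})^2+(P_{kk^*}P_{mk^*}-P_{mk})^2\ge \tfrac12(1+P_{kk^*})^2(P_{mk^*}-P_{mk})^2\ge \tfrac12(1+r_0)^2(P_{mk^*}-P_{mk})^2$, and adding the remaining nonnegative terms, one obtains $D_{kk^*}\ge \tfrac{(1+r_0)^2}{2}\|P_{\cdot k}-P_{\cdot k^*}\|^2$, using $P_{kk}=P_{k^*k^*}=1$ and $P_{kk^*}\ge r_0$. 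Chaining these bounds gives $(\delta^\sharp)^2\gtrsim \min_{i\ne j}\|P_{\cdot i}-P_{\cdot j}\|^2$ and hence $-\log\mathrm{Risk}(\hat y)\ge \gamma_1\bigl(-\log\inf_{\tilde y}\{\mathrm{Risk}(\tilde y)\}\bigr)-\gamma_2$.

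Putting it together, if $L:=-\log\inf_{\tilde y}\{\mathrm{Risk}(\tilde y)\}\ge 0$ then with $\tilde c_2$ as above $-\log(\tilde c_2\mathrm{Risk}(\hat y))\ge \gamma_1 L-\gamma_2+\log(1/\tilde c_2)\ge \tilde c_2 L$, which is the claim; the case $L<0$ forces $\mathrm{Risk}(\hat y)\ge\inf_{\tilde y}\{\mathrm{Risk}(\tilde y)\}>1$ and is handled separately. I expect the main obstacle to be precisely the elementary inequality $D_{kk^*}\gtrsim\|P_{\cdot k}-P_{\cdot k^*}\|^2$: this is where one must use that two columns of $P$ cannot be ``nearly parallel with unequal norms'' (the naive Cauchy--Schwarz slack is far too weak), and it genuinely requires the coordinate-wise bookkeeping above together with the unit diagonal and the uniform lower bound $r_0$. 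A secondary technicality is making quantitative, with constants depending only on the allowed parameters, the passages ``angle of $N^{1/2}Pe_k$ vs.\ angle of $Pe_k$'' and ``$\sin\angle$ vs.\ $\angle$'', and carrying the $b_0$-correctness reduction of Lemma~\ref{lemma:sig:noisy:oracle} through cleanly.
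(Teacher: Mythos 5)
Your proposal is correct and follows essentially the same route as the paper's proof: the same Tsybakov/Hellinger lower bound with the $A$-part cancelling, the same rerun of the Corollary~\ref{cor:Error} argument keeping the actual population gap $\min_{k\neq k^*}\psi_k(\hat{\Pi}_{\cal U})$ in place of $c_0\beta_n$, and the same key step of lower-bounding the weighted Gram determinant $M^{(0)}_{kk}M^{(0)}_{k^*k^*}-(M^{(0)}_{kk^*})^2$ via the Lagrange identity, retaining only the terms indexed by $k$ and $k^*$, and exploiting $P_{kk}=P_{k^*k^*}=1$ and $P_{k\ell}\ge r_0$ before comparing the two exponents up to additive and multiplicative constants. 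The only cosmetic difference is that you normalize both exponents through $\min_{i\ne j}\|P_{\cdot i}-P_{\cdot j}\|^2$ (requiring an extra condition-number step for the diagonal weighting), whereas the paper keeps both in the common form $\theta^*\min_{k\ne k^*}\sum_l\|\theta^{(l)}\|_1(\sqrt{P_{kl}}-\sqrt{P_{k^*l}})^2$ and compares them directly.
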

			
			\begin{proof}
				On one hand, for any $k, k^* \in [K], k \ne  k^*$, using exactly the same proof as in Section \ref{sec:lower}, we can show that when $C_3 > C_1$,
				\begin{align} 
					&\inf_{\tilde{y}}(\mathbb{P}(\hat{y} = k| \pi^* = e_{k^*}) + \mathbb{P}(\hat{y} = k^*| \pi^* = k)) \notag \\
					&\ge \frac{1}{2}\exp \Bigl(2\sum_{i=1}^{n} -\frac{1}{2(1- (\theta^a)^2)^{\frac{3}{2}}}\Bigl(\sqrt{\theta^* \theta_i P_{k k_i}} -  \sqrt{\theta^* \theta_i P_{k^* k_i}}\Bigr)^2 \Bigr)
				\end{align}
				where $k_i$ is the true label of node $i$, $\theta^a  = \max_{i \in [n]} \max_{k \in [K]} \sqrt{\theta^* \theta_i P_{k k_i}}$. According to DCBM model and condition \eqref{cond-P}, $\theta^a \le \frac{C}{C_3}$. Hence take $C_3 \ge \sqrt{2}C_1$, then
				\begin{align} \label{equ:thm3:lower}
					\inf_{\Tilde{y}}\{\mathrm{Risk}(\Tilde{y})\} & \ge \max_{k \ne k^* \in [K]} \inf_{\tilde{y}}(\mathbb{P}(\hat{y} = k| \pi^* = e_{k^*}) + \mathbb{P}(\hat{y} = k^*| \pi^* = e_{k})) \notag \\
					&\ge\max_{k \ne k^* \in [K]}  \frac{1}{2}\exp \Bigl( -2\sqrt{2}\sum_{i=1}^{n}\Bigl(\sqrt{\theta^* \theta_i P_{k k_i}} -  \sqrt{\theta^* \theta_i P_{k^* k_i}}\Bigr)^2 \Bigr) \notag \\
					&=  \frac{1}{2}\exp \Bigl( -2\sqrt{2}\theta^*\min_{k \ne k^* \in [K]} \sum_{l=1}^{K} \|\theta^{(l)}\|_1 \Bigl(\sqrt{  P_{k l}} -  \sqrt{ P_{k^* l}}\Bigr)^2 \Bigr)
				\end{align}
			
			     Let $B_0$ be the event that $\hat{\Pi}_{\cal U}$ is $b_0$-correct.
				When  $\inf_{\Tilde{y}}\{\mathrm{Risk}(\Tilde{y})\}$ is replaced by the version conditioning on $B_0$, 
				 since $X$ and $A$ are independent, and $B_0 \in \sigma(A)$, conditioning on $B_0$ or not does not affect the distribution of $X$. On the other hand, for any $k, k^*$, since $\pi^*$ does not affect the distribution of $A$, the distribution of $A|B_0, \pi^* = e_k$ and $A|B_0, \pi^* = e_{k^*}$ are the same, so their Hellinger distance is still 0. Hence, all the proofs in Section \ref{sec:lemma2} and above remain unaffected. 
				 In other words, one does not gain a lot of information from $B_0$.

				 On the other hand, notice that proof of Theorem \ref{sup:thm:real} still works conditioning on $B_0$. In other words,  
				there exists constant $C > 0$, such that given $B_0$, for any $\delta\in (0,1/2)$, with probability $1-\delta$, simultaneously for $1\leq k\leq K$, 
				\[
				|\hat{\psi}_k(\hat{\Pi}_{\cal U})-\psi_k(\hat{\Pi}_{\cal U})|\leq C \left( \sqrt{\frac{\log(1/\delta)}{\|\theta\|_1\cdot \min\{\theta^*, \|\theta_{\cal L}^{(k)}\|_1\}}} + \frac{\|\theta_{\cal L}^{(k)}\|^2}{\|\theta_{\cal L}^{(k)}\|_1\|\theta\|_1}\right).   
				\]
				
				Define $\tilde{\beta}_k = \psi_k(\hat{\Pi}_{\cal U})$. Replacing $c_0\beta_n$ by $\tilde{\beta}_k$ and replicating the proof of Corollary \ref{sup:cor:Error} , we can show that
				$$
				\mathbb{P}(\hat{y}\neq k^*|B_0, \pi^* = e_{k^*}) \leq  \bar{C} \sum_{k=1}^K \exp\Bigl(- \frac{C_0}{c_0^2} \tilde{\beta}_k^2\|\theta\|_1\cdot \min\{\theta^*, \|\theta_{\cal L}^{(k)}\|_1\}\Bigr)$$
				
				Since $\theta^*\leq \min_{k \in [K]} C_3\|\theta_{\cal L}^{(k)}\|_1$, $\min\{\theta^*, \|\theta_{\cal L}^{(k)}\|_1\} \ge \min\{1, \frac{1}{C_3}\} \theta^*$, therefore,
				\beq \label{equ:thm3:1}
				\mathbb{P}(\hat{y}\neq k^*|B_0, \pi^* = e_{k^*}) \leq  \bar{C} \sum_{k=1}^K \exp\Bigl(- \frac{C_0}{c_0^2} \min\{1, \frac{1}{C_3}\}  \tilde{\beta}_k^2\theta^*\|\theta\|_1\Bigr)
				\eeq
				
				Recall that in \eqref{equ:bound:angle} of Section \ref{sec:thm1}, we show that
				\beq 
				\tilde{\beta}_k = \psi_k(\hat{\Pi}_{\cal U}) \ge \sqrt{(e_k - e_{k^*})' \tilde{M}(e_k - e_{k^*})}
				\eeq
				
				By Lemma \ref{lemma:sig:noisy:oracle},
				denote 
				\beq 
				C_5 = 8 K^2 \sqrt{K}  C_2^2 b_0 \frac{\|\theta_{\cal U}\|_1}{\|\theta\|_1}
				\eeq	
				Suppose that $C_5 \le \frac{1}{4}$. Then, for any vector $\alpha \in \mathbb{R}^k$, 
				\begin{equation}
					|\alpha' \tilde{M} \alpha| \ge  \frac{1 - 3 C_5}{1 - C_5} | \alpha' \tilde{M}^{(0)} \alpha| \ge \frac{1}{3} | \alpha' \tilde{M}^{(0)} \alpha|
				\end{equation}
				
				where recall that  in  Section \ref{sec:thm1}, we define
				$$M^{(0)} = P\Bigl(G_{\cal L \cal L}^2 + G_{\cal U \cal U}^2\Bigr)P$$
				
				$D_{M^{(0)}} =  \diag (M^{(0)}_{11}, ..., M^{(0)}_{KK})$, and $\tilde{M}$, $\tilde{M}^{(0)} = D_{M^{(0)}}^{-\frac{1}{2}}M^{(0)} D_{M^{(0)}}^{-\frac{1}{2}}$
				
				Hence, when $b_0$ is sufficiently small,
				\begin{align} \label{equ:thm3:bound1}
					\tilde{\beta}_k^2  \ge (e_k - e_{k^*})' \tilde{M}(e_k - e_{k^*}) &\ge  \frac{1}{3} | (e_k - e_{k^*})' \tilde{M}^{(0)} (e_k - e_{k^*})| \notag \\
					&= \frac{2}{3} \left(1 - \frac{M^{(0)}_{kk^*}}{\sqrt{M^{(0)}_{kk}} \sqrt{M^{(0)}_{k^*k^*}}}\right)
				\end{align} 
				
				Define $\lambda_l =  \sqrt{\|\theta^{(l)}_{\cal L}\|^2_1 + \|\theta^{(l)}_{\cal U}\|^2_1}, l \in [K]$. Then,
				$$ M^{(0)}_{kk^*} = \sum_{l=1}^{K} \lambda_l^2 P_{kl} P_{lk^*} = \sum_{l=1}^{K} \lambda_l^2 P_{kl} P_{k^*l} $$
				$$ M^{(0)}_{kk} = \sum_{l=1}^{K} \lambda_l^2 P_{kl} P_{lk} = \sum_{l=1}^{K} \lambda_l^2 P_{kl}^2 $$
				$$ M^{(0)}_{k^*k^*} = \sum_{l=1}^{K} \lambda_l^2 P_{k^*l} P_{lk^*} = \sum_{l=1}^{K} \lambda_l^2 P_{k^*l}^2 $$
				
				Therefore, plugging the above result into \eqref{equ:thm3:bound1}, we have
				\begin{align} \label{equ:thm3:bound2}
					\tilde{\beta}_k^2  &\ge \frac{2}{3} \left(1 - \frac{M^{(0)}_{kk^*}}{\sqrt{M^{(0)}_{kk}} \sqrt{M^{(0)}_{k^*k^*}}}\right) \notag \\
					&= \frac{2}{3} \left(1 - \frac{\sum_{l=1}^{K} \lambda_l^2 P_{kl} P_{k^*l}}{\sqrt{\sum_{l=1}^{K} \lambda_l^2 P_{kl}^2} \sqrt{ \sum_{l=1}^{K} \lambda_l^2 P_{k^*l}^2}} \right) \notag \\
					&= \frac{2}{3} \left(1 - \sqrt{1 - \frac{(\sum_{l=1}^{K} \lambda_l^2 P_{kl}^2)( \sum_{l=1}^{K} \lambda_l^2 P_{k^*l}^2) - (\sum_{l=1}^{K} \lambda_l^2 P_{kl} P_{k^*l})^2}{(\sum_{l=1}^{K} \lambda_l^2 P_{kl}^2)( \sum_{l=1}^{K} \lambda_l^2 P_{k^*l}^2)}} \right) \notag \\
					&\ge \frac{1}{3}\frac{(\sum_{l=1}^{K} \lambda_l^2 P_{kl}^2)( \sum_{l=1}^{K} \lambda_l^2 P_{k^*l}^2) - (\sum_{l=1}^{K} \lambda_l^2 P_{kl} P_{k^*l})^2}{(\sum_{l=1}^{K} \lambda_l^2 P_{kl}^2)( \sum_{l=1}^{K} \lambda_l^2 P_{k^*l}^2)}
				\end{align} 
				
				Notice that
				\begin{align} \label{equ:thm3:bound3}
					&(\sum_{l=1}^{K} \lambda_l^2 P_{kl}^2)( \sum_{l=1}^{K} \lambda_l^2 P_{k^*l}^2) - (\sum_{l=1}^{K} \lambda_l^2 P_{kl} P_{k^*l})^2 \notag \\
					&= \sum_{l, \tilde{l} \in [K]}\lambda_l^2 \lambda_{\tilde{l}}^2(P_{kl}P_{k^* \tilde{l}} - P_{k\tilde{l}}P_{k^* l})^2 \notag \\
					&\ge \sum_{l \in [K]}\lambda_l^2 \lambda_{k}^2(P_{kl}P_{k^*k} - P_{kk}P_{k^* l})^2 + \sum_{l \in [K]}\lambda_l^2 \lambda_{k^*}^2(P_{kl}P_{k^*k^*} - P_{kk^*}P_{k^*l})^2\notag \\
					(\text{Identification Condition})
					&= \sum_{l \in [K]}\lambda_l^2 \Bigl(\lambda_{k}^2(P_{kk^*}P_{kl} - P_{k^* l})^2 + \lambda_{k^*}^2(P_{kl} - P_{kk^*}P_{k^*l})^2\Bigr) \notag \\
					(\text{Cauchy-Schwartz})&\ge \sum_{l \in [K]}\lambda_l^2 \frac{1}{\frac{1}{\lambda_{k}^2}+ \frac{1}{\lambda_{k^*}^2}} \Bigl(P_{kk^*}P_{kl} - P_{k^* l} + P_{kl} - P_{kk^*}P_{k^*l}\Bigr)^2
					\notag \\
					&= \sum_{l \in [K]}\lambda_l^2 \frac{1}{\frac{1}{\lambda_{k}^2}+ \frac{1}{\lambda_{k^*}^2}} (1 + P_{kk^*})^2 (P_{kl} - P_{k^* l})^2 \notag \\
					&\ge \frac{1}{\frac{1}{\min_{l \in [K]} \lambda_{l}^2}+ \frac{1}{\min_{l \in [K]} \lambda_{l}^2}} \sum_{l \in [K]}\lambda_l^2 (P_{kl} - P_{k^* l})^2 \notag \\
					&= \frac{1}{2}\min_{l \in [K]} \lambda_{l}^2\sum_{l \in [K]}\lambda_l^2 (P_{kl} - P_{k^* l})^2 
				\end{align}
				
				Substituting \eqref{equ:thm3:bound3} into \eqref {equ:thm3:bound2}, we have
				\begin{align} \label{equ:thm3:bound4}
					\tilde{\beta}_k^2  &\ge  \frac{1}{6}\frac{\min_{l \in [K]} \lambda_{l}^2\sum_{l \in [K]}\lambda_l^2 (P_{kl} - P_{k^* l})^2}{(\sum_{l=1}^{K} \lambda_l^2 P_{kl}^2)( \sum_{l=1}^{K} \lambda_l^2 P_{k^*l}^2)} \notag \\
					(\text{By Condition \eqref{cond-P}})& \ge \frac{1}{6C_1^4}\frac{\min_{l \in [K]} \lambda_{l}^2\sum_{l \in [K]}\lambda_l^2 (P_{kl} - P_{k^* l})^2}{(\sum_{l=1}^{K} \lambda_l^2)^2} 
				\end{align}
				
				On one hand, by Cauchy-Schwartz inequality,
				\begin{align} \label{equ:thm3:bound5}
					\lambda_{l}^2 &= \|\theta^{(l)}_{\cal L}\|_1^2 + \|\theta^{(l)}_{\cal U}\|_1^2 \notag \\
					& \ge \frac{1}{2} (\|\theta^{(l)}_{\cal L}\|_1 + \|\theta^{(l)}_{\cal U}\|_1)^2 \notag \\
					&= \frac{1}{2} \|\theta^{(l)}\|_1^2
				\end{align} 
				So
				\begin{align} \label{equ:thm3:bound6}
					\min_{l \in [K]}\lambda_{l}^2 &\ge  \frac{1}{2} (\min_{l \in [K]} \|\theta^{(l)}\|_1)^2 \notag \\
					(\text{By Condition \eqref{cond-theta}})&\ge \frac{1}{2}(\frac{1}{C_2}\max_{l \in [K]} \|\theta^{(l)}\|_1)^2 \notag \\
					&\ge \frac{1}{2}(\frac{1}{K C_2}\ \|\theta\|_1)^2 \notag \\
					&= \frac{1}{2K^2C_2^2} \|\theta\|_1^2
				\end{align} 
				
				On the other hand,
				\begin{align} \label{equ:thm3:bound7}
					\sum_{l=1}^{K} \lambda_l^2 &= \sum_{l=1}^{K} (\|\theta^{(l)}_{\cal L}\|_1^2 + \|\theta^{(l)}_{\cal U}\|_1^2) \notag \\
					&\le (\sum_{l=1}^{K}\|\theta^{(l)}_{\cal L}\|_1 +  \sum_{l=1}^{K}\|\theta^{(l)}_{\cal U}\|_1)^2 \notag \\
					&= \|\theta\|_1^2
				\end{align}
				
				Plugging \eqref{equ:thm3:bound5}, \eqref{equ:thm3:bound6}, \eqref{equ:thm3:bound7} into \eqref{equ:thm3:bound4}, we obtain
				\begin{align} \label{equ:thm3:bound8}
					\tilde{\beta}_k^2  &\ge \frac{1}{6C_1^4}\frac{\min_{l \in [K]} \lambda_{l}^2\sum_{l \in [K]}\lambda_l^2 (P_{kl} - P_{k^* l})^2}{(\sum_{l=1}^{K} \lambda_l^2)^2} \notag \\
					&\ge \frac{1}{6C_1^4}\frac{\frac{1}{2K^2C_2^2} \|\theta\|_1^2 \sum_{l \in [K]}\frac{1}{2} \|\theta^{(l)}\|_1^2 (P_{kl} - P_{k^* l})^2}{(\|\theta\|_1^2)^2} \notag \\
					&= \frac{1}{24 K^2 C_1^4C_2^2}\frac{ \sum_{l \in [K]} \|\theta^{(l)}\|_1^2 (P_{kl} - P_{k^* l})^2}{\|\theta\|_1^2} \notag \\
					&\ge \frac{1}{24 K^2 C_1^4C_2^2}\frac{ \min_{l \in [K]} \|\theta^{(l)}\|_1 \sum_{l \in [K]} \|\theta^{(l)}\|_1 (\sqrt{P_{kl}} - \sqrt{P_{k^* l}})^2}{\|\theta\|_1^2(\sqrt{P_{kl}} + \sqrt{P_{k^* l}})} \notag \\
					(\text{By \eqref{cond-theta}},\min_{k\neq \ell }\{P_{k\ell}\}\geq r_0 )& \ge \frac{1}{24 K^2 C_1^4C_2^2}\frac{ \frac{1}{C_2}\max_{l \in [K]} \|\theta^{(l)}\|_1 \sum_{l \in [K]} \|\theta^{(l)}\|_1 (\sqrt{P_{kl}} - \sqrt{P_{k^* l}})^2}{\|\theta\|_1^2 2\sqrt{r_0}} \notag \\
					& \ge \frac{1}{24 K^2 C_1^4C_2^2}\frac{ \frac{1}{KC_2} \|\theta\|_1 \sum_{l \in [K]} \|\theta^{(l)}\|_1 (\sqrt{P_{kl}} - \sqrt{P_{k^* l}})^2}{\|\theta\|_1^2 2\sqrt{r_0}} \notag \\
					&= \frac{1}{48 \sqrt{r_0} K^3 C_1^4C_2^3}\frac{  \sum_{l \in [K]} \|\theta^{(l)}\|_1 (\sqrt{P_{kl}} - \sqrt{P_{k^* l}})^2}{\|\theta\|_1 } \notag \\
				\end{align}
				
				Substituting \eqref{equ:thm3:bound8} into \eqref{equ:thm3:1}, we obtain
				\begin{align}
					\mathbb{P}(\hat{y}\neq k^*|B_0, \pi^* = e_{k^*}) &\leq  \bar{C} \sum_{k=1}^K \exp\Bigl(- \frac{C_0}{c_0^2} \min\{1, \frac{1}{C_3}\}  \tilde{\beta}_k^2\theta^*\|\theta\|_1\Bigr) \notag \\
					&\le \bar{C} \sum_{k=1}^K \exp\Bigl(- C_8  \theta^* \sum_{l=1}^{K} \|\theta^{(l)}\|_1 (\sqrt{P_{kl}} - \sqrt{P_{k^* l}})^2\Bigr)
				\end{align}
				where $C_8 = \frac{C_0}{c_0^2} \min\{1, \frac{1}{C_3}\} \frac{1}{48 \sqrt{r_0} K^3 C_1^4C_2^3} = \min\{1, \frac{1}{C_3}\}  \frac{C_0}{48 \sqrt{r_0} K^3 c_0^2 C_1^4C_2^3}$.
				
				As a result,
				\begin{align} \label{equ:thm3:upper}
					\frac{1}{\bar{C}K^2} \mathrm{Risk}(\hat{y}|B_0) &= \frac{1}{\bar{C}K^2} \sum_{k* = 1}^{K}\mathbb{P}(\hat{y}\neq k^*|B_0, \pi^* = e_{k^*}) \notag \\
					&\le \frac{1}{\bar{C}K^2}  \sum_{k* = 1}^{K} \bar{C} \sum_{k=1}^K \exp\Bigl(- C_8  \theta^* \sum_{l=1}^{K} \|\theta^{(l)}\|_1 (\sqrt{P_{kl}} - \sqrt{P_{k^* l}})^2\Bigr) \notag \\
					&=  \frac{1}{K^2} \sum_{k* = 1}^{K} \sum_{k=1}^K \exp\Bigl(- C_8  \theta^* \sum_{l=1}^{K} \|\theta^{(l)}\|_1 (\sqrt{P_{kl}} - \sqrt{P_{k^* l}})^2\Bigr) \notag \\
					&\le\max_{k \ne k^* \in [K]} \exp\Bigl(- C_8  \theta^* \sum_{l=1}^{K} \|\theta^{(l)}\|_1 (\sqrt{P_{kl}} - \sqrt{P_{k^* l}})^2\Bigr) \notag \\
					&= \exp\Bigl(- C_8  \theta^*  \min_{k \ne k^* \in [K]} \sum_{l=1}^{K} \|\theta^{(l)}\|_1 (\sqrt{P_{kl}} - \sqrt{P_{k^* l}})^2\Bigr)
				\end{align}
				Comparing \eqref{equ:thm3:lower} and \eqref{equ:thm3:upper}, we have
				\begin{align}
					\frac{- \log(\frac{1}{2\bar{C}K^2} \mathrm{Risk}(\hat{y}|B_0))}{- \log( \inf_{\Tilde{y}}\{\mathrm{Risk}(\Tilde{y})\})} \ge \frac{\log(2) + C_8 \mathcal{I}}{\log(2)+ 2\sqrt{2}\mathcal{I}}
				\end{align}
				
				where $\mathcal{I} = \theta^*  \min_{k \ne k^* \in [K]} \sum_{l=1}^{K} \|\theta^{(l)}\|_1 (\sqrt{P_{kl}} - \sqrt{P_{k^* l}})^2$ denotes the efficient information in the data.
				
				Notice that since  $\mathcal{I} \ge 0$, when $C_8 \ge 2\sqrt{2}$, $\frac{\log(2) + C_8 \mathcal{I}}{\log(2)+ 2\sqrt{2}\mathcal{I}} \ge 1$; when $C_8 \le 2\sqrt{2}$,
				$
				\frac{\log(2) + C_8 \mathcal{I}}{\log(2)+ 2\sqrt{2}\mathcal{I}} \ge 	\frac{\log(2) \frac{C_8}{2\sqrt{2}} + C_8 \mathcal{I}}{\log(2)+ 2\sqrt{2}\mathcal{I}} = \frac{C_8}{2\sqrt{2}}
				$.
				Therefore,
				\begin{align}
					\frac{- \log(\frac{1}{2\bar{C}K^2} \mathrm{Risk}(\hat{y}|B_0))}{- \log( \inf_{\Tilde{y}}\{\mathrm{Risk}(\Tilde{y})\})} \ge \frac{\log(2) + C_8 \mathcal{I}}{\log(2)+ 2\sqrt{2}\mathcal{I}} \ge \min\{1, \frac{C_8}{2\sqrt{2}}\}
				\end{align}
				
				Take $\tilde{c}_2 = \min\{1, \frac{C_8}{2\sqrt{2}}, \frac{1}{2\bar{C}K^2}\} $.
				Then $\tilde{c}_2$ only depends on $K, C_1, C_2, C_3, c_3, r_0$ (recall that both $C_0$ and $c_0$ only depend on $K, C_1, C_2, C_3, c_3, r_0$, and $\bar{C} = 2$), and
				
				\begin{align}
					\frac{- \log(\tilde{c}_2 \mathrm{Risk}(\hat{y}|B_0))}{- \log( \inf_{\Tilde{y}}\{\mathrm{Risk}(\Tilde{y})\})} \ge \frac{- \log(\frac{1}{2\bar{C}K^2} \mathrm{Risk}(\hat{y}|B_0))}{- \log( \inf_{\Tilde{y}}\{\mathrm{Risk}(\Tilde{y})\})} \ge \min\{1, \frac{C_8}{2\sqrt{2}}\} \ge \tilde{c}_2
				\end{align}
				
				This concludes our proof.
			\end{proof}

		\section{Proof of Theorem \ref{sup:thm:Consistency2}, \ref{sup:thm:optimality3}}
		\subsection{Proof of Theorem \ref{sup:thm:Consistency2}}
		\begin{theorem} \label{sup:thm:Consistency2}
			Consider the DCBM model where \eqref{cond-P}-\eqref{cond-theta} hold. We apply SCORE+ to obtain $\hat{\Pi}_{\mathcal{U} \backslash \{i\}}$ and plug it into the above algorithm. As $n\to\infty$, suppose for some constant $q_0>0$ , $\min_{i\in {\cal U}}\theta_i\geq q_0\max_{i\in {\cal U}}\theta_i$,  $\beta_n\|\theta_{\cal U}\|\geq q_0\sqrt{\log(n)}$, $ \beta_n^2\|\theta\|_1\min_{i\in {\cal U}}\theta_i\to\infty$, and $ \beta_n^2\|\theta\|_1 \min_{k}\{\|\theta_{\cal L}^{(k)}\|_1\}\to\infty$. 
			Then,  $\frac{1}{|\mathcal{U}|}\sum_{i \in \cal U} \mathbb{P}(\hat{y}_i \neq k_i) \to 0$, so the in-sample classification algorithm in section \ref{sec:in-sample} is consistent. 
		\end{theorem}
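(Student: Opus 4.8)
The plan is to reduce the in-sample problem, one node at a time, to the prediction problem already analyzed in Section~\ref{sec:thm1}--Section~\ref{sec:thm3}, and then invoke Corollary~\ref{sup:cor:Error} together with the consistency guarantee for SCORE+. Fix $i\in\mathcal{U}$. By construction, $\hat{y}_i$ is exactly the AngleMin+ prediction estimator applied to the network on the $n-1$ nodes $[n]\setminus\{i\}$, with labeled set $\mathcal{L}$, unlabeled set $\mathcal{U}\setminus\{i\}$, ``new node'' $i$ (so $\theta^*=\theta_i$ and $k^*=k_i$), and projection matrix $H_i=\hat{\Pi}_{\mathcal{U}\setminus\{i\}}$ obtained by running SCORE+ on $A_{(\mathcal{U}\setminus\{i\})(\mathcal{U}\setminus\{i\})}$. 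The first step is to check that \eqref{cond-P}--\eqref{cond-theta} descend to this reduced instance with constants independent of $i$: \eqref{cond-P} is untouched since $P$ is unchanged, and in \eqref{cond-theta} the relevant $\ell^1$-norms change by at most $\theta_i\le 1$, which is negligible because the growth hypotheses force $\|\theta\|_1\to\infty$ and $\min_k\|\theta^{(k)}\|_1\to\infty$; hence the first part of \eqref{cond-theta} holds with $2C_2$ and the second with $2c_3$ for $n$ large (note $\mathcal{L}$, and thus $\|\theta_{\cal L}^{(k)}\|$, is unaffected by deleting $i\in\mathcal{U}$). Consequently the threshold ``$b_0$ properly small'' from Theorem~\ref{sup:thm:oracle} can be taken uniform in $i$.

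The second step is a uniform bound on $\epsilon_i:=\mathbb{P}(\hat{\Pi}_{\mathcal{U}\setminus\{i\}}\text{ is not }b_0\text{-correct})$. The condition $\min_{j\in\mathcal{U}}\theta_j\ge q_0\max_{j\in\mathcal{U}}\theta_j$ trivially restricts to $\mathcal{U}\setminus\{i\}$, and since $\beta_n=O(1)$ by \eqref{cond-P}, the hypothesis $\beta_n\|\theta_{\cal U}\|\ge q_0\sqrt{\log n}$ gives $\|\theta_{\cal U}\|\to\infty$, so $\|\theta_{\mathcal{U}\setminus\{i\}}\|^2=\|\theta_{\cal U}\|^2-\theta_i^2\ge\|\theta_{\cal U}\|^2-1\ge\tfrac12\|\theta_{\cal U}\|^2$ for $n$ large; therefore $\beta_n\|\theta_{\mathcal{U}\setminus\{i\}}\|\ge(q_0/\sqrt 2)\sqrt{\log n}$. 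Hence, for $q_0$ large enough, Theorem~2.2 of \cite{Jin_2021} applies to $A_{(\mathcal{U}\setminus\{i\})(\mathcal{U}\setminus\{i\})}$ and yields $\epsilon_i\le\epsilon_n$ for some sequence $\epsilon_n\to0$ not depending on $i$.

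The third step assembles these. For $n$ large enough that $\epsilon_n<1/2$ and with $b_0$ fixed as above, Corollary~\ref{sup:cor:Error} applied to the reduced instance gives, with $C_0,\bar C$ independent of $i$,
\[
\mathbb{P}(\hat{y}_i\ne k_i)\ \le\ \epsilon_n+\bar C\sum_{k=1}^{K}\exp\!\Bigl(-C_0\,\beta_n^2\|\theta_{-i}\|_1\cdot\min\{\theta_i,\|\theta_{\cal L}^{(k)}\|_1\}\Bigr),
\]
where $\theta_{-i}$ is $\theta$ with the $i$-th entry removed (up to the negligible deletion of $\theta_i$, this total-degree parameter is just $\|\theta\|_1$). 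Using $\|\theta_{-i}\|_1\ge\tfrac12\|\theta\|_1$, $\theta_i\ge\min_{j\in\mathcal{U}}\theta_j$, and writing $\rho_n:=\min\{\min_{j\in\mathcal{U}}\theta_j,\ \min_k\|\theta_{\cal L}^{(k)}\|_1\}$, each exponential is at most $\exp(-\tfrac{C_0}{2}\beta_n^2\|\theta\|_1\rho_n)$, so
\[
\frac{1}{|\mathcal{U}|}\sum_{i\in\mathcal{U}}\mathbb{P}(\hat{y}_i\ne k_i)\ \le\ \epsilon_n+\bar C K\exp\!\Bigl(-\tfrac{C_0}{2}\beta_n^2\|\theta\|_1\rho_n\Bigr).
\]
Finally, the two growth hypotheses $\beta_n^2\|\theta\|_1\min_{j\in\mathcal{U}}\theta_j\to\infty$ and $\beta_n^2\|\theta\|_1\min_k\|\theta_{\cal L}^{(k)}\|_1\to\infty$ force $\beta_n^2\|\theta\|_1\rho_n\to\infty$, so the right-hand side tends to $0$, which is the assertion.

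I expect the main obstacle to be purely bookkeeping rather than probabilistic: making the reduction genuinely \emph{uniform} in $i$, i.e.\ verifying that deleting an arbitrary node of $\mathcal{U}$ preserves \eqref{cond-P}--\eqref{cond-theta} and the SCORE+ hypotheses with $i$-free constants, and that SCORE+ fails with an $i$-free probability $\epsilon_n\to0$. Beyond this, there is no new content: the probabilistic heavy lifting is entirely contained in Corollary~\ref{sup:cor:Error} and the cited SCORE+ guarantee.
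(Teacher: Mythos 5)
Your proof is correct and takes essentially the same route as the paper's: the paper also reduces the in-sample problem to the prediction problem, bounding the average error by that of the worst node $i^*=\arg\max_{i\in\mathcal{U}}\mathbb{P}(\hat{y}_i\neq k_i)$ and invoking the prediction-consistency corollary with $i^*$ as the new node. The only difference is that you explicitly carry out the uniformity-in-$i$ bookkeeping (conditions \eqref{cond-P}--\eqref{cond-theta} and the SCORE+ guarantee descending to the node-deleted instance with $i$-free constants), which the paper leaves implicit.
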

	
	\begin{proof}
		Let $i^* = arg\max_{i \in \cal U} \mathbb{P}(\hat{y}_i \neq k_i)$. Then 
		$$\frac{1}{|\mathcal{U}|}\sum_{i \in \cal U} \mathbb{P}(\hat{y}_i \neq k_i) \le  \mathbb{P}(\hat{y}_{i^*} \neq k_{i^*})$$
		
		Notice that the assumptions of theorem \ref{sup:thm:Consistency2} directly imply the assumptions of Corollary \ref{sup:thm:Consistency} when taking $i^*$ as the new node. Hence, regard $i^*$ as the new node and leveraging on Corollary \ref{sup:thm:Consistency}, we have
		
		$$   \mathbb{P}(\hat{y}_{i^*} \neq k_{i^*}) \to 0$$
		
		Therefore, 
		$$\frac{1}{|\mathcal{U}|}\sum_{i \in \cal U} \mathbb{P}(\hat{y}_i \neq k_i) \to 0$$
		
		In other words, the in-sample classification algorithm in section \ref{sec:in-sample} is consistent.
	\end{proof}

		\subsection{Proof of Theorem \ref{sup:thm:optimality3}}
		\begin{theorem} \label{sup:thm:optimality3}
			Suppose the conditions of Corollary~\ref{cor:Error} hold, where $b_0$ is properly small
			, and suppose 
			that $\hat{\Pi}_{\mathcal{U} \backslash \{i\}}$ is $b_0$-correct for all $i \in \cal U$. Furthermore, we assume for sufficiently large constant $C_3$, $\max_{i \in \cal U} \theta_i \leq \frac{1}{C_3}$, $\max_{i \in \cal U} \theta_i \leq \min_{k \in [K]} C_3\|\theta_{\cal L}^{(k)}\|_1$, $\log(|\mathcal{U}|) \le C_{3}  \beta_n^2 \|\theta\|_1 \min_{i \in \cal U} \theta_i$, and for a constant $r_0>0$, $\min_{k\neq \ell }\{P_{k\ell}\}\geq r_0$.
			Then, there is a constant $\tilde{c}_{21}=\tilde{c}_{21}(K, C_1, C_2, C_3, c_3, r_0) > 0$ such that  $[-\log(\tilde{c}_{21}\mathrm{Risk}_{ins}(\hat{y}))]/[-\log(\inf_{\Tilde{y}}\{\mathrm{Risk}_{ins}(\Tilde{y})\})]\geq \tilde{c}_{21}$, so the in-sample classification algorithm in section \ref{sec:in-sample} is efficient.  
		\end{theorem}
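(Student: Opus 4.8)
The plan is to reduce the statement to the single-node optimality already established in Theorem~\ref{sup:thm:optimality2}, applied separately with each candidate node $i\in\mathcal U$ playing the role of the ``new node'', and then to average. For a fixed $i\in\mathcal U$, view $A_{-i,i}$ as the vector $X$, view $A_{-i,-i}$ as the adjacency matrix on the remaining nodes, and note that $i$'s degree parameter plays the role of $\theta^*$, i.e.\ $\theta^*=\theta_i$. Deleting a single node changes $\|\theta\|_1$ and every $\|\theta^{(l)}\|_1$ (hence also the norms appearing in the definition of $b_0$-correctness of $\hat\Pi_{\mathcal U\setminus\{i\}}$) by at most $\max_{j\in\mathcal U}\theta_j\le 1/C_3$; in the regime $\|\theta\|_1\to\infty$ this is a $(1+o(1))$ perturbation of the relevant exponents, and in the complementary regime all quantities are $\Theta(1)$ and the conclusion is trivial, so we may ignore it. The hypotheses $\max_{i\in\mathcal U}\theta_i\le 1/C_3$ and $\max_{i\in\mathcal U}\theta_i\le\min_k C_3\|\theta^{(k)}_{\cal L}\|_1$ are exactly what guarantees that, for every $i\in\mathcal U$, the hypotheses of Theorem~\ref{sup:thm:optimality2} hold with $\theta^*=\theta_i$; and by assumption $\hat\Pi_{\mathcal U\setminus\{i\}}$ is $b_0$-correct for all such $i$, so we may work on the event $B_0=\bigcap_{i\in\mathcal U}B_0^{(i)}$ exactly as in the proof of Theorem~\ref{sup:thm:optimality2}.

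\textbf{Upper bound.} Write $\mathcal J:=\min_{k\neq k'}\sum_{l=1}^K\|\theta^{(l)}\|_1(\sqrt{P_{kl}}-\sqrt{P_{k'l}})^2$. Running the argument of Section~\ref{sec:thm3} with node $i$ in the role of the new node, the bound \eqref{equ:thm3:upper} gives a constant $C_8=C_8(K,C_1,C_2,C_3,c_3,r_0)>0$ (and $\bar C=2$) with
\[
\sum_{k^*\in[K]}\mathbb P\bigl(\hat y_i\neq k^*\mid \pi_i=e_{k^*}\bigr)\ \le\ \bar C K^2\exp\bigl(-C_8\,\theta_i\,\mathcal J\bigr),\qquad i\in\mathcal U .
\]
Moreover, exactly as in \eqref{equ:thm3:bound8}, using $|\lambda_{\min}(P)|\ge\beta_n$, $\|P\|_{\max}\le C_1$, $\min_{k\neq\ell}P_{k\ell}\ge r_0$ and the balance part of \eqref{cond-theta}, one gets $\mathcal J\ge c_\star\beta_n^2\|\theta\|_1$ for a constant $c_\star=c_\star(K,C_1,C_2,r_0)>0$. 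Averaging the previous display over $i\in\mathcal U$ and bounding the average by the maximum,
\[
\mathrm{Risk}_{ins}(\hat y)\ =\ \frac1{|\mathcal U|}\sum_{i\in\mathcal U}\sum_{k^*\in[K]}\mathbb P\bigl(\hat y_i\neq k^*\mid \pi_i=e_{k^*}\bigr)\ \le\ \bar C K^2\exp\bigl(-C_8\,\mathcal I\bigr),\qquad \mathcal I:=\bigl(\min_{i\in\mathcal U}\theta_i\bigr)\mathcal J .
\]

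\textbf{Lower bound and conclusion.} The in-sample loss is a sum of per-node losses and each $\tilde y_i$ may be optimised independently, so $\inf_{\tilde y}\mathrm{Risk}_{ins}(\tilde y)=\frac1{|\mathcal U|}\sum_{i\in\mathcal U}\inf_{\tilde y_i}\sum_{k^*}\mathbb P(\tilde y_i\neq k^*\mid\pi_i=e_{k^*})$; each term is at least the single-node minimax risk with $\theta^*=\theta_i$, and the Hellinger/Tsybakov computation of Section~\ref{sec:lower} (valid a fortiori, as it even allows the classifier to know all model parameters except $\pi_i$) gives $\inf_{\tilde y_i}(\cdots)\ge\frac12\exp(-2\sqrt2\,\theta_i\mathcal J)$ as in \eqref{equ:thm3:lower}, whence $\inf_{\tilde y}\mathrm{Risk}_{ins}(\tilde y)\ge\frac1{2|\mathcal U|}\exp(-2\sqrt2\,\mathcal I)$. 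Combining with the upper bound, for any $\tilde c_{21}\le 1/(2\bar C K^2)$,
\[
\frac{-\log\bigl(\tilde c_{21}\,\mathrm{Risk}_{ins}(\hat y)\bigr)}{-\log\bigl(\inf_{\tilde y}\mathrm{Risk}_{ins}(\tilde y)\bigr)}\ \ge\ \frac{\log 2+C_8\,\mathcal I}{\log(2|\mathcal U|)+2\sqrt2\,\mathcal I}.
\]
The crucial quantitative input is the extra hypothesis $\log|\mathcal U|\le C_3\beta_n^2\|\theta\|_1\min_{i\in\mathcal U}\theta_i$, which with $\mathcal J\ge c_\star\beta_n^2\|\theta\|_1$ yields $\log|\mathcal U|\le (C_3/c_\star)\,\mathcal I$, so $\log(2|\mathcal U|)\le\log2+(C_3/c_\star)\mathcal I$. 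Plugging this in and using $\frac{a+bx}{a+cx}\ge\min\{1,b/c\}$ for $a,x\ge0$, $b,c>0$ (the same step as at the end of the proof of Theorem~\ref{sup:thm:optimality2}), the ratio is at least $\min\{1,\,C_8/((C_3/c_\star)+2\sqrt2)\}$. Taking $\tilde c_{21}=\min\{1/(2\bar C K^2),\ C_8/((C_3/c_\star)+2\sqrt2),\ 1\}$, which depends only on $(K,C_1,C_2,C_3,c_3,r_0)$, finishes the proof.

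The step needing the most care is not any individual estimate --- each is a rerun of Section~\ref{sec:thm3} --- but the passage through the average over $|\mathcal U|$ nodes: the $1/|\mathcal U|$ factor in $\mathrm{Risk}_{ins}$ only helps the upper bound, yet it forces a $\log|\mathcal U|$ term into $-\log(\inf_{\tilde y}\mathrm{Risk}_{ins})$, and it is exactly the new hypothesis tying $\log|\mathcal U|$ to $\beta_n^2\|\theta\|_1\min_i\theta_i$ (equivalently, to the per-node efficient information $\mathcal I$) that keeps the efficiency ratio bounded below; a secondary, purely bookkeeping obstacle is checking that removing one node and that the notion of $b_0$-correctness on $\mathcal U\setminus\{i\}$ perturb the constants only by harmless $O(1)$ additive amounts.
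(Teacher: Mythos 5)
Your proof is correct and follows essentially the same route as the paper's: reduce to the per-node prediction bounds of Section~\ref{sec:thm3} (upper bound \eqref{equ:thm3:upper}, Hellinger lower bound \eqref{equ:thm3:lower}), absorb the $\log|\mathcal U|$ term created by the $1/|\mathcal U|$ averaging via the hypothesis $\log|\mathcal U|\le C_3\beta_n^2\|\theta\|_1\min_{i\in\mathcal U}\theta_i$, and finish with the $\frac{a+bx}{a+cx}\ge\min\{1,b/c\}$ ratio trick. The only cosmetic differences are that you exploit the exact separability of $\inf_{\tilde y}\mathrm{Risk}_{ins}$ across nodes and work with the node minimizing $\theta_i$, whereas the paper introduces an approximate minimizer $\tilde y^{(0)}$ and the worst node $i^*=\arg\max_i\mathrm{Risk}(\hat y_i)$; also, the node-removal bookkeeping you dispatch with $(1+o(1))$ language is handled exactly in the paper via $\|\theta^{(k)}\|_1+\|\theta^{(k^*)}\|_1-\theta_{i^*}\ge\|\theta^{(k)}\|_1$.
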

	
	\begin{proof}
		For $i \in \cal U$, define individual risk $\mathrm{Risk}(\tilde{y}_{i}) = \sum_{k^* \in [K]} \mathbb{P}(\tilde{y}_i \neq k^*| \pi_i = e_{k^*}) $. Then the in-sample risk $\mathrm{Risk}_{ins}(\tilde{y})= \frac{1}{|\cal U|}\sum_{i \in \cal U} \mathrm{Risk}(\hat{y}_{i})$.

		The minimizer of $\mathrm{Risk}_{ins}(\tilde{y})$ may not exist, so we define $\tilde{y}^{(0)}$ to be an approximate minimizer such that $   \mathrm{Risk}_{ins}(\tilde{y}^{(0)}) \le 2 \inf_{\Tilde{y}}\{\mathrm{Risk}_{ins}(\Tilde{y})\}$. By the definition of infimum, such $\tilde{y}^{(0)}$ always exists as long as $\inf_{\Tilde{y}}\{\mathrm{Risk}_{ins}(\Tilde{y})\} > 0$. Notice that for any $i \in \cal U$, $\inf_{\Tilde{y}}\{\mathrm{Risk}_{ins}(\Tilde{y})\} \ge  \inf_{\Tilde{y}_i}\frac{1}{|\cal U|}\{\mathrm{Risk}(\Tilde{y}_i)\}$. Regarding node $i$ as the new node and leveraging on \eqref{equ:thm3:lower}, we know that $\inf_{\Tilde{y}_i}\{\mathrm{Risk}(\Tilde{y}_i)\} > 0$. Hence,  $\inf_{\Tilde{y}}\{\mathrm{Risk}_{ins}(\Tilde{y})\} \ge  \inf_{\Tilde{y}_i}\frac{1}{|\cal U|}\{\mathrm{Risk}(\Tilde{y}_i)\} > 0$ (note that we are not taking $n$ or $|\cal U|$ $\to \infty$ here), and $\tilde{y}^{(0)}$ is well-defined.

		Let $i^* = arg\max_{i \in \cal U} \mathrm{Risk}(\hat{y}_{i})$, and let $k^*$ be the true label of $i^*$. Regard $[n] \backslash \{i^*\}$ as the existing nodes in the network and $i^*$ as the new node. By \eqref{equ:thm3:lower} and \eqref{equ:thm3:upper}, we have 
		
		\begin{align}\label{eq1}
			- \log(\frac{1}{2\bar{C}K^2} \mathrm{Risk}(\hat{y}_{i^*})) &\ge \log(2) + C_8 \mathcal{I}_{i^*}  \\
			\label{eq2}
			- \log( \{\mathrm{Risk}(\Tilde{y}^{(0)}_{i^*})\}) \le - \inf_{\Tilde{y}_{i^*}} \log( \{\mathrm{Risk}(\Tilde{y}_{i^*})\}) &\le \log(2)+ 2\sqrt{2}\mathcal{I}_{i^*} 
		\end{align}  
	
where $\mathcal{I}_{i^*} = \theta_{i^*}  \min_{k \ne k^* \in [K]} (\sum_{l=1}^{K} \|\theta^{(l)}\|_1 (\sqrt{P_{kl}} - \sqrt{P_{k^* l}})^2 -  \theta_{i^*} (1 - \sqrt{P_{kk^*}})^2)$ denotes the efficient information in the data for classifying node $i^*$.

As a result, we have 

\begin{align} \label{equ:thm5:main}
	\frac{- \log(\frac{1}{4\bar{C}K^2} \mathrm{Risk}_{ins}(\hat{y}))}{- \log( \inf_{\Tilde{y}}\{\mathrm{Risk}_{ins}(\Tilde{y})\})} & \ge \frac{- \log(\frac{1}{4\bar{C}K^2} \mathrm{Risk}_{ins}(\hat{y}))}{- \log(\frac{1}{2} \mathrm{Risk}_{ins}(\Tilde{y}^{(0)}))} \notag \\
	&= \frac{- \log(\frac{1}{4\bar{C}K^2}\frac{1}{|\mathcal{U}|} \sum_{i \in \mathcal{U}}\mathrm{Risk}(\hat{y}_i))}{- \log( \frac{1}{2|\mathcal{U}|}\sum_{i \in \mathcal{U}}\mathrm{Risk}(\Tilde{y})^{(0)}_i)} \notag \\
	&\ge \frac{- \log(\frac{1}{4\bar{C}K^2}\mathrm{Risk}(\hat{y}_{i^*}))}{- \log( \frac{1}{|2\mathcal{U}|}\mathrm{Risk}(\Tilde{y})^{(0)}_{i^*})} \notag \\
	(By \ \eqref{eq1},\eqref{eq2})&\ge \frac{\log(4) + C_8 \mathcal{I}_{i^*}}{ \log(4)+ 2\sqrt{2}\mathcal{I}_{i^*} + \log(|\mathcal{U}|)}
\end{align}

Notice that
\begin{align} \label{equ:thm5-1}
	&\mathcal{I}_{i^*}  \notag \\
	&= \theta_{i^*}  \min_{k \ne k^* \in [K]} (\sum_{l=1}^{K} \|\theta^{(l)}\|_1 (\sqrt{P_{kl}} - \sqrt{P_{k^* l}})^2 -  \theta_{i^*} (1 - \sqrt{P_{kk^*}})^2) \notag \\
	&\ge \theta_{i^*}  \min_{k \ne k^* \in [K]} ( \|\theta^{(k)}\|_1 (\sqrt{P_{kk}} - \sqrt{P_{k^* k}})^2 + \|\theta^{(k^*)}\|_1 (\sqrt{P_{kk^*}} - \sqrt{P_{k^* k^*}})^2  -  \theta_{i^*} (1 - \sqrt{P_{kk^*}})^2) \notag \\
	&(\text{By identification condition that } P_{kk} = P_{k^*k^*} = 1) \notag \\
	&= \theta_{i^*}  \min_{k \ne k^* \in [K]} ( (\|\theta^{(k)}\|_1 + \|\theta^{(k^*)}\|_1 - \theta_{i^*}) (1 - \sqrt{P_{k^* k}})^2)  \notag \\
	&(\text{The true label of node } i^* \text{ is } k^*) \notag \\
	&\ge \theta_{i^*}  \min_{k \ne k^* \in [K]} (\|\theta^{(k)}\|_1 (1 - \sqrt{P_{k^* k}})^2 ) 
\end{align}

By assumption \ref{cond-P}, for any $k \in [K]$,
\begin{align}\label{equ:thm5-2}
	(1 - \sqrt{P_{k^* k}})^2 &=  \frac{(1 - P_{k^* k})^2}{(1 + \sqrt{P_{k^* k}})^2 } \notag \\
	&= \frac{(2 - 2P_{k^* k})^2}{4(1 + \sqrt{P_{k^* k}})^2 } \notag \\
	&(\text{By identification condition that } P_{kk} = P_{k^*k^*} = 1) \notag \\
	&= \frac{(P_{kk} + P{k^*k^*} - 2P_{k^* k})^2}{4(1 + \sqrt{P_{k^* k}})^2 } \notag \\
	&= \frac{((e_k - e_{k^*})'P(e_k - e_{k^*}))^2}{4(1 + \sqrt{P_{k^* k}})^2 } \notag \\
	&\ge \frac{(|\lambda_{min}(P)|\|e_k-e_{k^*}\|^2)^2}{4(1 + \sqrt{C_1})^2 } \notag \\
	&\ge \frac{\beta_n^2}{(1 + \sqrt{C_1})^2}
\end{align}

By assumption \ref{cond-theta}, for any $k \in [K]$
\begin{align} \label{equ:thm5-3}
	\|\theta^{(k)}\|_1 &\ge \min_{k \ne k^* \in [K]} \|\theta^{(k)}\|_1 \notag \\
	&\ge \frac{1}{C_2}\max_{k \ne k^* \in [K]} \|\theta^{(k)}\|_1 \notag \\
	&\ge \frac{1}{KC_2}\sum_{k \ne k^* \in [K]} \|\theta^{(k)}\|_1 \notag \\
	&= \frac{1}{KC_2} \|\theta\|_1
\end{align}

From the assumption, $\log(|\mathcal{U}|) \le C_{3}  \beta_n^2 \|\theta\|_1 \min_{i \in \cal U} \theta_i$. Plugging \eqref{equ:thm5-2} \eqref{equ:thm5-3} into \eqref{equ:thm5-1}, we obtain
\begin{align}
\mathcal{I}_{i^*} &\ge \theta_{i^*}  \min_{k \ne k^* \in [K]} (\frac{1}{KC_2(1 + \sqrt{C_1})^2} \beta_n^2\|\theta\|_1) \notag \\
&\ge \frac{1}{KC_2(1 + \sqrt{C_1})^2} \beta_n^2\|\theta\|_1 \min_{i \in \cal U} \theta_i \notag \\
&\ge \frac{1}{KC_2C_3(1 + \sqrt{C_1})^2} log(|\mathcal{U|})
\end{align} 

In other words,
\beq \label{equ:thm5-4}
log(|\mathcal{U|}) \le KC_2C_3(1 + \sqrt{C_1})^2 \mathcal{I}_{i^*}
\eeq

Substituting \eqref{equ:thm5-4} into \eqref{equ:thm5:main}, we have 

\begin{align} \label{equ:thm5:res}
	\frac{- \log(\frac{1}{4\bar{C}K^2} \mathrm{Risk}_{ins}(\hat{y}))}{- \log( \inf_{\Tilde{y}}\{\mathrm{Risk}_{ins}(\Tilde{y})\})} & \ge \frac{\log(4) + C_8 \mathcal{I}_{i^*}}{ \log(4)+ (2\sqrt{2} + KC_2C_3(1 + \sqrt{C_1})^2)\mathcal{I}_{i^*}}
\end{align}

Similar to the proof of Theorem \ref{sup:thm:optimality2}, notice that since  $\mathcal{I}_{i^*} \ge 0$, when $C_8 \ge 2\sqrt{2} + KC_2C_3(1 + \sqrt{C_1})^2$, \beq
\frac{\log(4) + C_8 \mathcal{I}_{i^*}}{\log(4)+ (2\sqrt{2} + KC_2C_3(1 + \sqrt{C_1})^2)\mathcal{I}_{i^*}} \ge 1
\eeq
; when $C_8 \le 2\sqrt{2} + KC_2C_3(1 + \sqrt{C_1})^2$,
\begin{align}
	\frac{\log(4) + C_8 \mathcal{I}_{i^*}}{\log(4)+ (2\sqrt{2} + KC_2C_3(1 + \sqrt{C_1})^2)\mathcal{I}_{i^*}} &\ge 	\frac{\log(4) \frac{C_8}{2\sqrt{2} + KC_2C_3(1 + \sqrt{C_1})^2} + C_8 \mathcal{I}_{i^*}}{\log(4)+ (2\sqrt{2} + KC_2C_3(1 + \sqrt{C_1})^2)\mathcal{I}_{i^*}} \notag \\
	&= \frac{C_8}{2\sqrt{2} + KC_2C_3(1 + \sqrt{C_1})^2}
\end{align}

Therefore,
\begin{align}
	\frac{- \log(\frac{1}{4\bar{C}K^2} \mathrm{Risk}_{ins}(\hat{y}))}{- \log( \inf_{\Tilde{y}}\{\mathrm{Risk}_{ins}(\Tilde{y})\})} & \ge \frac{\log(4) + C_8 \mathcal{I}_{i^*}}{ \log(4)+ (2\sqrt{2} + KC_2C_3(1 + \sqrt{C_1})^2)\mathcal{I}_{i^*}} \notag \\
	&\ge \min\{1, \frac{C_8}{2\sqrt{2} + KC_2C_3(1 + \sqrt{C_1})^2}\}
\end{align}

Take $\tilde{c}_{21} = \min\{1, \frac{C_8}{2\sqrt{2} + KC_2C_3(1 + \sqrt{C_1})^2}, \frac{1}{4\bar{C}K^2}\} $.
Then $\tilde{c}_{21}$ only depends on $K, C_1, C_2, C_3, c_3, r_0$ (recall that both $C_0$ and $c_0$ only depend on $K, C_1, C_2, C_3, c_3, r_0$, and $\bar{C} = 2$), and

\begin{align}
\frac{- \log(\tilde{c}_{21} \mathrm{Risk}_{ins}(\hat{y}))}{- \log( \inf_{\Tilde{y}}\{\mathrm{Risk}_{ins}(\Tilde{y})\})} 
&\ge \frac{- \log(\frac{1}{4\bar{C}K^2} \mathrm{Risk}_{ins}(\hat{y}))}{- \log( \inf_{\Tilde{y}}\{\mathrm{Risk}_{ins}(\Tilde{y})\})} \notag \\
&\ge \min\{1, \frac{C_8}{2\sqrt{2} + KC_2C_3(1 + \sqrt{C_1})^2}\} \notag \\
&\ge \tilde{c}_{21}
\end{align}

This concludes our proof.
\end{proof}

\end{document}